\documentclass{article}

\usepackage[preprint]{neurips_2026}                 

\usepackage[utf8]{inputenc}
\usepackage[T1]{fontenc}
\usepackage{hyperref}
\usepackage{url}
\usepackage{physics}
\usepackage{booktabs}
\usepackage{amsfonts}
\usepackage{amsmath,amssymb,amsthm,mathtools}
\usepackage{nicefrac}
\usepackage{microtype}
\usepackage{xcolor}
\usepackage{bm}
\usepackage{bbm}
\usepackage{enumitem}
\usepackage[ruled,vlined,linesnumbered]{algorithm2e}
\usepackage{amssymb}
\usepackage{float}
\usepackage{makecell}

\newtheorem{theorem}{Theorem}[section]
\newtheorem{proposition}[theorem]{Proposition}
\newtheorem{lemma}[theorem]{Lemma}
\newtheorem{corollary}[theorem]{Corollary}
\newtheorem{fact}[theorem]{Fact}
\newtheorem{problem}[theorem]{Problem}
\newtheorem{definition}[theorem]{Definition}

\DeclareMathOperator{\sign}{sign}
\DeclareMathOperator{\ad}{ad}
\DeclareMathOperator{\HS}{HS}
\DeclareMathOperator{\TV}{TV}
\DeclareMathOperator{\Var}{Var}
\newcommand{\F}{\mathbb{F}_2}
\newcommand{\E}{\mathbb{E}}

\newcommand{\wt}{\widetilde}
\newcommand{\wh}{\widehat}
\newcommand{\cU}{\mathcal{U}}

\newcommand{\cK}{\mathcal{K}}
\newcommand{\cS}{\mathcal{S}}
\newcommand{\cA}{\mathcal{A}}

\renewcommand{\epsilon}{\varepsilon}

\SetKwInput{KwIn}{Input}
\SetKwInput{KwOut}{Output}
\DontPrintSemicolon

\title{Optimal Ansatz-free Hamiltonian Learning In Situ}

\author{
  Taiqi Zhou\thanks{Department of Information Engineering, The Chinese University of Hong Kong. Email: \href{mailto:1155242363@link.cuhk.edu.hk}{1155242363@link.cuhk.edu.hk}}\\
  \And
  Weiyuan Gong\thanks{Corresponding author. John A. Paulson School of Engineering and Applied Sciences, Harvard University. Email:\href{mailto:wgong@g.harvard.edu}{wgong@g.harvard.edu}. A part of this work was conducted while visiting California Institute of Technology.}\\
}

\begin{document}
\maketitle

\begin{abstract}
Characterizing the features of a Hamiltonian that governs a quantum system serves as a fundamental subroutine of quantum device calibration, signal sensing, and error correction. 
Recent works have proposed protocols achieving the optimal Heisenberg-limited scaling learning ansatz-free Hamiltonians from their real-time evolutions without fully specifying interaction structures.
However, these protocols rely on both deep circuits with interleaving probes and control, and extremely short time resolution, making them difficult to implement on near- and intermediate-term in situ quantum experiments.
In this work, we propose a computationally efficient, \emph{control-free}, and \emph{ancilla-free} algorithm that uses only \emph{Pauli product state preparation and measurement}, and learns an ansatz-free Hamiltonian $H$ with $\norm{H}\leq\Lambda$ in total evolution time of $\Theta\left(\tfrac{\Lambda}{\epsilon^2}\log\left(\tfrac{\Lambda}{\epsilon}\right)\right)$.
The evolution time cost of our algorithm is \emph{optimal} for any control-free protocols as we further prove a lower bound of $\Omega\left(\tfrac{\Lambda}{\epsilon^2}\log\left(\tfrac{\Lambda}{\epsilon}\right)\right)$.
Technically, our method introduces a randomized-sampling framework that combines band-limited kernel-based time sampling with a displacement sieve for Hamiltonian structure learning.
The characteristic probe time resolution depends only on $\Lambda$ instead of $\varepsilon$, which makes our protocol especially appealing in the high-precision regime for sensing and calibration applications. 
We also analyze the robustness of the algorithm against state-preparation-and-measurement (SPAM) errors, where we show that, by calibrating preparation and measurement errors, the algorithm maintains the same asymptotic total evolution time in the presence of SPAM noise when the Hamiltonian is local.
Our results demonstrate the fundamental cost of experimentally friendly Hamiltonian learning and provide a practical route to rigorous in situ characterization of near-term quantum platforms.
\end{abstract}

\section{Introduction}\label{sec:intro}
Understanding the interactions governing a quantum system is a central task in quantum information science. 
As programmable analog simulators, NISQ devices, and early fault-tolerant quantum processors continue to mature, the need for rigorous and scalable characterization protocols has become increasingly urgent for validating engineered quantum dynamics and for practical tasks such as calibration, sensing, verification, error mitigation, and error correction~\cite{eisert2020quantum,carrasco2021theoretical}. 
In closed systems, interactions of quantum systems are encoded by the Hamiltonians.
A standard key subroutine for characterization is Hamiltonian learning, which infers the underlying $H$ from accessible data. 
At a high level, one may access $H$ either through equilibrium information~\cite{anshu2021sample,haah2022optimal,bakshi2024learning,gu2024practical,qi2019determining,li2020hamiltonian,evans2019scalable,chen2025quantum,chen2025learning}, or through real-time evolutions $e^{-iHt}$. 
The latter viewpoint, which is also the access model we focus on in this work, is especially appealing for in situ characterization.
Hamiltonian learning from time-evolution is deeply connected to quantum metrology and sensing \cite{giovannetti2006quantum,giovannetti2011advances,degen2017quantum}, which aims to detect tiny local signal perturbations to the Hamiltonian.
The early works then extend the sensing problem to a global parameter estimation task, focusing on highly structured models with heuristic guarantees, sub-optimal costs, and experimentally motivated protocols~\cite{granade2012robust,wiebe2014hamiltonian,wang2015hamiltonian,holzapfel2015scalable,che2021learning,kokail2021entanglement,wilde2022scalably,hangleiter2024robustly,guo2025hamiltonian}.

For Hamiltonians with \emph{known} and \emph{structured} interaction families, Huang et al.~\cite{huang2023learning} proposed a protocol with total evolution time achieving the optimal Heisenberg-limited scaling.
Several similar Heisenberg-limited scaling algorithms are also obtained for local Hamiltonians with unknown supports~\cite{ma2024learning,bakshi2024structure}, interacting bosons and fermionic systems~\cite{li2024heisenberg,ni2024quantum,mirani2024learning}, other structured many-body systems~\cite{yu2023robust,stilck2024efficient,francca2025learning}, and testing structured Hamiltonians ~\cite{kallaugher2025hamiltonian,bluhm2026certifying,gao2026quantum,arunachalam2025testing,bluhm2026hamiltonian,sinha2025improved}. 

However, in the \emph{ansatz-free} setting, the interaction structure is not specified. 
The Hamiltonian may be a sparse but arbitrary linear combination of potentially nonlocal Pauli strings. 
In this setting, Zhao~\cite{zhao2025learning} obtained nearly Heisenberg-limited scaling assuming queries to reversal time evolutions, ancillary qubits, and deep circuits with interleaving queries to time evolutions and unitary control gates.
Later, Hu et al~\cite{hu2025ansatz} proposed a protocol achieving Heisenberg-limited scaling of total evolution time without ancillary qubits and reversal time evolutions.
More recently, for ansatz-free Hamiltonian supported by $M$ potentially nonlocal Pauli strings, Refs.~\cite{sinha2025improved,abbas2025nearly} obtained the state-of-the-art total evolution time of $\widetilde{O}(M/\epsilon)$.
However, reliance on ingredients such as ancillary entanglements, interleaved quantum controls during time evolutions, engineered inverse dynamics, and extremely short resolved evolution times makes these protocols difficult to realize in realistic \emph{in situ} settings.
On near-term quantum hardware, especially analog hardware, one may not be able to pause the native dynamics and insert coherent control.
Ancillary entanglement can also introduce substantial experimental and calibration overhead. 
In sensing and calibration with high precision requirements, the short resolved evolution times proportional to the accuracy demand are also experimentally infeasible.
This raises the question of whether one can retain strong evolution time (sample) guarantees in a \emph{control-free and ancilla-free} setting.
From the lower bound perspective, quantum controls can fundamentally change the complexity of Hamiltonian learning: with control, one can reach Heisenberg-limited scaling, whereas without control, broad classes of many-body Hamiltonians are only learnable at the standard quantum limit (SQL) of $\Omega(1/\epsilon^2)$~\cite{dutkiewicz2024advantage,chen2025lower}.
Yet before this work, the best-known ansatz-free Hamiltonian learning protocols in the control-free setting require an $O(1/\epsilon^4)$ evolution time scaling without ancilla qubits or $O(1/\epsilon^3)$ with ancilla qubits~\cite{zhao2025learning,ivashkov2026ansatz,caro2024learning,yu2023robust,arunachalam2025testing}, leaving a substantial gap between lower and upper bounds.
This motivates the main question of the present work:
\vspace{-0.5em}
\begin{quote}
\centering
\emph{\mbox{Can we achieve SQL in learning ansatz-free Hamiltonian without control or ancilla?}}
\end{quote}
In this work, we answer this question essentially affirmatively.
We show that arbitrary Hamiltonians $H$ with bounded operator norm can be learned using \emph{control-free and ancilla-free} protocols achieving an SQL scaling total evolution time cost.
\begin{theorem}[Informal, see Theorem~\ref{thm:main}]\label{thm:upper_informal}
Let $H=\sum_{u}\lambda_u P_u$ be an unknown Hamiltonian on $n$ qubits with $\norm{H}\leq \Lambda$. 
There is a computationally efficient algorithm that estimates every coefficient $\{\lambda_u\}$ up to $\epsilon$ with high probability using total evolution time
\begin{align}
O\left(\frac{\Lambda}{\epsilon^2}\log\frac{\Lambda}{\epsilon}\right).
\end{align}
\end{theorem}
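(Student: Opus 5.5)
The plan is to reduce learning each coefficient $\lambda_u$ to estimating a first-order time derivative of a measurable signal. Suppose we prepare a Pauli product state $\rho$ and measure a Pauli product observable $O$ after evolving for time $t$. Define
\begin{align}
f_{\rho,O}(t)=\Tr\!\left(O e^{-iHt}\rho e^{iHt}\right).
\end{align}
Its derivative at zero is $f'_{\rho,O}(0)=i\Tr([H,O]\rho)$. Since $[P_u,O]$ is either $0$ or $\pm 2P_uO$, a suitable choice of $(\rho,O)$ makes this derivative isolate $\lambda_u$ up to a known sign. Concretely, take $O$ to be a single-qubit (or small) Pauli anticommuting with $P_u$, and take $\rho$ a product eigenstate of the Pauli string $\pm iP_uO$. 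Random signs on the remaining qubits then average out the contributions of all other strings. The proof has three components: an unbiased, low-variance estimator of $f'(0)$ built only from finite-time samples; a structure-learning step that finds the support $\{u:|\lambda_u|\gtrsim\epsilon\}$; and a concentration plus union-bound argument.

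\textbf{Step 1 (band-limited kernel).} Because $\norm{H}\le\Lambda$, $f$ is a trigonometric polynomial whose frequencies are differences of eigenvalues, so $f$ is band-limited to $[-2\Lambda,2\Lambda]$. I would construct a real kernel $k$ with the following properties:
\begin{itemize}
\item $\int k(t)g(t)\,dt=g'(0)$ for every $g$ band-limited to $2\Lambda$, up to an error of order $\epsilon$ for bounded $g$. One way is a derivative of a smoothed sinc, or a finite difference with Fej\'er/Jackson-type weights.
\item $\norm{k}_1=O(\Lambda)$.
\item Most of its mass lies at times $|t|\lesssim 1/\Lambda$, with tails that can be truncated at $T=O(\Lambda^{-1}\log(\Lambda/\epsilon))$ at cost $\epsilon$.
\end{itemize}
The estimator then samples $t$ with density $|k(t)|/\norm{k}_1$, performs a single-shot measurement $m\in\{\pm1\}$ of $O$, and outputs $\norm{k}_1\sign(k(t))\,m$. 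This is unbiased for $f'(0)$ up to $\epsilon$ and has variance $O(\Lambda^2)$. Hence $O(\Lambda^2/\epsilon^2)$ samples suffice for accuracy $\epsilon$, and each costs expected time $\E|t|=O(\Lambda^{-1}\log(\Lambda/\epsilon))$. This yields the claimed $O(\tfrac{\Lambda}{\epsilon^2}\log\tfrac{\Lambda}{\epsilon})$ per target quantity. Note that the time resolution depends only on $\Lambda$, as advertised.

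\textbf{Step 2 (structure learning via a displacement sieve).} Since $\sum_u\lambda_u^2=2^{-n}\Tr H^2\le\Lambda^2$, at most $O(\Lambda^2/\epsilon^2)$ coefficients exceed $\epsilon/2$. I would identify them with a Goldreich--Levin-style sieve over Pauli ``displacements''. Each run uses a uniformly random Pauli product state and a random Pauli basis measurement, and the kernel-weighted outcomes are recorded. Classical post-processing can then evaluate, for any prefix pattern of a Pauli string, an unbiased estimate of the total weight $\sum_{u\in\text{bucket}}\lambda_u^2$ of the corresponding bucket of strings. The key is that a single data set is reused for all buckets, and that bucket estimators built from products of two independent kernel samples have variance $O(\Lambda^2)$-type bounds rather than $3^{|u|}$. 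Refining buckets qubit by qubit keeps only $O(\Lambda^2/\epsilon^2)$ live buckets per level, which gives polynomial classical computation. The final estimation of each surviving $\lambda_u$ reuses Step 1's samples through a median-of-means argument, with a $\log$ factor absorbed by the union bound.

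\textbf{Main obstacle.} The hardest step is Step 2: keeping the total evolution time at $O(\Lambda/\epsilon^2)$ rather than multiplying it by the number of candidate strings or by $n$. I would first try to show that all bucket statistics and all coefficient estimates can be extracted simultaneously from one randomized data set, with variance controlled by $\norm{H}^2\le\Lambda^2$ via a Parseval argument over the random Pauli product inputs. I would also need to check that nonlocal strings do not blow up the variance: this requires choosing the measurement Pauli adaptively from the random input, so that exactly one anticommuting factor is present. If the variance bound fails for high-weight strings, the fallback is to pay an extra $\log$ factor in the sieve depth, which would still be consistent with the stated bound up to constants only if the sieve depth itself is $O(\log(\Lambda/\epsilon))$.
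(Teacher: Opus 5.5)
The decisive gap is in Step 2, and it is the obstacle the paper is built to avoid. Your sign-weighted sample $\|k\|_1\sign(k(t))\,m$ has modulus exactly $\|k\|_1\ge 2\Lambda$ on every shot. A bucket statistic built from products of two such samples therefore has modulus $\|k\|_1^2$ and second moment $\|k\|_1^4=\Theta(\Lambda^4)$, however small the bucket weight is. Deciding by additive concentration whether a bucket carries weight $\gtrsim\epsilon^2$ then takes $\Theta(\Lambda^4/\epsilon^4)$ shots, i.e.\ total time $\Theta(\Lambda^3/\epsilon^4)$. Even your claimed variance $O(\Lambda^2)$, which has the wrong units for an estimator of $\sum\lambda_u^2$, would still need $\Lambda^2/\epsilon^4$ shots. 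This is the $1/\epsilon^4$ behaviour of earlier control-free schemes, and the qubit-by-qubit refinement also adds a $\log n$ to the union bound that the statement does not have.

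The missing idea is to let heavy structure appear \emph{as a measurement outcome} rather than estimating it. Prepare a uniformly random $Z$-basis (resp.\ $X$-basis) product state, evolve for $\tau\sim 2|K_\Lambda|/\|K_\Lambda\|_1$ drawn from the \emph{second-derivative} kernel, measure in the same basis, and record only the flip pattern $d=a+b$. The fixed-time probability $f_d(t)$ of pattern $d$ is nonnegative, even and band-limited, with $f_d''(0)=2W_x(d)$ where $W_x(d)=\sum_z\lambda_{(d\mid z)}^2$. Hence $\Pr[D=d]=\tfrac{2}{\|K_\Lambda\|_1}\int_0^\infty|K_\Lambda|f_d\ge\tfrac{2}{\|K_\Lambda\|_1}\bigl|\int_0^\infty K_\Lambda f_d\bigr|=\tfrac{2W_x(d)}{\|K_\Lambda\|_1}=\Omega(W_x(d)/\Lambda^2)$. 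Every heavy pattern therefore occurs with probability $\Omega(\epsilon^2/\Lambda^2)$. A Chernoff bound on counts, whose variance scales with the mean, then shows that $O(\tfrac{\Lambda^2}{\epsilon^2}\log\tfrac{\Lambda}{\epsilon})$ shots collect all of them, independently of $n$. The candidate set is the Cartesian product of the heavy $x$- and $z$-projections, of size $\mathrm{poly}(\Lambda/\epsilon)$.

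Your coefficient stage does not close either. The Step-1 experiment is tailored to a single $u$ (an $O$ anticommuting with $P_u$ and a state diagonal in $\pm iP_uO$), so there are no shared samples to ``reuse''. Running it once per candidate multiplies the time by the number of candidates, which can be of order $\Lambda^2/\epsilon^2$. What is needed is one randomized experiment that yields, on every shot, a bounded unbiased estimate of \emph{every} candidate $\lambda_u$, with magnitude independent of Pauli weight; this is exactly what you list as the main obstacle.

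The paper obtains it as follows. It draws $c\in\{X,Y,Z\}^n$ uniformly, prepares random eigenstates of the cyclic partners $a_i$, measures in $b_i$, and for each candidate reads the parity on the unique $(A,B)$ with $u=u_c(A,B)$. A label $u$ is visible iff it agrees with $c$ on an odd number of sites, which happens with probability $\tfrac{1-3^{-w(u)}}{2}\ge\tfrac13$. The importance weight thus costs only a factor $3$, and Hoeffding plus a union bound over $\wh{\cU}$ gives $O(\tfrac{\Lambda^2}{\epsilon^2}\log\tfrac{\Lambda}{\epsilon})$ shots of mean length $O(1/\Lambda)$.

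Finally, the derivative-of-smoothed-sinc kernel you propose is odd, so half its mass lies at $t<0$, which the access model forbids. You need the swap $2^{-n}\Tr(Q(t)P)=F_{P,Q}(-t)$: exchange the preparation and measurement observables under a random orientation with a sign flip, so that $f(-t)$ is realized with forward evolution only.
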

\noindent A key feature of our result is that the protocol achieving this bound is not only control-free and ancilla-free, but also uses only \emph{Pauli product state preparation measurement}. 
Moreover, the average probe time remains $O(1/\Lambda)$ and is independent of $\epsilon$. 
This feature is particularly appealing for sensing and calibration applications, where extremely short-time resolution is often a major practical bottleneck.
As a result, our protocol is substantially closer to the capabilities of near- and intermediate-term in situ experiments.

The main idea behind the $1/\epsilon^2$ scaling is a two-stage procedure. 
First, we use product Pauli-basis experiments to find where the large Hamiltonian terms could be. 
In each experiment, we prepare a random product eigenstate in a Pauli basis, let the system evolve for a random time, and measure in the same basis. 
We only keep the bitwise difference between the prepared string and the measured string. 
We call this bit-flip pattern a displacement. 
Several Pauli terms may create the same displacement, and the mass of a displacement means the sum of the squared coefficients of all terms that can create that same pattern. 
For example, a Hamiltonian coefficient of size $\epsilon$ contributes mass $\epsilon^2$. 
Naively estimating every possible $\epsilon^2$-size mass directly would require about $1/\epsilon^4$ samples. 
Our protocol avoids this by choosing the evolution time from a band-limited second-order derivative sampling rule, based on the norm bound $\|H\|\le\Lambda$ and a heavy sampling scheme within the band~\cite{whittaker1915functions,kotelnikov1933transmission,shannon1949communication,butzer2012shannon}. 
With the choice of evolution time, heavy displacements are sampled directly: any displacement with mass at least $\epsilon^2$ appears with probability $\Omega(\epsilon^2/\Lambda^2)$. 
Hence, $O(\Lambda^2/\epsilon^2)$ shots find all heavy displacements. 
Since the average shot duration is $O(1/\Lambda)$, this step requires $O(\Lambda/\epsilon^2)$ total evolution time.

After these bit-flip patterns are found in two complementary Pauli bases, we combine them into a candidate list of Pauli strings by a direct product. 
The second stage estimates the signed coefficients only on this candidate list. 
Each shot in this stage produces a pair of complete input and output sign strings given the choice of input and output Pauli bases. 
The same recorded strings can be reused by classical postprocessing to compute a parity estimate for every candidate that the sampled basis can see. 
A fixed nonzero Pauli label is seen with probability at least $1/3$, so a simple reweighting gives an unbiased estimate. 
Thus, one dataset estimates all candidates together, and the simultaneous guarantee costs only logarithmically in the candidate-list size. 
This parallelization does not rely on locality or bounded-degree commuting measurement groups as the previous work~\cite{francca2025learning,evans2019scalable,bakshi2024structure}.

We compare our algorithm with the previous results as in the following Table~\ref{tab:results}.
\begin{table}[htbp]\label{tab:results}
\centering
\resizebox{1.0\columnwidth}{!}{
\begin{tabular}{lcccc}
\toprule
Reference & Hamiltonian structure & Control & Ancilla & Total evolution time \\
\midrule
Huang et al.~\cite{huang2023learning} & Local \& Known structure  & Discrete & No & $O(1/\epsilon)$ \\
Bakshi et al.~\cite{bakshi2024structure} & Local \& Bounded-interaction & Discrete & No & $O(\log n/\epsilon)$ \\
Zhao~\cite{zhao2025learning} & Ansatz-free & Continuous / Discrete & Yes & $\widetilde O(M/\epsilon)$ / $\widetilde O(\Lambda^3/\epsilon^4)$ \\
Hu et al.~\cite{hu2025ansatz} & Ansatz-free & Discrete & Yes / No & $\widetilde O(M^2/\epsilon)$ / $\widetilde O(M^3/\epsilon)$ \\
\makecell[l]{Sinha and Tong~\cite{sinha2025improved}, \\Abbas et al~\cite{abbas2025nearly}} & Ansatz-free & Discrete & Yes & $\widetilde O(M/\epsilon)$ \\
Arunachalam et al~\cite{hu2025ansatz} & Ansatz-free & No & Yes & $\widetilde O(\Lambda^2/\epsilon^3)$ \\
\textbf{This work} & Ansatz-free & No & No & $\widetilde{\Theta}(\Lambda/\epsilon^2)$ \\
\bottomrule
\end{tabular}}
\vspace{0.2em}
\caption{Comparison with representative prior works on learning $n$-qubit Hamiltonians from time evolutions. For ansatz-free Hamiltonians, there are two settings which use $M$ to denote the sparsity in the Pauli basis and $\Lambda$ to denote the operator norm.}
\vspace{-0.8em}
\end{table}

\vspace{-1em}
We also prove that this performance is optimal for the control-free setting.
\begin{theorem}[Informal, see Theorems~\ref{thm:unified-log}]
Given accuracy demand $\epsilon$ and norm bound $\Lambda$ satisfying $\Lambda^2/\epsilon^2\leq O(n)$, there exists a family of Hamiltonians of bounded norm $\Lambda$ for which any control-free protocol learning requires a total evolution time of at least
\begin{align}
\Omega\left(\frac{\Lambda}{\epsilon^2}\log\frac{\Lambda}{\epsilon}\right).
\end{align}
\end{theorem}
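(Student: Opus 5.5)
The lower bound $\Omega\bigl(\tfrac{\Lambda}{\epsilon^2}\log\tfrac{\Lambda}{\epsilon}\bigr)$ will come from a hard packing family combined with a per-time information bound for control-free experiments. We take $m=\Theta(\Lambda^2/\epsilon^2)\le O(n)$, use the assumption $\Lambda^2/\epsilon^2\le O(n)$ to fit $m$ disjoint single-qubit (or commuting, disjointly supported) Pauli terms, and set
\[
H_{s}=\epsilon\sum_{j=1}^{m} s_j P_j,\qquad s\in\{-1,0,+1\}^m \text{ (or a Gilbert--Varshamov packing inside it)} .
\]
Because the terms act on disjoint qubits, $\norm{H_s}\le \epsilon m$. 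To respect $\norm{H}\le\Lambda$ we instead take $m\approx\Lambda/\epsilon$ if we need the norm bound linearly. Getting $\Lambda^2/\epsilon^2$ terms requires a family with $\norm{H}\le\Lambda$, so we use terms that are not simultaneously aligned, for instance $H_s$ on random or nonoverlapping supports with random signs, whose norm concentrates at $\epsilon\sqrt{m}\approx\Lambda$. A cleaner option is a sum of $\Lambda^2/\epsilon^2$ anticommuting-free terms grouped into $\Lambda/\epsilon$ blocks. Learning any such $H_s$ to accuracy $\epsilon/2$ identifies $s$, so Fano's inequality forces the total mutual information between $s$ and the transcript to be $\Omega(\log|\mathcal S|)=\Omega(m)$, up to the choice of packing. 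The $\log(\Lambda/\epsilon)$ factor is the delicate part. We plan to obtain it from a coupon-collector/union structure: each coordinate must be learned to constant error, and with independent uniform priors the protocol must succeed on all $m$ coordinates simultaneously, which costs $\log m$ per coordinate. Concretely, we would reduce to $m$ independent binary hypothesis tests, each requiring error $\le 1/m$, and apply a direct-sum/Le Cam bound per coordinate.

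The second ingredient is the information obtained per unit evolution time in a control-free experiment. A single control-free shot prepares a state $\rho$ (possibly with ancilla, but with no interleaved control), applies $e^{-iH_st}$, and measures. For the hypotheses $s_j=\pm1$, the output states differ by $e^{\mp i\epsilon tP_j}$, and their fidelity is $1-\Theta(\epsilon^2t^2)$ at short times. More importantly, the remaining terms act as a noisy background of norm $\Lambda$. We therefore bound the per-shot KL/$\chi^2$ divergence about coordinate $j$ by $O(\epsilon^2 t/\Lambda)$ once $t\gtrsim 1/\Lambda$. This is the SQL mechanism of \cite{dutkiewicz2024advantage,chen2025lower}: averaging over the unknown other coordinates dephases the signal, so its information grows linearly rather than quadratically in $t$. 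For $t\lesssim1/\Lambda$ the bound $\epsilon^2t^2\le \epsilon^2t/\Lambda$ holds directly. Summing over shots gives per-coordinate information $O(\epsilon^2 T_{\rm tot}/\Lambda)$, and by a symmetrization argument over coordinates, each coordinate receives at most about $1/m$ of it.

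Combining these, we need $\epsilon^2 T/(\Lambda m)\gtrsim \log m$ for every coordinate to reach error $1/m$. This gives $T\gtrsim \Lambda m\log m/\epsilon^2$, and with $m$ chosen so that the final bound reads $\Omega\bigl(\tfrac{\Lambda}{\epsilon^2}\log\tfrac{\Lambda}{\epsilon}\bigr)$, the dependence of the per-coordinate information rate on $m$ must cancel the factor $m$. The main obstacle is exactly this calibration: proving the dephasing bound, that the averaged channel $\E_{s_{-j}}e^{-iH_st}$ leaks only $O(\epsilon^2t\cdot\epsilon^2/\Lambda^2\cdot\ldots)$ about $s_j$, uniformly over adaptively chosen inputs and measurements. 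We would first try the Haar/random-sign twirl argument of \cite{chen2025lower}, treating the other $m-1$ terms as an effective random field that decorrelates the phase of $P_j$ on the time scale $1/\Lambda$, and then derive the $\log$ factor by applying a Bretagnolle--Huber bound to each of $m$ coordinates with a union requirement.
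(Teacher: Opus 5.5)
Your proposal is not yet a proof. The hard family, the per-shot information bound, and the source of the logarithm are all left open, and the concrete choices you do make would fail.

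\textbf{The hard family.} Terms on disjoint supports commute, so $\norm{\epsilon\sum_j s_jP_j}=\epsilon\sum_j|s_j|$ for every sign pattern. Random signs do not reduce this to $\epsilon\sqrt m$. Worse, no commuting disjoint family can give an SQL-type bound at all. A learner who knows the family (say $P_j=Z_j$) can prepare $|+\rangle^{\otimes m}$, evolve for time $\Theta(1/\epsilon)$, and read off all coordinates in parallel with $O(\log m)$ shots, for total time $O(\epsilon^{-1}\log m)$.

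Since $(\sum_j s_jP_j)^2=mI+\sum_{j<k}s_js_k\{P_j,P_k\}$, having $m=\Theta(\Lambda^2/\epsilon^2)$ terms of size $\epsilon$ with norm $\epsilon\sqrt m$ is exactly what pairwise anticommutation buys. The hypothesis $\Lambda^2/\epsilon^2\le O(n)$ is what lets that many such strings fit on $n$ qubits.

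The paper takes $H_{x,\sigma}=\frac{\Lambda}{\sqrt M}\sum_a(-1)^{\sigma_a}Q_{a,x_a}$. The blocks pairwise anticommute, each block hides a label $x_a\in[R]$, and the signs $\sigma$ are hidden and uniform. Then $e^{-iH\tau}=\cos(\Lambda\tau)I-i\sin(\Lambda\tau)H/\Lambda$. A symmetrization shows the protocol may be assumed to depend only on sign-erased data. Averaging over $\sigma$ then kills every interference term, so for any adaptively chosen shot $p_x(y)=\cos^2(\Lambda\tau)R(y)+\frac{\sin^2(\Lambda\tau)}{M}\sum_aQ_{a,x_a}(y)$ exactly. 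This identity is the ``dephasing bound'' you list as your main obstacle, and no twirling approximation is needed.

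\textbf{The logarithm.} Your rate $\epsilon^2t/\Lambda$ already equals $\Lambda t/m$, so it already is one coordinate's $1/m$ share. Dividing by $m$ again double counts and would give $\Lambda^3/\epsilon^4$, which contradicts the paper's upper bound.

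More seriously, per-coordinate Bretagnolle--Huber bounds plus a ``union requirement'' cannot produce $\log m$. Lower bounds on each marginal error probability do not lower-bound the probability that some coordinate is wrong, because an adaptive protocol's errors can be perfectly correlated. You have no direct-sum theorem to decouple them.

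Divergences are also the wrong currency for support learning. In the hard family a single informative shot can reveal $x_a$ exactly, for example with markers prepared in the $Z$ basis and the label register in $|+\rangle^{\otimes r}$. So the per-coordinate KL is infinite. A Fano-type count gives only $\Omega(M)$ informative shots, which is the SQL without the logarithm. What is small is the probability $\sin^2(\Lambda\tau)/M$ that a shot touches a given block.

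The paper uses exactly this structure:
\begin{enumerate}
\item Write each shot as a mixture over a latent index $J_t$. $J_t=0$ is uninformative; $J_t=a$ is informative only about $x_a$, and $a$ is uniform no matter what the learner chooses.
\item Reveal the $J_t$ to the learner for free.
\item Blocks that are never hit stay uniformly distributed, so the success probability is at most $\Pr[U=0]+1/R$.
\item A coupon-collector bound then forces $\Omega(M\log M)$ informative shots, while $\E[K]\le\sum_t\sin^2(\Lambda\tau_t)\le\Lambda T_{\rm tot}$.
\end{enumerate}
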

Our result clarifies the relation between Hamiltonian learning and quantum metrology. 
In quantum metrology, one typically detects perturbations on parameters of a Hamiltonian with a known structure. 
In our setting, by contrast, the support of the Hamiltonian is unknown and may lie among exponentially many Pauli strings. 
Our results show that this seemingly harder \emph{structure-learning} task can be carried out at essentially the same total-time scale with only a logarithmic overhead that our lower bound shows is unavoidable.

Our design also makes the protocol robust to state-preparation-and-measurement (SPAM) errors. 
Previous research already shows that preparation and measurement errors can be treated as perturbations of the sampling probabilities and of the coefficient-estimation signals~\cite{hu2025ansatz}. 
In our setting, the argument is cleaner because there are no coherent controls, no ancillary Bell measurements, and no inverse dynamics. 
Assuming the state preparation and measurement noise are local depolarization channels of constant rate, which is a standard assumption in the previous work~\cite{ivashkov2026ansatz}, our algorithm, after calibration, maintains the same total evolution time as Theorem~\ref{thm:upper_informal} when the underlying Hamiltonian is local, i.e., supported on Pauli observables of weight at most some constant.
\begin{theorem}[Informal, see Theorem~\ref{thm:spam_robust_main}]
Assume that the preparation and measurement errors are calibrated local depolarizing channels with constant reliability, and that the calibrated error in the recorded same-basis displacement labels is below the signal scale.
If the Hamiltonian is local, i.e., supported on Pauli observables of at most constant weight, then a calibrated version of the algorithm in Theorem~\ref{thm:upper_informal} recovers the Hamiltonian up to $\epsilon$ with total evolution time remaining $\tilde{O}(\Lambda/\epsilon^2)$.
\end{theorem}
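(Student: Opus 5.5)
The plan is to reduce the SPAM-robust statement to the noiseless analysis behind Theorem~\ref{thm:upper_informal}. I treat the two stages separately: the displacement sieve and the signed-coefficient estimation. The goal is to show that, after calibration, each stage's signal is only reweighted by factors bounded below by a constant depending on locality. The sample complexity then changes by a constant factor and the total evolution time remains $\tilde O(\Lambda/\epsilon^2)$.

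\textbf{Step 1: model the noise on single-qubit data.}
Model preparation as $\rho\mapsto\bigotimes_j\mathcal D_{p_j}(\rho)$ and measurement as $\bigotimes_j\mathcal D_{q_j}$ applied before the ideal Pauli-basis readout, with $p_j,q_j\le c<1$. A depolarizing channel acting on a Pauli eigenstate in the same basis is a classical bit-flip channel on the recorded bit. Hence the recorded input and output strings $(x',y')$ are obtained from the ideal pair $(x,y)$ by independent bit flips with known calibrated rates.

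For the coefficient stage, the parity estimator for a candidate $P_u$ is a product of $\pm1$ characters over $\mathrm{supp}(P_u)$. Independent flips multiply its expectation by $\prod_{j\in\mathrm{supp}(u)}(1-2p_j)(1-2q_j)$. When the weight is at most $k=O(1)$, this factor is at least $(1-2c)^{2k}=\Omega(1)$. Dividing the estimator by the calibrated factor keeps it unbiased, and its range inflates only by a constant. Hoeffding's inequality plus a union bound over the candidate list then gives the same $O(\epsilon^{-2}\log(\cdot))$ shot count. Since shot durations are unchanged, the evolution time is unchanged up to constants.

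\textbf{Step 2: the displacement sieve.}
The observed displacement is $d'=d\oplus e$, where $e$ is a flip pattern independent of the dynamics. Its distribution is the ideal band-limited-kernel displacement distribution convolved with a known product noise distribution. Two routes are available:
\begin{itemize}[leftmargin=*]
\item \emph{Deconvolution in Fourier space.} Apply the Walsh transform and divide by the noise characters; under locality only characters on low-weight subsets matter, so the damping is bounded.
\item \emph{Direct sampling with a perturbative bound.} Sample as before and note that, for a heavy local displacement $d$ of weight at most $2k$, the probability that the noise leaves it unchanged is at least $\prod_j(1-p_j-q_j)\ge\Omega(1)$ restricted to $\mathrm{supp}(d)$. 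The remaining qubits flip, however, and this is where the difficulty lies.
\end{itemize}

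I would use a filtering variant of the second route. Keep only sampled displacements of weight at most $2k$ after projecting onto low-weight windows. Equivalently, for each local window $S$ of size $O(k)$, estimate the marginal displacement distribution on $S$; noise acts as a known invertible stochastic matrix of constant condition number. Any ideal heavy mass $\ge\epsilon^2$ then appears in the marginal with probability $\Omega(\epsilon^2/\Lambda^2)$, minus the calibrated label error. The assumption that this label error lies below the signal scale controls the subtraction. Hence $O(\Lambda^2/\epsilon^2)$ shots, at average duration $O(1/\Lambda)$, still recover a superset of heavy displacements. Candidates are combined as before.

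The main obstacle is this sieve step. Global noise spreads mass across exponentially many nonlocal displacements, so the argument must use locality both to restrict to marginals and to bound the false-candidate list size polynomially; the coefficient stage then cleans up false positives. The final step is to collect the constants and union-bound the failure probabilities.
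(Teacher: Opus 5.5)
\textbf{The gap is in Step~2.} The $O(\Lambda^2/\epsilon^2)$ shot count of the sieve is a rare-event argument. In the ideal setting, a given nonzero label is produced only by the Hamiltonian, so a label of probability $p\asymp\epsilon^2/\Lambda^2$ is caught after $O(p^{-1}\log\frac{\Lambda}{\epsilon})$ shots. Your windowed variant destroys exactly this property. Under constant-rate local noise, on a window $S$ with $|S|=O(k)$, every pattern already has probability $\Omega(1)$ from the noise alone. A heavy displacement shifts one entry of the window marginal by as little as $\Theta(\epsilon^2/\Lambda^2)$. This leaves you two options, and neither works:
\begin{itemize}
\item Thresholding raw window counts at frequency $p$ accepts every pattern on every window, giving an $n^{\Theta(k)}$-size list.
\item Deconvolving with the known noise matrix gives an unbiased estimate, but its per-shot variance is $\Theta(1)$. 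Resolving the shift then takes $\Theta(\Lambda^4/\epsilon^4)$ shots, i.e.\ total time $\Theta(\Lambda^3/\epsilon^4)$. This is the naive rate the algorithm is built to avoid. No statistic of window data can do better: the noisy window marginals with and without the heavy term are at $\chi^2$-distance $O(\epsilon^4/\Lambda^4)$.
\end{itemize}
Invoking ``label error below the signal scale'' at window level is inconsistent with your own noise model, where that error is $\Theta(1)$. At best your route gives an $n$-dependent bound with an extra $O(k\log n)$ factor. That is no better than skipping the sieve and enumerating all weight-$\le k$ labels, and it is not the $n$-independent bound of the statement.

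\textbf{What the paper does instead.} It never derives sieve robustness from the depolarizing model. The statement's second hypothesis is taken as a shot-wise bound on the \emph{full} $n$-qubit recorded label, Eq.~\eqref{eq:spam-sequential-displacement}:
$$\Pr[\widetilde D_{Z,j}=d\mid\mathcal F_{j-1}]\ge\Pr[D_Z=d]-\xi_1,\qquad \xi_1\le\rho_\epsilon/2,\qquad \rho_\epsilon=\min\{1,\epsilon^2/(2\kappa_0\Lambda^2)\}.$$
The paper explicitly notes that local calibration must first be converted into this full-record guarantee. With it, the ideal sieve carries over almost verbatim (Proposition~\ref{prop:ham-spam-projection}):
\begin{itemize}
\item every heavy label appears with conditional probability at least $\rho_\epsilon-\xi_1$;
\item the chain rule bounds its miss probability in a block by $e^{-N_1(\rho_\epsilon-\xi_1)}$;
\item a union bound over at most $\Lambda^2/\epsilon^2$ heavy labels, the Markov abort cap, and $L_1$ independent blocks complete the argument;
\item recording all nonzero labels gives $|\widehat\Pi_x|,|\widehat\Pi_z|\le L_1N_1$, independent of $n$.
\end{itemize}
Locality enters only in the coefficient stage, by restricting the candidate set to weight $\le k$ so that $\zeta_2\ge(r_{\rm p}r_{\rm m})^k$. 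So drop the windowing and apply the hypothesis to the full record.

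Your Step~1 then matches the paper's coefficient stage, with one correction. The parity $s_Am_B$ is attenuated by $r_{\rm p}^{|A|}r_{\rm m}^{|B|}$, and the swapped orientation by $r_{\rm p}^{|B|}r_{\rm m}^{|A|}$. Each branch must be divided by its own factor. Dividing by a product of both preparation and measurement factors over all of $\mathrm{supp}(u)$ over-corrects and biases the estimator.
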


\section{Preliminaries}\label{sec:prelim}

In this section, we provide the background concepts and results required throughout this paper.
We use $\norm{A}$ and $\norm{A}_{\HS}$ to represent the operator norm and Hilbert-Schmidt (Frobenius) norm of matrix $A$.
For a function $f(x)$ defined on $x\in\mathbb{R}$, we use $\norm{f(x)}_1=\int_{\mathbb{R}}\abs{f}dx$ to represent the $L^1$ norm of the function.
We use standard big-O notations $(O,\Theta,\Omega)$, and use $\tilde{O}$ and $\tilde{\Theta}$ to hide the poly-logarithmic dependence in big-O notations.


\subsection{Pauli observables}
We start with some standard concepts in quantum information. 
A general $n$-qubit quantum state can be represented as a positive semi-definite matrix $\rho\in\mathbb{C}^{d\times d}$ of $d=2^n$ dimensions with $\Tr(\rho)=1$.
When the state has rank-$1$ and thus $\Tr(\rho^2)=1$, it is known as a pure state and is denoted as $\ket{\psi}$, $\ket{\phi}$, or $\ket{\varphi}$ throughout this paper.
We denote the $n$-qubit Hilbert space as $\mathcal{H}^d$ with $d=2^n$, and denote the $n$-qubit identity by $I_n$, omitting the subscript when it is clear from the context.

We define $\mathcal{P}_n=\{I,X,Y,Z\}^{\otimes n}$ to be the set of Pauli observables (strings), where
\begin{align}
I=\begin{pmatrix}
1 & 0 \\ 0 & 1
\end{pmatrix},\quad
X=\begin{pmatrix}
0 & 1 \\ 1 & 0
\end{pmatrix},\quad
Y=\begin{pmatrix}
0 & -i \\ i & 0
\end{pmatrix},\quad
Z=\begin{pmatrix}
1 & 0 \\ 0 & -1
\end{pmatrix}.
\end{align}
are single-qubit Pauli operators which represent transformations on a single qubit.
For any two Pauli strings $P,Q\in\mathcal{P}_n$, we have either $P$ and $Q$ commute, i.e. $[P,Q]=PQ-QP=0$, or $P$ and $Q$ anti-commute, i.e. $\{P,Q\}=PQ+QP=0$

Due to the relationship $Y = iXZ$, each Pauli operator on a single qubit can be also represented by a two-dimensional vector. 
This vector contains the parts representing the action of the $X$ and the part representing the action of the $Z$. 
Generally, for an $n$-qubit Pauli string, we have a binary vector in $V=\F^{2n}=\F^n\times\F^n$ as $u=(x\mid z)$ for $x,z\in\F^n$.
We call $V$ the symplectic space and $u$ the symplectic vector, and it can be used as the label of any Pauli string. 
To each label $u=(x\mid z)\in V$, we associate the canonical Hermitian Pauli string $P_u$ with $P_u^\dagger=P_u$ and $P_u^2=I$. 
The interactions between Pauli operators $P_u$ and $P_v$ with $u=(x\mid z)$ and $v=(x'\mid z')$ can also be represented in symplectic form. 
We define $\omega(u,v)\coloneqq x\cdot z'+z\cdot x'\in\F$ and the Walsh character
$\chi_u(v)\coloneqq(-1)^{\omega(u,v)}$, and one can verify that
\begin{align}
P_uP_v=\chi_u(v)P_vP_u,\qquad
2^{-2n}\sum_{v\in V}\chi_u(v)\chi_{u'}(v)=\mathbbm{1}[u=u'].
\end{align}


\subsection{Hamiltonians of interacting systems}

We now introduce the concept of Hamiltonian, which encodes the interaction forces between quantum particles in a physical system.
A Hamiltonian is a Hermitian operator $H\in\mathcal{H}^d$ with $H^\dagger=H$, with its time evolution described as $e^{-iHt}$.
Physically, it determines how the wave function of a quantum system evolves over time via the Schr\"{o}dinger equation $i\frac{\partial}{\partial t}|\psi(t)\rangle = \hat{H}|\psi(t)\rangle$, where $H$ is Hamiltonian and $|\psi(t)\rangle$ is the state of the quantum system at time $t$.
When the system energy is conserved, the above equation has a solution $|\psi(t)\rangle  = e^{-{iHt}}|\psi(0)\rangle$.
Given a Hamiltonian, we assume it has eigenvalues $\{E_j\}_j$ and corresponding eigenstates $\{\ket{e_j}\}_j$ with $H|e_j\rangle=E_j|e_j\rangle$.
When the Hamiltonian $H$ has bounded operator norm $\norm{H}\leq\Lambda$, we also have $\abs{E_j}\leq\Lambda$.
Besides the above spectral decomposition of Hamiltonians according to their eigenstates and eigenvalues, Hamiltonians can also be decomposed in the operator space. 
A common set of complete orthogonal bases is the Pauli operator $\mathcal{P}^{\otimes n}$, and any Hermitian operator (thus Hamiltonian) can be decomposed into the Pauli basis. 
A Hamiltonian is said to be $M$-sparse if the Hamiltonian can be written as $H=\sum_{u\in S}\lambda_u P_u$ with $\abs{S}\leq M$.
In the extreme case when $M=4^n$, the set of $4^n$-sparse Hamiltonian covers all Hamiltonians.
Based on the terminology above, we can clearly define our learning objectives. 
\begin{problem}[Hamiltonian Learning]\label{prob:sparse_ham_recover}
Let $H$ be a $M$-sparse Hamiltonian with $\norm{H}\leq\Lambda$ such that $H=\sum_{u\in S}\lambda_u P_u$, where $\lambda_u\in\mathbb{R}$ denotes the interaction strength for corresponding Pauli string $P_u$, and $S \subset V$ denotes the unknown support with $\abs{S}\leq M$.
Suppose we are given oracle access to the time-
evolution unitary $U(t)=e^{-iHt}$, our goal is to output an estimate $\wh{H}=\sum_{u\in \wh{S}}\wh{\lambda}_u P_u$ such that $|\wh{\lambda}_u-\lambda_u|\leq \epsilon$ for any $u\in V\setminus\{0\}$.
\end{problem}
Note that if $H=cI+H_0$ for some constant $c$, then $e^{-iHt}=e^{-ict}e^{-iH_0t}$ differs from $e^{-iH_0t}$ only by a global phase and is therefore invisible. 
So we do not consider learning terms that are entirely composed of identities.
Given an accuracy demand $\epsilon > 0$, to solve Problem~\ref{prob:sparse_ham_recover}, we only need to recover the coefficient in $\epsilon$-support $S_\epsilon\coloneqq\{u\in V\setminus\{0\}:|\lambda_u|\ge\epsilon\}$ up to $\epsilon$, and output $\wh{\lambda}_u=0$ for all other $u\in V\setminus S_\epsilon$. 

\subsection{The access models}

\begin{figure}[htbp]
\centering
\includegraphics[width=0.75\linewidth]{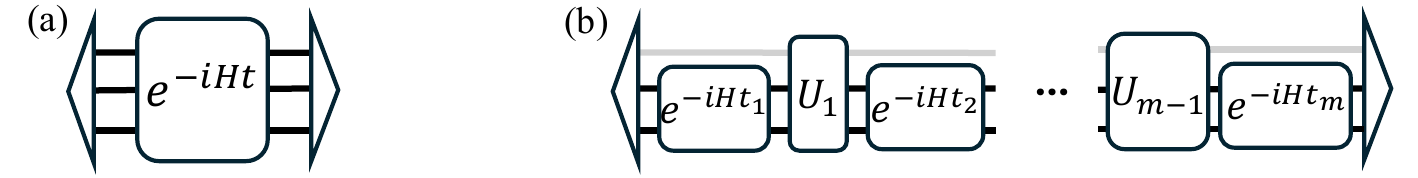}
\caption{(a) Control-free ancilla-free Hamiltonian learning protocols considered in this work. (b) The general Hamiltonian learning settings with ancillary qubits (grey qubits) and quantum control.}
\label{fig:control_free_protocol}
\end{figure}

In this section, we consider how we get access to the Hamiltonian and how we perform experiments to learn the Hamiltonian. 
The most general Hamiltonian learning protocols can be described by Figure~\ref{fig:control_free_protocol} (b).
In each experiment, the learner can prepare an input state on all working qubits (black qubits) and ancillary qubits (grey qubits), queries the Hamiltonian $m$ times with evolution time $t_1,...,t_m$ on the $n$ working qubits interleaved with $m-1$ unitary control $U_1,...,U_{m-1}$ between each two evolution time queries on all qubits, and make the final measurement on all qubits.
However, while hardware-aware controls provide useful algorithmic flexibility on near-term devices~\cite{cheng2025epoc,liang2024napa}, allowing arbitrary interleaved controls and ancillary qubits in a Hamiltonian-learning access model imposes stronger experimental assumptions.
In this paper, we therefore consider protocols without ancillary qubits and quantum control (as shown in Figure~\ref{fig:control_free_protocol} (a)). 
Moreover, we assume that the input state and measurement are restricted to a product state and a product measurement.
Quantitatively, a single experimental can be described as follows. 
\begin{definition}[Control-free, ancilla-free, and product IO access model]\label{def:access_model}
We consider learning an unknown Hamiltonian $H$ with a series of experiments described as follows:
In each experiment (shot), the learner can only
\begin{enumerate} 
\item prepare an arbitrary product state, 
\item perform an evolution $U(t)=e^{-iHt}$ for some chosen duration  $t\ge 0$,
\item and perform a single-qubit measurement.
\end{enumerate}
The learner can then collect all the measurement outcomes and reconstruct the coefficient of the Hamiltonian via classical postprocessing of complexity polynomial in $n$ and $1/\epsilon$.
\end{definition}
We remark that our protocol only require Pauli product input states and Pauli product measurement, which is an even more restricted setting than Definition~\ref{def:access_model}.
The cost of a protocol in this model is measured by the total evolution time $T_{\mathrm{tot}}\coloneqq \sum_{\text{all shots}} t_{\text{shot}}$ over all shots (experiments).

\subsection{Same-Pauli and Cross-Pauli measurements}\label{ssec:sectors}
Suppose $P$ and $Q$ are two Pauli strings. 
We use the Heisenberg notation to describe the time evolution of a Pauli string $P$ as $P(t)\coloneqq e^{iHt}Pe^{-iHt}$.
Given different choices of Pauli bases for the initial state and the final measurements, we classify the access model in Definition~\ref{def:access_model} into two types. 
The first type is referred to as the same-Pauli measurement, where the system is prepared under a Pauli string $P$ and read out under the same Pauli basis. 
The corresponding statistical signal can be represented by the auto-correlation function:
\begin{align}\label{eq:auto-correlation}
C_P(t) \coloneqq 2^{-n} \Tr(P(t)P) = 2^{-n} \Tr(e^{iHt}P e^{-iHt}P)
\end{align}
The first-order derivative of $C_P(t)$ at $t=0$ is proportional to $\Tr([H,P]P)$, which is always zero for any Pauli string $P$. 
Its second-order derivative is proportional to $\Tr([H,[H,P]]P)$, indicating that the initial evolution of the system is entirely driven by terms that anti-commute with $P$. 
Since the first order derivative is zero, we can only read the coefficients $\lambda_u^2$ corresponding to the quadratic terms $t^2$, resulting in sign-free signals. 
Therefore, we use same-Pauli measurements only to recover $\wh{S}_\epsilon$.

Another type is called the cross-Pauli measurement, where the system is prepared under one Pauli string $P$ but read out under a different Pauli basis $Q$. 
The corresponding signals are characterized by cross-correlation functions
\begin{equation}\label{eq:cross-correlation}
F_{P,Q}(t) \coloneqq 2^{-n} \Tr(P(t)Q) = 2^{-n} \Tr(e^{iHt}P e^{-iHt}Q).
\end{equation}
Here, the first-order derivative at $t=0$ is proportional to $\Tr([H,P]Q)$, which is non-zero. 
This indicates that the signal changes linearly with time $t$ at $t=0$, thus preserving the sign of $\lambda_u$. 
Therefore, cross-measurements can be used to estimate precise individual parameters after we have characterized $\wh{S}_\epsilon$.

\subsection{The trace identity}\label{ssec:trace}
We will need the following result to connect product-state experiments with the trace functions.
\begin{proposition}[Trace rule]\label{prop:trace-main}
Given Pauli strings $P,Q$, sample uniformly from an eigenbasis $\{|\psi\rangle\}$ of $Q$, let $q\in\{\pm1\}$ be the eigenvalue of the prepared state, evolve by $e^{-iHt}$, and measure $P$ with outcome $m\in\{\pm1\}$. Then, we have
\begin{align}
\E[qm]=2^{-n}\Tr(P(t)Q).
\end{align}
\end{proposition}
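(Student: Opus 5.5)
The plan is to write the expectation as a double sum over the $2^n$ eigenstates of $Q$ and then collapse it to a trace using spectral projectors. Since $Q$ is a Pauli string with $Q^2=I$, it has eigenvalues $\pm1$, each with multiplicity $2^{n-1}$ (for $Q\neq I$). Fix an orthonormal eigenbasis $\{\ket{\psi_j}\}_{j=1}^{2^n}$ with $Q\ket{\psi_j}=q_j\ket{\psi_j}$. For a product Pauli $Q$ we may take these to be product eigenstates, so they are preparable in our access model.

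\textbf{Step 1 (conditional expectation of $m$).} Condition on the prepared state $\ket{\psi_j}$. We measure the observable $P$ on $e^{-iHt}\ket{\psi_j}$; operationally this means measuring each qubit in its local Pauli basis and multiplying the outcomes, up to the Hermitian phase convention for $P$. The outcome $m=\pm1$ then has mean
\begin{align}
\E[m\mid j]=\bra{\psi_j}e^{iHt}Pe^{-iHt}\ket{\psi_j}=\bra{\psi_j}P(t)\ket{\psi_j}.
\end{align}
This follows from the Born rule applied to the projectors $(I\pm P)/2$, together with $m=\pm1$.

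\textbf{Step 2 (average over the uniform input).} Since $q=q_j$ is deterministic given $j$,
\begin{align}
\E[qm]=2^{-n}\sum_j q_j\bra{\psi_j}P(t)\ket{\psi_j}=2^{-n}\sum_j\bra{\psi_j}P(t)Q\ket{\psi_j}=2^{-n}\Tr(P(t)Q).
\end{align}
The middle equality uses $Q\ket{\psi_j}=q_j\ket{\psi_j}$. The last equality holds because $\{\ket{\psi_j}\}$ is an orthonormal basis, so $\sum_j\ket{\psi_j}\bra{\psi_j}=I$ and the sum is exactly the trace.

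\textbf{Main obstacle.} There is no real obstacle here. The only care needed is to justify that the product readout implements a measurement of $P$ with $\pm1$ eigenvalue outcome. Concretely, for a tensor-product Pauli the product of local outcomes has the same statistics as the projective measurement onto the $\pm1$ eigenspaces of $P$, with any sign convention absorbed into $m$. The case $Q=I$ is trivial, since then $q=1$ always and the identity still holds.
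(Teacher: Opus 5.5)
Your proposal is correct and follows essentially the same argument as the paper. Both proofs condition on the prepared eigenstate to get $\E[m\mid\psi]=\langle\psi|P(t)|\psi\rangle$, then reduce the $q$-weighted uniform average to $2^{-n}\Tr(P(t)Q)$. The paper does this through $Q=\sum_\psi q(\psi)|\psi\rangle\langle\psi|$, and you do it through $Q|\psi_j\rangle=q_j|\psi_j\rangle$ inside the trace, which is the same step.
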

This rule will be used for both same-Pauli and cross-Pauli measurements.
\section{Sampling subroutines}\label{sec:tricks}
This section provides several sampling subroutines as essential building blocks of our algorithm. 
To begin with, we decompose the Hamiltonian learning task into several subproblems.
The first subproblem is how to design experiments with measurement results corresponding to the short-time derivatives of correlation functions. 
As the Hamiltonian has a bounded operator norm $\Lambda$, these can be addressed by constructing a kernel (see Section~\ref{ssec:time-sample} for details). 
The second subproblem is how to choose the input states and measurement bases to utilize the information revealed by same-Pauli measurements and cross-Pauli measurements. 
To this end, we first use same-Pauli measurements to determine the support in the structure learning stage (see Section~\ref{ssec:displacement}), and then use cross-Pauli measurements to accurately estimate the coefficients in the coefficient learning stage (see Section~\ref{ssec:parity-sample} for details). 
Concretely, when we represent Pauli strings in symplectic forms, same-Pauli measurements recover a candidate set of size independent of $n$ that contains the $\epsilon$-support of the Hamiltonian. 
Cross-Pauli measurements with randomly selected local Pauli bases recover all coefficients in the $\epsilon$-support set by designing the parity variable and applying weight corrections. We also state the two subroutines in the form of pseudocode.

\subsection{Random time sampling}\label{ssec:time-sample}
We now show how to obtain an unbiased estimation of the derivatives of the correlation function with a finite number of samples by constructing kernels. 
Consider same-Pauli measurements first, and the cross-Pauli measurement case can be addressed by a similar method. 
First, we diagonalize the Hamiltonian as $H|e_j\rangle=E_j|e_j\rangle$ with $|E_j|\le\Lambda$.
We have $\langle e_j|P(t)|e_k\rangle=e^{i(E_j-E_k)t}\langle e_j|P|e_k\rangle$, and thus $C_P(t)=2^{-n}\sum_{j,k}|\langle e_j|P|e_k\rangle|^2e^{i(E_j-E_k)t}$.
This indicates that $C_P(t)$ has a Fourier support of $[-2\Lambda,2\Lambda]$, which is band-limited. 
One can construct a kernel to recover derivatives of $C_P(t)$ at zero as a weighted average of values at finite times. 
In particular, as we can rewrite $C_P(t)=\int_{-2\Lambda}^{2\Lambda}\cos(\omega t)d\mu(\omega)$ as an integral within the band limitation for some probability measure $\mu$.  
Differentiating under the integral, we derive that
\begin{align}
C_P''(0)=-\int_{-2\Lambda}^{2\Lambda}\omega^2d\mu(\omega).
\end{align}
To obtain the sampling time, we need to rewrite the above equation to return to the time domain. 
This follows the same principle as the Whittaker-Kotelnikov-Shannon sampling theorem and its derivative formulae~\cite{whittaker1915functions,kotelnikov1933transmission,shannon1949communication,butzer2012shannon}.
We fix a smooth cutoff $\varphi\in C_c^\infty(\mathbb R)$ satisfying $\varphi(\xi)=1\ (|\xi|\le1)$ and $\varphi(\xi)=0\ (|\xi|\ge2)$. We define $\Phi_\Lambda(\omega)\coloneqq\omega^2 \varphi(\omega/(2\Lambda))$. The cutoff is equal to one on the true frequency window $|\omega|\le2\Lambda$, and it smoothly goes to zero before $|\omega|=4\Lambda$. 
The smooth transition is useful because a sharp cutoff would create sinc-type tails in the kernel, whereas the $C^\infty$ cutoff ensures the integrability. 
This smooth cutoff is important because it makes the kernel integrable. We define the kernel $K_\Lambda(t)\coloneqq-\frac{1}{2\pi}\int_{\mathbb R}\Phi_\Lambda(\omega)e^{i\omega t}d\omega$.
We can thus obtain the Fourier inversion:
\begin{align}
\int_{\mathbb R}K_\Lambda(t)C_P(t)dt=2\int_0^\infty K_\Lambda(t)C_P(t)dt=-\int_{-2\Lambda}^{2\Lambda}\omega^2d\mu(\omega)=C_P''(0).
\end{align}
We also need to regularize the kernel by defining
\begin{align}
q_\Lambda(t)\coloneqq\frac{2|K_\Lambda(t)|}{\|K_\Lambda\|_1},\qquad t\ge0.
\end{align}
is the probability density corresponding to the sampling time. 
We are now ready to provide the random time sampling routine with its performance guarantee.

\begin{proposition}[Unbiasedness for estimating $C_P''(0)$]\label{prop:even-sample-main}
Sample $\tau\sim q_\Lambda$, and then perform a single same-Pauli measurement with respect to $P_v$ with evolution time $\tau$. 
Let $q\in\{\pm1\}$ be the eigenvalue of the prepared state of $P_v$, evolve by $e^{-iH\tau}$, and measure $P_v$ with outcome $m\in\{\pm1\}$.
Define $X_{v,\tau}\coloneqq qm$ and the weighted sample $D_v\coloneqq\|K_\Lambda\|_1\sign(K_\Lambda(\tau))X_{v,\tau}$.
Then the estimator $D_v$ is an unbiased estimator of $C_{P_v}''(0)$, i.e., $\E[D_v]=C_{P_v}''(0)$.
\end{proposition}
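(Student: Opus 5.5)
We condition on the sampled time $\tau$ and then average over $\tau\sim q_\Lambda$. The argument has three steps: identify the conditional mean of $X_{v,\tau}$ via Proposition~\ref{prop:trace-main}, cancel the kernel normalization against the density, and use the Fourier inversion identity for $K_\Lambda$.

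\textbf{Conditional mean.} For a fixed $\tau$, apply Proposition~\ref{prop:trace-main} with $P=Q=P_v$. The prepared state is a uniformly random eigenstate of $P_v$, so
\begin{align}
\E[X_{v,\tau}\mid\tau]=2^{-n}\Tr(P_v(\tau)P_v)=C_{P_v}(\tau).
\end{align}

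\textbf{Averaging over $\tau$.} By the tower rule and the definition of $q_\Lambda$,
\begin{align}
\E[D_v]=\int_0^\infty q_\Lambda(t)\,\|K_\Lambda\|_1\sign(K_\Lambda(t))\,C_{P_v}(t)\,dt=\int_0^\infty 2|K_\Lambda(t)|\sign(K_\Lambda(t))\,C_{P_v}(t)\,dt=2\int_0^\infty K_\Lambda(t)C_{P_v}(t)\,dt.
\end{align}
This step needs three facts.
\begin{itemize}
\item[(i)] $q_\Lambda$ is a probability density on $[0,\infty)$. Because $\Phi_\Lambda$ is real and even, $K_\Lambda$ is real and even. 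Hence $\int_0^\infty|K_\Lambda|=\|K_\Lambda\|_1/2$, and $q_\Lambda$ integrates to $1$.
\item[(ii)] $\|K_\Lambda\|_1<\infty$. Since $\Phi_\Lambda\in C_c^\infty$, $K_\Lambda$ is a Schwartz function, so it is integrable.
\item[(iii)] The integral converges absolutely. This holds because $|C_{P_v}|\le 1$.
\end{itemize}

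\textbf{Fourier inversion.} Write the spectral form
\begin{align}
C_{P_v}(t)=\sum_{j,k}w_{jk}e^{i\omega_{jk}t},\qquad w_{jk}=2^{-n}|\langle e_j|P_v|e_k\rangle|^2\ge 0,\quad \omega_{jk}=E_j-E_k\in[-2\Lambda,2\Lambda].
\end{align}
Since $C_{P_v}$ is even and $K_\Lambda$ is even, $2\int_0^\infty K_\Lambda C_{P_v}=\int_{\mathbb R}K_\Lambda C_{P_v}$. For each frequency, by Fourier inversion for the Schwartz function $\Phi_\Lambda$,
\begin{align}
\int_{\mathbb R}K_\Lambda(t)e^{i\omega t}\,dt=-\Phi_\Lambda(-\omega)=-\omega^2\varphi\!\left(\omega/(2\Lambda)\right).
\end{align}
For $|\omega|\le2\Lambda$ the cutoff equals one, so this is $-\omega^2$. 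Summing over the finitely many terms gives
\begin{align}
\E[D_v]=-\sum_{j,k}w_{jk}\,\omega_{jk}^2=C_{P_v}''(0),
\end{align}
where the last equality follows by differentiating the finite exponential sum twice.

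The main point of care is step (ii) together with the inversion sign and $2\pi$ conventions. Both follow from $\Phi_\Lambda$ being smooth and compactly supported. The band-limit $|\omega_{jk}|\le2\Lambda$ comes from $\norm{H}\le\Lambda$, and it is exactly what makes the cutoff invisible on the spectrum.
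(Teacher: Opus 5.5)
Your proof is correct and follows the paper's route. The trace rule gives $\E[X_{v,\tau}\mid\tau]=C_{P_v}(\tau)$, the density $q_\Lambda$ cancels the normalization to leave $2\int_0^\infty K_\Lambda C_{P_v}\,dt$, and the even-kernel identity (the paper's Lemma~\ref{lem:even-kernel}) turns this into $C_{P_v}''(0)$; the only cosmetic difference is that you check the Fourier identity directly on the finite sum over eigenvalue gaps instead of through the symmetric spectral measure of $\ad_H$, which is equivalent.
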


Similarly, for cross-Pauli measurements, we can also construct a kernel for estimating the first-order derivative. 
We first define $L_\Lambda(t)\coloneqq\frac{1}{2\pi}\int_{\mathbb R}\Psi_\Lambda(\omega)e^{-i\omega t}d\omega$, where $\Psi_\Lambda(\omega)\coloneqq i\omega\varphi(\omega/(2\Lambda))$, and then regularize it as
\begin{align}\label{eq:p_lambda}
p_\Lambda(t)\coloneqq\frac{2|L_\Lambda(t)|}{\|L_\Lambda\|_1}, \qquad t\ge 0.
\end{align}
Due to the different structure of the first-order derivative, we need additional parity designs, which will be covered in Section~\ref{ssec:parity-sample}.
\subsection{Displacement sampling}
\label{ssec:displacement}

We now discuss how to locate the support set using same-Pauli measurements. 
Note that $C_P''(0)=-4\sum_{u:\{P_u,P\}=0}\lambda_u^2$, indicating that the same-Pauli measurement signal of a Pauli string is the sum of squares of the coefficients of all terms that anti-commute with it. 
For simplicity, we denote $P_{(x\mid z)} = i^{x\cdot z}\prod_{i=1}^n X_i^{x_i} Z_i^{z_i}$ in the symplectic form.
Let $|a\rangle = |a_1,\dots,a_n\rangle$ be a computational basis state of $Z$ such that $Z^z |a\rangle = (-1)^{z\cdot a} |a\rangle$ and $X^x |a\rangle = |a+x\rangle$. 
Therefore, we have $P_{(x\mid z)} |a\rangle=i^{x\cdot z}(-1)^{z\cdot a}|a+x\rangle$.
Similarly, we define $|b\rangle_X = |b_1,\dots,b_n\rangle_X$ as a $X$-basis eigenvector. Since $Z^z|b\rangle_X=|b+z\rangle_X$ and $X^x|b+z\rangle_X=(-1)^{x\cdot(b+z)}|b+z\rangle_X$, we have $P_{(x\mid z)}|b\rangle_X=i^{x\cdot z}(-1)^{x\cdot(b+z)}|b+z\rangle_X
=(-i)^{x\cdot z}(-1)^{x\cdot b}|b+z\rangle_X$.
The above results indicate that $X$ operators result in bit flips for $Z$-basis same-Pauli measurements, and $Z$ operators result in bit flips for $X$-basis same-Pauli measurements. 
We denote displacements $d$ and $e$ as the bit flip patterns of $Z$- and $X$-basis same-Pauli measurements, and define 
\begin{align}
W_x(d)\coloneqq\sum_{z\in\F^n}\lambda_{(d\mid z)}^2,\qquad
W_z(e)\coloneqq\sum_{x\in\F^n}\lambda_{(x\mid e)}^2.
\end{align}
as sums of $\lambda_u^2$ of all terms that have corresponding displacements of the $Z$- and $X$-basis.
We can show the following proposition.

\begin{proposition}[Distributions of $Z$- and $X$-basis displacements]\label{prop:displacements-main}
Let $D\in\F^n$ be the displacement for $Z$-basis same-Pauli measurement with a uniformly random eigenstate of $Z^n$ as input and $\tau\sim q_\Lambda$, and let $E\in\F^n$ be the displacement for $X$-basis same-Pauli measurements.  
Then for all nonzero $d,e\in\F^n$, we have
\begin{align}
\Pr[D=d]\ge \frac{2W_x(d)}{\|K_\Lambda\|_1},
\qquad
\Pr[E=e]\ge \frac{2W_z(e)}{\|K_\Lambda\|_1}.
\end{align}
\end{proposition}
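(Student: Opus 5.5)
The plan is to write $\Pr[D=d]$ as a time average of a nonnegative, even, band-limited transition function. I will then lower bound $q_\Lambda$ by the signed kernel $K_\Lambda$ and invoke the same Fourier-inversion identity that underlies Proposition~\ref{prop:even-sample-main}. The $X$-basis statement follows from the $Z$-basis one by conjugating with $H^{\otimes n}$ (the Hadamard on every qubit). This conjugation swaps the roles of $x$ and $z$ in the labels and maps $\|H\|\le\Lambda$ to itself. So I will treat only $D$.

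\textbf{Step 1 (conditional law).} Fix $\tau$ and let $U=e^{-iH\tau}$. With uniformly random input $\ket{a}$ and computational-basis outcome $\ket{a+d}$, the conditional law is
\begin{align}
f_d(\tau)\coloneqq\Pr[D=d\mid\tau]=2^{-n}\sum_{a\in\F^n}\abs{\langle a+d|U|a\rangle}^2 .
\end{align}
Expanding in the eigenbasis gives $f_d(\tau)=\sum_{j,k}c_{jk}e^{i(E_j-E_k)\tau}$, with frequencies in $[-2\Lambda,2\Lambda]$. The function $f_d$ is real and even: replacing $\tau$ by $-\tau$ conjugates the amplitude and interchanges $a$ with $a+d$, and the sum over $a$ absorbs this interchange. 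Hence $f_d(\tau)=\int_{-2\Lambda}^{2\Lambda}\cos(\omega\tau)\,d\mu_d(\omega)$ for a finite signed measure $\mu_d$. Moreover $f_d\ge 0$.

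\textbf{Step 2 (kernel comparison).} Since $f_d\ge0$ and $|K_\Lambda|\ge K_\Lambda$, we have
\begin{align}
\Pr[D=d]=\frac{2}{\|K_\Lambda\|_1}\int_0^\infty |K_\Lambda(\tau)|f_d(\tau)\,d\tau
\ge\frac{2}{\|K_\Lambda\|_1}\int_0^\infty K_\Lambda(\tau)f_d(\tau)\,d\tau
=\frac{1}{\|K_\Lambda\|_1}\int_{\mathbb R}K_\Lambda f_d .
\end{align}
The last equality uses that both $K_\Lambda$ and $f_d$ are even. Next I apply the inversion identity. Because $\Phi_\Lambda(\omega)=\omega^2$ on $|\omega|\le2\Lambda$, we get $\int_{\mathbb R}K_\Lambda(t)\cos(\omega t)\,dt=-\omega^2$ there. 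Fubini is justified by $K_\Lambda\in L^1$ and $|\mu_d|$ finite. This gives $\int K_\Lambda f_d=-\int\omega^2 d\mu_d=f_d''(0)$. The identity is linear, so it holds for signed $\mu_d$ just as for the probability measure used earlier.

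\textbf{Step 3 (computing $f_d''(0)$).} For $d\ne0$ we have $\langle a+d|a\rangle=0$, so $\langle a+d|U|a\rangle=-i\tau\langle a+d|H|a\rangle+O(\tau^2)$. Therefore $f_d''(0)=2\cdot2^{-n}\sum_a|\langle a+d|H|a\rangle|^2$. Using $P_{(x\mid z)}\ket{a}=i^{x\cdot z}(-1)^{z\cdot a}\ket{a+x}$, we obtain
\begin{align}
\langle a+d|H|a\rangle=\sum_z\lambda_{(d\mid z)}i^{d\cdot z}(-1)^{z\cdot a}.
\end{align}
Parseval over $a\in\F^n$ then gives $2^{-n}\sum_a|\cdot|^2=\sum_z\lambda_{(d\mid z)}^2=W_x(d)$. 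Hence $f_d''(0)=2W_x(d)$, and so $\Pr[D=d]\ge 2W_x(d)/\|K_\Lambda\|_1$.

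\textbf{Main obstacle.} The step needing the most care is Step 2: justifying the sign trick together with the interchange of integrals. The sign trick relies on $f_d\ge0$ pointwise and on the correct sign convention for $K_\Lambda$; I checked that $\int K_\Lambda f=f''(0)$, and $f_d''(0)>0$, so the inequality $|K_\Lambda|\ge K_\Lambda$ goes in the right direction. The interchange of integrals relies on integrability of $K_\Lambda$, which holds because $\Phi_\Lambda$ is smooth and compactly supported.
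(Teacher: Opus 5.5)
Your proof is correct and matches the paper's argument step for step. The paper likewise writes $\Pr[D=d]$ as the $q_\Lambda$-average of the even, nonnegative, band-limited function $f_d$, uses $f_d\ge 0$ and the even-kernel identity to get $\int_0^\infty |K_\Lambda| f_d \ge \tfrac12 f_d''(0)$, and computes $f_d''(0)=2W_x(d)$ from $\alpha_a'(0)=-i\langle a+d|H|a\rangle$ and Walsh orthogonality over $a$.

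The only difference is the $X$-basis case. You reduce it to the $Z$-basis case by conjugating with the all-qubit Hadamard, which preserves $\|H\|$ and swaps $x$- and $z$-projections. The paper instead repeats the computation using $P_{(x\mid e)}|a\rangle_X\propto(-1)^{x\cdot a}|a+e\rangle_X$. Both work, but avoid writing the Hadamard as $H^{\otimes n}$, since $H$ already denotes the Hamiltonian.
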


Given the accuracy demand $\epsilon$, we define
\begin{align}
\Pi_x^{(\epsilon)}\coloneqq\{d\in\F^n:W_x(d)\ge\epsilon^2\},\qquad
\Pi_z^{(\epsilon)}\coloneqq\{e\in\F^n:W_z(e)\ge\epsilon^2\}.
\end{align}
Note that $\sum_d W_x(d)=\sum_e W_z(e)=\sum_u\lambda_u^2=2^{-n}\Tr(H^2)\le\Lambda^2$, each of these sets contains at most $\Lambda^2/\epsilon^2$ nonzero elements.
We then design a displacement sieve algorithm to reconstruct two sets that contain $\Pi_x^{(\epsilon)}$ and $\Pi_z^{(\epsilon)}$. The concrete procedure is given in Algorithm~\ref{alg:displacement-sieve}.
In the $Z$ basis, the observed bit flip $d=a+b$ estimates the $x$-projection; in the $X$ basis, the same procedure estimates the $z$-projection. 
The time parts in the pseudocode adopt the discrete-time assumption (see Appendix~\ref{app:lattice}).
\begin{algorithm}[htbp]
\SetAlgoLined
\caption{Displacement sieve from same-Pauli shots}
\label{alg:displacement-sieve}
\KwIn{detection scale $\epsilon$, failure probability $\eta$, norm bound $\Lambda$}
\KwOut{projection sets $\widehat\Pi_x,\widehat\Pi_z\subseteq\mathbb F_2^n$}
Set $p\leftarrow \epsilon^2/(4\kappa_0\Lambda^2)$, $r\leftarrow\lceil\Lambda^2/\epsilon^2\rceil$, $N\leftarrow\lceil C_0p^{-1}\log(4r)\rceil$, $L\leftarrow\lceil C_1\log(4/\eta)\rceil$, and $\tau_{\max}\leftarrow4\kappa_1N/\Lambda$ for some suitable constants $C_0,C_1$ and constants $\kappa_0,\kappa_1$ chosen in Appendix~\ref{app:quadratic}\;
Initialize $\widehat\Pi_x\leftarrow\varnothing$ and $\widehat\Pi_z\leftarrow\varnothing$\;
\For{$B\in\{Z,X\}$}{
  \For{$\ell=1,\ldots,L$}{
    Sample $t_1,\ldots,t_N\sim q_\Lambda$ independently and set $\Theta\leftarrow\sum_{j=1}^N t_j$\;
    \If{$\Theta>\tau_{\max}$}{abort this round and continue to the next round\;}
    Initialize counts $A_h\leftarrow0$ for all observed nonzero displacements $h\in\mathbb F_2^n$\;
    \For{$j=1,\ldots,N$}{
      Prepare a uniformly random product eigenstate $|a_j\rangle_B$ in the $B$ basis\;
      Evolve for time $t_j$ and measure all qubits in the same $B$ basis, obtaining $b_j\in\mathbb F_2^n$\;
      Set $h_j\leftarrow a_j+b_j$ over $\mathbb F_2$\;
      \If{$h_j\neq0$}{$A_{h_j}\leftarrow A_{h_j}+1$\;}
    }
    \eIf{$B=Z$}{
      $\widehat\Pi_x\leftarrow\widehat\Pi_x\cup\{d\neq0:A_d\ge Np\}$\;
    }{
      $\widehat\Pi_z\leftarrow\widehat\Pi_z\cup\{e\neq0:A_e\ge Np\}$\;
    }
  }
}
\Return{$\widehat\Pi_x,\widehat\Pi_z$}\;
\end{algorithm}

\begin{lemma}[Displacement sieve algorithm]\label{lem:projection-main}
For any $0<\epsilon\le\Lambda$ and failure probability $\eta\in(0,1)$, there is a protocol using same-Pauli measurements that outputs sets $\wh\Pi_x,\wh\Pi_z\subseteq\F^n$ with size $O(\tfrac{\Lambda^2}{\epsilon^2}\log\tfrac{1}{\eta})$ such that, $\Pi_x^{(\epsilon)}\setminus\{0\}\subseteq\wh\Pi_x$ and $\Pi_z^{(\epsilon)}\setminus\{0\}\subseteq\wh\Pi_z$, with probability at least $1-\eta$ in deterministic total evolution time at most $O(\frac{\Lambda}{\epsilon^2}\log\tfrac{4\Lambda}{\epsilon}\log\tfrac{4}{\eta})$.
\end{lemma}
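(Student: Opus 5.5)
The proof analyses Algorithm~\ref{alg:displacement-sieve} one basis at a time. By the $Z\leftrightarrow X$ symmetry it is enough to treat $B=Z$ and $\wh\Pi_x$. There are three ingredients: (i) every heavy displacement is sampled with probability at least a constant multiple of $p$, so a Chernoff bound catches it in one round; (ii) the number of displacements passing the threshold in a round is deterministically at most $N/(Np)=1/p$; (iii) the time budget $\Theta$ exceeds $\tau_{\max}$ only rarely, by Markov's inequality. Amplifying over the $L$ rounds then gives failure probability $\eta$.

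\textbf{Step 1: sampling probability of heavy displacements.} By Proposition~\ref{prop:displacements-main}, each shot in one round independently yields $h_j=d$ with probability at least $2W_x(d)/\|K_\Lambda\|_1$. Here I use a kernel-norm estimate from Appendix~\ref{app:quadratic}. By scaling, $K_\Lambda(t)=\Lambda^3K_1(\Lambda t)$, so $\|K_\Lambda\|_1=\kappa_0'\Lambda^2$ for a constant $\kappa_0'$ depending only on $\varphi$; I take $\kappa_0\ge\kappa_0'/2$. Then every $d\in\Pi_x^{(\epsilon)}\setminus\{0\}$ satisfies $\Pr[h_j=d]\ge 2\epsilon^2/(\kappa_0'\Lambda^2)\ge 4p$, and $A_d\sim\mathrm{Bin}(N,\ge4p)$. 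A multiplicative Chernoff bound gives $\Pr[A_d<Np]\le e^{-cNp}\le 1/(16r)$ once $C_0$ is a large constant. The set $\Pi_x^{(\epsilon)}$ has at most $r$ nonzero elements, so a union bound shows that in a single non-aborted round all heavy $d$ are captured with probability at least $3/4$.

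\textbf{Step 2: time budget and abort probability.} I need $\E_{t\sim q_\Lambda}[t]=\kappa_1'/\Lambda$. This holds because $q_\Lambda(t)=\Lambda q_1(\Lambda t)$, and $K_1$ is a Schwartz function (being the Fourier transform of the $C_c^\infty$ function $\Phi_1$), so its first moment is finite. Then $\E[\Theta]=N\kappa_1'/\Lambda$, and with $\kappa_1\ge\kappa_1'$ Markov's inequality gives $\Pr[\Theta>\tau_{\max}]\le 1/4$. Hence each round independently succeeds, meaning it is not aborted and captures all heavy $d$, with probability at least $1/2$. Over $L=\lceil C_1\log(4/\eta)\rceil$ rounds, the probability that no round succeeds is $2^{-L}\le\eta/2$, and a union over the two bases gives total failure probability $\eta$. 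Success in any single round suffices because the output is a union over rounds. The evolution time is deterministic, since every round is capped at $\tau_{\max}$ (for an aborted round, the evolution is never run). The total is therefore at most $2L\tau_{\max}=O(LN/\Lambda)$. Substituting $N=O(\Lambda^2\epsilon^{-2}\log(4r))$ with $\log(4r)=O(\log(4\Lambda/\epsilon))$ yields the claimed bound.

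\textbf{Step 3: output size.} In each round $\sum_h A_h\le N$, so at most $1/p=O(\Lambda^2/\epsilon^2)$ displacements pass the threshold. With $2L$ rounds the output has size $O(\frac{\Lambda^2}{\epsilon^2}\log\frac1\eta)$.

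I expect the main obstacle to be Step 1 and Step 2's constants, namely confirming from the kernel construction that $\|K_\Lambda\|_1=\Theta(\Lambda^2)$ and $\E_{q_\Lambda}[\tau]=\Theta(1/\Lambda)$. I plan to handle both by the change of variables $\omega=\Lambda\omega'$ together with the rapid decay of Fourier transforms of $C_c^\infty$ functions. A second point needs care in Step 1: the stated lower bound of Proposition~\ref{prop:displacements-main} concerns the outcome of a single shot. The shots are independent given their times, and the times are i.i.d. But conditioning on the event $\Theta\le\tau_{\max}$ breaks this independence. To avoid the problem, I bound the failure event without conditioning, as $\Pr[\text{miss}]+\Pr[\text{abort}]$, rather than analysing the conditional law.
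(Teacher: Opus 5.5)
Your proposal is correct and follows the paper's proof: a per-round Chernoff-plus-union bound for the heavy displacements, a Markov bound on the abort event, failure bounded on the planned sample space as miss-or-abort, independent repetition over $L$ rounds with a union over the two bases, the deterministic $1/p$-per-round output-size count, and the $2L\tau_{\max}$ time cap. The only difference is how you normalize $\kappa_0,\kappa_1$. Your choice gives a per-shot probability of $4p$ and an abort probability of $1/4$, while the paper's $\|K_\Lambda\|_1=4\kappa_0\Lambda^2$ and mean time $\kappa_1/(2\Lambda)$ give $2p$ and $1/8$; this does not affect the argument.
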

After performing the displacement sieve algorithm, we formed a candidate set by combining these components using the Cartesian product.
\begin{align}
\wh\cU_{\mathrm{proj}}\coloneqq(\wh\Pi_x\cup\{0\})\times(\wh\Pi_z\cup\{0\})\setminus\{0\}.
\end{align}
Every coefficient with $|\lambda_{(x\mid z)}|\ge\epsilon$ belongs to this candidate set. 
Therefore, the final support is searched inside \textbf{$\widehat\cU_{\mathrm{proj}}$}. 
If the displacement sieve is run with failure probability $\eta$, then the deterministic size bound is $|\widehat\cU_{\mathrm{proj}}|
\le
O\left(\frac{\Lambda^4}{\epsilon^4}\log^2\frac{1}{\eta}\right).$

\subsection{Parity sampling}\label{ssec:parity-sample}
To estimate the values of the coefficients given $\wh\Pi_x,\wh\Pi_z$, we use cross-Pauli measurements. 
Fix a Pauli string $c=(c_1,\ldots,c_n)\in\{X,Y,Z\}^n$.
At each qubit $i$, we choose the two remaining Pauli bases $a_i,b_i$ in cyclic order $(c_i,a_i,b_i)\in\{(X,Y,Z),(Y,Z,X),(Z,X,Y)\}$.
The letter $c_i$ is the basis whose coefficient we want to estimate through a cross-Pauli measurement experiment with input basis $a_i$ and output basis $b_i$. 
Given $A,B\subseteq[n]$, we define $Q_A\coloneqq\bigotimes_{i=1}^n Q_i^{(A)}$ with $Q_i^{(A)}=a_i$ if $i\in A$ and $Q_i^{(A)}=I$ if $i\notin A$.
Similarly, we define $P_B\coloneqq\bigotimes_{i=1}^n P_i^{(B)}$ with $P_i^{(B)}=b_i$ if $i\in B$ and $P_i^{(B)}=I$ if $i\notin B$.
For each $(A,B)$, we define an associated Pauli label $u_c(A,B)$ with $(P_{u_c(A,B)})_i$ being $I$, $a_i$, $b_i$, and $c_i$ when $i\notin A\cup B$, $i\in A\setminus B$, $i\in B\setminus A$, and $i\in A\cap B$.

We consider the first-order derivative of the corresponding cross-correlation $F^{(c)}_{A,B}(t)\coloneqq2^{-n}\Tr(P_B(t)Q_A)$. 
A naive observation is that $\frac12\Bigl(F^{(c)}_{A,B}(t)-F^{(c)}_{A,B}(-t)\Bigr)$ gives an accurate estimation of the derivative. 
However, our access model in Definition~\ref{def:access_model} does not allow inverse evolutions. 
To address this issue, we notice that $P_{u_c(A,B)}P_B=(-1)^{|A\cap B|}P_BP_{u_c(A,B)}$, which indicates that $(F^{(c)}_{A,B})'(0)$ is nonzero only when $|A\cap B|$ is odd.
Based on this fact, we design the measurement subroutine to measure the corresponding first-order derivative when $|A\cap B|$ is odd.
First, we sample $\tau$ from the cross-Pauli measurements kernel density $p_\Lambda$ as in Eq.~\eqref{eq:p_lambda} and sample $S\in\{\pm1\}$ uniformly. 
If $S=+1$, we prepare a random local $a_i$ eigenstate on each qubit, evolve under the Hamiltonian for time $\tau$, and measure along the basis $b_i$ on every qubit, obtaining outcomes $m_i\in\{\pm1\}$. 
If $S=-1$, we reverse $b_i$ and $a_i$ as in the case of $S=+1$, obtaining outcomes $n_i\in\{\pm1\}$. 
We define the parity variable $Z^{(c)}_{A,B}=\prod_{i\in A}s_i\prod_{i\in B}m_i$ if $S=+1$ and $Z^{(c)}_{A,B}=-\prod_{i\in B}r_i\prod_{i\in A}n_i$ if $S=-1$,
where $s_i,r_i$ are the signs of the eigenstates of the input quantum states for $a_i$ when $S=+1$ or $b_i$ when $S=-1$. 
We then define the weighted parity sample
\begin{align}
G^{(c)}_{A,B}\coloneqq\|L_\Lambda\|_1\sign(L_\Lambda(T))Z^{(c)}_{A,B},
\end{align}
and show that it is an unbiased estimator of $(F^{(c)}_{A,B})'(0)$ as shown in the following proposition.
\begin{proposition}[Unbiasedness for estimating $(F^{(c)}_{A,B})'(0)$]\label{prop:parity-main}
Let $\kappa\coloneqq|A\cap B|$. 
Then $G^{(c)}_{A,B}$ is an unbiased estimator for $(F^{(c)}_{A,B})'(0)$, i.e., $\E[G^{(c)}_{A,B}]=(F^{(c)}_{A,B})'(0)$.
We also have $(F^{(c)}_{A,B})'(0)=2\sigma_c(A,B)\lambda_{u_c(A,B)}$ for odd $\kappa$ and $0$ for even $\kappa$, where $\sigma_c(A,B)\coloneqq i(-i)^\kappa\in\{\pm1\}$ when $\kappa$ is odd.  
In addition, we have $|G^{(c)}_{A,B}|\le2\ell_0\Lambda$ for a constant $\ell_0>0$.
\end{proposition}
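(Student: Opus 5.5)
We prove the three claims in turn: (i) the derivative identity for $F^{(c)}_{A,B}$, (ii) unbiasedness of $G^{(c)}_{A,B}$, and (iii) the bound on $|G^{(c)}_{A,B}|$.

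\emph{Step 1: the derivative identity.} Differentiating $F^{(c)}_{A,B}(t)=2^{-n}\Tr(e^{iHt}P_Be^{-iHt}Q_A)$ at $t=0$ gives
\begin{align}
(F^{(c)}_{A,B})'(0)=2^{-n}\Tr(i[H,P_B]Q_A)=i\sum_u\lambda_u 2^{-n}\Tr\bigl((P_uP_B-P_BP_u)Q_A\bigr).
\end{align}
A term $\lambda_u$ survives only if $P_uP_B\propto Q_A$, i.e. $P_u\propto Q_AP_B$. Working qubit by qubit, the product $Q_A P_B$ is $I$ off $A\cup B$, $a_i$ on $A\setminus B$, $b_i$ on $B\setminus A$, and $a_ib_i=\pm i\,c_i$ on $A\cap B$. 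This identifies $u=u_c(A,B)$.

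For this $u$, the commutator equals $(1-(-1)^\kappa)P_uP_B$, using $P_uP_B=(-1)^{\kappa}P_BP_u$ as stated in the text. It therefore vanishes for even $\kappa$ and equals $2P_uP_B$ for odd $\kappa$. The remaining trace $2^{-n}\Tr(P_uP_BQ_A)$ is a product of single-qubit factors. Off $A\cap B$ each factor is $1$, since each Pauli squares to $I$. On $A\cap B$ each factor is $\tfrac12\Tr(c_ib_ia_i)$, which in cyclic order $(c,a,b)$ gives a fixed phase. Multiplying by the overall $i$ yields $2\sigma_c(A,B)\lambda_u$ with $\sigma_c=i(-i)^\kappa$. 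Note that $i(-i)^\kappa$ is real exactly when $\kappa$ is odd.

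The main bookkeeping obstacle is the phase convention. I would check $\tfrac12\Tr(c\,b\,a)$ for one representative, e.g. $(c,a,b)=(Z,X,Y)$ giving $\tfrac12\Tr(ZYX)=-i$. By cyclic symmetry the other two cases agree.

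\emph{Step 2: unbiasedness.} Condition on $\tau$ and $S$.

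For $S=+1$, the input is a product of $a_i$ eigenstates with signs $s_i$, and we measure $b_i$. The product $\prod_{i\in A}s_i$ equals the eigenvalue of the state under $Q_A$, and $\prod_{i\in B}m_i$ equals the outcome of measuring $P_B$. The trace rule (Proposition~\ref{prop:trace-main}) then gives $\E[Z\mid\tau,S=+1]=F^{(c)}_{A,B}(\tau)$. On qubits outside $A$ the input is maximally mixed after averaging, so the proposition applies with $Q=Q_A$.

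For $S=-1$, the same reasoning gives $-2^{-n}\Tr(Q_A(\tau)P_B)$. Using cyclicity, $\Tr(e^{iH\tau}Q_Ae^{-iH\tau}P_B)=\Tr(Q_Ae^{-iH\tau}P_Be^{iH\tau})$, so this equals $-F^{(c)}_{A,B}(-\tau)$.

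Averaging over $S$ yields $\tfrac12\bigl(F(\tau)-F(-\tau)\bigr)$, the odd part of $F$. Now $F(t)=\sum_{j,k}c_{jk}e^{i(E_j-E_k)t}$ is band-limited to $[-2\Lambda,2\Lambda]$, and $\Psi_\Lambda(\omega)=i\omega$ on that band. Fourier inversion gives $\int_{\mathbb R}L_\Lambda(t)F(t)\,dt=F'(0)$, possibly up to the sign convention for $e^{\mp i\omega t}$, which I will fix by matching the definition of $L_\Lambda$.

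Since $\Psi_\Lambda$ is odd, $L_\Lambda$ is odd and real-valued. Hence
\begin{align}
\int_{\mathbb R}L_\Lambda F\,dt=2\int_0^\infty L_\Lambda(t)\tfrac12\bigl(F(t)-F(-t)\bigr)\,dt=\|L_\Lambda\|_1\,\E_{\tau\sim p_\Lambda}\bigl[\sign(L_\Lambda(\tau))\tfrac12(F(\tau)-F(-\tau))\bigr].
\end{align}
Combining this with the conditional expectation of $Z$ gives $\E[G^{(c)}_{A,B}]=(F^{(c)}_{A,B})'(0)$.

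\emph{Step 3: the bound.} Since $|Z|=1$, we have $|G^{(c)}_{A,B}|=\|L_\Lambda\|_1$. Substituting $t\to s/(2\Lambda)$ shows $L_\Lambda(t)=(2\Lambda)^2L_1'(2\Lambda t)$-type scaling, so $\|L_\Lambda\|_1=2\Lambda\|L_{1/2}\|_1$ as a pure rescaling. Because $\omega\varphi(\omega)$ is Schwartz, $\|L_{1/2}\|_1$ is a finite constant, and we set $\ell_0$ equal to it.
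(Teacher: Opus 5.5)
Your proposal is correct and follows essentially the same route as the paper: the orientation average over $S=\pm1$ extracts the odd part $\tfrac12\bigl(F^{(c)}_{A,B}(t)-F^{(c)}_{A,B}(-t)\bigr)$ (the paper's Proposition~\ref{prop:block-trace}), the odd kernel $L_\Lambda$ turns this into the derivative at zero (Lemma~\ref{lem:odd-kernel}), the qubit-by-qubit identity $P_{u_c(A,B)}P_BQ_A=(-i)^\kappa I$ gives the phase $\sigma_c(A,B)$, and the rescaling $\|L_\Lambda\|_1=2\ell_0\Lambda$ gives the bound. The sign convention you left open resolves with no extra sign: with the paper's definition of $L_\Lambda$, one has $\int_{\mathbb R}L_\Lambda(t)e^{i\omega t}\,dt=\Psi_\Lambda(\omega)=i\omega$ on $[-2\Lambda,2\Lambda]$.
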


Given Proposition~\ref{prop:parity-main}, from one
experiment, we can get a random sample of $Z^{(c)}_{A,B}$ for every pair $(A,B)$ with odd $|A\cap B|$, and each $Z^{(c)}_{A,B}$ corresponds to a label $u=u_c(A,B)$. 
The resulting random variable provides an unbiased sample of $\lambda_u$. 
However, not every label appears in one experiment and different coefficients are sampled with different frequencies. 
A label $u\neq 0$ contributes a sample only when it matches the basis $c$ on an odd number of indices $\mathcal V(c)\coloneqq\{u\in V\setminus\{0\}:|\{i:u_i=c_i\}|\text{ is odd}\}$.
Let $(A_c(u),B_c(u))$ be a pair with $u=u_c(A,B)$, and define
\begin{align}
Y^{(c)}_u\coloneqq\tfrac12\sigma_c(A_c(u),B_c(u))G^{(c)}_{A_c(u),B_c(u)}.
\end{align}
Then $\E[Y^{(c)}_u]=\lambda_u$ and $|Y^{(c)}_u|\le \ell_0\Lambda$, so each visible label contributes one unbiased sample when $u$ is visible.
To achieve global unbiasedness, we need to apply the following correction. 
We choose $c$ uniformly randomly from $\{X,Y,Z\}^n$, and define $q(u)\coloneqq\Pr[u\in \mathcal V(c)]=\tfrac{1-(1/3)^{w(u)}}{2}\ge\tfrac13$, where $w(u)=|\{i:u_i\neq I\}|$.
We then construct the estimator
\begin{align}
\widetilde{Y}_u
\coloneqq q(u)^{-1}\mathbbm{1}[u\in V(C)]Y^{(C)}_u.
\end{align}
Then $\E[\widetilde{Y}_u]=\lambda_u$ and
$|\widetilde{Y}_u|\le 3\ell_0\Lambda$, so that the estimator becomes unconditionally unbiased for every label $u$. Algorithm~\ref{alg:parity-coeff} gives the explicit coefficient estimation routine.

\begin{lemma}[Coefficient estimation]\label{lem:linear-oracle}
For any $0<\epsilon\le\Lambda$, given a nonempty candidate class $\wh\cU_{\mathrm{proj}}\subseteq V\setminus\{0\}$ and failure probability $\eta\in(0,1)$, there is a protocol using cross-Pauli measurements that outputs $\{\wh\lambda_u\}_{u\in \wh\cU_{\mathrm{proj}}}$ satisfying $\max_{u\in \wh\cU_{\mathrm{proj}}}|\wh\lambda_u-\lambda_u|\le\epsilon$ with probability at least $1-\eta$ in deterministic total evolution time at most $O(\tfrac{\Lambda}{\epsilon^2}\log(4|\wh\cU_{\mathrm{proj}}|)\log\tfrac{2}{\eta})$.
\end{lemma}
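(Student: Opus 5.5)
The approach is to apply the unbiased, bounded estimator $\widetilde Y_u$ built just before the statement, and to run the time-controlled median-of-means construction already used for Lemma~\ref{lem:projection-main}. One shot samples $C$ uniformly from $\{X,Y,Z\}^n$, $T\sim p_\Lambda$ and $S\in\{\pm1\}$. It then prepares the random product eigenstate, evolves for time $T$, and records the full input and output sign strings. From this single record, classical postprocessing computes $\widetilde Y_u$ for \emph{every} $u\in\wh\cU_{\mathrm{proj}}$ simultaneously. For each $u$, set $A=A_C(u)$ and $B=B_C(u)$ when $u\in\mathcal V(C)$, evaluate the parity $Z^{(C)}_{A,B}$, multiply by $\|L_\Lambda\|_1\sign(L_\Lambda(T))$ and $\sigma_C/2$, and reweight by $q(u)^{-1}$. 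By Proposition~\ref{prop:parity-main} and the reweighting computation, $\E[\widetilde Y_u]=\lambda_u$ and $|\widetilde Y_u|\le 3\ell_0\Lambda$. This postprocessing costs $\mathrm{poly}(n)$ per label.

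Group the shots into $L=\lceil C_1\log(2/\eta)\rceil$ independent rounds of $N=\lceil C_0 \Lambda^2\epsilon^{-2}\log(4|\wh\cU_{\mathrm{proj}}|)\rceil$ shots each. Within a round, take $\wh\lambda_u$ to be the empirical mean of the $N$ samples of $\widetilde Y_u$. Hoeffding's inequality with range $6\ell_0\Lambda$ gives $\Pr[|\wh\lambda_u-\lambda_u|>\epsilon]\le 2\exp(-N\epsilon^2/(18\ell_0^2\Lambda^2))$. Choosing $C_0$ large makes this at most $1/(8|\wh\cU_{\mathrm{proj}}|)$, so a union bound shows that a round is simultaneously $\epsilon$-accurate on all candidates with probability at least $7/8$.

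The main obstacle is turning the random total evolution time into a deterministic one. The mean shot length $\E_{p_\Lambda}[T]=\kappa_1'/\Lambda$ is $O(1/\Lambda)$, because $L_\Lambda(t)=\Lambda^2 L_1(\Lambda t)$ is a rescaled Schwartz function, so $t p_\Lambda(t)$ is integrable. Following Algorithm~\ref{alg:displacement-sieve}, I would presample $t_1,\dots,t_N$ and abort the round, before running any experiment, whenever $\sum_j t_j>\tau_{\max}\coloneqq 8\kappa_1' N/\Lambda$. By Markov's inequality, a round aborts with probability at most $1/8$. Consequently each round is both non-aborted and accurate with probability at least $3/4$, and its deterministic cost is at most $\tau_{\max}$.

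To combine rounds, I would take the coordinatewise median of $\wh\lambda_u$ over the non-aborted rounds. If more than half of the $L$ rounds are good, then every median is within $\epsilon$, since each good round is within $\epsilon$ for all $u$. A Chernoff bound shows this happens with probability at least $1-\eta$ once $C_1$ is a suitable constant. One subtlety I would check is whether conditioning on non-abortion biases the estimator. It does not matter, because the accuracy event is bounded directly: $\Pr[\text{good}]\ge 1-\Pr[\text{abort}]-\Pr[\text{inaccurate}]$, where the inaccuracy probability is computed for the unconditioned i.i.d. round. The total evolution time is then at most $L\tau_{\max}=O(\tfrac{\Lambda}{\epsilon^2}\log(4|\wh\cU_{\mathrm{proj}}|)\log\tfrac2\eta)$, which is the claimed bound.
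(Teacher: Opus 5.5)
Your proposal is correct and follows essentially the same route as the paper. Both arguments reuse a single shot record for all candidates through the visibility-corrected estimator with $|\widetilde Y_u|\le 3\ell_0\Lambda$, apply Hoeffding plus a union bound within each block, cap each block's planned time at eight times its mean and abort before any evolution (controlled by Markov's inequality), and finish with a coordinatewise median over $O(\log(2/\eta))$ blocks via a Chernoff bound. The only cosmetic difference is that the paper assigns the value $0$ to aborted blocks and takes the median over all blocks, while you take it over the non-aborted ones; both work, because good blocks form a strict majority in either case.
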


\begin{algorithm}[h!]
\SetAlgoLined
\caption{Parity coefficient estimation on a candidate set}
\label{alg:parity-coeff}
\KwIn{candidate set $\mathcal U\subseteq V\setminus\{0\}$, accuracy $\alpha$, failure probability $\eta$, norm bound $\Lambda$}
\KwOut{coefficient estimates $\{\widehat\lambda_u\}_{u\in\mathcal U}$}
\If{$\mathcal U=\varnothing$}{return the empty set of estimates\;}
Set $N\leftarrow\lceil C_0\Lambda^2\alpha^{-2}\log(16|\mathcal U|)\rceil$, $R\leftarrow\lceil16\log(2/\eta)\rceil$, and $\tau_{\max}\leftarrow4\ell_1N/\Lambda$ for some suitable constants $C_0$ and $\ell_1$ chosen in Appendix~\ref{app:linear}\;
For every $u\in\mathcal U$, precompute $q(u)=(1-(1/3)^{w(u)})/2$\;
\For{$r=1,\ldots,R$}{
  Sample $C_s\sim\mathrm{Unif}\{X,Y,Z\}^n$ and $T_s\sim p_\Lambda$ independently for $s=1,\ldots,N$; set $\Theta\leftarrow\sum_{s=1}^N T_s$\;
  \If{$\Theta>\tau_{\max}$}{set $\widetilde\lambda_r(u)\leftarrow0$ for every $u\in\mathcal U$, abort this block, and continue to the next block\;}
  Initialize $S_r(u)\leftarrow0$ for all $u\in\mathcal U$\;
  \For{$s=1,\ldots,N$}{
    Let $C_s=(c_1,\ldots,c_n)$ and choose $(a_i,b_i)$ by the cyclic rule $(c_i,a_i,b_i)\in\{(X,Y,Z),(Y,Z,X),(Z,X,Y)\}$\;
    Sample an orientation $\rho_s\sim\mathrm{Unif}\{\pm1\}$\;
    \eIf{$\rho_s=+1$}{
      Prepare random local $a_i$-eigenstates with signs $s_i$, evolve for time $T_s$, and measure in the $b_i$ bases with outcomes $m_i$\;
    }{
      Prepare random local $b_i$-eigenstates with signs $r_i$, evolve for time $T_s$, and measure in the $a_i$ bases with outcomes $n_i$\;
    }
    \ForEach{$u\in\mathcal U$ such that $u\in\mathcal V(C_s)$}{
      Find the unique pair $(A_{C_s}(u),B_{C_s}(u))$ with $u=u_{C_s}(A_{C_s}(u),B_{C_s}(u))$; write $A\leftarrow A_{C_s}(u)$ and $B\leftarrow B_{C_s}(u)$\;
      Set $\kappa\leftarrow|A\cap B|$ and $\sigma\leftarrow i(-i)^\kappa=(-1)^{(\kappa-1)/2}$; visibility ensures that $\kappa$ is odd\;
      \eIf{$\rho_s=+1$}{
        $Z\leftarrow\prod_{i\in A}s_i\prod_{i\in B}m_i$\;
      }{
        $Z\leftarrow-\prod_{i\in B}r_i\prod_{i\in A}n_i$\;
      }
      $G\leftarrow\|L_\Lambda\|_1\operatorname{sign}(L_\Lambda(T_s))Z$\;
      $Y\leftarrow\frac12\sigma G$\;
      $S_r(u)\leftarrow S_r(u)+q(u)^{-1}Y$\;
    }
  }
  Set $\widetilde\lambda_r(u)\leftarrow S_r(u)/N$ for every $u\in\mathcal U$\;
}
\Return{$\widehat\lambda_u\leftarrow\operatorname{median}_{r=1}^R\widetilde\lambda_r(u)$ for every $u\in\mathcal U$}\;
\end{algorithm}

\section{The Algorithm and Performance Analysis}\label{sec:upper}

\subsection{Main algorithm}
\label{ssec:main-algo}
We will now combine the same-Pauli measurements stage in Lemma~\ref{lem:projection-main} and the cross-Pauli measurements stage in Lemma~\ref{lem:linear-oracle} into a complete protocol to solve Problem ~\ref{prob:sparse_ham_recover} as in Algorithm~\ref{algo:upper}.

\begin{algorithm}[htbp]\label{algo:upper}
\SetAlgoLined
\caption{Control-free reconstruction at detection scale $\epsilon$}
\KwIn{target detection scale $\epsilon>0$, failure probability
$\delta\in(0,1)$}
\KwOut{estimate $\wh S_\epsilon$ of the $\epsilon$-detectable support
and estimator $\wh H_\epsilon=\sum_{u\in\wh S_\epsilon}
\wh\lambda_u P_u$}
Set internal scales $\epsilon_{\mathrm{lin}}\leftarrow\epsilon/6$ and
threshold $\theta\leftarrow\epsilon/2$;
Run the displacement sieve (Lemma~\ref{lem:projection-main}) with scale
$\epsilon$ and failure $\delta/2$ to obtain $\wh\Pi_x$, $\wh\Pi_z$;
Form the Cartesian candidate set
$\wh\cU_{\mathrm{proj}}\leftarrow
(\wh\Pi_x\cup\{0\})\times(\wh\Pi_z\cup\{0\})\setminus\{0\}$
\eIf{$\wh\cU_{\mathrm{proj}}=\varnothing$}{
$\wh S_\epsilon\leftarrow\varnothing$;\ \
$\wh H_\epsilon\leftarrow 0$
}{
Run Lemma~\ref{lem:linear-oracle} on $\wh\cU_{\mathrm{proj}}$ at accuracy
$\epsilon_{\mathrm{lin}}$ and failure $\delta/2$ to obtain
$\{\wh\lambda_u\}_{u\in\wh\cU_{\mathrm{proj}}}$
$\wh S_\epsilon\leftarrow
\{u\in\wh\cU_{\mathrm{proj}}:|\wh\lambda_u|>\theta\}$;\ \
$\wh H_\epsilon\leftarrow\sum_{u\in\wh S_\epsilon}\wh\lambda_u P_u$
}
\end{algorithm}

The performance guarantee of Algorithm~\ref{algo:upper} is given by the following theorem.

\begin{theorem}[Performance guarantee of Algorithm~\ref{algo:upper}]\label{thm:main}
Let $H=\sum_{u\in S}\lambda_uP_u$ with $\|H\|\le\Lambda$, and fix $0<\epsilon\le\Lambda$ and $\delta\in(0,1)$. With probability at least
$1-\delta$, Algorithm~\ref{algo:upper} outputs
$(\wh S_\epsilon,\wh H_\epsilon)$ satisfying
\begin{align}
S_\epsilon\subseteq\wh S_\epsilon\subseteq
\{u\in V\setminus\{0\}:|\lambda_u|>\epsilon/3\},\qquad
\max_{u\in S_\epsilon}|\wh\lambda_u-\lambda_u|\le\epsilon/6,
\end{align}
where $S_\epsilon\coloneqq\{u\in V\setminus\{0\}:|\lambda_u|\ge\epsilon\}$.
The total evolution time of Algorithm~\ref{algo:upper} satisfies
\begin{align}
T_{\mathrm{tot}}^{\det}\le C\frac{\Lambda}{\epsilon^2}\log\frac{\Lambda}{\epsilon}\log\frac{1}{\delta}\log\log\frac{1}{\delta}.
\end{align}
\end{theorem}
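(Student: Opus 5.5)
The plan is to intersect the success events of the two stages, deduce the correctness claims deterministically on that event, and then add up the two deterministic time bounds.

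\emph{Event and union bound.} Let $\mathcal E_1$ be the event that the displacement sieve of Lemma~\ref{lem:projection-main}, run at scale $\epsilon$ with failure $\delta/2$, outputs $\wh\Pi_x\supseteq\Pi_x^{(\epsilon)}\setminus\{0\}$ and $\wh\Pi_z\supseteq\Pi_z^{(\epsilon)}\setminus\{0\}$. Let $\mathcal E_2$ be the event that the estimator of Lemma~\ref{lem:linear-oracle}, run on $\wh\cU_{\mathrm{proj}}$ at accuracy $\epsilon/6$ with failure $\delta/2$, gives $|\wh\lambda_u-\lambda_u|\le\epsilon/6$ for all $u\in\wh\cU_{\mathrm{proj}}$. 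The second lemma is applied conditionally on the output of the first. This is legitimate because its shots are fresh and its guarantee holds for every fixed candidate set. A union bound then gives $\Pr[\mathcal E_1\cap\mathcal E_2]\ge1-\delta$. The empty-candidate branch needs no estimates, but on $\mathcal E_1$ it can only occur when $S_\epsilon=\varnothing$, and then the conclusion holds trivially.

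\emph{Correctness on $\mathcal E_1\cap\mathcal E_2$.} Take $u=(x\mid z)\in S_\epsilon$. Then $W_x(x)\ge\lambda_u^2\ge\epsilon^2$, so if $x\neq0$ we have $x\in\Pi_x^{(\epsilon)}\subseteq\wh\Pi_x$. Symmetrically, if $z\neq0$ then $z\in\wh\Pi_z$. Since $u\ne0$, this gives $u\in\wh\cU_{\mathrm{proj}}$. By $\mathcal E_2$,
\[
|\wh\lambda_u|\ge\epsilon-\epsilon/6>\epsilon/2=\theta,
\]
so $u\in\wh S_\epsilon$ and $|\wh\lambda_u-\lambda_u|\le\epsilon/6$. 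Conversely, if $u\in\wh S_\epsilon$ then $|\lambda_u|\ge|\wh\lambda_u|-\epsilon/6>\epsilon/2-\epsilon/6=\epsilon/3$. This proves both inclusions and the error bound.

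\emph{Time bound.} Lemma~\ref{lem:projection-main} with $\eta=\delta/2$ costs $O(\frac{\Lambda}{\epsilon^2}\log\frac{\Lambda}{\epsilon}\log\frac1\delta)$ deterministically. For the second stage, the deterministic size bound gives $|\wh\cU_{\mathrm{proj}}|\le O(\frac{\Lambda^4}{\epsilon^4}\log^2\frac1\delta)$, hence
\[
\log(4|\wh\cU_{\mathrm{proj}}|)=O\!\left(\log\tfrac{\Lambda}{\epsilon}+\log\log\tfrac1\delta\right).
\]
Lemma~\ref{lem:linear-oracle} at accuracy $\epsilon/6$ therefore costs $O\big(\frac{\Lambda}{\epsilon^2}(\log\frac\Lambda\epsilon+\log\log\frac1\delta)\log\frac1\delta\big)$.

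The only delicate step is bounding the sum of the two stage costs by the stated product form. For this, use $a+b\le 2ab$ whenever $a,b\ge1$. When $\epsilon$ is close to $\Lambda$, replace $\log\frac\Lambda\epsilon$ by $\log\frac{4\Lambda}{\epsilon}$ (as already appears in Lemma~\ref{lem:projection-main}) and absorb the change into $C$. Similarly, interpret $\log\log\frac1\delta$ as $\max(1,\log\log\frac1\delta)$ so that both factors are at least $1$. The sum is then at most $C\frac{\Lambda}{\epsilon^2}\log\frac\Lambda\epsilon\log\frac1\delta\log\log\frac1\delta$.

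A secondary point to verify is that each lemma's time bound is deterministic, not merely in expectation. This holds because both algorithms abort any block whose sampled total time exceeds $\tau_{\max}$, so the total time is a deterministic function of the parameters.
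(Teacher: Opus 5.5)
Your proposal is correct and follows essentially the same route as the paper. Both arguments take a union bound over the two stages, with the coefficient-estimation lemma applied conditionally on the realized candidate set. Both then use the projection-intensity argument $W_x(x)\ge\lambda_u^2\ge\epsilon^2$ to place $S_\epsilon$ inside $\wh\cU_{\mathrm{proj}}$, check the threshold at $\theta=\epsilon/2$, and sum the two deterministic stage costs with the deterministic candidate-size bound inside the logarithm.

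Your explicit treatment of the degenerate regimes is somewhat more careful than the paper's, which glosses over them when absorbing constants. These regimes are the empty candidate set, $\epsilon$ near $\Lambda$ where $\log\frac{\Lambda}{\epsilon}$ vanishes, and $\delta$ near $1$ where $\log\log\frac{1}{\delta}$ is nonpositive.
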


We remark that any input value $\Lambda\ge \|H\|$ preserves the correctness of our algorithm. 
It only affects runtime via the kernel interval length. 
If $\Lambda$ is unknown, it can be found by a binary search with only a constant-factor overhead in the total evolution time.

\subsection{Implementation under the discrete time assumption}\label{ssec:lattice}
Our displacement sieve sampling subroutine in Lemma~\ref{lem:projection-main} implicitly assumes we can randomly sample an arbitrary time $\tau$ and evolve under $\tau$, which requires the ability to perform continuous time evolution of a Hamiltonian.  
However, real-world physical devices typically have a minimum resolvable time $t_0$. 
We will now adapt our algorithm to an even more restrictive discretization assumption where we only allow resolution time chosen from a discrete-time grid $\mathcal T_{t_0}\coloneqq\{0,t_0,2t_0,\ldots\}$.

First, for any Hamiltonian learning of $H$ when $\|H\|\le\Lambda$, there is a fundamental barrier that requires a resolution time of at most $t_0 <\pi/\Lambda$. We provide a complete proof in the appendix.
Second, we can discretize the sampling rules with only a constant overhead on the total evolution time. 
The key observation is that the correlation function only contains frequencies in $[-2\Lambda,2\Lambda]$, while the kernels $K_\Lambda$ and $L_\Lambda$ are constructed using a smooth cutoff and therefore have frequency support $[-4\Lambda,4\Lambda]$. 
As a result, the product $K_\Lambda(t)f(t)$ has a bandwidth of at most $[-6\Lambda,6\Lambda]$. 
Therefore, as long as the time grid spacing satisfies $t_0 < \frac{\pi}{6\Lambda}$, we have
\begin{align}
\int_{\mathbb R} K_\Lambda(t) f(t)dt=t_0 \sum_{m\in\mathbb Z} K_\Lambda(m t_0) f(m t_0).
\end{align}
In this scenario, we sample an integer $m\ge 0$ with probability proportional to $t_0 |K_\Lambda(m t_0)|$, and run the same experiment at time $\tau=m t_0$. 
If one wants to limit the maximum evolution time per experiment, we can set a truncation parameter $R>0$ and truncate the sum at $m \le R/(2\Lambda t_0)$. 
The resulting error decays faster than any polynomial in $R$, while ensuring a resolution time upper bound of $O(1/\Lambda)$ for each experiment.
We further show that a resolution time of $\Omega(1/\Lambda)$ is necessary in the grid setting.

\subsection{SPAM-robustness of the main algorithm after calibration}\label{sec:spam_robustness}
We now discuss the robustness of Algorithm~\ref{algo:upper} to state-preparation-and-measurement errors. 
The high-level intuition is that SPAM errors affect the two stages of the algorithm in different ways. 
The displacement stage is a support-detection procedure based on rare bit-flip patterns, while the coefficient stage is a linear-estimation procedure based on signed parity signals. 
Therefore, calibration is used in two different forms.

In the displacement stage, the ideal guarantee says that every heavy $Z$- or $X$-basis displacement appears with probability on the order of $\epsilon^2/\Lambda^2$.
This probability is already the signal scale for detecting a coefficient of size $\epsilon$. 
Hence, an uncalibrated error that creates or removes displacement labels at the same scale would be indistinguishable from a true Hamiltonian signal. 
For this reason, we assume a calibrated bound on the full recorded displacement label.
Namely, there is a known number $\xi_1\ge0$ such that, for every block $j$, every history transcript on previous preparation and measurement records in the block so far (denoted as $\mathcal F_{j-1}$), and every nonzero displacement label,
\begin{align}
\Pr[\widetilde D_{Z,j}=d\mid \mathcal F_{j-1}]\ge\Pr[D_Z=d]-\xi_1,\quad
\Pr[\widetilde E_{X,j}=e\mid \mathcal F_{j-1}]\ge\Pr[E_X=e]-\xi_1 .
\end{align}
Here $D_Z$ and $E_x$ denote the ideal displacement labels, and $\widetilde D_{Z,j}$ and $\widetilde E_{X,j}$ denote the calibrated recorded labels in the presence of SPAM error. 
This condition is deliberately stated for the full $n$-qubit displacement record, thus any local SPAM noise calibration must first be converted into this full-record guarantee.
As mentioned earlier, we have $\Pr[D_Z=d]$ and $\Pr[E_x=e]$ of the scaling $\epsilon^2/\Lambda^2$.
Therefore, as long as we are guaranteed that $\zeta_1\lesssim \epsilon^2/\Lambda^2$, the calibrated record $\Pr[\widetilde D_{Z,j}=d\mid \mathcal F_{j-1}]$ and $\Pr[\widetilde E_{X,j}=e\mid \mathcal F_{j-1}]$ are still of the same scaling as ideal ones.

The coefficient-estimation stage is more direct. 
We use the following calibrated local depolarizing model.  
For a single qubit, let the preparation and measurement error be local depolarization noise on each qubit as
\begin{align}
\mathcal P(\rho)=r_{\rm p}\rho+(1-r_{\rm p})\frac{I}{2},
\qquad
\mathcal M(\rho)=r_{\rm m}\rho+(1-r_{\rm m})\frac{I}{2},
\end{align}
where $0<r_{\rm p},r_{\rm m}\le 1$ are known calibration numbers.  
The preparation noise $\mathcal P^{\otimes n}$ is applied after the ideal product-state preparation, and the measurement noise $\mathcal M^{\otimes n}$ is applied immediately before the ideal terminal Pauli measurement. 
Under the calibrated local depolarizing SPAM error model, the intended preparation and measurement of Pauli strings $Q$ and $P$ do not change the form of the trace signal. They only multiply it by known reliability factors $r_{\rm p}^{w(Q)}r_{\rm m}^{w(P)}$, where $w(P)$ and $w(Q)$ are the weights of $P$ and $Q$, i.e., the number of non-identity qubits.
The result of this SPAM noise factor is variance.
If the candidate set is $U$, the worst reliability factor over all visible parity blocks is denoted by $\zeta_2(U)$. 
Dividing by the calibrated SPAM noise factors increases the single-shot magnitude by at most $\zeta_2(U)^{-1}$, and hence the sample complexity of the coefficient-estimation stage is multiplied by $\zeta_2(U)^{-2}$. 
When every candidate label has Pauli weight at most $k$, we have
\begin{align}
\zeta_2(U)\geq(r_{\rm p}r_{\rm m})^k.
\end{align}
Thus, for a constant $k$ and a constant calibrated SPAM noise rate $r_{\rm p},r_{\rm m}$, this is only a constant-factor overhead. 
If a $k$-local ansatz is known, one may restrict the coefficient-estimation candidate set to labels of weight at most $k$, which does not remove any true $k$-local Hamiltonian term but prevents high-weight projection false positives from increasing the SPAM error variance.

We summarize the calibrated guarantee below.

\begin{theorem}[SPAM-robustness of Algorithm~\ref{algo:upper} after calibration]\label{thm:spam_robust_main}
Let $H=\sum_{u\in V\setminus\{0\}}\lambda_u P_u$ with $\|H\|\le\Lambda$, and fix $0<\epsilon\le\Lambda$ and $\delta\in(0,1)$.  
Assume the calibrated displacement condition satisfies $\xi_1\lesssim\tfrac{\epsilon^2}{2\kappa_0\Lambda^2}$.
Run the SPAM-robust projection stage with failure probability $\delta/2$ and form $\wh\cU\coloneqq\bigl((\wh\Pi_x\cup\{0\})\times(\wh\Pi_z\cup\{0\})\bigr)\setminus\{0\}$, and then run the SPAM-robust coefficient estimation stage on $\wh\cU$ with target accuracy $\epsilon/6$ and failure probability $\delta/2$.  
Finally, we set $\wh S_\epsilon\coloneqq\left\{u\in\wh\cU:|\wh\lambda_{\wh\cU}(u)|>\frac{\epsilon}{2}\right\}$, and $\wh H_\epsilon\coloneqq\sum_{u\in\wh S_\epsilon}\wh\lambda_{\wh\cU}(u)P_u.$
Then, with probability at least $1-\delta$,
\begin{align}
S_\epsilon\subseteq\wh S_\epsilon\subseteq\{u\in V\setminus\{0\}:|\lambda_u|>\epsilon/3\},\qquad
\max_{u\in S_\epsilon}|\wh\lambda_{\wh\cU}(u)-\lambda_u|\le\epsilon/6 .
\end{align}
For every realized candidate set $\wh\cU$, the following pathwise scheduled-time bound holds:
\begin{align}
T\le C\left[\frac{1}{\Lambda(\rho_\epsilon-\xi_1)}\log\frac{16\Lambda^2}{\epsilon^2}\log\frac8\delta+\frac{\Lambda}{\epsilon^2\zeta_2(\wh\cU)^2}\log(4\max\{|\wh\cU|,1\})\log\frac4\delta\right].
\end{align}
In particular, if every label in the realized candidate set $\wh\cU$ has Pauli weight at most $k$, then
\begin{align}
T=\widetilde O\left(\frac{\Lambda}{\epsilon^2}\left(1+(r_{\rm p}r_{\rm m})^{-2k}\right)\right).
\end{align}
\end{theorem}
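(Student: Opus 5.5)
The proof mirrors that of Theorem~\ref{thm:main}: show that each stage keeps its ideal guarantee under SPAM noise, combine the two with a union bound over the failure probability $\delta$, and add the two evolution-time costs.

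\textbf{Projection stage.} We start from Proposition~\ref{prop:displacements-main}. For every nonzero $d$ with $W_x(d)\ge\epsilon^2$, the ideal label satisfies $\Pr[D_Z=d]\ge 2\epsilon^2/\|K_\Lambda\|_1=:\rho_\epsilon$, which is of order $\epsilon^2/(\kappa_0\Lambda^2)$. The calibrated displacement condition then gives, conditionally on every history $\mathcal F_{j-1}$,
\begin{align}
\Pr[\widetilde D_{Z,j}=d\mid\mathcal F_{j-1}]\ge\rho_\epsilon-\xi_1 .
\end{align}
The shots within a block are no longer independent, so we apply a martingale (Freedman or Azuma-type) lower-tail bound to the count $A_d=\sum_j\mathbbm 1[\widetilde D_{Z,j}=d]$, whose compensator is at least $N(\rho_\epsilon-\xi_1)$. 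We set the threshold at half of this, $p\leftarrow(\rho_\epsilon-\xi_1)/2$, and take $N\asymp(\rho_\epsilon-\xi_1)^{-1}\log(4r)$. Then each fixed heavy $d$ is missed in a round with probability at most $1/(4r)$. A union bound over the at most $r=\lceil\Lambda^2/\epsilon^2\rceil$ heavy labels shows that a round fails with probability at most $1/4$. Repeating for $L=O(\log(8/\delta))$ rounds, together with a Chernoff bound on the abort events $\Theta>\tau_{\max}$ (the mean of $q_\Lambda$ is $O(1/\Lambda)$, so $\E\Theta=O(N/\Lambda)$), yields $\Pi_x^{(\epsilon)}\setminus\{0\}\subseteq\wh\Pi_x$ with failure probability at most $\delta/4$; the same holds for $\wh\Pi_z$. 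The scheduled time is $L\tau_{\max}=O(N L/\Lambda)$, which is the first term of the bound. The assumption $\xi_1\lesssim\epsilon^2/(2\kappa_0\Lambda^2)$ keeps $\rho_\epsilon-\xi_1$ of order $\epsilon^2/\Lambda^2$, so this term is $\widetilde O(\Lambda/\epsilon^2)$.

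\textbf{Coefficient stage.} Apply Proposition~\ref{prop:trace-main} to the noisy channel. Depolarizing a prepared $a_i$-eigenstate multiplies the Bloch component by $r_{\rm p}$. Depolarizing before measurement multiplies the expected outcome by $r_{\rm m}$ on each measured qubit. Hence the expectation of the parity variable is the ideal trace signal multiplied by exactly $r_{\rm p}^{w(Q_A)}r_{\rm m}^{w(P_B)}$; for the reversed orientation $S=-1$ the same factor appears with the roles of $A$ and $B$ exchanged. We divide $Y^{(c)}_u$ by this known factor. Proposition~\ref{prop:parity-main} and the $q(u)^{-1}$ reweighting then give an estimator $\widetilde Y_u$ that is still unbiased for $\lambda_u$, with $|\widetilde Y_u|\le3\ell_0\Lambda/\zeta_2(\wh\cU)$. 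We rerun the median-of-means argument of Lemma~\ref{lem:linear-oracle} at accuracy $\epsilon/6$ with range enlarged by the factor $\zeta_2^{-1}$. Hoeffding's inequality then requires $N\asymp\Lambda^2\zeta_2^{-2}\epsilon^{-2}\log(16|\wh\cU|)$ shots per block, a union bound over $\wh\cU$ is taken, and the median over $R=O(\log(4/\delta))$ blocks gives failure probability at most $\delta/2$. Multiplying by the mean shot time $O(1/\Lambda)$ gives the second term of the bound.

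\textbf{Combining.} On the good event (probability at least $1-\delta$), every $u\in S_\epsilon$ lies in $\wh\cU$ because $W_x(x)\ge\lambda_{(x|z)}^2\ge\epsilon^2$ and likewise for the $z$-part, and every estimate is within $\epsilon/6$. Thresholding at $\epsilon/2$ then gives $S_\epsilon\subseteq\wh S_\epsilon$, since such $u$ has $|\wh\lambda_u|\ge5\epsilon/6$, and every $u\in\wh S_\epsilon$ has $|\lambda_u|>\epsilon/2-\epsilon/6=\epsilon/3$. This is exactly the argument of Theorem~\ref{thm:main}. For weight-$\le k$ candidates, $w(Q_A),w(P_B)\le k$, so $\zeta_2\ge(r_{\rm p}r_{\rm m})^k$. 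Moreover $|\wh\cU|=\mathrm{poly}(\Lambda/\epsilon,\log 1/\delta)$, so the logarithms are absorbed into $\widetilde O$, which gives the stated $k$-local bound.

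\textbf{Main obstacle.} The delicate step is the projection stage. There the SPAM noise makes the recorded labels only conditionally controlled, through the history $\mathcal F_{j-1}$, so the independent Chernoff bound of Lemma~\ref{lem:projection-main} must be replaced by a martingale lower-tail bound, and the threshold must be retuned to $(\rho_\epsilon-\xi_1)/2$ without knowing which labels are heavy. I would first try the multiplicative Freedman inequality applied to the indicator sum. A secondary point to check is that the sign functions $\sign(L_\Lambda)$ are unaffected by noise, since they are functions of the sampled time only; the unbiasedness proof therefore goes through line by line with the extra reliability factor.
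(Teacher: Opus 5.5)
Your proposal is correct. Your coefficient stage and final combination coincide with the paper's: you rescale each parity by the known factor $r_{\rm p}^{|A|}r_{\rm m}^{|B|}$ or $r_{\rm p}^{|B|}r_{\rm m}^{|A|}$, enlarge the Hoeffding range by $\zeta_2^{-1}$, run at accuracy $\epsilon/6$, then take the union bound and threshold at $\epsilon/2$.

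The real difference is the projection stage. The paper drops the count threshold entirely. In its SPAM-robust sieve, $\wh\Pi_x,\wh\Pi_z$ are simply all nonzero labels recorded in non-aborted blocks, so a heavy $d$ needs to appear only once. The sequential calibration bound then gives a miss probability of at most $(1-\rho_1)^{N_1}\le e^{-N_1\rho_1}$, with $\rho_1=\rho_\epsilon-\xi_1$, by the chain rule alone, and no concentration inequality is needed. The price is a cruder deterministic size bound $|\wh\Pi_x|,|\wh\Pi_z|\le L_1N_1$, which enters only through $\log|\wh\cU|$. It also means every noise-induced label, possibly of high weight, is kept; this is why the paper states the $k$-local clause conditionally on the realized $\wh\cU$ and suggests filtering to weight at most $k$.

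Your thresholded sieve at $p=\rho_1/2$ buys the sharper $O(L/\rho_1)$ size bound and discards rarely occurring spurious labels. It does need a multiplicative lower-tail bound, and you must not fall back on the Azuma option you mention. An additive Azuma--Hoeffding bound gives only $\exp(-cN\rho_1^2)$. That forces $N\gtrsim\rho_1^{-2}$ and hence a $\Lambda^3/\epsilon^4$ projection cost.

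A clean argument is available, and Freedman's inequality is unnecessary. Since every conditional probability is at least $\rho_1$, $\exp\bigl(-\lambda A_k+k\rho_1(1-e^{-\lambda})\bigr)$ is a supermartingale for $\lambda>0$. Equivalently, $A_d$ stochastically dominates $\mathrm{Bin}(N,\rho_1)$. Either way you recover $\Pr[A_d<N\rho_1/2]\le e^{-N\rho_1/8}$ exactly as in the independent case.
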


In Appendix~\ref{app:spam}, we give an analysis of the performance of the algorithm against calibrated-SPAM noise, and the detailed performance analysis.

\section{Lower Bounds}\label{sec:lower}
Here, we show that our protocol is optimal over all possible control-free protocols. 
We will illustrate this by proving two lower bounds. 

First, the complexity of learning any Hamiltonian without control or structured prior knowledge will inevitably exceed the SQL~\cite{dutkiewicz2024advantage}. 
Consider the Hamiltonian family $H_\theta=\Lambda(\cos\theta Z+\sin\theta X)$ parametrized by $\theta$. 
Near $\theta=0$, the coefficient of $X$ is $\Lambda\sin\theta\approx\Lambda\theta$. 
To distinguish two coefficients by $\epsilon$, one must distinguish two parameters separated by $\epsilon/\Lambda$. 
In a controll-free experiment with probe length $t$, the Fisher information obtained scales at most linearly in $\Lambda t$. 
This means that if we want to estimate the coefficient of $X$, it takes at least $\Omega (\Lambda/\epsilon^2)$ time.

\begin{lemma}[SQL lower bound]\label{lem:sql-lower}
For every $0<\epsilon\le\Lambda/4$, any control-free protocol with deterministic total evolution time $T_{\mathrm{tot}}$ that estimates the $X$-coefficient of $H_\theta=\Lambda(\cos\theta Z+\sin\theta X)$, at $|\theta|\le\pi/6$, to accuracy $\epsilon$ with a success probability $\ge 2/3$ must satisfy $T_{\mathrm{tot}}=\Omega (\Lambda/\epsilon^2)$.
\end{lemma}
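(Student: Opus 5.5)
The plan is a two-point (Le Cam) reduction combined with a per-shot information bound that saturates in $t$; that saturation is what separates the control-free setting from the Heisenberg limit. Pick $\theta_0=0$ and $\theta_1\in(0,\pi/6]$ with $\Lambda\sin\theta_1=3\epsilon$, which is possible because $3\epsilon\le 3\Lambda/4$ and $\sin(\pi/6)$ caps this at $\Lambda/2$. If $3\epsilon$ exceeds $\Lambda/2$, take $\theta_1=\pi/6$ instead; the bound then only needs $T_{\mathrm{tot}}=\Omega(1/\Lambda)$, and this follows from the same argument. The $X$-coefficients differ by more than $2\epsilon$, so an $\epsilon$-accurate estimator with success probability $\ge 2/3$ yields a test distinguishing $H_{\theta_0}$ from $H_{\theta_1}$ with success probability $\ge 2/3$. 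Since $\cos\theta\ge\sqrt3/2$ on $|\theta|\le\pi/6$, we have $\Delta\theta\coloneqq\theta_1-\theta_0=O(\epsilon/\Lambda)$. The goal is to show that the two hypotheses produce transcripts whose fidelity is close to $1$ unless $\Lambda T_{\mathrm{tot}}\Delta\theta^2=\Omega(1)$.

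\textbf{Per-shot bound.} Fix one shot: input $\rho$, time $t$, any measurement. Write $H_\theta=\Lambda\,\vec n(\theta)\cdot\vec\sigma$ and $\partial_\theta H_\theta=\Lambda\,\vec m(\theta)\cdot\vec\sigma$, where $\vec m\perp\vec n$ are unit vectors. Then
\begin{align}
\partial_\theta U_\theta(t)=-iU_\theta(t)\,G_\theta(t),\qquad G_\theta(t)=\int_0^t e^{iH_\theta s}(\partial_\theta H_\theta)e^{-iH_\theta s}\,ds .
\end{align}
Conjugation by $e^{iH_\theta s}$ rotates $\vec m$ about $\vec n$ at angular frequency $2\Lambda$. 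Hence $G_\theta(t)=\Lambda\int_0^t\bigl(\cos(2\Lambda s)\vec m+\sin(2\Lambda s)\,\vec m\times\vec n\bigr)\cdot\vec\sigma\,ds$, up to orientation, and therefore
\begin{align}
\|G_\theta(t)\|=\Lambda\Bigl|\int_0^t e^{2i\Lambda s}ds\Bigr|=|\sin(\Lambda t)|\le\min\{1,\Lambda t\}.
\end{align}
Consequently the output states satisfy a Bures-angle bound $\arccos F(\rho_{\theta_0}(t),\rho_{\theta_1}(t))\le\int_{\theta_0}^{\theta_1}\|G_\theta(t)\|\,d\theta\le\min\{1,\Lambda t\}\Delta\theta$. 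Here I use that $\theta\mapsto U_\theta\rho U_\theta^\dagger$ has speed at most $\|G_\theta\|$ in Bures angle. By data processing this also bounds the classical outcome distributions of the shot.

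\textbf{Composition over adaptive shots.} The protocol may choose each $(\rho_i,t_i,\text{measurement})$ adaptively, but $\sum_i t_i=T_{\mathrm{tot}}$ is deterministic. I would bound the fidelity of the full transcript distributions inductively. The standard chain rule for fidelity (or the squared Hellinger affinity) of adaptively composed channels gives
\begin{align}
F_{\mathrm{tot}}\ge\prod_i\cos\bigl(\min\{1,\Lambda t_i\}\Delta\theta\bigr)\ge 1-\tfrac12\Delta\theta^2\sum_i\min\{1,\Lambda^2t_i^2\}\ge 1-\tfrac12\Delta\theta^2\Lambda T_{\mathrm{tot}},
\end{align}
using $\min\{1,x^2\}\le x$. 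A test with success probability $\ge 2/3$ needs total variation distance $\ge 1/3$, and $\mathrm{TV}\le\sqrt{1-F_{\mathrm{tot}}^2}$, so $1-F_{\mathrm{tot}}$ must be bounded below by an absolute constant. This forces $\Lambda T_{\mathrm{tot}}\Delta\theta^2=\Omega(1)$, i.e.\ $T_{\mathrm{tot}}=\Omega(\Lambda/\epsilon^2)$.

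\textbf{Main obstacle.} The naive bound $\|G\|\le\Lambda t$ yields only the Heisenberg-type bound $\Omega(1/\epsilon)$. The key step is therefore the saturation $\|G_\theta(t)\|\le 1$, which relies on $\partial_\theta H\perp H$ together with the rotation averaging. The other point needing care is the adaptive composition: I would handle it cleanly via the multiplicativity of the Bhattacharyya coefficient under sequential conditioning, taking the worst case over each adaptive choice.
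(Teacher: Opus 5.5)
Your proposal is correct and shares the paper's core. Both use a two-point reduction in $\theta$, the exact generator norm $\|\mathcal G_\theta(t)\|=|\sin(\Lambda t)|$ (the same for every $\theta$ because $\partial_\theta H_\theta\perp H_\theta$), and the inequality $\sin^2(\Lambda t)\le\Lambda t$, which converts saturation into the SQL scaling.

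The difference is in how you aggregate over shots. The paper bounds each shot's classical Fisher information by $4\|\mathcal G_\theta(t)\|^2\le4\Lambda t$ and sums it over adaptive shots to get $I(\theta)\le4\Lambda T_{\mathrm{tot}}$. It then applies $\mathrm{TV}(P_a,P_b)\le\frac12\int_a^b\sqrt{I}$. You instead bound the Bures angle of each shot, multiply Bhattacharyya coefficients along the transcript, and finish with Fuchs--van de Graaf.

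The Fisher route handles adaptivity for free. Fisher information is additive in conditional expectation, and the pathwise budget bounds the sum. In your route, the product $\prod_i\cos(\cdot)$ with history-dependent $t_i$ only makes sense as an infimum over transcript paths. That version is valid, by induction using $\mathrm{BC}_{\mathrm{rest}}(h)\ge\mathrm{BC}_{\mathrm{step}}(h)\min_x\mathrm{BC}_{\mathrm{rest}}(h,x)$ together with $\sum_i t_i\le T_{\mathrm{tot}}$ on every path. So your ``worst case over each adaptive choice'' must be taken over whole paths. A per-shot worst case that ignores the budget would allow long $t_i$ and give only $\cos\Delta\theta$ per shot. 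In exchange, your argument is more self-contained: it needs neither regularity of the transcript law in $\theta$ nor the separate TV--Fisher inequality, and it gives explicit constants directly.

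One small fix is needed at the boundary $\epsilon=\Lambda/4$. There your fallback $\theta_0=0$, $\theta_1=\pi/6$ gives $X$-coefficient separation exactly $2\epsilon$, so an estimator that always outputs $\Lambda/4$ is $\epsilon$-accurate for both and no test follows. Use symmetric points $\pm\theta_1$ with $\Lambda\sin\theta_1=2\epsilon$, as the paper does. This is feasible for all $\epsilon\le\Lambda/4$, gives separation $4\epsilon$, and removes your case split.
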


Proving a tighter lower bound is nontrivial because proofs using information theory tools usually only control how fast one can estimate a single coefficient, and can hardly constrain the strategy that learns many support locations in parallel. 
To improve the learning difficulty, we designed a hard family of $M$-sparse Hamiltonians with $M$ pair-wise anticommuting Pauli observables. 
According to Proposition 9 of Ref.~\cite{hrubevs2016families}, $M\leq 2n+1$.
We index each Pauli string in the Hamiltonian by $Q_{a,x_a}$ randomly and secretly chosen with $a=1,2,...,M$.
The Hamiltonian is written as
\begin{align}
H_{x,\sigma}=\frac{\Lambda}{\sqrt M}\sum_{a=1}^M (-1)^{\sigma_a}Q_{a,x_a},
\end{align}
where $\sigma_a\in\{0,1\}$ is a random sign used to hide global interference between blocks. 
Hence, we have $H_{x,\sigma}^2=\Lambda^2 I$ and $\|H_{x,\sigma}\|=\Lambda$. 
Let $M=\Theta(\Lambda^2/\epsilon^2)$ so that each non-zero coefficient equal to $\Lambda/\sqrt M$, which is still larger than $\epsilon$. 
Therefore, every non-zero term is detectable for a Hamiltonian learning algorithm. 
Analyzing the experiment after averaging over the uniformly random choices of signs $\sigma$, we find that any control-free experiment of duration $\tau$ either carries no information about $x$ with probability $\cos^2(\Lambda\tau)$, or provides information about only one uniformly random block with probability $\sin^2(\Lambda\tau)$. 
Since $\sin^2(\Lambda\tau)\le \Lambda\tau$, a probe of length $\tau$ can collect at most $\Lambda\tau$ bits of block information. 
Therefore, the learner faces a coupon-collector problem, where there are $M$ hidden blocks to discover, and each informative shot reveals at most one random block.  
Seeing all $M$ blocks requires hitting $\Omega(M\log M)$ informative blocks. 
After simple deductions, we can get a lower bound on total evolution time of $T_{\mathrm{tot}}=\Omega\left(\frac{\Lambda}{\epsilon^2} \log\frac{\Lambda}{\epsilon}\right)$.

\begin{theorem}[Unified lower bound]\label{thm:unified-log}
Given accuracy demand $\epsilon$ and norm bound $\Lambda$ with $0<\epsilon\le\Lambda/16$ and $\Lambda^2/\epsilon^2\leq O(n)$, any control-free protocol that solves Problem~\ref{prob:sparse_ham_recover} with probability $\ge 2/3$ must have total evolution time
\begin{align}
T_{\mathrm{tot}}\ \ge\Omega\left(\frac{\Lambda}{\epsilon^2}\log\frac{\Lambda}{\epsilon}\right).
\end{align}
\end{theorem}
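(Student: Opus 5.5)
The bound combines two ingredients. The first is the SQL bound of Lemma~\ref{lem:sql-lower}, which gives $\Omega(\Lambda/\epsilon^2)$ and covers the regime where $\log(\Lambda/\epsilon)=O(1)$. The second is a coupon-collector argument on the anticommuting hard family $H_{x,\sigma}=\frac{\Lambda}{\sqrt M}\sum_{a=1}^M(-1)^{\sigma_a}Q_{a,x_a}$, which supplies the logarithmic factor. In that family I fix $M=\lfloor c\Lambda^2/\epsilon^2\rfloor$ with $c$ small enough that $\Lambda/\sqrt M\ge 2\epsilon$. This makes every term lie in $S_\epsilon$. Any solver of Problem~\ref{prob:sparse_ham_recover} must therefore output the full label set $\{Q_{a,x_a}\}_a$ up to sign, and hence recover the hidden index $x=(x_1,\dots,x_M)$. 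For each block $a$ I need at least two admissible choices of $Q_{a,x_a}$ such that all choices across different blocks remain pairwise anticommuting. I would build these from a Clifford/Jordan--Wigner family of $2M$ (or $2n+1$) mutually anticommuting Paulis, which is where $\Lambda^2/\epsilon^2\le O(n)$ is used. Pairing them gives two candidates per block, so $x$ is uniform on $\{0,1\}^M$.

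The key single-shot step uses the identity $H_{x,\sigma}^2=\Lambda^2 I$, which gives $e^{-iH\tau}=\cos(\Lambda\tau)I-i\sin(\Lambda\tau)H/\Lambda$. For a shot with product input $\rho$, duration $\tau$ and measurement POVM $\{E_m\}$, the outcome probability is
\begin{align}
p(m)=\cos^2(\Lambda\tau)\Tr(E_m\rho)+\tfrac{\sin^2(\Lambda\tau)}{M}\sum_{a,b}(-1)^{\sigma_a+\sigma_b}\Tr(E_mQ_a\rho Q_b)+\text{cross terms linear in }\sigma.
\end{align}
Averaging over uniform $\sigma$ kills every term odd in some $\sigma_a$, leaving
\begin{align}
p(m)=\cos^2(\Lambda\tau)\Tr(E_m\rho)+\sin^2(\Lambda\tau)\tfrac1M\sum_a\Tr(E_mQ_{a,x_a}\rho Q_{a,x_a}).
\end{align}
This is exactly the mixture described in the text. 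With probability $\cos^2(\Lambda\tau)$ the outcome is independent of $x$. Otherwise a uniformly random block $a$ is chosen and the output depends only on $x_a$. Since $\sigma$ is independent of $x$ and the protocol may be adaptive, I would simulate the whole protocol by an oracle game. The learner receives a revealing event with probability $\sin^2(\Lambda\tau_j)\le\min(1,\Lambda\tau_j)$ and is then told $a_j$ together with $x_{a_j}$; this can only help. Adaptivity over $\sigma$ is an issue, because $\sigma$ is fixed across shots rather than redrawn. I would handle it by noting that the learner must also succeed conditionally on a Bayes-random $\sigma$, and that the per-shot mixture holds with $\sigma$ marginalized under the posterior. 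If this fails, I would instead have the adversary redraw $\sigma$, or show that information about $\sigma$ does not help recover $x$.

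With the game in place, the number of revealing events $R$ has $\E[R]\le\sum_j\Lambda\tau_j=\Lambda T_{\mathrm{tot}}$. If $R\le \tfrac12 M\ln M$, coupon-collector concentration shows that with high probability at least $\sqrt M$ blocks are never revealed. For those blocks $x_a$ is a uniform independent bit, so success probability is at most $2^{-\sqrt M}+o(1)<2/3$. Markov's inequality on $R$ then forces $\Lambda T_{\mathrm{tot}}=\Omega(M\log M)=\Omega(\frac{\Lambda^2}{\epsilon^2}\log\frac{\Lambda}{\epsilon})$, that is, $T_{\mathrm{tot}}=\Omega(\frac{\Lambda}{\epsilon^2}\log\frac{\Lambda}{\epsilon})$. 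Taking the maximum with Lemma~\ref{lem:sql-lower} handles small $M$. I expect the main obstacle to be making the $\sigma$-averaging rigorous against adaptive strategies, together with the algebraic construction of the pairwise-anticommuting candidate pairs.
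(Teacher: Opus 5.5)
Your plan follows the paper's own architecture. It uses $M$ mutually anticommuting blocks with hidden signs and $H_{x,\sigma}^2=\Lambda^2I$. It averages over $\sigma$ to get the mixture: no information with probability $\cos^2(\Lambda\tau)$, one uniformly random block with probability $\sin^2(\Lambda\tau)$. It then reveals the latent block index, collects coupons, and uses $\E[K]\le\Lambda T_{\mathrm{tot}}$. Pairing Majorana strings into two candidates per block, instead of the paper's marker/label family with $R=M$ candidates, is harmless. You then need $\Omega(\sqrt M)$ unseen blocks rather than one, which the second-moment coupon bound gives. Also take $\Lambda/\sqrt M>2\epsilon$ strictly so that decoding cannot tie.

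The genuine gap is the step you flagged, and your proposed fix fails. The mixture needs $\sigma$ to be uniform \emph{conditionally on the transcript so far}. Only then do two families of terms vanish: the terms linear in $(-1)^{\sigma_a}$, with prefactor $\frac{\sin(2\Lambda\tau)}{2\sqrt M}$, and the cross-block terms in $(-1)^{\sigma_a+\sigma_b}$ with $a\neq b$, with prefactor $\frac{\sin^2(\Lambda\tau)}{M}$. Averaging over a non-uniform posterior leaves them in.

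The posterior does become non-uniform cheaply. Take the paper's family $Q_{a,b}=(\prod_{c<a}Z_{m_c})X_{m_a}P_b$ with $Z$-type $P_b$. Prepare $m_1$ in a $Y$-eigenstate and the label register in $|0\cdots0\rangle$, and measure $Z_{m_1}$. One gets exactly $\langle Z_{m_1}(\tau)\rangle=\frac{\sin(2\Lambda\tau)}{\sqrt M}(-1)^{\sigma_1}$, independent of $x$. Majorana pairs admit an analogous $x$-independent signal with a suitable product input.

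So $O(M)$ shots of length $\Theta(1/\Lambda)$ pin down $\sigma_1$, in total time $O(M/\Lambda)$, well below the budget you want to exclude. After that, later outcome laws are no longer the mixture, and the latent-index coupling has no justification. Redrawing $\sigma$ every shot gives a strictly weaker oracle, so a lower bound there does not transfer. The statement ``sign information does not help recover $x$'' is exactly the missing lemma.

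You should know that the paper's attempt to close this step does not work either. It uses a gauge symmetrization (Lemmas~\ref{lem:gauge-sym} and~\ref{lem:orbit-mixture}). Since $R_gH_{x,\sigma}R_g^\dagger=H_{x,\sigma+g}$ with $R_g=\prod_aZ_{m_a}^{g_a}$, one may assume the protocol uses only the transcript modulo a common $R_g$. Then $\sigma$ is indeed uniform given this sign-erased transcript, but that is not what the mixture needs.

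The shot actually executed is $R_h$ applied to the canonical shot, so its outcome law is governed by the relative sign $\sigma+h$. Given the sign-erased transcript, $\sigma+h$ is distributed as the ordinary posterior of the signs under the unsymmetrized protocol. In the example above, two shots of the same setting have $\pm1$ outcomes with $\E[y_1y_2]=\frac{\sin^2(2\Lambda\tau)}{M}\neq0$, and sign erasure keeps both outcomes and the fact that the settings coincide. Lemma~\ref{lem:orbit-mixture}, by contrast, predicts $\E[y_2\mid\text{past}]=0$ for that setting.

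So the paper's proof has the same hole as yours. Completing either argument needs one of two things. One is a proof that the surviving first-order and cross-block terms cannot bring the number of informative shots below order $M\log M$. The other is a derivation of the logarithmic factor that does not rely on averaging over $\sigma$.
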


\section{Numerical simulations}\label{sec:numeric}
\begin{figure}[H]
\centering
\includegraphics[width=0.98\linewidth]{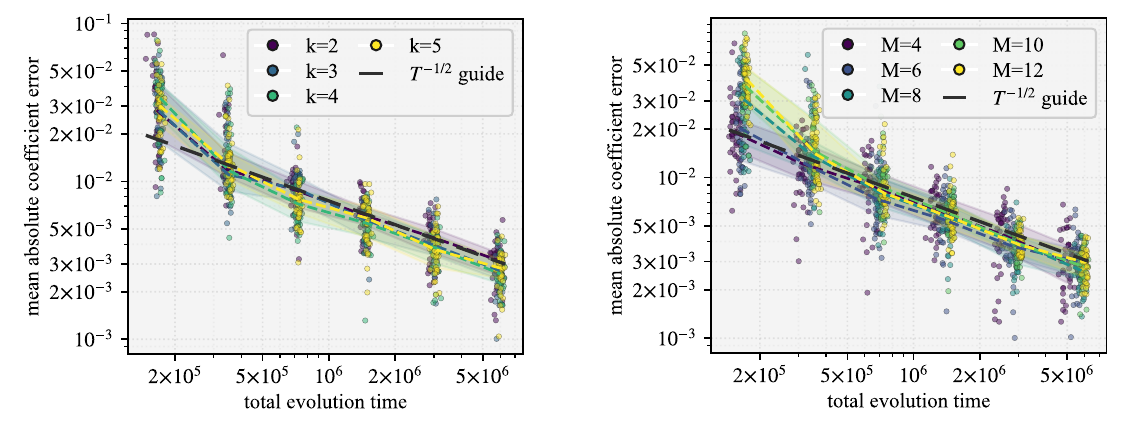}
\caption{Mean absolute coefficient error versus total evolution time for Algorithm~\ref{algo:upper} in learning random ansatz-free sparse Hamiltonians (left) and random $k$-local Hamiltonians (right). 
The black dashed line is a $T^{-1/2}$ guide.}
\label{fig:numerics-end-to-end}
\end{figure}

Here, we validate the complete protocol in Algorithm~\ref{algo:upper} by numerical simulations. 
Each data point runs the same two-stage reconstruction pipeline as Algorithm~\ref{algo:upper}: the same-Pauli displacement sieve first constructs the candidate set $\wh\cU_{\rm proj}$, and the cross-Pauli parity estimator then estimates all coefficients in that candidate set. 
The plotted error metric is the mean absolute coefficient error given a support set $S$ defined as
\begin{align}
\mathrm{MAE}_S=\frac{1}{|S|}\sum_{u\in S}|\widehat\lambda_u-\lambda_u|,
\end{align}
where missed true support elements are counted with $\widehat\lambda_u=0$. 
It penalizes both missed support and coefficient-estimation error.

For each Hamiltonian instance, we normalize $\|H\|\le \Lambda$ with $\Lambda=1$, run Algorithm~\ref{algo:upper} at a fixed detection scale $\epsilon$, and vary the sampling budget through the constants multiplying the projection and parity-estimation block sizes. 
For the unrestricted sparse panel, we use $n=5$, $M\in\{4,6,8,10,12\}$, six budget values, and $40$ independent trials per $(M,\mathrm{budget})$ pair, for a total of $5\times6\times40=1200$ complete reconstructions. 
For the exact-locality panel, we use $n=6$, $k\in\{2,3,4,5\}$, fixed sparsity $M_{\rm loc}=8$, the same six budget values, and $40$ independent trials per $(k,\mathrm{budget})$ pair, for a total of $4\times6\times40=960$ complete reconstructions. 
The observed error follows the expected $T^{-1/2}$ trend for both unrestricted sparse and $k$-local data-generating Hamiltonians.

\section{Discussion}\label{sec:conclusion}

In this work, we show that ansatz-free Hamiltonian learning can be performed in an in situ regime: our protocol is control-free, ancilla-free, and uses only Pauli product state preparation and measurement, while achieving the optimal total evolution time $\Theta(\Lambda/\epsilon^2)$, which is essentially the same total evolution time scale as SQL metrology, despite the additional burden of support recovery.
However, our protocol still requires probe times with characteristic scale $O(1/\Lambda)$ even in the discretized setting. 
It is natural to ask whether one can further relax this requirement to a constant time resolution, or conversely prove that some form of short-time resolution is fundamentally necessary for control-free Hamiltonian learning, analogous to the resolution barrier recently observed in ansatz-free Lindbladian learning~\cite{ivashkov2026ansatz}.
In addition, our lower bound is stated in terms of $\Lambda$ and $\epsilon$.
It would be very interesting to determine whether there is a lower bound that depends explicitly on the sparsity $M$ under any settings to decide the optimality of the previous algorithms along this line~\cite{hu2025ansatz,sinha2025improved}.


\section*{Acknowledgments}
We thank Zhiding Liang, Nikita Romanov, and Sisi Zhou for valuable feedback on the manuscript.
We thank Richard Allen, Sitan Chen, Senrui Chen, Hong-Ye Hu, Hsin-Yuan Huang, and Muzhou Ma for helpful discussions during this work.


{\small
\bibliographystyle{unsrt}
\bibliography{control_free_ham_lind_learning_ref}    
}

\appendix
\section{Extended Preliminaries}\label{app:prelim}
\subsection{Hamiltonian and time evolution}
Here, we provide more primary results on Hamiltonian and time evolution.
In physics, the Hamiltonian describes the dynamics of a system. 
In quantum mechanics, it is a Hermitian operator that determines how a state evolves over time. 
The evolution is governed by the Schr\"{o}dinger equation
\begin{align}
i\hbar \frac{\partial}{\partial t}|\Psi(t)\rangle = \hat H|\Psi(t)\rangle.
\end{align}
When $H$ is time-independent and we set $\hbar=1$, the solution is
\begin{align}
|\Psi(t)\rangle = e^{-iHt}|\Psi(0)\rangle.
\end{align}
For an $n$-qubit system, $H$ is a $2^n\times2^n$ Hermitian matrix. 
We measure the size of the Hamiltonian by its operator norm
\begin{align}
\|H\|\coloneqq \sup_{\|\psi\|_2=1}\|H|\psi\rangle\|.
\end{align}
Since $H$ is Hermitian, $\norm{H}$ is simply the absolute value of the largest eigenvalue of
$H$ as $ \|H\|=\max_j |E_j|$, where $\{E_j\}$ are the eigenvalues of $H$.

We also use the Hilbert-Schmidt (Frobenius) norm for operators, which is defined as
\begin{align}
\|A\|_{\HS}^2\coloneqq \Tr(A^\dagger A).
\end{align}
Furthermore, for any Hamiltonian on an $n$-qubit system, the operator norm and the Hilbert-Schmidt norms satisfy $\|H\|^2\le\|H\|_{HS}^2 \le 2^{n}\|H\|^2$.

The above description on the evolution of the state $|\Psi(t)\rangle$ over time is known as the Schr\"{o}dinger picture. 
An alternative approach to describe a Hamiltonian dynamics is to keep the state constant and transfer the time dependency to the observable being measured, which is known as the Heisenberg picture. 
\begin{definition}[Heisenberg picture]Let $H=H^\dagger$ be a time-independent Hamiltonian and let $U(t)\coloneqq e^{-iHt}$ be the corresponding unitary time-evolution operator. 
For an observable $A=A^\dagger$, its Heisenberg evolution is defined by
\begin{align}
A_H(t)\coloneqq U(t)^\dagger A U(t)= e^{iHt} A e^{-iHt}.
\end{align}
\end{definition}
Let $|\Psi(t)\rangle = U(t)|\Psi(0)\rangle$. Then
\begin{align}
\langle \Psi(t)|A|\Psi(t)\rangle=\langle \Psi(0)|U(t)^\dagger A U(t)|\Psi(0)\rangle=\langle \Psi(0)|A_H(t)|\Psi(0)\rangle .
\end{align}
Therefore, measuring $A$ at time $t$ in the Schr\"odinger picture is equivalent to measuring the $A_H(t)$ on the initial state.
For example, in our learning protocol, our direct learning object is the Pauli operator.
We thus write the observed time-dependent signal as $P_H(t) = e^{iHt} P e^{-iHt}$, and we omit $H$ and write it as $P(t)$ when $H$ is clear from the context.

\subsection{Symplectic algebra}
Here, we provide an extended introduction to Pauli observables and symplectic algebra.
Note that the four single-qubit Pauli operators $\{I, X, Y, Z\}$, along with their phases, form a group: 
\begin{align}
\mathcal{P}_1 = \{ \pm I, \pm iI, \pm X, \pm iX, \pm Y, \pm iY, \pm Z, \pm iZ \}
\end{align}
where multiplication in this group is matrix multiplication. 
We call $\mathcal{P}_1$ the single-qubit Pauli group. 
Accordingly, by replacing the group elements with tensor products of single-qubit Pauli operators, we obtain the multi-qubit Pauli group $\mathcal{P}_n$. 
Due to the presence of phase, the Pauli group $\mathcal{P}_n$ is not an abelian group. 
To simplify the operation, we can divide it by the center of the Pauli group $\{ +I, -I, +iI, -iI \}$, thus obtaining the quotient group
\begin{align}
V \coloneqq \mathcal{P}_n / \{\pm I, \pm iI\} \cong (\mathbb{F}_2^{2n}, \oplus).
\end{align}
$V$ is a set of $2n$-bit binary vectors, and the corresponding group operation is the bitwise XOR addition. 
We call a vector $u$ in the quotient group a label of the corresponding $P_u$ in the original Pauli group.
Concretely, we represent each vector as
\begin{align}
u=(x,z)\in \mathbb F_2^n\times \mathbb F_2^n,
\end{align}
where the two binary strings $x$ and $z$ record the $X$- and $Z$-components of the Pauli operator. 
Because $Y = iXZ$, the effect of $Y$ can be decomposed into the sequential effects of $X$ and $Z$. 
On qubit $j$, the pair $(x_j,z_j)$ specifies
\begin{align}
(0,0)\leftrightarrow I,\qquad
(1,0)\leftrightarrow X,\qquad
(0,1)\leftrightarrow Z,\qquad
(1,1)\leftrightarrow Y,
\end{align}
Therefore, this label fully preserves the tensor product pattern of the Pauli operator while discarding its global phase. 

Symplectic labels allow us to represent commutation and anti-commutation without multiplying matrices directly. 
We define $\omega(u,v)\coloneqq x\cdot z' + z\cdot x' \pmod 2$, where the dot products are over $\mathbb F_2$. 
This number records the parity of the local anti-commutations between the two Pauli strings: if $\omega(u,v)=0$, then $P_u$ and $P_v$ commute; if $\omega(u,v)=1$, then they anticommute. 
For each $u\in V$, we can also define the Walsh character
\begin{align}
\chi_u(v)\coloneqq (-1)^{\omega(u,v)}\in\{\pm1\},
\end{align}
and thus
\begin{align}
\chi_u(v)=+1\iff [P_u,P_v]=0,
\qquad
\chi_u(v)=-1\iff \{P_u,P_v\}=0.
\end{align}
Hence, we have
\begin{align}
\mathbbm{1}[\{P_u,P_v\}=0]=\frac{1-\chi_u(v)}{2}.
\end{align}
Formally, under the symplectic notation, we have the following fact.
\begin{fact}[Pauli product rule]\label{fac:pauli-product-app}
For every $u,v\in V$, there exists a phase $\zeta(u,v)\in\{\pm1,\pm i\}$ such
that $P_uP_v=\zeta(u,v)P_{u+v}$. If $\omega(u,v)=0$, then
$\zeta(u,v)\in\{\pm1\}$. If $\omega(u,v)=1$, then $\zeta(u,v)=is(u,v)$ for
some sign $s(u,v)\in\{\pm1\}$, and
$[P_u,P_v]=2is(u,v)P_{u+v}$.
\end{fact}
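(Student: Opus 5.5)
We verify the fact directly from the explicit representation $P_{(x\mid z)}=i^{x\cdot z}X^xZ^z$ fixed in Section~\ref{ssec:displacement}. Here $X^x\coloneqq\prod_j X_j^{x_j}$ and $Z^z\coloneqq\prod_j Z_j^{z_j}$. The argument has two parts: a commutation computation that gives the existence of the phase $\zeta(u,v)$, and a Hermiticity argument that pins down whether the phase is real or imaginary.

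\textbf{Step 1 (existence of the phase).} Write $u=(x\mid z)$ and $v=(x'\mid z')$. Then
\begin{align}
P_uP_v=i^{x\cdot z+x'\cdot z'}X^xZ^zX^{x'}Z^{z'}.
\end{align}
On each qubit, $ZX=-XZ$, so moving $Z^z$ past $X^{x'}$ gives $Z^zX^{x'}=(-1)^{z\cdot x'}X^{x'}Z^z$. Since powers of $X$ commute among themselves, and likewise for $Z$, we get $X^xX^{x'}=X^{x+x'}$ and $Z^zZ^{z'}=Z^{z+z'}$, with exponents reduced mod $2$. Hence $P_uP_v$ is a power of $i$ times $X^{x+x'}Z^{z+z'}$. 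Finally, $X^{x+x'}Z^{z+z'}$ is itself $i^{-(x+x')\cdot(z+z')}P_{u+v}$. So $P_uP_v=\zeta(u,v)P_{u+v}$ with $\zeta(u,v)$ a power of $i$, i.e.\ $\zeta(u,v)\in\{\pm1,\pm i\}$. No explicit formula for the exponent is needed.

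\textbf{Step 2 (real versus imaginary phase).} Because $P_u$ and $P_v$ are Hermitian,
\begin{align}
(P_uP_v)^\dagger=P_vP_u=\chi_u(v)\,P_uP_v.
\end{align}
On the other hand, $(P_uP_v)^\dagger=\overline{\zeta(u,v)}\,P_{u+v}$, since $P_{u+v}$ is Hermitian. Comparing the two expressions and using $P_{u+v}\neq0$ gives $\overline{\zeta}=\chi_u(v)\,\zeta$.

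If $\omega(u,v)=0$, then $\chi_u(v)=1$ and $\zeta$ is real, so $\zeta\in\{\pm1\}$.

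If $\omega(u,v)=1$, then $\chi_u(v)=-1$ and $\zeta$ is purely imaginary, so $\zeta=is(u,v)$ with $s(u,v)\in\{\pm1\}$. In this case $P_vP_u=-P_uP_v$, and therefore
\begin{align}
[P_u,P_v]=2P_uP_v=2is(u,v)P_{u+v}.
\end{align}

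\textbf{Main obstacle.} The only delicate point is the bookkeeping of the $i^{x\cdot z}$ normalisation in Step 1. The Hermiticity argument in Step 2 is designed to avoid tracking the exact exponent, since it needs only that $\zeta$ is some power of $i$ and that $P_{u+v}$ is Hermitian.
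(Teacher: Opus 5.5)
Your proof is correct. The paper does not prove this statement at all: it is recorded as a standard Fact in the appendix and used without argument, so there is no paper proof to compare against, and your two-step argument fills that gap.

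Splitting the claim into (i) the phase is some power of $i$ and (ii) Hermiticity of $P_u$, $P_v$, $P_{u+v}$ forces $\overline{\zeta(u,v)}=\chi_u(v)\,\zeta(u,v)$ is a good choice. It avoids the integer-versus-$\mathbb{F}_2$ bookkeeping that an explicit formula for $\zeta$ in terms of the $i^{x\cdot z}$ normalisation would need. Step 2 also uses nothing about that normalisation beyond Hermiticity, so the real/imaginary dichotomy holds for any Hermitian choice of representatives.

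One small improvement is available. The only input you import from the preliminaries is the commutation relation $P_vP_u=\chi_u(v)P_uP_v$. That relation already follows from your Step 1 computation: compare the factor $(-1)^{z\cdot x'}$ in $P_uP_v$ with the factor $(-1)^{z'\cdot x}$ in $P_vP_u$. Using this makes the argument fully self-contained.
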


Since $P_u$ can be a Pauli string formed by tensor products, we can also expand it into a qubit-by-qubit form $P_u=\bigotimes_{i=1}^n P_{u_i}$, for $P_{u_i}\in\{I,X,Y,Z\}$.
The addition $u+v$ on it is still the bitwise XOR addition over $\mathbb F_2$, and $(u+v)_i$ denotes the $i$th local Pauli letter.
For each fixed label $u$, one can assign a $\{\pm1\}$ value to every $v\in V$ depending on their commutation relation. 
When these values are summed over a subspace, the result is either a sum of all $1$ or cancellation out completely to $0$.
Quantitatively, we have the following lemma.

\begin{lemma}[Orthogonality of symplectic characters]\label{lem:char-sum-app}
If $K\le V$ has dimension $r$, then for every $u\in V$,
\begin{align}
\sum_{v\in K}\chi_u(v)=
\begin{cases}
2^r,& u\in K^\perp,\\
0,& u\notin K^\perp.
\end{cases}
\end{align}
\end{lemma}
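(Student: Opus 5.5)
The plan is to treat $\chi_u$, restricted to $K$, as a group homomorphism from $(K,+)$ to $(\{\pm1\},\cdot)$ and then apply the standard fact that a character summed over a finite group gives either the group order or zero.

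First, I will check the homomorphism property. The form $\omega$ is bilinear over $\F$: it is a sum of dot products, each linear in each argument. Hence $\omega(u,v+v')=\omega(u,v)+\omega(u,v')$, so $\chi_u(v+v')=\chi_u(v)\chi_u(v')$ for all $v,v'\in K$.

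Next, I split into two cases according to whether $u\in K^\perp$. Here $K^\perp$ means the symplectic complement $\{u\in V:\omega(u,v)=0\ \forall v\in K\}$. This is the interpretation consistent with the statement, and I will say so explicitly in the proof.

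\emph{Case $u\in K^\perp$.} Every term satisfies $\chi_u(v)=1$. The sum is therefore $|K|=2^r$, since a dimension-$r$ subspace over $\F$ has $2^r$ elements.

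\emph{Case $u\notin K^\perp$.} There is some $v_0\in K$ with $\chi_u(v_0)=-1$. Translation $v\mapsto v+v_0$ is a bijection of $K$, because $K$ is closed under addition. Reindexing the sum and using the homomorphism property gives
\begin{align}
S\coloneqq\sum_{v\in K}\chi_u(v)=\sum_{v\in K}\chi_u(v+v_0)=\chi_u(v_0)\,S=-S,
\end{align}
so $S=0$.

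As an equivalent viewpoint, I could note that $v\mapsto\omega(u,v)$ is a linear functional on $K$. It is either identically zero or onto $\F$. In the onto case its kernel has index $2$, so exactly half of the terms are $+1$ and half are $-1$.

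There is no real obstacle here. The only point needing care is pinning down that $\perp$ refers to the symplectic form $\omega$ and not the standard dot product, since the lemma is applied with $\chi_u$ defined via $\omega$.
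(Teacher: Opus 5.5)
Your proof is correct and follows the paper's argument essentially verbatim: the case $u\in K^\perp$ gives $|K|=2^r$, and otherwise you pick $v_0\in K$ with $\omega(u,v_0)=1$ and use the translation $v\mapsto v+v_0$ together with multiplicativity of $\chi_u$ to force the sum to vanish. Your explicit bilinearity check for $\omega$ and your remark that $K^\perp$ denotes the symplectic complement spell out two points the paper leaves implicit.
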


\begin{proof}
If $u\in K^\perp$, then $\chi_u(v)=1$ for all $v\in K$, so the sum is $|K|=2^r$. 
If $u\notin K^\perp$, choose $v_0\in K$ such that $\omega(u,v_0)=1$. 
Then the map $v\mapsto v+v_0$ is a bijection of $K$, and
\begin{align}
\chi_u(v+v_0)=\chi_u(v)\chi_u(v_0)=-\chi_u(v),
\end{align}
resulting in the sum canceling out in pairs.
\end{proof}

\subsection{Proof of Proposition~\ref{prop:trace-main}}\label{app:trace}

Let $\rho_\psi\coloneqq |\psi\rangle\langle\psi|$. 
Consider the conditional probability of $m$ given $\psi$
\begin{align}
\E[m\mid\psi]=\Tr(\rho_\psi P(t)).
\end{align}
Therefore, we have
\begin{align}
\E[qm]=2^{-n}\sum_\psi q(\psi)\Tr(\rho_\psi P(t))=2^{-n}\Tr\left(P(t)\sum_\psi q(\psi)\rho_\psi\right).
\end{align}
Since $\{|\psi\rangle\}$ is an eigenbasis of $Q$ and $Q|\psi\rangle=q(\psi)|\psi\rangle$, we have $Q=\sum_\psi q(\psi)\rho_\psi$.
Namely
\begin{align}
\E[qm]=2^{-n}\Tr(P(t)Q).
\end{align}

\section{Same-Pauli Measurement}\label{app:quadratic}

Same-Pauli measurements require obtaining the short-time second-order derivative of the auto-correlation function in Eq.~\eqref{eq:auto-correlation} at the initial time. 
We will now show how to obtain the support of the Hamiltonian using these measurements.

\subsection{Complementarity of anti-commuting probes}

In this subsection, we will prove the following three points: 
\begin{enumerate}
\item The short-time second-order derivative of the auto-correlation function is proportional to the sum of the squared coefficients of all anti-commutative terms with the probe operator in the Hamiltonian.
\item The raw signal obtained from the same-Pauli measurement can be written as a linear sum of the squares of the coefficients in the Hamiltonian.
\item Any single probe suffers from invisibility of some Hamiltonian terms. 
As a result, we need complementary bases to extract all Pauli terms in the Hamiltonian.
\end{enumerate}

\begin{proposition}[Same-Pauli signal]\label{prop:curvature}
For every Hermitian Pauli string $P$ and Pauli Hamiltonian $H=\sum_{u\in V}\lambda_u P_u$, we have
\begin{align}
C_P''(0)=-4\sum_{u:\{P_u,P\}=0}\lambda_u^2.
\end{align}
\end{proposition}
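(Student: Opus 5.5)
I would prove this by a direct second-order expansion of the Heisenberg-evolved operator, followed by Pauli orthogonality in the Hilbert--Schmidt inner product. Differentiating $P(t)=e^{iHt}Pe^{-iHt}$ twice gives $P''(t)=e^{iHt}\,(i\ad_H)^2(P)\,e^{-iHt}$. At $t=0$ this is $P''(0)=-[H,[H,P]]$, and hence
\begin{align}
C_P''(0)=-2^{-n}\Tr\bigl([H,[H,P]]P\bigr).
\end{align}
Cyclicity of the trace then lets us rewrite this as a squared Hilbert--Schmidt norm:
\begin{align}
\Tr\bigl([H,[H,P]]P\bigr)=\Tr\bigl([H,P]^\dagger[H,P]\bigr)=\|[H,P]\|_{\HS}^2 .
\end{align}
To see this, first move the outer commutator: $\Tr([H,X]P)=-\Tr(X[H,P])$ with $X=[H,P]$. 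Then use $[H,P]^\dagger=[P,H]=-[H,P]$, valid since both $H$ and $P$ are Hermitian. So we get
\begin{align}
C_P''(0)=-2^{-n}\|[H,P]\|_{\HS}^2 .
\end{align}
The sign is the only delicate bookkeeping point, and I would double check it against the single-qubit example $H=\lambda X$, $P=Z$.

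Next I would evaluate the commutator term by term. Write $P=P_v$. For each $u$, the commutator $[P_u,P_v]$ vanishes when $\omega(u,v)=0$. When $\omega(u,v)=1$, Fact~\ref{fac:pauli-product-app} gives $[P_u,P_v]=2is(u,v)P_{u+v}$. Therefore
\begin{align}
[H,P]=2i\sum_{u:\{P_u,P\}=0}\lambda_u s(u,v)P_{u+v}.
\end{align}

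The map $u\mapsto u+v$ is injective, so the Pauli strings $P_{u+v}$ appearing here are distinct. Since $2^{-n}\Tr(P_aP_b)=\mathbbm 1[a=b]$, orthogonality gives
\begin{align}
2^{-n}\|[H,P]\|_{\HS}^2=4\sum_{u:\{P_u,P\}=0}\lambda_u^2 ,
\end{align}
using $s(u,v)^2=1$. Substituting this into the expression for $C_P''(0)$ yields the claim.

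An alternative route is through the spectral form from Section~\ref{ssec:time-sample}. There $C_P''(0)=-2^{-n}\sum_{j,k}(E_j-E_k)^2|\langle e_j|P|e_k\rangle|^2$, and this double sum equals $2^{-n}\|[H,P]\|_{\HS}^2$. I would use this as a consistency check on the sign.

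I expect no real obstacle. The only care needed is in the sign and factor conventions: the factor $2$ in the commutator squares to give the $4$, and the Hermiticity of $i\,s(u,v)$ times the anti-Hermitian structure must be tracked correctly.
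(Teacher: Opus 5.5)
Your proposal is correct and follows essentially the same route as the paper: reduce $C_P''(0)$ to $-2^{-n}\|[H,P]\|_{\HS}^2$ using cyclicity of the trace and the anti-Hermiticity of $[H,P]$, then expand the commutator over the anticommuting terms and apply Hilbert--Schmidt orthogonality. The only cosmetic difference is in the orthogonality step. You first rewrite $P_uP$ as a phase times $P_{u+v}$ via the Pauli product rule. The paper instead checks orthogonality of the family $\{P_uP\}$ directly, through $\Tr(PP_uP_{u'}P)=\Tr(P_uP_{u'})$.
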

\begin{proof}
Note that $\tfrac{d}{dt}P(t)=i[H,P(t)]$ and $P''(0)=-[H,[H,P]]$, we have
\begin{align}
C_P''(0)=2^{-n}\Tr(P''(0)P)=-2^{-n}\Tr([H,[H,P]]P).
\end{align}
Write $X\coloneqq [H,P]$, we have
\begin{align}
\begin{split}
\Tr([H,[H,P]]P)&=\Tr((HX-XH)P)=
\Tr(HXP)-\Tr(XHP)=
\Tr(XPH)-\Tr(XHP)\\
&=\Tr(X(PH-HP))=\Tr(X[P,H])=-\Tr(X[H,P])\\
&=-\Tr(X^2).
\end{split}
\end{align}
Because $H$ and $P$ are Hermitian, $X=[H,P]$ is anti-Hermitian, so $X^\dagger=-X$ and therefore
\begin{align}
-\Tr(X^2)=\Tr(X^\dagger X)=\|X\|_{HS}^2.
\end{align}
Thus
\begin{align}\label{eq:same-pauli-commutator}
2^{-n}\Tr([H,[H,P]]P)=2^{-n}\|[H,P]\|_{HS}^2.
\end{align}

As we can write $[H,P]$ as
\begin{align}
[H,P]=\sum_{u\in S}\lambda_u[P_u,P]=\sum_{u:\{P_u,P\}=0}2\lambda_uP_uP.
\end{align}
For $u\neq u'$, the Hilbert-Schmidt inner product of the corresponding terms is
\begin{align}
\Tr((P_uP)^\dagger(P_{u'}P))=\Tr(PP_uP_{u'}P)=\Tr(P_uP_{u'}),
\end{align}
where we used fact that $P=P^\dagger=P^{-1}$ and cyclicity of trace. 
Since $P_uP_{u'}$ is a phase times a non-identity Pauli string when $u\neq u'$, it is traceless. 
For $u=u'$, the trace equals $\Tr(I)=2^n$. 
Hence the family $\{P_uP:\{P_u,P\}=0\}$ is orthogonal in Hilbert-Schmidt inner product. 
Therefore
\begin{align}
\|[H,P]\|_{HS}^2=4\sum_{u:\{P_u,P\}=0}\lambda_u^2\|P_uP\|_{HS}^2=4\cdot 2^n\sum_{u:\{P_u,P\}=0}\lambda_u^2,
\end{align}
After substituting the above into Eq.~\eqref{eq:same-pauli-commutator}, we have 
\begin{align}
C_P''(0)=-4\sum_{u:\{P_u,P\}=0}\lambda_u^2.
\end{align}
\end{proof}

Based on the above principle, we can extend the representation of the same-Pauli measurement signal to the entire $V$ and give the form of a linear sum. 
We denote
\begin{align}
x_u\coloneqq \lambda_u^2\ge 0,
\qquad\text{and}\qquad
W_{\mathrm{tot}}\coloneqq \sum_{u\in V}x_u=2^{-n}\Tr(H^2).
\end{align}
For each $v\in V$, set $P_v$ and define
\begin{align}
y(v)\coloneqq \sum_{u\in V}\mathbbm{1}[\{P_u,P_v\}=0]x_u=-\frac14 C_{P_v}''(0).
\end{align}
according to Proposition~\ref{prop:curvature}.
We also introduce the notation
\begin{align}
g(v)\coloneqq W_{\mathrm{tot}}-2y(v)=\sum_{u\in V}x_u\chi_u(v).
\end{align}
In this way, each probe direction is assigned a number, and the Walsh characters form the coordinate system that converts those numbers back into squared Pauli coefficients.

\begin{proposition}[Walsh inversion for results from auto-correlation function values]\label{prop:even-duality}
For every nonzero $u\in V$,
\begin{align}
x_u=2^{-2n}\sum_{v\in V}g(v)\chi_u(v)=\frac{1}{2^{2n+1}}\sum_{v\in V} C_{P_v}''(0)\chi_u(v).
\end{align}
\end{proposition}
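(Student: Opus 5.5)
The plan is to run standard Walsh–Fourier inversion on $V=\F^{2n}$, using the character orthogonality stated in the preliminaries. By definition $g(v)=\sum_{u'\in V}x_{u'}\chi_{u'}(v)$, so $g$ is the symplectic Fourier transform of the vector $(x_{u'})_{u'}$. We will first check this expression against the one in the definition: since $\mathbbm{1}[\{P_{u'},P_v\}=0]=\tfrac{1-\chi_{u'}(v)}{2}$, we get $y(v)=\tfrac12 W_{\mathrm{tot}}-\tfrac12\sum_{u'}x_{u'}\chi_{u'}(v)$, and hence $W_{\mathrm{tot}}-2y(v)=\sum_{u'}x_{u'}\chi_{u'}(v)$. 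Here $x_0$ is included in $W_{\mathrm{tot}}$ and $\chi_0\equiv1$.

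Next, multiply by $\chi_u(v)$ and sum over $v$. Exchanging the finite sums gives $\sum_v g(v)\chi_u(v)=\sum_{u'}x_{u'}\sum_v\chi_{u'}(v)\chi_u(v)$. The inner sum equals $2^{2n}\mathbbm{1}[u=u']$ by the stated orthogonality relation. Equivalently, $\chi_{u'}\chi_u=\chi_{u+u'}$ by bilinearity of $\omega$, and Lemma~\ref{lem:char-sum-app} with $K=V$ applies because $\omega$ is nondegenerate, so $V^\perp=\{0\}$. This yields the first equality, $x_u=2^{-2n}\sum_v g(v)\chi_u(v)$, and in fact it holds for every $u$, including $u=0$.

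For the second equality, substitute $y(v)=-\tfrac14 C_{P_v}''(0)$, which is Proposition~\ref{prop:curvature}. This gives $g(v)=W_{\mathrm{tot}}+\tfrac12C_{P_v}''(0)$. The constant term contributes $W_{\mathrm{tot}}\sum_v\chi_u(v)=W_{\mathrm{tot}}\cdot 2^{2n}\mathbbm{1}[u=0]$, which vanishes for $u\neq0$; this is exactly where the nonzero hypothesis enters. What remains is $x_u=2^{-2n}\cdot\tfrac12\sum_v C_{P_v}''(0)\chi_u(v)=2^{-(2n+1)}\sum_v C_{P_v}''(0)\chi_u(v)$.

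The only points needing care are the two pieces of bookkeeping. First, $y(v)$ sums over all $u'\in V$, but the identity term $u'=0$ never anticommutes, so it is harmless. Second, the vanishing of the character sum for $u\neq0$ requires nondegeneracy of $\omega$. There is no real obstacle beyond that.
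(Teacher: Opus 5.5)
Your proposal is correct and follows essentially the same route as the paper: character orthogonality on $V$ gives the first equality, and substituting $g(v)=W_{\mathrm{tot}}+\tfrac12 C_{P_v}''(0)$ together with $\sum_{v}\chi_u(v)=0$ for $u\neq0$ gives the second. Your extra observations are accurate refinements that the paper leaves implicit: the first equality also holds at $u=0$, and the vanishing character sum relies on nondegeneracy of $\omega$.
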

\begin{proof}
As we have shown in the main text that
\begin{align}
2^{-2n}\sum_{v\in V}\chi_u(v)\chi_{u'}(v)=\mathbbm{1}\{u=u'\},
\end{align}
and we recall that
\begin{align}
g(v)=\sum_{u'\in V}x_{u'}\chi_{u'}(v),
\end{align}
we obtain
\begin{align}
\begin{split}
2^{-2n}\sum_{v\in V}g(v)\chi_u(v)&=2^{-2n}\sum_{v\in V}\sum_{u'\in V}x_{u'}\chi_{u'}(v)\chi_u(v)\\
&=\sum_{u'\in V}x_{u'}
\left(2^{-2n}\sum_{v\in V}\chi_{u'}(v)\chi_u(v)\right)\\
&=x_u.
\end{split}
\end{align}
This proves the first step. 
For the second step, we use
\begin{align}
y(v)=-\frac14 C_{P_v}''(0),
\qquad
g(v)=W_{\mathrm{tot}}-2y(v)=W_{\mathrm{tot}}+\frac12 C_{P_v}''(0).
\end{align}
Therefore, we have
\begin{align}
2^{-2n}\sum_{v\in V}g(v)\chi_u(v)=2^{-2n}W_{\mathrm{tot}}\sum_{v\in V}\chi_u(v)+\frac{1}{2^{2n+1}}\sum_{v\in V}C_{P_v}''(0)\chi_u(v).
\end{align}
When $u\neq0$, the character $\chi_u$ is nontrivial, so $\sum_{v\in V}\chi_u(v)=0$.
This gives the claimed result.
\end{proof}
Proposition~\ref{prop:even-duality} indicates there is no information loss in the full field. 
However, one can never query all $4^n$ Pauli directions at a same time. 
The following theorem explains what a whole subspace of probes sees and what it misses.

\begin{lemma}\label{lem:complementarity}
Let $K\le V$ have dimension $r$, and define $\cK(K)\coloneqq \frac{1}{2^{r+1}}\sum_{v\in K}|C_{P_v}''(0)|$ and $\cS(K)\coloneqq \sum_{u\in K^\perp}x_u$.
Then
\begin{align}
\cK(K)=\sum_{u\notin K^\perp}x_u,
\qquad
\cK(K)+\cS(K)=W_{\mathrm{tot}}.
\end{align}
\end{lemma}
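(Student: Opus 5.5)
The plan is to express everything through the quantities already introduced, namely $y(v)=-\tfrac14 C_{P_v}''(0)\ge0$ and the character expansion $g(v)=W_{\mathrm{tot}}-2y(v)=\sum_{u}x_u\chi_u(v)$. Since Proposition~\ref{prop:curvature} gives $C_{P_v}''(0)=-4y(v)\le 0$, we have $|C_{P_v}''(0)|=4y(v)$. Hence
\begin{align}
\cK(K)=\frac{1}{2^{r+1}}\sum_{v\in K}4y(v)=\frac{1}{2^{r}}\sum_{v\in K}\bigl(W_{\mathrm{tot}}-g(v)\bigr).
\end{align}

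The next step is to sum $g$ over the subspace $K$. I will exchange the order of summation and apply Lemma~\ref{lem:char-sum-app}:
\begin{align}
\sum_{v\in K}g(v)=\sum_{u\in V}x_u\sum_{v\in K}\chi_u(v)=2^r\sum_{u\in K^\perp}x_u=2^r\cS(K).
\end{align}
Substituting this into the expression for $\cK(K)$, and using $\sum_{v\in K}W_{\mathrm{tot}}=2^rW_{\mathrm{tot}}$, gives
\begin{align}
\cK(K)=W_{\mathrm{tot}}-\cS(K)=\sum_{u\notin K^\perp}x_u.
\end{align}
This proves both claimed identities at once.

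I expect no real obstacle. The only care points are the sign used to remove the absolute value, which follows from the nonnegativity of $y(v)$, and the normalization factor $2^{r+1}$ combining with the factor $4$ and the factor $2$ in $g=W_{\mathrm{tot}}-2y$ to give exactly $2^{-r}$. The term $u=0$ lies in $K^\perp$, which is consistent with its contribution to $\cS(K)$; in any case $x_0$ is an identity term and is irrelevant.
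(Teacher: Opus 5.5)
Your proposal is correct and follows essentially the same route as the paper: you remove the absolute value via $|C_{P_v}''(0)|=4y(v)$, exchange the order of summation, and apply the character-sum Lemma~\ref{lem:char-sum-app} over $K$. The only cosmetic difference is that you sum $g=\sum_u x_u\chi_u$ to obtain $\cS(K)$ first and then subtract it from $W_{\mathrm{tot}}$, whereas the paper sums $y=\sum_u\frac{1-\chi_u}{2}x_u$ to get $\cK(K)$ directly.
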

\begin{proof}
By Proposition~\ref{prop:curvature}, each $C_{P_v}''(0)$ is non-positive, and hence $|C_{P_v}''(0)|=4y(v)$.
Using $y(v)=\sum_{u\in V}\frac{1-\chi_u(v)}{2}x_u$, we get
\begin{align}
\sum_{v\in K}|C_{P_v}''(0)|=4\sum_{v\in K}y(v)=2\sum_{u\in V}x_u\left(|K|-\sum_{v\in K}\chi_u(v)\right).
\end{align}
By Lemma~\ref{lem:char-sum-app}, the character sum equals $2^r$ if $u\in K^\perp$ and equals $0$ otherwise. 
Thus all $u\in K^\perp$ contribute zero, while each $u\notin K^\perp$ contributes $2^{r+1}x_u$. Dividing by $2^{r+1}$ gives
\begin{align}
\cK(K)=\sum_{u\notin K^\perp}x_u.
\end{align}
We then obtain the second step in the claimed result by the fact that
\begin{align}
W_{\mathrm{tot}}=\sum_{u\notin K^\perp}x_u+\sum_{u\in K^\perp}x_u.
\end{align}
\end{proof}
In short, a family of same-Pauli probes cannot see the coefficients in $K^\perp$, namely the terms commuting with all probes in the family.
This is why the protocol uses complementary product bases: the $Z$ basis reveals the $X$-projection of a Pauli label, while the $X$ basis reveals the $Z$-projection.

\subsection{Proof of Proposition~\ref{prop:even-sample-main}}
By assuming $\norm{H}\leq \Lambda$, all correlation functions become band-limited. 
Here, we explain why a fixed kernel can combine finite discrete values to obtain a derivative at zero. 
We first show that every same-Pauli auto-correlation function is a bounded-frequency signal that we can recover its second-order derivative from its finite-time values.

\begin{proposition}[A spectral representation of the auto-correlation function]\label{prop:even-spectrum}
Assume $\|H\|\le \Lambda$. For each $v\in V$, there exists a symmetric probability measure $\mu_v$ supported on $[-2\Lambda,2\Lambda]$ such that
\begin{align}
C_{P_v}(t)=\int_{-2\Lambda}^{2\Lambda}\cos(\omega t)d\mu_v(\omega).
\end{align}
Consequently,
\begin{align}
C_{P_v}''(0)=-\int_{-2\Lambda}^{2\Lambda}\omega^2d\mu_v(\omega).
\end{align}
\end{proposition}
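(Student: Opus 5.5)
The plan is to diagonalize $H$ and expand the trace in its eigenbasis. Write $H=\sum_j E_j|e_j\rangle\langle e_j|$ with $|E_j|\le\|H\|\le\Lambda$. Then $\langle e_j|P_v(t)|e_k\rangle=e^{i(E_j-E_k)t}\langle e_j|P_v|e_k\rangle$, and so
\begin{align}
C_{P_v}(t)=2^{-n}\sum_{j,k}|\langle e_j|P_v|e_k\rangle|^2e^{i(E_j-E_k)t}.
\end{align}
Next I define the discrete measure $\mu_v\coloneqq 2^{-n}\sum_{j,k}|\langle e_j|P_v|e_k\rangle|^2\,\delta_{E_j-E_k}$. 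It is supported on $[-2\Lambda,2\Lambda]$ because each frequency $E_j-E_k$ lies in that interval.

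I then check three properties of $\mu_v$. First, it is nonnegative, since its weights are squared moduli. Second, its total mass is one: the mass equals $2^{-n}\sum_{j,k}|\langle e_j|P_v|e_k\rangle|^2=2^{-n}\|P_v\|_{\HS}^2=2^{-n}\Tr(P_v^2)=1$, using $P_v^2=I$. Third, it is symmetric: swapping $j\leftrightarrow k$ sends the frequency $\omega$ to $-\omega$ and leaves the weight unchanged, because $P_v$ is Hermitian, so $|\langle e_k|P_v|e_j\rangle|=|\langle e_j|P_v|e_k\rangle|$.

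From symmetry, $\int e^{i\omega t}d\mu_v=\int\cos(\omega t)\,d\mu_v$, since the sine part integrates to zero. This gives the claimed spectral representation of $C_{P_v}(t)$.

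For the derivative formula, $\mu_v$ is a finite sum of point masses on a compact set, so differentiating twice under the integral (here, under a finite sum) is immediate. It gives $C_{P_v}''(t)=-\int\omega^2\cos(\omega t)\,d\mu_v$, and evaluating at $t=0$ yields $C_{P_v}''(0)=-\int\omega^2\,d\mu_v$.

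There is no real obstacle. The only point needing care is symmetry, which requires the Hermiticity of $P_v$, and that Hermiticity is guaranteed by the paper's canonical choice of Hermitian $P_u$.
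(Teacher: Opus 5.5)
Your proof is correct and is essentially the paper's argument written out concretely. The paper defines $\mu_v(\Omega)=\langle P_v,E(\Omega)P_v\rangle_{\HS}$ through the spectral measure of $\ad_H$, whose eigenvectors are $|e_j\rangle\langle e_k|$ with eigenvalues $E_j-E_k$, so it is exactly your discrete measure $2^{-n}\sum_{j,k}|\langle e_j|P_v|e_k\rangle|^2\delta_{E_j-E_k}$; the only small difference is that the paper gets the cosine form from the evenness of $C_{P_v}$ by averaging with the reflected measure, while you check the symmetry of $\mu_v$ directly from the Hermiticity of $P_v$, which is slightly cleaner.
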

\begin{proof}
We defined the normalized Hilbert-Schmidt inner product $\langle A,B\rangle_{\HS}\coloneqq 2^{-n}\Tr(A^\dagger B)$.
The commutator map $\cA_H\coloneqq \ad_H$ is self-adjoint on this Hilbert space because
\begin{align}
\langle A,[H,B]\rangle_{\HS}
=2^{-n}\Tr((A^\dagger H-HA^\dagger)B)
=\langle [H,A],B\rangle_{\HS}.
\end{align}
As $H|e_j\rangle=E_j|e_j\rangle$, we have
\begin{align}
\cA_H(|e_j\rangle\langle e_k|)=(E_j-E_k)|e_j\rangle\langle e_k|.
\end{align}
Since $|E_j|\le\Lambda$, the spectrum of $\cA_H$ lies in $[-2\Lambda,2\Lambda]$.

Let $E(d\omega)$ be the spectral measure of $\cA_H$, and define
\begin{align}
\mu_v(\Omega)\coloneqq \langle P_v,E(\Omega)P_v\rangle_{\HS}.
\end{align}
This is a probability measure supported on $[-2\Lambda,2\Lambda]$. 
Moreover,
\begin{align}
C_{P_v}(t)=\langle P_v,e^{it\cA_H}P_v\rangle_{\HS}
=\int_{-2\Lambda}^{2\Lambda}e^{i\omega t}d\mu_v(\omega).
\end{align}
The function $C_{P_v}(t)$ is real and even. 
We may average the frequency measure with its reflection $\omega$ and $-\omega$ and get:
\begin{align}
\frac12\bigl(C_{P_v}(t)+C_{P_v}(-t)\bigr)=C_{P_v}(t).
\end{align}
After this averaging, the measure is symmetric, so the sine terms cancel and only the cosine part remains:
\begin{align}
C_{P_v}(t)=\int_{-2\Lambda}^{2\Lambda}\cos(\omega t)d\mu_v(\omega).
\end{align}
Since the measure is supported on a bounded interval, we can differentiate inside the integral, giving
\begin{align}
C_{P_v}''(0)=-\int_{-2\Lambda}^{2\Lambda}\omega^2d\mu_v(\omega).
\end{align}
\end{proof}
In other words, we only need to choose $K_\Lambda$ such that on the entire frequency range where the signal can live, and any convolution against $K_\Lambda$ is equal to taking a second-order derivative at zero.
To build such a kernel with good decay, we fix a smooth
cutoff $\varphi\in C_c^\infty(\mathbb R)$ satisfying
\begin{align}
\varphi(\xi)=1\quad(|\xi|\le1),
\qquad
\varphi(\xi)=0\quad(|\xi|\ge2).
\end{align}
We set $B_\Lambda\coloneqq 2\Lambda$ and $\Phi_\Lambda(\omega)\coloneqq \omega^2\varphi(\omega/B_\Lambda)$, and define
\begin{align}
K_\Lambda(t)\coloneqq -\frac{1}{2\pi}\int_{\mathbb R}\Phi_\Lambda(\omega)e^{i\omega t}d\omega .
\end{align}
The cutoff does not change anything on the true spectral window as $\varphi(\omega/B_\Lambda)=1$ whenever $|\omega|\le B_\Lambda=2\Lambda$.
We thus have
\begin{align}
\Phi_\Lambda(\omega)=\omega^2
\qquad
 (|\omega|\le2\Lambda).
\end{align}
Beyond this range, the $\Phi_\Lambda(\omega)$ gently fades to zero rather than stopping abruptly. 
This is just a mathematical technique to let the integral behave well. 
Any construct that meets the requirements will not affect the complexity of the protocol in terms of order. 
As a simple example, we may define:
\begin{align}
\rho(s)\coloneqq 
\begin{cases}
e^{-1/s},& s>0,\\
0,& s\le 0,
\end{cases}
\qquad
\theta(s)\coloneqq \frac{\rho(s)}{\rho(s)+\rho(1-s)} .
\end{align}
Then $\theta\in C^\infty(\mathbb R)$, $\theta(s)=0$ for $s\le0$, and $\theta(s)=1$ for $s\ge1$. 
We then define the even cutoff:
\begin{align}
\varphi(\xi)\coloneqq \theta\left(\frac{4-\xi^2}{3}\right).
\end{align}
It is easy to verify that the example kernel constructed in this way meets the above requirements. The next lemma illustrates that the effect on the bounded frequency window is equivalent to taking the second-order derivative at zero.
\begin{lemma}[Even kernel for the second-order derivative of the auto-correlation function]\label{lem:even-kernel}
If $f(t)=\int_{-2\Lambda}^{2\Lambda}\cos(\omega t)d\nu(\omega)$ for a finite symmetric signed measure $\nu$, then
\begin{align}
f''(0)=\int_{\mathbb R}K_\Lambda(t)f(t)dt=2\int_0^\infty K_\Lambda(t)f(t)dt.
\end{align}
\end{lemma}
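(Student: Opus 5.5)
The plan is to compute $\int_{\mathbb R}K_\Lambda(t)f(t)\,dt$ by inserting the spectral representation of $f$, exchanging the order of integration, and using Fourier inversion for the Schwartz function $\Phi_\Lambda$.

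\textbf{Integrability of the kernel.} Since $\varphi\in C_c^\infty(\mathbb R)$, the function $\Phi_\Lambda(\omega)=\omega^2\varphi(\omega/B_\Lambda)$ is smooth and compactly supported in $[-4\Lambda,4\Lambda]$. Its inverse Fourier transform $K_\Lambda$ is therefore a Schwartz function. In particular $K_\Lambda\in L^1(\mathbb R)$, and it decays faster than any polynomial.

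\textbf{Evenness.} Because $\Phi_\Lambda$ is real and even, $K_\Lambda$ is real and even. The function $f$ is also even, since it is a cosine transform. Hence $K_\Lambda f$ is even, which gives the second equality $\int_{\mathbb R}K_\Lambda f=2\int_0^\infty K_\Lambda f$ immediately.

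\textbf{Main computation via Fubini.} Write
\begin{align}
\int_{\mathbb R}K_\Lambda(t)f(t)\,dt=\int_{\mathbb R}\int_{[-2\Lambda,2\Lambda]}K_\Lambda(t)\cos(\omega t)\,d\nu(\omega)\,dt .
\end{align}
The exchange of integrals is justified because $|K_\Lambda(t)\cos(\omega t)|\le|K_\Lambda(t)|$, $K_\Lambda\in L^1$, and $|\nu|$ has finite total variation. After swapping, we need the inner integral $\int_{\mathbb R}K_\Lambda(t)\cos(\omega t)\,dt$.

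Since $K_\Lambda$ is even, this equals $\int_{\mathbb R}K_\Lambda(t)e^{-i\omega t}\,dt$. By Fourier inversion with the $\frac{1}{2\pi}$ convention used in the definition of $K_\Lambda$, this is $-\Phi_\Lambda(\omega)$, and $\Phi_\Lambda$ is smooth and compactly supported, so inversion holds pointwise. On $|\omega|\le2\Lambda$ we have $\varphi(\omega/B_\Lambda)=1$, so $-\Phi_\Lambda(\omega)=-\omega^2$.

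It follows that the whole expression equals $-\int\omega^2\,d\nu(\omega)$. Because $\nu$ has bounded support, we may differentiate under the integral sign, which gives exactly $f''(0)=-\int\omega^2\,d\nu(\omega)$.

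\textbf{Expected obstacle.} The only delicate point is keeping the sign and normalization conventions consistent in the Fourier inversion. The definition uses $e^{+i\omega t}$ with a prefactor $-\frac{1}{2\pi}$, so one must check that $\int K_\Lambda(t)e^{-i\omega t}\,dt=-\Phi_\Lambda(\omega)$ rather than a rescaled version. I would verify this directly: $\widehat{K_\Lambda}=-\Phi_\Lambda$ is the standard inversion pair for Schwartz functions. The remaining measure-theoretic justifications, Fubini and differentiation under the integral, are routine given compact support and integrability.
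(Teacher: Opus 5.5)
Your proposal is correct and follows the same route as the paper: Fubini (justified by $K_\Lambda\in L^1$ and finite total variation of $\nu$), identifying the inner transform as $-\Phi_\Lambda(\omega)=-\omega^2$ on $[-2\Lambda,2\Lambda]$, and using evenness of $K_\Lambda$ and $f$ for the half-line form. Like the paper, you implicitly rely on the cutoff $\varphi$ being real and even so that $K_\Lambda$ is real and even; this holds for the explicit cutoff the paper constructs.
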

\begin{proof}
Since $K_\Lambda(t)$ is integrable over the entire $\mathbb R$ and $f$ is bounded, we can interchange the order of inner and outer integrals according to Fubini's theorem. 
Using the definition of $K_\Lambda$, we have
\begin{align}
\int_{\mathbb R}K_\Lambda(t)f(t)dt=\int_{-2\Lambda}^{2\Lambda}\left(\int_{\mathbb R}K_\Lambda(t)e^{i\omega t}dt\right)d\nu(\omega).
\end{align}
The inner transform equals $-\Phi_\Lambda(\omega)$, which is $-\omega^2$ on $[-2\Lambda,2\Lambda]$. 
Hence, we have
\begin{align}
\int_{\mathbb R}K_\Lambda(t)f(t)dt=-\int_{-2\Lambda}^{2\Lambda}\omega^2d\nu(\omega)=f''(0).
\end{align}
The second equality follows because both $K_\Lambda$ and $f$ are even.
\end{proof}

Now we have a method to express the derivative at zero as a weighted average of the values of $C_{P_v}(t)$ at positive time points. 
The remaining problem is that $K_\Lambda(t)$ is a signed function, not a probability density. 
Therefore, we need to sample time from the absolute value $|K_\Lambda(t)|$, and then put the sign of $K_\Lambda(t)$ back into the estimator.
Because the cutoff $\varphi$ is fixed once and for all, the kernels $K_\Lambda$'s are just rescalings of one specific fixed kernel. 
Hence, there exist absolute constants $\kappa_0,\kappa_1>0$, depending only on
$\varphi$, such that
\begin{align}
\|K_\Lambda\|_1=4\kappa_0\Lambda^2,
\qquad
\frac{2}{\|K_\Lambda\|_1}\int_0^\infty t|K_\Lambda(t)|dt=\frac{\kappa_1}{2\Lambda}.
\end{align}
where $\|K_\Lambda\|_1\coloneqq \int_{\mathbb R}|K_\Lambda(t)|dt$ is the function $L^1$ norm. 
The first equality controls the size of one sample. 
The second equality means that the average sampled evolution time is of order $1/\Lambda$.

Since $K_\Lambda$ is even, we can define a probability density on $[0,\infty)$ by
\begin{align}
q_\Lambda(t)\coloneqq \frac{2|K_\Lambda(t)|}{\|K_\Lambda\|_1}=\frac{|K_\Lambda(t)|}{2\kappa_0\Lambda^2}.
\end{align}
Now sample $\tau\sim q_\Lambda$. 
For a same-Pauli shot with probe $P_v$ and time $\tau$, let $X_v\in\{\pm1\}$ be the product of the prepared eigenvalue and the final measurement outcome.
By Proposition~\ref{prop:trace-main}, we have
\begin{align}
\E[X_v\mid T=t]=C_{P_v}(t).
\end{align}
Define
\begin{align}
D_v\coloneqq \|K_\Lambda\|_1\operatorname{sign}(K_\Lambda(T))X_v=4\kappa_0\Lambda^2\operatorname{sign}(K_\Lambda(T))X_v,\qquad
Y_v\coloneqq -\frac14D_v.
\end{align}

We now convert the kernel into a randomized experiment to let one same-Pauli shot give an unbiased estimate of the squared coefficient as claimed in Proposition~\ref{prop:even-sample-main}.
\begin{proposition}[Sample for same-Pauli measurements]\label{prop:even-sample}
For every $v\in V$, we have
\begin{align}
\E[D_v]=C_{P_v}''(0),
\qquad
\E[Y_v]=y(v),
\qquad
|Y_v|\le \kappa_0\Lambda^2,
\qquad
\E[\tau]=\frac{\kappa_1}{2\Lambda}.
\end{align}
\end{proposition}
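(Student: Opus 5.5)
The plan is to condition on the sampled time and then integrate against the sampling density, using the kernel identity of Lemma~\ref{lem:even-kernel}. The three remaining claims follow from the normalizations of $\|K_\Lambda\|_1$ and its first moment.

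\textbf{Unbiasedness of $D_v$.} By Proposition~\ref{prop:trace-main}, conditioned on $\tau=t$ we have $\E[X_v\mid \tau=t]=C_{P_v}(t)$. Since $\|K_\Lambda\|_1\operatorname{sign}(K_\Lambda(t))$ is deterministic given $t$,
\begin{align}
\E[D_v]=\int_0^\infty q_\Lambda(t)\,\|K_\Lambda\|_1\operatorname{sign}(K_\Lambda(t))\,C_{P_v}(t)\,dt
=2\int_0^\infty |K_\Lambda(t)|\operatorname{sign}(K_\Lambda(t))\,C_{P_v}(t)\,dt,
\end{align}
which equals $2\int_0^\infty K_\Lambda(t)C_{P_v}(t)\,dt$. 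Proposition~\ref{prop:even-spectrum} gives $C_{P_v}(t)=\int\cos(\omega t)\,d\mu_v(\omega)$ with $\mu_v$ a symmetric probability measure on $[-2\Lambda,2\Lambda]$. Lemma~\ref{lem:even-kernel} therefore applies with $f=C_{P_v}$ and $\nu=\mu_v$, and the last integral equals $C_{P_v}''(0)$. The interchange of expectation and integral needs $K_\Lambda\in L^1$, which holds because $\Phi_\Lambda$ is smooth and compactly supported, so $K_\Lambda$ is Schwartz.

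\textbf{Remaining claims.} The identity $\E[Y_v]=y(v)$ follows from $Y_v=-D_v/4$ and the definition $y(v)=-\tfrac14 C_{P_v}''(0)$, which rests on Proposition~\ref{prop:curvature}. For the bound, $|X_v|=1$ and $|\operatorname{sign}(\cdot)|\le1$ give $|D_v|\le\|K_\Lambda\|_1=4\kappa_0\Lambda^2$, hence $|Y_v|\le\kappa_0\Lambda^2$. For the mean time, $\E[\tau]=\int_0^\infty t\,q_\Lambda(t)\,dt=\frac{2}{\|K_\Lambda\|_1}\int_0^\infty t|K_\Lambda(t)|\,dt$, which is $\kappa_1/(2\Lambda)$ by the stated normalization.

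\textbf{Main obstacle.} The only substantive point is justifying that these normalizations hold with constants $\kappa_0,\kappa_1$ independent of $\Lambda$. I would verify this by scaling. Write $K_\Lambda(t)=B^3 K_1'(Bt)$, where $B=2\Lambda$ and $K_1'(s)=-\tfrac1{2\pi}\int \xi^2\varphi(\xi)e^{i\xi s}\,d\xi$ comes from the substitution $\omega=B\xi$. Then $\|K_\Lambda\|_1=B^2\|K_1'\|_1\propto\Lambda^2$, and the first moment scales as $B\cdot B^{-1}\cdot\text{const}$, so the normalized first moment is proportional to $1/\Lambda$. Both constants are finite because $K_1'$ is Schwartz.
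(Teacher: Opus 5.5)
Your proof is correct and follows the paper's argument essentially verbatim. You condition on $\tau$ and integrate against $q_\Lambda$ so that the sign and $\|K_\Lambda\|_1$ factors recombine into $2\int_0^\infty K_\Lambda C_{P_v}\,dt$, invoke Lemma~\ref{lem:even-kernel}, and read off the other three claims from $Y_v=-D_v/4$ and the normalizations defining $\kappa_0,\kappa_1$. Your scaling check $K_\Lambda(t)=(2\Lambda)^3K_\ast(2\Lambda t)$ is the same justification the paper gives for those constants, with one small slip: the unnormalized first moment scales as $B$, and dividing by $\|K_\Lambda\|_1\propto B^2$ gives $B^{-1}$, not the $B\cdot B^{-1}$ you wrote. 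Your conclusion that the mean time is proportional to $1/\Lambda$ is still right.
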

\begin{proof}
Conditioning on $\tau$ and using the density $q_\Lambda$,
\begin{align}
\E[D_v]=\int_0^\infty \|K_\Lambda\|_1\sign(K_\Lambda(t))C_{P_v}(t)
\frac{2|K_\Lambda(t)|}{\|K_\Lambda\|_1}dt=2\int_0^\infty K_\Lambda(t)C_{P_v}(t)dt
=C_{P_v}''(0),
\end{align}
where the last equality is from Lemma~\ref{lem:even-kernel}. 
Therefore, we have
\begin{align}
\E[Y_v]=-\frac14C_{P_v}''(0)=y(v),
\qquad
|Y_v|\le \frac14\|K_\Lambda\|_1=\kappa_0\Lambda^2,
\end{align}
and the mean-time identity is exactly the definition of $\kappa_1$.
\end{proof}

\subsection{Proof of Proposition~\ref{prop:displacements-main}}
As discussed in Lemma~\ref{lem:complementarity}, we need a pair of complementary bases to fully recover all the squared coefficient information. 
Since Pauli labels are described by the symplectic form, a natural selection is the $Z$ product basis and the $X$ product basis. 
However, a direct search over the full Pauli label space $V=\mathbb F_2^{2n}$ with $|V|=4^n$ would be too expensive. 
We therefore design a displacement sieve method on the Boolean hypercube $\mathbb F_2^n$: instead of learning a full label $u=(x\mid z)$ at once, the same-Pauli measurement first finds which $x$- and which $z$-coordinates appear in Hamiltonian terms.

In the $Z$ product basis, a basis vector is labeled by $a\in\mathbb F_2^n$. 
For a Pauli label $u=(x\mid z)$, the $x$ part records which $Z$-basis labels are flipped by $P_u$. 
Indeed, the $i$-th $Z$-basis label is flipped exactly when $P_u$ anticommutes with the corresponding $Z_i$ observable, i.e., when
\begin{align}
\omega(u,(0\mid e_i))=x_i.
\end{align}
Thus
\begin{align}
P_{(x\mid z)}|a\rangle_Z\propto |a+x\rangle_Z.
\end{align}
So if we prepare $|a\rangle_Z$, evolve, measure again in the $Z$ basis, and obtain $|b\rangle_Z$, then the observed displacement $d\coloneqq a+b$ points to Pauli terms with $x=d$.  

We will first prove that when the time is sampled from the same kernel density $q_\Lambda$ defined above, this observed displacement probability has a lower bound.

\begin{lemma}[Displacement bound]\label{lem:transition-lower}
Let $\{|a\rangle_Z\}_{a\in\F^n}$ be any orthonormal basis indexed by $\F^n$, and fix a nonzero displacement $d\in\F^n$. Define
\begin{align}
f_d(t)\coloneqq 2^{-n}\sum_{a\in\F^n}\bigl|\langle a+d|e^{-iHt}|a\rangle_Z\bigr|^2.
\end{align}
Then $f_d$ is even, nonnegative, and spectrally supported in $[-2\Lambda,2\Lambda]$. 
Moreover, we have
\begin{align}
\E_{\tau\sim q_\Lambda}[f_d(\tau)]\ge\frac{1}{2\kappa_0\Lambda^2}2^{-n}\sum_{a\in\F^n}|\langle a+d|H|a\rangle_Z|^2.
\end{align}
\end{lemma}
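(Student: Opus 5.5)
The plan is to write $f_d$ as a trigonometric polynomial, read off its second derivative at $t=0$, and then compare the $q_\Lambda$-average of $f_d$ with the kernel identity of Lemma~\ref{lem:even-kernel}. The key observation is that $f_d\ge 0$, so replacing $K_\Lambda$ by $|K_\Lambda|$ can only increase the integral.

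\textbf{Spectral structure.} Set $g_a(t)\coloneqq\langle a+d|e^{-iHt}|a\rangle_Z=\sum_j e^{-iE_jt}\langle a+d|e_j\rangle\langle e_j|a\rangle$. Then
\begin{align}
|g_a(t)|^2=\sum_{j,k}e^{-i(E_j-E_k)t}c^{(a)}_{jk}
\end{align}
with frequencies $E_j-E_k\in[-2\Lambda,2\Lambda]$. Nonnegativity of $f_d$ is immediate.

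For evenness, use $g_a(-t)=\overline{\langle a|e^{-iHt}|a+d\rangle}$. Reindexing $b=a+d$, which is a bijection of $\F^n$, gives
\begin{align}
f_d(-t)=2^{-n}\sum_b|\langle b+d|e^{-iHt}|b\rangle|^2=f_d(t).
\end{align}
Since $f_d$ is real, even and a finite sum of exponentials with frequencies in $[-2\Lambda,2\Lambda]$, symmetrizing the coefficients writes it as
\begin{align}
f_d(t)=\int_{-2\Lambda}^{2\Lambda}\cos(\omega t)\,d\nu(\omega)
\end{align}
for a finite symmetric signed (atomic) measure $\nu$. This is exactly the hypothesis of Lemma~\ref{lem:even-kernel}.

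\textbf{Second derivative at zero.} Because $d\neq0$, we have $g_a(0)=\langle a+d|a\rangle=0$. Hence $g_a(t)=-it\langle a+d|H|a\rangle+O(t^2)$ and $|g_a(t)|^2=t^2|\langle a+d|H|a\rangle|^2+O(t^3)$. Summing over $a$ gives
\begin{align}
f_d''(0)=2\cdot2^{-n}\sum_a|\langle a+d|H|a\rangle_Z|^2\ge0.
\end{align}

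\textbf{Kernel comparison.} By the definition $q_\Lambda=2|K_\Lambda|/\|K_\Lambda\|_1$ on $[0,\infty)$, and using $f_d\ge0$ so that $|K_\Lambda(t)|f_d(t)\ge K_\Lambda(t)f_d(t)$ pointwise,
\begin{align}
\E_{\tau\sim q_\Lambda}[f_d(\tau)]=\frac{2}{\|K_\Lambda\|_1}\int_0^\infty|K_\Lambda(t)|f_d(t)\,dt\ge\frac{2}{\|K_\Lambda\|_1}\int_0^\infty K_\Lambda(t)f_d(t)\,dt.
\end{align}
By Lemma~\ref{lem:even-kernel}, the right-hand side equals $f_d''(0)/\|K_\Lambda\|_1$. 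Substituting $\|K_\Lambda\|_1=4\kappa_0\Lambda^2$ and the expression for $f_d''(0)$ yields the claimed bound with constant $1/(2\kappa_0\Lambda^2)$.

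\textbf{Main obstacle.} The only delicate point is checking that $f_d$ really fits the hypothesis of Lemma~\ref{lem:even-kernel}, namely the evenness via the reindexing and the representation as a symmetric cosine measure. The sign convention matters here: the kernel is defined with a minus sign, so that $\int K_\Lambda f=f''(0)$ rather than $-f''(0)$. Once this is in place, the inequality is a one-line pointwise comparison.
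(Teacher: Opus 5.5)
Your proposal is correct and follows essentially the same route as the paper. Both arguments use the eigenbasis expansion for the band limit, the reindexing $b=a+d$ for evenness, and $g_a(0)=0$ to get $f_d''(0)=2\cdot 2^{-n}\sum_a|\langle a+d|H|a\rangle_Z|^2$. Both then use the comparison $\int_0^\infty|K_\Lambda|f_d\ge\int_0^\infty K_\Lambda f_d=\tfrac12 f_d''(0)$ via Lemma~\ref{lem:even-kernel} with $\|K_\Lambda\|_1=4\kappa_0\Lambda^2$. Your explicit check that $f_d$ admits a symmetric cosine representation spells out a step the paper leaves implicit.
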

\begin{proof}
Let $H=\sum_j E_j|\phi_j\rangle\langle\phi_j|$ with $|E_j|\le\Lambda$. For each $a$,
\begin{align}
\langle a+d|e^{-iHt}|a\rangle=\sum_j \langle a+d|\phi_j\rangle_Z\langle\phi_j|a\rangle_Z e^{-iE_j t}.
\end{align}
The squared modulus has frequencies $E_j-E_k\in[-2\Lambda,2\Lambda]$. 
Taking the average value of $a$ preserves this spectral support and makes it nonnegative.

We next show that $f_d$ is even. 
In fact, the transition probability from $a$ to $a+d$ at time $-t$ equals the transition probability from $a+d$ to $a$ at time $t$ since $U(-t)=U(t)^\dagger$:
\begin{align}
\begin{split}
f_d(-t)&=2^{-n}\sum_a|\langle a+d|e^{iHt}|a\rangle_Z|^2\\
&=2^{-n}\sum_a|\langle a|e^{-iHt}|a+d\rangle_Z|^2
=2^{-n}\sum_b|\langle b+d|e^{-iHt}|b\rangle_Z|^2=f_d(t).
\end{split}
\end{align}
Because $d\neq0$ and $\langle a+d|a\rangle=0$, denote $\alpha_a(t)\coloneqq \langle a+d|e^{-iHt}|a\rangle$, then $\alpha_a(0)=0$ and $\alpha_a'(0)=-i\langle a+d|H|a\rangle$. 
Therefore, we have
\begin{align}
\frac{d^2}{dt^2}|\alpha_a(t)|^2\Big|_{t=0}=2|\langle a+d|H|a\rangle_Z|^2,\qquad
f_d''(0)=2^{1-n}\sum_a|\langle a+d|H|a\rangle_Z|^2.
\end{align}
Applying Lemma~\ref{lem:even-kernel} to $f_d$ gives
\begin{align}
2\int_0^\infty K_\Lambda(t)f_d(t)dt=f_d''(0).
\end{align}
Since $f_d(t)\ge0$, we have
\begin{align}
\int_0^\infty |K_\Lambda(t)|f_d(t)dt\ge\left|\int_0^\infty K_\Lambda(t)f_d(t)dt\right|=\frac12f_d''(0).
\end{align}
Using $q_\Lambda(t)=2|K_\Lambda(t)|/\|K_\Lambda\|_1$ and $\|K_\Lambda\|_1=4\kappa_0\Lambda^2$ leads to the claim.
\end{proof}

For $u=(x\mid z)$, write $x(u)=x$ and $z(u)=z$. We define the projection intensities
\begin{align}
W_x(d)\coloneqq \sum_{z\in\F^n}\lambda_{(d\mid z)}^2,
\qquad
W_z(e)\coloneqq \sum_{x\in\F^n}\lambda_{(x\mid e)}^2.
\end{align}
A $Z$-basis displacement shot samples a uniformly random computational basis state $|A\rangle_Z$, evolves for $\tau\sim q_\Lambda$, measures every qubit in the $Z$ basis, obtains $B\in\F^n$, and records $D=A+B$.
We now prove the following two propositions, which together prove Proposition~\ref{prop:displacements-main}. 

\begin{proposition}[$Z$-basis displacements]\label{prop:z-traj}
For every nonzero $d\in\F^n$, we have
\begin{align}
\Pr[D=d]\ge \frac{W_x(d)}{2\kappa_0\Lambda^2}.
\end{align}
\end{proposition}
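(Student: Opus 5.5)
The plan is to reduce the statement to Lemma~\ref{lem:transition-lower} and then evaluate the off-diagonal matrix elements of $H$ in the $Z$ basis explicitly.

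\textbf{Step 1: rewrite the displacement probability.} Since $A$ is uniform and $\tau\sim q_\Lambda$ is independent, conditioning on $\tau=t$ gives
\[
\Pr[D=d\mid\tau=t]=2^{-n}\sum_a|\langle a+d|e^{-iHt}|a\rangle_Z|^2=f_d(t).
\]
Hence $\Pr[D=d]=\E_{\tau\sim q_\Lambda}[f_d(\tau)]$. Lemma~\ref{lem:transition-lower} then yields
\[
\Pr[D=d]\ge \frac{1}{2\kappa_0\Lambda^2}\,2^{-n}\sum_a|\langle a+d|H|a\rangle_Z|^2 .
\]

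\textbf{Step 2: compute the matrix elements.} It remains to show
\[
2^{-n}\sum_a|\langle a+d|H|a\rangle|^2=W_x(d).
\]
From $P_{(x\mid z)}|a\rangle=i^{x\cdot z}(-1)^{z\cdot a}|a+x\rangle$, only terms with $x=d$ connect $|a\rangle$ to $|a+d\rangle$. Therefore
\[
\langle a+d|H|a\rangle=\sum_{z}\lambda_{(d\mid z)}\,i^{d\cdot z}(-1)^{z\cdot a}.
\]
This is a Walsh--Fourier transform in $a$ of the coefficients $c_z\coloneqq\lambda_{(d\mid z)}i^{d\cdot z}$. By Parseval over $\F^n$,
\[
2^{-n}\sum_a\Bigl|\sum_z c_z(-1)^{z\cdot a}\Bigr|^2=\sum_z|c_z|^2=\sum_z\lambda_{(d\mid z)}^2=W_x(d),
\]
where we use that the characters $(-1)^{z\cdot a}$ are orthogonal and $|i^{d\cdot z}|=1$.

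An equivalent route that avoids the Parseval step is to write
\[
2^{-n}\sum_a|\langle a+d|H|a\rangle|^2=2^{-n}\|\Pi_d H\|_{\HS}^2,
\]
where $\Pi_d$ projects onto the $x=d$ Pauli components. The Pauli strings are HS-orthogonal, which gives the same result.

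\textbf{Main obstacle.} Step 2 carries the only substantive content. The points to get right there are the phase bookkeeping (the phases are unimodular, so they drop out) and the fact that distinct $z$ yield orthogonal characters in $a$. Combining Step 1 and Step 2 gives $\Pr[D=d]\ge W_x(d)/(2\kappa_0\Lambda^2)$.

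Substituting $\|K_\Lambda\|_1=4\kappa_0\Lambda^2$ shows this matches the form $2W_x(d)/\|K_\Lambda\|_1$ in Proposition~\ref{prop:displacements-main}. The $X$-basis case follows by the same argument with the roles of $x$ and $z$ swapped, using the stated action of $P_{(x\mid z)}$ on $|b\rangle_X$.
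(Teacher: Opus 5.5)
Your proposal is correct and takes essentially the same route as the paper. You condition on the sampled time to write $\Pr[D=d]=\E_{\tau\sim q_\Lambda}[f_d(\tau)]$, invoke Lemma~\ref{lem:transition-lower}, and then evaluate $2^{-n}\sum_a|\langle a+d|H|a\rangle_Z|^2=W_x(d)$ using the $a$-independent unit phases and Walsh orthogonality on $\F^n$. Your Hilbert--Schmidt-norm reformulation is also valid, but it is not needed.
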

\begin{proof}
For a fixed $t$, we have
\begin{align}
\Pr[D=d\mid T=t]=2^{-n}\sum_{a\in\F^n}|\langle a+d|e^{-iHt}|a\rangle_Z|^2.
\end{align}
Lemma~\ref{lem:transition-lower} gives
\begin{align}
\Pr[D=d]\ge\frac{1}{2\kappa_0\Lambda^2}2^{-n}\sum_a|\langle a+d|H|a\rangle_Z|^2.
\end{align}
For the canonical Pauli representative, there is a unit phase $\zeta_Z(d,z)$, independent of $a$, such that
\begin{align}
P_{(d\mid z)}|a\rangle_Z=\zeta_Z(d,z)(-1)^{z\cdot a}|a+d\rangle_Z.
\end{align}
Thus
\begin{align}
\langle a+d|H|a\rangle_Z=\sum_{z\in\F^n}\lambda_{(d\mid z)}\zeta_Z(d,z)(-1)^{z\cdot a}.
\end{align}
Averaging the squared modulus over uniform $a$ and using Walsh orthogonality on $\F^n$ gives
\begin{align}
2^{-n}\sum_a|\langle a+d|H|a\rangle_Z|^2 =\sum_{z\in\F^n}\lambda_{(d\mid z)}^2=W_x(d).
\end{align}
Substitution proves the proposition.
\end{proof}
The experiment based on the $X$ basis is the same as that based on the $Z$ basis, except that the roles of $x$ and $z$ are reversed. 
The Pauli $Z$ component flips the eigenstates of the $X$ basis, so the observed displacement is a $z$ projection.

\begin{proposition}[$X$-basis displacements]\label{prop:x-traj}
For every nonzero $e\in\F^n$, we have
\begin{align}
\Pr[E=e]\ge \frac{W_z(e)}{2\kappa_0\Lambda^2}.
\end{align}
\end{proposition}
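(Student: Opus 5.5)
The plan is to mirror the proof of Proposition~\ref{prop:z-traj}, with the $X$ product basis in place of the $Z$ basis. An $X$-basis displacement shot prepares a uniformly random $|A\rangle_X$ with $A\in\F^n$, evolves for $\tau\sim q_\Lambda$, measures every qubit in the $X$ basis to obtain $B$, and records $E=A+B$. For fixed $t$ this gives
\begin{align}
\Pr[E=e\mid T=t]=2^{-n}\sum_{b\in\F^n}\bigl|{}_X\langle b+e|e^{-iHt}|b\rangle_X\bigr|^2 .
\end{align}
Lemma~\ref{lem:transition-lower} was stated for an arbitrary orthonormal basis indexed by $\F^n$, so it applies verbatim to $\{|b\rangle_X\}$. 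It yields
\begin{align}
\Pr[E=e]\ge \frac{1}{2\kappa_0\Lambda^2}\,2^{-n}\sum_b\bigl|{}_X\langle b+e|H|b\rangle_X\bigr|^2 .
\end{align}

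The remaining step is to identify this average with $W_z(e)$. From the computation in Section~\ref{ssec:displacement} we have
\begin{align}
P_{(x\mid z)}|b\rangle_X=(-i)^{x\cdot z}(-1)^{x\cdot b}|b+z\rangle_X .
\end{align}
Only terms with $z=e$ contribute to the matrix element, so
\begin{align}
{}_X\langle b+e|H|b\rangle_X=\sum_{x\in\F^n}\lambda_{(x\mid e)}\,\zeta_X(x,e)\,(-1)^{x\cdot b},
\end{align}
where $\zeta_X(x,e)=(-i)^{x\cdot e}$ is a unit phase independent of $b$. We then expand the squared modulus and average over uniform $b$. Walsh orthogonality on $\F^n$, namely $2^{-n}\sum_b(-1)^{(x+x')\cdot b}=\mathbbm{1}[x=x']$, kills the cross terms, and since $|\zeta_X|=1$ we are left with $\sum_x\lambda_{(x\mid e)}^2=W_z(e)$. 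Substituting gives $\Pr[E=e]\ge W_z(e)/(2\kappa_0\Lambda^2)$. Because $\|K_\Lambda\|_1=4\kappa_0\Lambda^2$, this is the bound $2W_z(e)/\|K_\Lambda\|_1$ claimed in Proposition~\ref{prop:displacements-main}.

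There is no genuine obstacle. The only points needing care are the phase bookkeeping, where we must confirm that the phase factor does not depend on $b$ (only $(-1)^{x\cdot b}$ does), and the observation that $e\neq0$ is what makes Lemma~\ref{lem:transition-lower} applicable.
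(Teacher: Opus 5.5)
Your proposal is correct and follows the paper's proof essentially step for step: apply Lemma~\ref{lem:transition-lower} to the $X$ basis, write ${}_X\langle b+e|H|b\rangle_X=\sum_{x}\lambda_{(x\mid e)}\zeta_X(x,e)(-1)^{x\cdot b}$ with a $b$-independent unit phase, and use Walsh orthogonality over $b$ to obtain $W_z(e)$. Your explicit identification $\zeta_X(x,e)=(-i)^{x\cdot e}$ is a harmless extra detail that the paper leaves implicit.
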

\begin{proof}
Define the $X$-basis displacement shot exactly as above, but prepare and measure in the $X$ basis and record $E=A+B$. 
Lemma~\ref{lem:transition-lower} gives
\begin{align}
\Pr[E=e]\ge\frac{1}{2\kappa_0\Lambda^2}2^{-n}\sum_a|\langle a+e|H|a\rangle_X|^2.
\end{align}
For each fixed $(x\mid e)$ there is a unit phase $\zeta_X(x,e)$, independent of $a$, such that
\begin{align}
P_{(x\mid e)}|a\rangle_X=\zeta_X(x,e)(-1)^{x\cdot a}|a+e\rangle_X.
\end{align}
Therefore, we have
\begin{align}
\langle a+e|H|a\rangle_X=\sum_{x\in\F^n}\lambda_{(x\mid e)}\zeta_X(x,e)(-1)^{x\cdot a}.
\end{align}
Walsh orthogonality over $a$ gives
\begin{align}
2^{-n}\sum_a|\langle a+e|H|a\rangle_X|^2=\sum_{x\in\F^n}\lambda_{(x\mid e)}^2=W_z(e),
\end{align}
which proves the claim.
\end{proof}

\subsection{Proof of Lemma~\ref{lem:projection-main}}\label{app:support-thm}
According to the definitions, $W_x(d)=\sum_z\lambda_{(d\mid z)}^2$ is the total squared coefficient of all Pauli terms whose $x$-part is $d$, and  $W_z(e)$ is the total squared coefficient of all Pauli terms whose $z$-part is $e$. 
We now show that all these sums together are bounded by $\Lambda^2$, so there can be only a few coordinates $d$ or $e$ whose sum is at least $\epsilon^2$. 
\begin{proposition}[Projection counts]\label{prop:projection-counts}
One has
\begin{align}
\sum_{d\in\F^n}W_x(d)=\sum_{e\in\F^n}W_z(e)=W_{\mathrm{tot}}\le \Lambda^2.
\end{align}
Consequently, for every target detection scale $\epsilon>0$, we have
\begin{align}
|\Pi_x^{(\epsilon)}\setminus\{0\}|\le \frac{\Lambda^2}{\epsilon^2},
\qquad
|\Pi_z^{(\epsilon)}\setminus\{0\}|\le \frac{\Lambda^2}{\epsilon^2}.
\end{align}
\end{proposition}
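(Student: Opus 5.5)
The plan is to identify both sums with $W_{\mathrm{tot}}$ by regrouping the labels, then bound $W_{\mathrm{tot}}$ by $\Lambda^2$ through the Hilbert--Schmidt norm, and finally derive the cardinality bounds by a Markov-type counting argument.

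\textbf{Regrouping.} The label space $V=\F^n\times\F^n$ is partitioned by the $x$-part: each label $u=(x\mid z)$ lies in exactly one fiber $\{(d\mid z):z\in\F^n\}$, namely the one with $d=x$. Summing $W_x(d)$ over $d$ therefore sums $\lambda_u^2$ over every $u\in V$ exactly once. This gives $\sum_d W_x(d)=\sum_{u\in V}\lambda_u^2=W_{\mathrm{tot}}$. The same argument with the fibers of the $z$-part gives $\sum_e W_z(e)=W_{\mathrm{tot}}$.

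\textbf{Bounding $W_{\mathrm{tot}}$.} The Pauli strings are orthogonal under the Hilbert--Schmidt inner product: $\Tr(P_uP_{u'})=2^n\mathbbm{1}[u=u']$, because $P_uP_{u'}$ is a phase times a non-identity Pauli string, hence traceless, whenever $u\neq u'$. It follows that
\begin{align}
2^{-n}\Tr(H^2)=\sum_u\lambda_u^2=W_{\mathrm{tot}}.
\end{align}
Writing the trace in the eigenbasis of $H$ gives $2^{-n}\Tr(H^2)=2^{-n}\sum_j E_j^2\le\max_j E_j^2=\|H\|^2\le\Lambda^2$.

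If $H$ carries an identity component $\lambda_0 I$, that term contributes $\lambda_0^2$ to $W_{\mathrm{tot}}$. The bound above still holds, since it uses the full $\Tr(H^2)$. The identity term sits in the fibers $d=0$ and $e=0$, which are excluded from the counting below anyway.

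\textbf{Counting.} Every $d\in\Pi_x^{(\epsilon)}\setminus\{0\}$ has $W_x(d)\ge\epsilon^2$, and every $W_x$ is nonnegative. Summing over these $d$ and using the two steps above,
\begin{align}
\epsilon^2\,\bigl|\Pi_x^{(\epsilon)}\setminus\{0\}\bigr|\le\sum_d W_x(d)\le\Lambda^2,
\end{align}
which is the claimed bound. The same argument applies to $\Pi_z^{(\epsilon)}$.

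Every step is routine. The only point needing care is the Pauli orthogonality relation behind $2^{-n}\Tr(H^2)=\sum_u\lambda_u^2$, and it was already used in the proof of Proposition~\ref{prop:curvature}.
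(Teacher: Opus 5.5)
Your proposal is correct and follows the paper's proof essentially step for step: you regroup the labels by their $x$- or $z$-part, bound $W_{\mathrm{tot}}=2^{-n}\Tr(H^2)\le\|H\|^2\le\Lambda^2$, and then count the heavy nonzero projections. Your extra justifications are sound and only make explicit what the paper takes from its definition of $W_{\mathrm{tot}}$: Pauli orthogonality for $2^{-n}\Tr(H^2)=\sum_u\lambda_u^2$, and the handling of a possible identity component.
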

\begin{proof}
If we sum $W_x(d)$ over all $d\in\F^n$, every squared Pauli coefficient is counted exactly once:
\begin{align}
\sum_{d\in\F^n} W_x(d)=\sum_{d\in\F^n}\sum_{z\in\F^n}\lambda_{(d\mid z)}^2=\sum_{u\in V}\lambda_u^2=W_{\mathrm{tot}}.
\end{align}
Similarly, we have $\sum_{e\in\F^n} W_z(e)=W_{\mathrm{tot}}$.
Recall that $W_{\mathrm{tot}}=2^{-n}\Tr(H^2)$.
We have $W_{\mathrm{tot}}=2^{-n}\Tr(H^2)\le\|H\|^2\le\Lambda^2$.

Now let $A_x\coloneqq \Pi_x^{(\epsilon)}\setminus\{0\}$.
For every $d\in A_x$, the definition of $\Pi_x^{(\epsilon)}$ gives $W_x(d)\ge\epsilon^2$.
Hence
\begin{align}
|A_x|\epsilon^2\le\sum_{d\in A_x}W_x(d)\le\sum_{d\in\F^n}W_x(d)=W_{\mathrm{tot}}\le\Lambda^2.
\end{align}
Dividing by $\epsilon^2$ gives $|\Pi_x^{(\epsilon)}\setminus\{0\}|\le\frac{\Lambda^2}{\epsilon^2}$.
Similarly, we have $|\Pi_z^{(\epsilon)}\setminus\{0\}|\le\frac{\Lambda^2}{\epsilon^2}$.
\end{proof}

Now, we are ready to prove Lemma~\ref{lem:projection-main}, which we rewrite as the following for convenience.

\begin{lemma}[Displacement sieve]\label{lem:projection-presieve}
There is an absolute constant $C>0$ that satisfies the following conditions. For any $0<\epsilon\le\Lambda$ and $\eta\in(0,1)$, there is a protocol that outputs sets $\widehat\Pi_x,\widehat\Pi_z\subseteq\mathbb F_2^n$ with probability at least $1-\eta$ and size $O(\tfrac{\Lambda^2}{\epsilon^2}\log\tfrac{1}{\eta})$ within deterministic total evolution time at most
\begin{align}
C\frac{\Lambda}{\epsilon^2}\log\frac{4\Lambda}{\epsilon}\log\frac{4}{\eta}.
\end{align}
\end{lemma}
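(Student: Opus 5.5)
We analyze Algorithm~\ref{alg:displacement-sieve} and handle each basis $B\in\{Z,X\}$ separately, allotting failure probability $\eta/2$ to each. We write $p=\epsilon^2/(4\kappa_0\Lambda^2)$. By Proposition~\ref{prop:displacements-main} (equivalently Propositions~\ref{prop:z-traj} and~\ref{prop:x-traj}), every heavy nonzero displacement $d\in\Pi_x^{(\epsilon)}$ satisfies $\Pr[D=d]\ge W_x(d)/(2\kappa_0\Lambda^2)\ge 2p$. So each heavy label occurs with probability at least twice the threshold $p$. The argument then has three parts: correctness of a single round, amplification over $L$ rounds, and deterministic bounds on output size and evolution time.

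\textbf{Single round.} Fix a round that is not aborted and fix a heavy $d$. The count $A_d$ is a sum of $N$ i.i.d.\ Bernoulli variables with mean $\mu\ge 2Np$. The multiplicative Chernoff bound gives $\Pr[A_d<Np]\le\Pr[A_d<\mu/2]\le e^{-\mu/8}\le e^{-Np/4}$. With $N=\lceil C_0p^{-1}\log(4r)\rceil$ and $C_0$ large, this is at most $1/(16r)$. Proposition~\ref{prop:projection-counts} says there are at most $r$ heavy labels, so a union bound shows that all of them are captured in the round with probability at least $15/16$, apart from aborts.

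\textbf{Aborts and amplification.} The abort event $\Theta>\tau_{\max}$ has $\E\Theta=N\kappa_1/(2\Lambda)$ by Proposition~\ref{prop:even-sample}. Markov's inequality then bounds the abort probability by $1/8$. One subtlety is that conditioning on non-abort changes the distribution of the times. To avoid this, I would argue unconditionally: the event "all heavy labels captured" holds outside the union of the bad events, which has probability at most $1/16+1/8<1/4$. Each round therefore succeeds with probability at least $3/4$, independently across rounds. Because the outputs are unions over rounds, a single successful round suffices. Hence the failure probability is at most $4^{-L}\le\eta/2$ once $C_1\ge 1/\log 4$ (in fact $L=\lceil C_1\log(4/\eta)\rceil$ with modest $C_1$ works).

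\textbf{Deterministic bounds.} Each round adds at most $N/(Np)=1/p=4\kappa_0\Lambda^2/\epsilon^2$ labels, since the counts sum to at most $N$. The output size is therefore $O(L\Lambda^2/\epsilon^2)=O(\tfrac{\Lambda^2}{\epsilon^2}\log\tfrac1\eta)$. Evolution time is only spent in rounds that pass the abort check. I would fix the convention that the times are sampled classically before any evolution, so an aborted round costs nothing. Each executed round then costs at most $\tau_{\max}=4\kappa_1N/\Lambda=O(\tfrac{\Lambda}{\epsilon^2}\log\tfrac{4\Lambda^2}{\epsilon^2})$. Multiplying by $2L$ gives the claimed $C\frac{\Lambda}{\epsilon^2}\log\frac{4\Lambda}{\epsilon}\log\frac4\eta$.

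The main obstacle is the interplay between the abort rule and the Chernoff argument, namely keeping the coupling unconditional as described above. A secondary point is checking that the constants $C_0$ and $C_1$ can be chosen absolutely.
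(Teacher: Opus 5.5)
Your proposal is correct and follows the paper's proof essentially step for step. The shared steps are: per-shot appearance probability at least $2p$ for every heavy label; a Chernoff-plus-union bound over at most $r$ heavy labels, giving failure at most $1/16$; Markov's inequality for the abort cap, giving at most $1/8$; amplification over $L$ independent rounds with a union bound over the two bases; the bound of $1/p$ recorded labels per round; and the deterministic cost $2L\tau_{\max}$. Your unconditional treatment of the abort event, in which you define the capture event on the planned shots and union-bound it with the abort event, is exactly what the paper does implicitly when it writes $1-\frac{1}{16}-\frac{1}{8}$. Your looser per-round success bound of $3/4$, in place of $13/16$, changes only the constant $C_1$.
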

\begin{proof}
We will complete the proof based on repeated sampling. We first show that every coordinate
$d$ with $W_x(d)\ge\epsilon^2$ appears often enough in the $Z$-basis displacement experiment. The same argument will then apply to the $X$ basis.

By Proposition~\ref{prop:z-traj}, every nonzero
$d\in\Pi_x^{(\epsilon)}$ appears in one $Z$-basis displacement shot with
probability at least $\frac{\epsilon^2}{2\kappa_0\Lambda^2}$. 
Similarly, by Proposition~\ref{prop:x-traj}, every nonzero $e\in\Pi_z^{(\epsilon)}$ appears in one $X$-basis displacement shot with
the same lower bound. 
Set $p\coloneqq \frac{\epsilon^2}{4\kappa_0\Lambda^2}$, then every nonzero coordinate in
$\Pi_x^{(\epsilon)}$ or $\Pi_z^{(\epsilon)}$ has appearance probability at least $2p$ in the corresponding basis.

Next, Proposition~\ref{prop:projection-counts} gives
\begin{align}
|\Pi_x^{(\epsilon)}\setminus\{0\}|\le\frac{\Lambda^2}{\epsilon^2},
\qquad
|\Pi_z^{(\epsilon)}\setminus\{0\}|\le\frac{\Lambda^2}{\epsilon^2}.
\end{align}
We let
\begin{align}
r\coloneqq \left\lceil\frac{\Lambda^2}{\epsilon^2}\right\rceil,\qquad
N\coloneqq \left\lceil C_0p^{-1}\log(4r)\right\rceil,\qquad
L\coloneqq \left\lceil C_1\log\frac{4}{\eta}\right\rceil,
\end{align}
where $C_0,C_1>0$ are sufficiently large absolute constants.

We now describe one round of experiments in the $Z$ basis. 
First sample $N$ times $t_1,\ldots,t_N\sim q_\Lambda$ independently, and let $\Theta\coloneqq \sum_{j=1}^N t_j$ represent the total evolution time planned for this round. 
To prevent the random sampling time from becoming too long, we need to set an upper limit on the total evolution time. 
A reasonable choice is $8$ times the average total time $\tau_{\max}\coloneqq 4\kappa_1\frac{N}{\Lambda}$.
If $\Theta>\tau_{\max}$, the round is aborted before any quantum evolution and outputs
the empty set. 
If $\Theta\le\tau_{\max}$, the round performs the $N$ $Z$-basis displacement shots with time $t_1,\ldots,t_N$. 
For every nonzero displacement $d$, let $A_d$ be the number of times $d$ appears. 
This round outputs all nonzero $d$ such that $A_d\ge Np$.
The $X$-basis round is defined in the same way. The final sets $\widehat\Pi_x$ and $\widehat\Pi_z$ are the unions of the outputs of $L$ independent rounds in the corresponding basis.

Now fix one $Z$-basis round and one coordinate $d\in\Pi_x^{(\epsilon)}\setminus\{0\}$.
Since $d$ appears in each shot with probability at least $2p$, the count $A_d$ has a mean of at least $2Np$. 
The Chernoff bound gives
\begin{align}
\Pr[A_d<Np]\le \exp(-Np/4).
\end{align}
As there are at most $r$ nonzero coordinates $d$ with $W_x(d)\ge\epsilon^2$, we have
\begin{align}
\Pr\left[\text{some } d\in\Pi_x^{(\epsilon)}\setminus\{0\}\text{ has } A_d<Np\right]\le r\exp(-Np/4).
\end{align}
By choosing $C_0$ large enough, this probability is at most $1/16$. 
Hence, with probability at least $15/16$, the round records every element of $\Pi_x^{(\epsilon)}\setminus\{0\}$, provided it is not aborted.

It remains to bound the probability of abortion. 
By Proposition~\ref{prop:even-sample}, a sampled time length from $q_\Lambda$ has the mean value $\frac{\kappa_1}{2\Lambda}$.
Hence, we have
\begin{align}
\E[\Theta]=N\frac{\kappa_1}{2\Lambda}.
\end{align}
By Markov's inequality,
\begin{align}
\Pr[\Theta>\tau]\le\frac{N\kappa_1/(2\Lambda)}{4\kappa_1N/\Lambda}=\frac18.
\end{align}
Then one $Z$-basis round succeeds, meaning that it is not aborted, and it
records every element of $\Pi_x^{(\epsilon)}\setminus\{0\}$, with probability at least $1-\frac{1}{16}-\frac18=\frac{13}{16}$.
The same proof applies to one $X$-basis round and the set $\Pi_z^{(\epsilon)}\setminus\{0\}$.

Since the rounds are independent, the probability that all $L$ $Z$-basis rounds fail is at most $\left(\frac{3}{16}\right)^L$.
By choosing $C_1$ large enough, we have $\left(\frac{3}{16}\right)^L\le\frac{\eta}{2}$.
The same bound holds for the $X$ basis. 
A union bound over the two bases gives
\begin{align}
\Pi_x^{(\epsilon)}\setminus\{0\}\subseteq\widehat\Pi_x,
\qquad
\Pi_z^{(\epsilon)}\setminus\{0\}\subseteq\widehat\Pi_z
\end{align}
with probability at least $1-\eta$.

We next bound the output size. 
In any non-aborted round, every recorded coordinate appears at least $Np$ times, while the total number of observations is exactly $N$. 
Hence, one round can record at most $\frac{N}{Np}=\frac{1}{p}$ coordinates. 
There are $L$ rounds in the $Z$ basis and $L$ rounds in the $X$ basis, so
\begin{align}
|\widehat\Pi_x|+|\widehat\Pi_z|\le\frac{2L}{p}\le C\frac{\Lambda^2}{\epsilon^2}\log\frac{4}{\eta}.
\end{align}

Finally, we bound the deterministic total evolution time. 
Each round uses at most $\tau$ evolution time. Therefore, the total time over both bases are at most
\begin{align}
2L\tau =8\kappa_1L\frac{N}{\Lambda}.
\end{align}
Using $p^{-1}=4\kappa_0\frac{\Lambda^2}{\epsilon^2}$ and $\log(4r)=O(\log\tfrac{4\Lambda}{\epsilon})$, we get
\begin{align}
2L\tau\le C\frac{\Lambda}{\epsilon^2}\log\frac{4\Lambda}{\epsilon}\log\frac{4}{\eta}
\end{align}
for some large enough constant $C$ as claimed.
\end{proof}
Assuming that we succeed in Lemma~\ref{lem:projection-presieve}, the Cartesian candidate set of the output
\begin{align}
\widehat\cU_{\mathrm{proj}}\coloneqq (\widehat\Pi_x\cup\{0\})\times(\widehat\Pi_z\cup\{0\})\setminus\{0\}
\end{align}
contains every $u=(x\mid z)$ with $|\lambda_u|\ge\epsilon$: if $x\neq0$, then $W_x(x)\ge\lambda_u^2\ge\epsilon^2$, and if $x=0$ then $x$ belongs to
$\widehat\Pi_x\cup\{0\}$. 
The same applies to $z$.

\section{Cross-Pauli Measurement}\label{app:linear}
\subsection{Frame and parity blocks}
The same-Pauli measurement finds where the coefficients with magnitude at least $\epsilon$ are. 
In this section, we use the cross-Pauli measurement to estimate their signs and magnitudes. 
The idea is to choose, on every qubit, one local direction $c_i$ whose coefficient letter we want to detect, and use the two remaining directions as prepare and measure bases. 
Finally, we use the input and output to construct parity variables, making them equal to the first-order derivative of the cross-correlation function in Eq.~\eqref{eq:cross-correlation}.

\begin{definition}[Target Pauli basis]
A target Pauli basis is a Pauli string $c=(c_1,\ldots,c_n)\in\{X,Y,Z\}^n$.
For each qubit $i$, define $(a_i,b_i)$ by the cyclic convention
\begin{align}
(c_i,a_i,b_i)\in\{(X,Y,Z),(Y,Z,X),(Z,X,Y)\}.
\end{align}
Thus $a_i b_i=i c_i$, $b_i c_i=i a_i$, and $c_i a_i=i b_i$.
For subsets $A,B\subseteq[n]$, define
\begin{align}
\begin{split}
Q_A&\coloneqq \bigotimes_{i=1}^n Q_i^{(A)},
\qquad
Q_i^{(A)}\coloneqq 
\begin{cases}
a_i,& i\in A,\\
I,& i\notin A,
\end{cases}\\
P_B&\coloneqq \bigotimes_{i=1}^n P_i^{(B)},
\qquad
P_i^{(B)}\coloneqq 
\begin{cases}
b_i,& i\in B,\\
I,& i\notin B.
\end{cases}
\end{split}
\end{align}
The label $u_c(A,B)\in V$ is defined sitewise by
\begin{align}
(u_c(A,B))_i=
\begin{cases}
I,& i\notin A\cup B,\\
a_i,& i\in A\setminus B,\\
b_i,& i\in B\setminus A,\\
c_i,& i\in A\cap B.
\end{cases}
\end{align}
The visible set of the target Pauli basis is
\begin{align}
\mathcal V(c)\coloneqq \{u\in V\setminus\{0\}: |\{i:u_i=c_i\}|\text{ is odd}\}.
\end{align}
For every $u\in\mathcal V(c)$, there is a unique pair $(A_c(u),B_c(u))$ such that $u=u_c(A_c(u),B_c(u))$. 
In fact, the set $A$ records where the input basis is used, and the set $B$ records where the output basis is used. 
When $A$ and $B$ overlap, the local product of the input and output directions leaves exactly the missing direction $c_i$.
\end{definition}

\begin{definition}[Parity Blocks]
Fix a target Pauli basis $c$. 
For $A,B\subseteq[n]$, set
\begin{align}
F^{(c)}_{A,B}(t)\coloneqq 2^{-n}\Tr(P_B(t)Q_A).
\end{align}
A target Pauli basis shot is performed as follows. 
First sample
\begin{align}
\tau\sim p_\Lambda,
\qquad
S\sim\mathrm{Unif}\{\pm1\},
\end{align}
independently, where $p_\Lambda$ is defined in Eq.~\eqref{eq:p_lambda} and again below in Lemma~\ref{lem:odd-kernel}. 
If $S=+1$, prepare independent random eigenstates of the $a_i$ bases, with eigenvalue signs $s_i\in\{\pm1\}$, evolve for time $\tau$, and measure every qubit in the $b_i$ basis, obtaining signs $m_i$. 
If $S=-1$, swap the two bases: prepare $b_i$ eigenstates with signs $r_i$, evolve for time $\tau$, and measure in the $a_i$ basis, obtaining signs $n_i$.

For $A,B\subseteq[n]$, define
\begin{align}
s_A\coloneqq \prod_{i\in A}s_i,
\quad
m_B\coloneqq \prod_{i\in B}m_i,
\quad
r_B\coloneqq \prod_{i\in B}r_i,
\quad
n_A\coloneqq \prod_{i\in A}n_i,
\end{align}
and the parity variable
\begin{align}
Z^{(c)}_{A,B}\coloneqq 
\begin{cases}
s_A m_B,& S=+1,\\
-r_B n_A,& S=-1.
\end{cases}
\end{align}
\end{definition}

We now prove the following two propositions.

\begin{proposition}[Extracts the odd part of cross-correlation functions]\label{prop:block-trace}
For every target Pauli basis $c$, subsets $A,B\subseteq[n]$, and time $t\ge 0$,
\begin{align}
\E[Z^{(c)}_{A,B}\mid \tau=t]=\frac12\Bigl(F^{(c)}_{A,B}(t)-F^{(c)}_{A,B}(-t)\Bigr).
\end{align}
\end{proposition}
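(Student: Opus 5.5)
Conditioned on $\tau=t$, each orientation contributes with probability $1/2$, so
\[
\E[Z^{(c)}_{A,B}\mid\tau=t]=\tfrac12\,\E[s_Am_B\mid S=+1,t]-\tfrac12\,\E[r_Bn_A\mid S=-1,t].
\]
The plan is to identify each conditional expectation with a trace via Proposition~\ref{prop:trace-main}. Then we use Heisenberg-picture symmetry and time reversal to rewrite the $S=-1$ term as $F^{(c)}_{A,B}(-t)$.

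\textbf{The $S=+1$ branch.} The input is a uniformly random product eigenstate of $a_1\otimes\cdots\otimes a_n$, with sign string $(s_i)$. Such a state is a uniformly random element of a common eigenbasis of all the $Q_A$, and $s_A$ is exactly its eigenvalue under $Q_A$. The product $m_B$ of the outcomes in the $b_i$ bases over $i\in B$ is the outcome of measuring $P_B$; it is obtainable because all $b_i$ measurements commute site by site. Proposition~\ref{prop:trace-main} applies with $Q=Q_A$ and $P=P_B$. One small point needs checking there: averaging over a uniform eigenbasis of the full string $a_1\otimes\cdots\otimes a_n$, rather than over an eigenbasis of $Q_A$ alone, still gives $\sum_\psi s_A(\psi)\rho_\psi=Q_A$. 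This holds because the sites outside $A$ average to $I$. The result is $\E[s_Am_B\mid S=+1,t]=2^{-n}\Tr(P_B(t)Q_A)=F^{(c)}_{A,B}(t)$.

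\textbf{The $S=-1$ branch.} The same argument with the roles swapped gives $\E[r_Bn_A\mid S=-1,t]=2^{-n}\Tr(Q_A(t)P_B)$. Expanding,
\[
2^{-n}\Tr(e^{iHt}Q_Ae^{-iHt}P_B)=2^{-n}\Tr(Q_A\,e^{-iHt}P_Be^{iHt})=2^{-n}\Tr(P_B(-t)Q_A)=F^{(c)}_{A,B}(-t),
\]
where the first equality is cyclicity of the trace. Combining the two branches with the minus sign in the definition of $Z^{(c)}_{A,B}$ gives $\tfrac12\bigl(F^{(c)}_{A,B}(t)-F^{(c)}_{A,B}(-t)\bigr)$.

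\textbf{Main obstacle.} The only delicate step is justifying that the trace rule applies with parities over subsets. This covers the identity-padding argument above and the fact that $m_B$ has the statistics of a $P_B$ measurement on the evolved state. Both follow because product-basis measurement outcomes jointly realize the spectral projectors of every sub-product $\bigotimes_{i\in B}b_i$. No band-limiting or kernel facts are needed at this stage, since the statement is for fixed $t$.
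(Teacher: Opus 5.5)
Your proposal is correct and matches the paper's proof: you apply the trace rule (Proposition~\ref{prop:trace-main}) separately to each orientation branch, use cyclicity of the trace to rewrite the $S=-1$ contribution as $F^{(c)}_{A,B}(-t)$, and then average over the two equally likely orientations. Your extra remarks — that the full product $a_i$-basis is already an eigenbasis of $Q_A$, and that the parities $m_B$ and $n_A$ realize measurements of $P_B$ and $Q_A$ — only spell out details the paper takes for granted.
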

\begin{proof}
If $S=+1$, the random local eigenvalue signs define a uniformly random eigenvector of $Q_A$ with eigenvalue $s_A$. 
By Proposition~\ref{prop:trace-main},
\begin{align}
\E[s_A m_B\mid T=t,S=+1]=2^{-n}\Tr(P_B(t)Q_A)=F^{(c)}_{A,B}(t).
\end{align}
If $S=-1$, the same trace rule, now with preparation observable $P_B$ and measurement observable $Q_A$, gives
\begin{align}
\E[-r_B n_A\mid T=t,S=-1]=-2^{-n}\Tr(Q_A(t)P_B).
\end{align}
By the cyclicity of trace,
\begin{align}
2^{-n}\Tr(Q_A(t)P_B)=2^{-n}\Tr(P_B(-t)Q_A)=F^{(c)}_{A,B}(-t).
\end{align}
Averaging the two equally likely signs $S=\pm1$ proves the claimed result.
\end{proof}

\begin{proposition}[A spectral representation of the cross-correlation function]\label{prop:odd-spectrum}
Assume $\|H\|\le \Lambda$. 
For every Hermitian Pauli pair $(P,Q)$, there exists a finite complex measure $\nu_{P,Q}$ supported on $[-2\Lambda,2\Lambda]$ such that
\begin{align}
F_{P,Q}(t)=\int_{-2\Lambda}^{2\Lambda}e^{i\omega t}d\nu_{P,Q}(\omega),
\qquad
|\nu_{P,Q}|([-2\Lambda,2\Lambda])\le 1.
\end{align}
Consequently,
\begin{align}
F_{P,Q}'(0)=\int_{-2\Lambda}^{2\Lambda}i\omega d\nu_{P,Q}(\omega).
\end{align}
\end{proposition}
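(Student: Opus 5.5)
We follow the same route as Proposition~\ref{prop:even-spectrum}, now with a cross matrix element in place of a diagonal one. We work on the operator Hilbert space with normalized inner product $\langle A,B\rangle_{\HS}=2^{-n}\Tr(A^\dagger B)$. There the superoperator $\cA_H=\ad_H$ is self-adjoint, and its spectrum lies in $[-2\Lambda,2\Lambda]$, since $\cA_H(|e_j\rangle\langle e_k|)=(E_j-E_k)|e_j\rangle\langle e_k|$ with $|E_j|\le\Lambda$.

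\textbf{Spectral representation.} Since $e^{iHt}Pe^{-iHt}=e^{it\cA_H}P$ and $Q^\dagger=Q$, we can write
\begin{align}
F_{P,Q}(t)=2^{-n}\Tr(Q^\dagger e^{it\cA_H}P)=\langle Q,e^{it\cA_H}P\rangle_{\HS}.
\end{align}
Let $E(d\omega)$ be the projection-valued spectral measure of $\cA_H$ and set $\nu_{P,Q}(\Omega)\coloneqq\langle Q,E(\Omega)P\rangle_{\HS}$. This is a finite complex measure supported on $[-2\Lambda,2\Lambda]$, and the functional calculus gives $F_{P,Q}(t)=\int e^{i\omega t}d\nu_{P,Q}(\omega)$. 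In finite dimensions this is explicit: $\nu_{P,Q}$ is a finite sum of point masses at the frequencies $E_j-E_k$, with weights $2^{-n}\overline{\langle e_j|Q|e_k\rangle}\langle e_j|P|e_k\rangle$.

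\textbf{Total variation bound.} This is the only step requiring care. For any partition $\{\Omega_\ell\}$ of the support, apply Cauchy--Schwarz in the HS space to each piece:
\begin{align}
|\langle Q,E(\Omega_\ell)P\rangle_{\HS}|=|\langle E(\Omega_\ell)Q,E(\Omega_\ell)P\rangle_{\HS}|\le\|E(\Omega_\ell)Q\|\,\|E(\Omega_\ell)P\|.
\end{align}
Summing over $\ell$ and applying Cauchy--Schwarz again gives a bound of
\begin{align}
\Bigl(\sum_\ell\|E(\Omega_\ell)Q\|^2\Bigr)^{1/2}\Bigl(\sum_\ell\|E(\Omega_\ell)P\|^2\Bigr)^{1/2}=\|Q\|_{\HS}\,\|P\|_{\HS}=1.
\end{align}
The middle equality holds because the projections $E(\Omega_\ell)$ are orthogonal and sum to the identity. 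The normalized HS norm of a Pauli string is $1$. Concretely, summing the weights $|2^{-n}\overline{Q_{jk}}P_{jk}|$ over all pairs $(j,k)$ and using $\sum_{j,k}|P_{jk}|^2=\Tr(P^2)=2^n$ yields $|\nu_{P,Q}|\le 1$.

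\textbf{Derivative formula.} Because $\nu_{P,Q}$ has compact support and finite total variation, we may differentiate under the integral sign by dominated convergence, using $|i\omega e^{i\omega t}|\le 2\Lambda$. Setting $t=0$ gives
\begin{align}
F_{P,Q}'(0)=\int i\omega\, d\nu_{P,Q}(\omega).
\end{align}
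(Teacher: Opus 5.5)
Your proposal is correct and follows essentially the same route as the paper. The paper also defines $\nu_{P,Q}(\Omega)=\langle Q,E(\Omega)P\rangle_{\HS}$ via the spectral measure of the self-adjoint $\ad_H$, bounds the total variation by the same partition-wise Cauchy--Schwarz argument with $\|P\|_{\HS}=\|Q\|_{\HS}=1$, and differentiates under the integral using compact support. Your explicit finite-dimensional point-mass description is a harmless addition.
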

\begin{proof}
As in Proposition~\ref{prop:even-spectrum}, the superoperator $\cA_H\coloneqq \ad_H$ is self-adjoint on the normalized Hilbert-Schmidt space, i.e., given $H|e_j\rangle=E_j|e_j\rangle$ with $|E_j|\le\Lambda$, then
\begin{align}
\cA_H(|e_j\rangle\langle e_k|)=(E_j-E_k)|e_j\rangle\langle e_k|.
\end{align}
Thus the spectrum of $\cA_H$ is contained in $[-2\Lambda,2\Lambda]$. 
Let $E(d\omega)$ be its spectral measure.
Define
\[
\nu_{P,Q}(\Omega)\coloneqq \langle Q,E(\Omega)P\rangle_{\HS}.
\]
Then, we have
\begin{align}
F_{P,Q}(t)
=\langle Q,e^{it\cA_H}P\rangle_{\HS}
=\int_{-2\Lambda}^{2\Lambda}e^{i\omega t}d\nu_{P,Q}(\omega).
\end{align}
To bound the total variation, let $\{\Omega_j\}$ be any finite measurable partition of $[-2\Lambda,2\Lambda]$. 
Since the ranges of the spectral projections $E(\Omega_j)$ are orthogonal, we have
\begin{align}
\begin{split}
\sum_j |\nu_{P,Q}(\Omega_j)|
&=\sum_j \bigl|\langle Q,E(\Omega_j)P\rangle_{\HS}\bigr|\\
&=\sum_j \bigl|\langle E(\Omega_j)Q,E(\Omega_j)P\rangle_{\HS}\bigr|\\
&\le
\left(\sum_j \|E(\Omega_j)Q\|_{\HS}^2\right)^{1/2}
\left(\sum_j \|E(\Omega_j)P\|_{\HS}^2\right)^{1/2}\\
&=\|Q\|_{\HS}\|P\|_{\HS}=1.
\end{split}
\end{align}
Taking the supremum over all finite partitions gives
\begin{align}
|\nu_{P,Q}|([-2\Lambda,2\Lambda])\le1.
\end{align}
Differentiation under the integral is allowed because the support is bounded. 
After differentiation, we obtain the claimed result.
\end{proof}
Fix the same smooth cutoff $\varphi$ as in the even-kernel construction and set
\begin{align}
\Psi_\Lambda(\omega)\coloneqq i\omega\varphi(\omega/(2\Lambda)).
\end{align}
Now we can define the odd kernel
\begin{align}
L_\Lambda(t)\coloneqq \frac{1}{2\pi}\int_{\mathbb R}\Psi_\Lambda(\omega)e^{-i\omega t}d\omega.
\end{align}
The odd kernel above satisfies the following lemma.

\begin{lemma}[Odd kernel for the first-order derivative of the cross-correlation function]\label{lem:odd-kernel}
If $f(t)=\int_{-2\Lambda}^{2\Lambda}e^{i\omega t}d\nu(\omega)$ for a finite complex measure $\nu$, then
\begin{align}
f'(0)=\int_{\mathbb R}L_\Lambda(t)f(t)dt=\int_0^\infty L_\Lambda(t)\bigl(f(t)-f(-t)\bigr)dt.
\end{align}
\end{lemma}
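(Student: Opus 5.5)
The plan mirrors the proof of Lemma~\ref{lem:even-kernel}. We first record the properties of $L_\Lambda$. Since $\Psi_\Lambda(\omega)=i\omega\varphi(\omega/(2\Lambda))$ is $C_c^\infty$ with support in $[-4\Lambda,4\Lambda]$, its inverse Fourier transform $L_\Lambda$ is a Schwartz function. In particular $L_\Lambda\in L^1(\mathbb R)$ and $tL_\Lambda(t)\in L^1$. Because $\varphi$ is even, $\Psi_\Lambda$ is odd, and the substitution $\omega\mapsto-\omega$ gives $L_\Lambda(-t)=-L_\Lambda(t)$. We also want the forward transform in the form
\begin{align}
\int_{\mathbb R}L_\Lambda(t)e^{i\omega t}dt=\Psi_\Lambda(\omega).
\end{align}
This follows from the Fourier inversion theorem applied to the Schwartz function $\Psi_\Lambda$, because the sign convention $e^{-i\omega t}$ in the definition of $L_\Lambda$ is chosen so that pairing against $e^{i\omega t}$ inverts it.

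Next, given $f(t)=\int e^{i\omega t}d\nu(\omega)$ with $\nu$ finite, $f$ is bounded by $|\nu|(\mathbb R)$ and $L_\Lambda\in L^1$. The product $L_\Lambda(t)e^{i\omega t}$ is therefore absolutely integrable on $\mathbb R\times[-2\Lambda,2\Lambda]$ with respect to $dt\,d|\nu|$, and Fubini gives
\begin{align}
\int_{\mathbb R}L_\Lambda(t)f(t)dt=\int_{-2\Lambda}^{2\Lambda}\Psi_\Lambda(\omega)d\nu(\omega)=\int_{-2\Lambda}^{2\Lambda}i\omega\, d\nu(\omega),
\end{align}
where we used $\varphi(\omega/(2\Lambda))=1$ for $|\omega|\le2\Lambda$. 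Since $\nu$ has compact support, differentiation under the integral sign gives $f'(0)=\int i\omega\,d\nu(\omega)$, which proves the first equality.

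For the second equality, split $\int_{\mathbb R}=\int_0^\infty+\int_{-\infty}^0$ and substitute $t\mapsto-t$ in the negative half. Oddness of $L_\Lambda$ turns this piece into $-\int_0^\infty L_\Lambda(t)f(-t)dt$, and adding the two halves yields $\int_0^\infty L_\Lambda(t)(f(t)-f(-t))dt$.

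The only delicate point is keeping the Fourier sign conventions consistent, so that the transform of $L_\Lambda$ against $e^{i\omega t}$ is $+\Psi_\Lambda(\omega)$ rather than $\Psi_\Lambda(-\omega)=-\Psi_\Lambda(\omega)$. I would check this directly by writing out the inversion formula once with the stated normalization $\frac{1}{2\pi}\int\cdot\, e^{-i\omega t}d\omega$. Everything else is routine.
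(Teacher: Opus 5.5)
Your proof is correct and takes the same route as the paper: Fubini (using $L_\Lambda\in L^1$ and boundedness of $f$), then identifying the inner transform $\int_{\mathbb R}L_\Lambda(t)e^{i\omega t}dt=\Psi_\Lambda(\omega)=i\omega$ on $[-2\Lambda,2\Lambda]$, then using oddness of $L_\Lambda$ to get the half-line form. Your explicit check of the Fourier sign convention, and your derivation of oddness from evenness of $\varphi$, are useful additions; the paper uses evenness of $\varphi$ silently, since its general definition of $\varphi$ does not impose it.
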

\begin{proof}
Because $L_\Lambda\in L^1(\mathbb R)$ and $|f(t)|\le |\nu|([-2\Lambda,2\Lambda])$, we apply Fubini's theorem:
\begin{align}
\int_{\mathbb R}L_\Lambda(t)f(t)dt=\int_{-2\Lambda}^{2\Lambda}\left(\int_{\mathbb R}L_\Lambda(t)e^{i\omega t}dt\right)d\nu(\omega).
\end{align}
The inner Fourier transform equals $\Psi_\Lambda(\omega)$, and on
$[-2\Lambda,2\Lambda]$ we have $\varphi(\omega/(2\Lambda))=1$, hence $\Psi_\Lambda(\omega)=i\omega$.
Therefore, we have
\begin{align}
\int_{\mathbb R}L_\Lambda(t)f(t)dt=\int_{-2\Lambda}^{2\Lambda}i\omega d\nu(\omega)=f'(0).
\end{align}
Since $L_\Lambda$ is odd, we have
\begin{align}
\int_{\mathbb R}L_\Lambda(t)f(t)dt=\int_0^\infty L_\Lambda(t)\bigl(f(t)-f(-t)\bigr)dt.
\end{align}
\end{proof}

As explained earlier, the kernel we will use is just a rescaling of the fixed kernel. 
So there exist absolute constants $\ell_0,\ell_1>0$, depending only on the cutoff, namely
\begin{align}
\|L_\Lambda\|_1=2\ell_0\Lambda,\qquad
\frac{2}{\|L_\Lambda\|_1}\int_0^\infty t|L_\Lambda(t)|dt=\frac{\ell_1}{2\Lambda}.
\end{align}
Now we define the odd sampling density as
\begin{align}
p_\Lambda(t)\coloneqq \frac{2|L_\Lambda(t)|}{\|L_\Lambda\|_1},
\qquad t\ge0.
\end{align}
Similar to the same-Pauli measurement, we define the weighted parity sample value as
\begin{align}
G^{(c)}_{A,B}\coloneqq \|L_\Lambda\|_1\sign(L_\Lambda(T))Z^{(c)}_{A,B}.
\end{align}

\subsection{Proof of Proposition~\ref{prop:parity-main}}
We first prove that the sampled parity block has an expectation equal to the first-order derivative at the initial time. 
By Proposition~\ref{prop:block-trace} and
Lemma~\ref{lem:odd-kernel}, we have
\begin{align}
\begin{split}
\E\bigl[G^{(c)}_{A,B}\bigr]&=\int_0^\infty\|L_\Lambda\|_1\sign(L_\Lambda(t))\cdot\frac12\bigl(F^{(c)}_{A,B}(t)-F^{(c)}_{A,B}(-t)\bigr)\cdot\frac{2|L_\Lambda(t)|}{\|L_\Lambda\|_1}dt\\
&=\int_0^\infty L_\Lambda(t)\bigl(F^{(c)}_{A,B}(t)-F^{(c)}_{A,B}(-t)\bigr)dt\\
&=\bigl(F^{(c)}_{A,B}\bigr)'(0).
\end{split}
\end{align}

It remains to compute this derivative. Since $P_B'(0)=i[H,P_B]$, we have
\begin{align}
\bigl(F^{(c)}_{A,B}\bigr)'(0)=\frac{i}{2^n}\sum_{u\in V}\lambda_u\Tr([P_u,P_B]Q_A).
\end{align}
If $P_u$ commutes with $P_B$, the term contributes nothing. If it anticommutes with
$P_B$, then $[P_u,P_B]=2P_uP_B$.
Therefore, the only possible nonzero trace is controlled by $\Tr(P_uP_BQ_A)$.
Looking qubit-by-qubit, as we have
\begin{align}
(P_u)_i=
\begin{cases}
I,& i\notin A\cup B,\\
a_i,& i\in A\setminus B,\\
b_i,& i\in B\setminus A,\\
c_i,& i\in A\cap B.
\end{cases}
\end{align}
Hence, at most one label can contribute, namely $u=u_c(A,B)$. 
For this label, $P_u$ and $P_B$ anticommute exactly on the index $i\in A\cap B$, for $(P_u)_i=c_i$, and $(P_B)_i=b_i$. 
Therefore, $[P_u,P_B]\neq0$ means $k=|A\cap B|\text{ is odd}$.
If $k$ is even, the derivative is zero. 
If $k$ is odd, then
\begin{align}
\bigl(F^{(c)}_{A,B}\bigr)'(0)=\frac{2i}{2^n}\lambda_{u_c(A,B)}\Tr(P_{u_c(A,B)}P_BQ_A).
\end{align}
It remains only to evaluate the phase. 
For $u=u_c(A,B)$, we have
\begin{align}
(P_u)_i(P_B)_i(Q_A)_i=
\begin{cases}
I,& i\notin A\cup B,\\
a_i^2=I,& i\in A\setminus B,\\
b_i^2=I,& i\in B\setminus A,\\
c_i b_i a_i=-iI,& i\in A\cap B.
\end{cases}
\end{align}
This indicates that 
\begin{align}
P_{u_c(A,B)}P_BQ_A=(-i)^kI,
\qquad
\Tr(P_{u_c(A,B)}P_BQ_A)=2^n(-i)^k.
\end{align}
We can thus compute the first-order derivative at zero time as
\begin{align}
\bigl(F^{(c)}_{A,B}\bigr)'(0)=2i(-i)^k\lambda_{u_c(A,B)}=2\sigma_c(A,B)\lambda_{u_c(A,B)}.
\end{align}

Combining the definition of $G^{(c)}_{A,B}$ and the identity $\|L_\Lambda\|_1=2\ell_0\Lambda$, we finally get $|G^{(c)}_{A,B}|\le 2\ell_0\Lambda$ as $|Z^{(c)}_{A,B}|\le1$.
For a visible label $u$, the block with $(A,B)=(A_c(u),B_c(u))$ has expectation $\E\bigl[G^{(c)}_{A,B}\bigr]=2\sigma_c(A,B)\lambda_u$. Since $\sigma_c(A,B)\in\{\pm1\}$.
Multiplying this block by $\sigma_c(A,B)/2$ removes the known sign and the factor $2$, leaving an unbiased sample of $\lambda_u$. 
Specifically, for $u\in\mathcal V(c)$, we define
\begin{align}
Y_u^{(c)}\coloneqq \frac12\sigma_c(A_c(u),B_c(u))
G^{(c)}_{A_c(u),B_c(u)}.
\end{align}
Then Proposition~\ref{prop:parity-main} gives
\begin{align}\label{eq:visible-frame-identity}
\E[Y_u^{(c)}]=\lambda_u,
\qquad
|Y_u^{(c)}|\le \ell_0\Lambda.
\end{align}
Therefore, proposition~\ref{prop:parity-main} is proved.

\subsection{Visibility correction}
For a fixed target Pauli basis $c$, the block we constructed above only gives a useful sample for labels in $\mathcal V(c)$. 
Concretely, this means the label has an odd number of indices where its Pauli letter is the target Pauli basis letter $c_i$. 
In this case, the corrected block $Y_u^{(c)}$ has mean $\lambda_u$. 
But if the number is even, Proposition~\ref{prop:parity-main} has shown that the corresponding first-order signal is zero, so this target Pauli basis is blind to that label. 
To address this issue, we choose the target Pauli basis at random. 
We need to make sure that every nonzero label becomes visible with a known probability $q(u)$. 
To achieve this, we rescale the sample by $q(u)^{-1}$ on the shots where it is visible. 
This will compensate for the shots where the label is invisible, so the averaged sample still outputs $\lambda_u$.

\begin{proposition}[Visibility probability of a random target Pauli basis]\label{prop:coverage}
Let $c\in\{X,Y,Z\}^n$ be a uniformly random chosen Pauli string with no identity site. 
For every fixed nonzero label $u\in V\setminus\{0\}$, we have
\begin{align}
\Pr[u\in\mathcal V(c)]=\frac{1-(1/3)^{w(u)}}{2}\ge \frac13.
\end{align}
where $w(u)$ is the Pauli weight of $u$, i.e.\ the number of qubits on which $u_i\neq I$. 
Then for independent random target Pauli bases $c^{(1)},\dots,c^{(R)}$, we have
\begin{align}
\Pr\bigl[u\notin\mathcal V(c^{(r)})\text{ for all }r=1,\dots,R\bigr]\le\left(\frac23\right)^R.
\end{align}
\end{proposition}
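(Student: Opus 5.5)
The plan is to reduce the visibility event to a parity of independent site-wise coin flips and evaluate it with the standard bias identity. Fix a nonzero label $u$ and let $\mathrm{supp}(u)=\{i:u_i\neq I\}$, of size $w=w(u)\ge1$. Since $c$ is uniform on $\{X,Y,Z\}^n$, the letters $c_1,\dots,c_n$ are independent and uniform on $\{X,Y,Z\}$. Sites with $u_i=I$ never satisfy $u_i=c_i$, because $c_i\neq I$. So the count $N(u,c)\coloneqq|\{i:u_i=c_i\}|$ equals $\sum_{i\in\mathrm{supp}(u)}\xi_i$, where $\xi_i\coloneqq\mathbbm{1}[c_i=u_i]$ are independent Bernoulli$(1/3)$ variables: for each $i\in\mathrm{supp}(u)$, $u_i$ is a fixed non-identity letter and $c_i$ matches it with probability $1/3$.

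Next I would compute the parity with the generating identity $\E[(-1)^{N}]=\prod_{i\in\mathrm{supp}(u)}\E[(-1)^{\xi_i}]=\prod_i(1-2\cdot\tfrac13)=(1/3)^{w}$. Since $\mathbbm{1}[N\text{ odd}]=\tfrac12(1-(-1)^N)$, this gives
\begin{align}
\Pr[u\in\mathcal V(c)]=\frac{1-(1/3)^{w(u)}}{2}.
\end{align}
The lower bound follows because the expression is increasing in $w$ and $w\ge1$ (as $u\neq0$), so its minimum is $(1-1/3)/2=1/3$, attained at $w=1$.

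For the second claim, the events $\{u\notin\mathcal V(c^{(r)})\}$ are independent across $r$ because the target bases are drawn independently. Each has probability $1-q(u)\le 2/3$, so their intersection has probability at most $(2/3)^R$.

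There is no real obstacle here. The only points needing care are that identity sites of $u$ contribute nothing to the count (this uses that $c$ has no identity letters), and that $u\neq0$ forces $w\ge1$; without the latter, the formula would give probability $0$.
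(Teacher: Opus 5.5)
Your proof is correct and follows essentially the same route as the paper: reduce the visibility event to the parity of a $\mathrm{Bin}(w(u),1/3)$ count over the support of $u$, apply $\Pr[\text{odd}]=\tfrac{1-(1-2p)^{w}}{2}$ with $w\ge1$, and then use independence across the $R$ draws. The only differences are cosmetic: you derive the parity formula via $\E[(-1)^N]$ instead of quoting it. You also correctly attribute the $(2/3)^R$ bound to independence of the $c^{(r)}$ across rounds, whereas the paper loosely phrases this as ``independence of every single site.''
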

\begin{proof}
On every qubit where $u_i\neq I$, the random target Pauli basis letter satisfies $c_i=u_i$
with probability $1/3$. 
Therefore, the number of matches is
\begin{align}
X\sim\mathrm{Bin}(w(u),1/3).
\end{align}
The label is visible exactly when $X$ is odd. 
For a binomial random variable $X$ with success probability $p$, we have
\begin{align}
\Pr[X\text{ is odd}]= \frac{1-(1-2p)^{w(u)}}{2}.
\end{align}
Taking $p=1/3$ gives
\begin{align}
\Pr[u\in\mathcal V(c)]=\frac{1-(1/3)^{w(u)}}{2}.
\end{align}
Since $u\neq0$, we have $w(u)\ge1$, and the probability is at least $1/3$.
The claimed result can be derived from the independence of every single site.
\end{proof}
The method for correcting for missed target Pauli bases is simple. 
For a variable $u$, each observation has a probability of $q(u)$ of seeing it.
Therefore, we only look at the data that actually captures $u$, and then scale the calculated value up by $q(u)^{-1}$.
We define this explicitly below.

\begin{definition}[Visibility correction]\label{def:ht-correction}
Fix a nonzero label $u\in V\setminus\{0\}$ and let $C\sim\mathrm{Unif}\{X,Y,Z\}^n$ be an independent random target Pauli basis. 
The random variable
\begin{align}
\widetilde Y_u\coloneqq q(u)^{-1}\mathbbm{1}\{u\in\mathcal V(C)\}Y_u^{(C)}
\end{align}
is called the visibility-corrected sample for the label $u$. 
\end{definition}
The following proposition will demonstrate that even if we only receive a sample on the visible target Pauli bases, the average value obtained after visibility correction will equal the true coefficient.
\begin{proposition}[Visibility correction preserves the mean value of the coefficient]\label{prop:ht-correction}
For every nonzero label $u\in V\setminus\{0\}$, we have $\E[\widetilde Y_u]=\lambda_u$ and $|\widetilde Y_u|\le 3\ell_0\Lambda$.
\end{proposition}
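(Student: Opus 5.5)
The argument conditions on the random target basis $C$ and then applies the tower rule. The relevant randomness is: $C\sim\mathrm{Unif}\{X,Y,Z\}^n$, followed, given $C$, by the time $\tau\sim p_\Lambda$, the orientation, and the measurement outcomes. These are the ingredients of $Y^{(C)}_u$.

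\textbf{Conditional mean.} Fix a realization $C=c$ and a nonzero label $u$.

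If $u\notin\mathcal V(c)$, then $\widetilde Y_u=0$ identically.

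If $u\in\mathcal V(c)$, the pair $(A_c(u),B_c(u))$ is well defined and $|A_c(u)\cap B_c(u)|$ is odd. Equation~\eqref{eq:visible-frame-identity}, which follows from Proposition~\ref{prop:parity-main}, then gives $\E[Y^{(c)}_u\mid C=c]=\lambda_u$. The shot randomness is independent of $C$ apart from how the bases are chosen, so this conditional expectation is exactly the one computed in Proposition~\ref{prop:parity-main} with $c$ fixed.

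Together the two cases give
\begin{align}
\E[\widetilde Y_u\mid C=c]=q(u)^{-1}\mathbbm{1}\{u\in\mathcal V(c)\}\lambda_u .
\end{align}

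\textbf{Averaging over $C$.} Taking the expectation over $C$ and using Proposition~\ref{prop:coverage}, we have $\Pr[u\in\mathcal V(C)]=q(u)=\frac{1-(1/3)^{w(u)}}{2}$. Hence
\begin{align}
\E[\widetilde Y_u]=q(u)^{-1}\Pr[u\in\mathcal V(C)]\lambda_u=\lambda_u .
\end{align}
This step needs $q(u)>0$. That holds because $u\neq 0$ forces $w(u)\ge 1$, and Proposition~\ref{prop:coverage} then gives $q(u)\ge 1/3$.

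\textbf{Magnitude bound.} This holds pointwise.

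If $u\notin\mathcal V(C)$, then $\widetilde Y_u=0$.

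Otherwise, $|Y^{(C)}_u|\le \ell_0\Lambda$ by \eqref{eq:visible-frame-identity}. This in turn comes from $|G|\le\|L_\Lambda\|_1=2\ell_0\Lambda$, $|Z|\le 1$, and $|\sigma|=1$. Combined with $q(u)^{-1}\le 3$, it gives $|\widetilde Y_u|\le 3\ell_0\Lambda$.

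\textbf{Main subtlety.} The only delicate point is the conditioning. We must make sure that, given $C=c$, the law of the shot is exactly the law assumed in Proposition~\ref{prop:parity-main}, namely $\tau\sim p_\Lambda$, a uniform orientation, and uniformly random local eigenstates, all independent of $c$. Once that is checked, the tower rule applies directly.

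The argument is otherwise routine. No integrability issue arises, because all quantities are bounded.
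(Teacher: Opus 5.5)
Your proof is correct and follows the paper's argument. You condition on the target basis $C$, apply the visible-frame identity $\E[Y_u^{(c)}]=\lambda_u$ for $u\in\mathcal V(c)$, and average using $\Pr[u\in\mathcal V(C)]=q(u)$; the magnitude bound comes from $|Y_u^{(c)}|\le\ell_0\Lambda$ and $q(u)\ge 1/3$, exactly as in the paper, which leaves implicit your check that the shot randomness given $C=c$ has the law assumed in Proposition~\ref{prop:parity-main}.
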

\begin{proof}
By Proposition~\ref{prop:coverage}, $q(u)>0$. 
Using the identity\eqref{eq:visible-frame-identity}, we have
\begin{align}
\E[\widetilde Y_u]=
q(u)^{-1}
\sum_{c:u\in\mathcal V(c)}\Pr[C=c]\E[Y_u^{(c)}]=q(u)^{-1}\sum_{c:u\in\mathcal V(c)}\Pr[C=c]\lambda_u=\lambda_u.
\end{align}
Moreover, as $q(u)\ge1/3$, we have
\begin{align}
|\widetilde Y_u|\le q(u)^{-1}\ell_0\Lambda\le 3\ell_0\Lambda.
\end{align}
\end{proof}

\subsection{Proof of Lemma~\ref{lem:linear-oracle}}
Previously, we only discussed the case of one unbiased sampling. 
However, similar to the situation in the same-Pauli measurement, each sample may take an unreasonably long time. 
To prevent the program from running indefinitely, we break the computation task into several blocks and also set an abortion rule. 
We then take the median of the results from all blocks, which yields a higher accuracy and success rate.

\begin{definition}[One estimation block]
Fix a candidate set $\cU\subseteq V\setminus\{0\}$. 
If $\cU=\varnothing$, the estimator returns no values and uses zero evolution time. 
Therefore, we assume that $|\cU|\ge1$. 
For target accuracy $\epsilon>0$, we set
\begin{align}
N\coloneqq \left\lceil C_0\frac{\Lambda^2}{\epsilon^2}\log\bigl(16|\cU|\bigr)\right\rceil,
\end{align}
where $C_0>0$ is a sufficiently large absolute constant.

One estimation block consists of $N$ independent planned target Pauli basis shots. 
For shot $s$, we sample
\begin{align}
C_s\sim\mathrm{Unif}\{X,Y,Z\}^n,
\qquad
T_s\sim p_\Lambda,
\end{align}
independently, let $\Theta\coloneqq \sum_{s=1}^{N}T_s$ be the planned total time, and set the deterministic limit $8$ times the expected value of the planned total time as $\tau\coloneqq 4\ell_1\frac{N}{\Lambda}$.
If $\Theta>\tau$, we abort the block before any quantum evolution and return nothing on $\cU$. 
Otherwise, we execute the planned shots. 
For every $u\in\cU$, we define
\begin{align}
W_s(u)\coloneqq q(u)^{-1}\mathbbm{1}\{u\in\mathcal V(C_s)\}Y_u^{(C_s)},\qquad s=1,\ldots,N.
\end{align}
The block then outputs
\begin{align}
\wt\lambda(u)\coloneqq \frac1N\sum_{s=1}^{N}W_s(u),\qquad u\in\cU .
\end{align}
\end{definition}

We are now ready to prove Lemma~\ref{lem:linear-oracle}, which is restated below for convenience.

\begin{lemma}[Coefficient estimation on a candidate set]\label{lem:odd-oracle}
There exists an absolute constant $C>0$ such that the following holds. 
Let $\cU\subseteq V\setminus\{0\}$ be a candidate set, let $0<\epsilon\le\Lambda$, and let $\eta\in(0,1)$. 
If $\cU=\varnothing$, the protocol returns no value and uses zero time. 
If $|\cU|\ge1$, run $R\coloneqq \left\lceil 16\log\frac{2}{\eta}\right\rceil$ independent estimation blocks. 
If a block is aborted, assign it the value $0$ for every $u\in\cU$. 
For each $u\in\cU$, let $\widehat\lambda_{\cU}(u)$ be the median of the $R$ reported values. 
Then, with probability at least $1-\eta$,
\begin{align}
\max_{u\in\cU}|\widehat\lambda_{\cU}(u)-\lambda_u|\le \epsilon.
\end{align}
Moreover, the total evolution time is at most
\begin{align}
C\frac{\Lambda}{\epsilon^2}\log\bigl(4|\cU|\bigr)\log\frac{2}{\eta}
\end{align}
when $|\cU|\ge1$, and is zero when $\cU=\varnothing$.
\end{lemma}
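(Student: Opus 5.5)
The proof mirrors the displacement-sieve argument of Lemma~\ref{lem:projection-presieve}: a per-block concentration bound for each fixed label, a union bound over $\cU$, Markov's inequality for the abortion event, and a median-of-means boost across the $R$ independent blocks. We first fix a single non-aborted block and a single $u\in\cU$. The planned samples $(C_s,T_s,\rho_s)$ are i.i.d., so by Proposition~\ref{prop:ht-correction} the variables $W_s(u)$ are i.i.d. with mean $\lambda_u$ and $|W_s(u)|\le 3\ell_0\Lambda$.

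The subtlety is that conditioning on non-abortion ($\Theta\le\tau$) destroys both independence and unbiasedness. To avoid this, I would analyze the unconditioned empirical mean $\bar W(u)=\frac1N\sum_s W_s(u)$. It is defined on the joint sample space even when the block aborts: we simply imagine running the shots. Hoeffding's inequality gives
\begin{align}
\Pr\bigl[|\bar W(u)-\lambda_u|>\epsilon\bigr]\le 2\exp\Bigl(-\frac{N\epsilon^2}{18\ell_0^2\Lambda^2}\Bigr).
\end{align}
Choosing $C_0\ge 18\ell_0^2\cdot$const, this is at most $1/(16|\cU|)$, since $N\ge C_0\Lambda^2\epsilon^{-2}\log(16|\cU|)$. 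A union bound over $\cU$ then gives $\Pr[\exists u:|\bar W(u)-\lambda_u|>\epsilon]\le 1/16$.

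For the abortion event, $\E[T_s]=\ell_1/(2\Lambda)$ by the definition of $\ell_1$, so Markov's inequality gives $\Pr[\Theta>\tau]\le 1/8$. A block is \emph{good} if it is not aborted and $\bar W(u)$ is $\epsilon$-accurate simultaneously for all $u\in\cU$; on this event the reported $\wt\lambda(u)=\bar W(u)$. By a union bound, a block is good with probability at least $1-1/16-1/8=13/16>1/2$, independently across blocks.

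If more than half of the $R$ blocks are good, then for every $u$ more than half of the reported values lie in $[\lambda_u-\epsilon,\lambda_u+\epsilon]$, so the median does too. This holds regardless of the value $0$ assigned to aborted blocks. By Hoeffding's inequality on the number of good blocks, whose mean is at least $13R/16$, the probability that at most $R/2$ blocks are good is at most $\exp(-2R(5/16)^2)\le \eta/2\le\eta$, since $R\ge 16\log(2/\eta)$. The key point is that this argument is simultaneous over $\cU$, because goodness is defined jointly over all labels.

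The time bound is deterministic. Each block either aborts before any evolution or uses at most $\tau=4\ell_1N/\Lambda$. The total is therefore at most
\begin{align}
R\tau\le C\frac{\Lambda}{\epsilon^2}\log(16|\cU|)\log\frac{2}{\eta},
\end{align}
and since $\log(16|\cU|)\le 2\log(4|\cU|)$, this is at most $C\frac{\Lambda}{\epsilon^2}\log(4|\cU|)\log\frac{2}{\eta}$ after adjusting constants. The empty case is trivial.

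The main obstacle is the interaction between abortion and unbiasedness. It is handled by coupling: we define the estimator on the full i.i.d. sample and declare a block good only on the intersection of the concentration event and the non-abortion event, rather than conditioning on non-abortion.
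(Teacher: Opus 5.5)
Your proposal is correct and follows essentially the same route as the paper: Hoeffding on the raw average over the planned sample space (single-shot bound $3\ell_0\Lambda$), a union bound over $\cU$ giving failure probability $1/16$, Markov's inequality giving abortion probability at most $1/8$, and a median over $R$ independent blocks that each succeed with probability at least $13/16$, with the same exponent $25R/128$ in the final concentration step. The only differences are cosmetic: you use Hoeffding instead of Chernoff to count good blocks, and you state the conversion $\log(16|\cU|)\le 2\log(4|\cU|)$ explicitly.
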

\begin{proof}
The theorem clearly holds for the case of an empty set, so we consider the case where $|\cU|\ge1$.
Fix $u\in\cU$, the single-shot variable $W_s(u)$ is exactly the visibility-corrected sample from Definition~\ref{def:ht-correction}.
Proposition~\ref{prop:ht-correction} gives
\begin{align}
\E[W_s(u)]=\lambda_u,
\qquad
|W_s(u)|\le 3\ell_0\Lambda.
\end{align}
On the planned sample space, we define the raw average $\bar\lambda(u)\coloneqq \frac1N\sum_{s=1}^{N}W_s(u)$.
If the block is not aborted, this is exactly the value reported by the block.
After applying Hoeffding's inequality, we get
\begin{align}
\Pr\Bigl[|\bar\lambda(u)-\lambda_u|>\epsilon\Bigr]\le2\exp\left(-\frac{N\epsilon^2}{18\ell_0^2\Lambda^2}\right).
\end{align}
By choosing $C_0$ sufficiently large, the right-hand side is at most $1/(16|\cU|)$. 
A union bound over all $u\in\cU$ yields
\begin{align}
\Pr\left[\max_{u\in\cU}|\bar\lambda(u)-\lambda_u|>\epsilon\right]\le\frac{1}{16}.
\end{align}

The planned time of one block satisfies $\E[\Theta]=N\frac{\ell_1}{2\Lambda}$.
Since $\tau=4\ell_1\frac{N}{\Lambda}=8N\frac{\ell_1}{2\Lambda}$, Markov's inequality gives
\begin{align}
\Pr[\Theta>\tau]\le \frac18.
\end{align}
Therefore, a single block succeeds simultaneously for all labels in $\cU$ with probability at least $1-\frac1{16}-\frac18=\frac{13}{16}$.

Let $I_j$ be the indicator that block $j$ fails. 
The variables $I_j$ are independent and satisfy $\E [I_j]\le \frac{3}{16}$.
If fewer than half of the blocks fail, then more than half of the block outputs are $\epsilon$-accurate for every $u\in\cU$. 
Therefore, by taking the median for each coordinate, the result guarantees the precision of $\epsilon$ for $u \in U$ under all circumstances. 
Hence, we have
\begin{align}
\Pr\left[\max_{u\in\cU}|\widehat\lambda_{\cU}(u)-\lambda_u|>\epsilon
\right]\le\Pr\left[\sum_{j=1}^{R}I_j\ge\frac{R}{2}\right].
\end{align}
By the choice of $R$, the Chernoff bound gives
\begin{align}
\Pr\left[\sum_{j=1}^{R}I_j\ge\frac{R}{2}\right]\le\exp\left(-\frac{25}{128}R\right)\le\eta.
\end{align}

It remains to bound the total evolution time. Each block uses at most $\tau$, so the total time is at most $R\tau$.
Using $0<\epsilon\le\Lambda$ and $|\cU|\ge1$, we have
\begin{align}
N\le C'\frac{\Lambda^2}{\epsilon^2}\log(16|\cU|)
\end{align}
for an absolute constant $C'$.
We also note that $R\le C'\log\frac{2}{\eta}$ after enlarging $C'$. 
Therefore, we have
\begin{align}
R\tau=4\ell_1R\frac{N}{\Lambda}\le C\frac{\Lambda}{\epsilon^2}\log\bigl(4|\cU|\bigr)\log\frac{2}{\eta}
\end{align}
after enlarging the absolute constant $C$. 
This proves the theorem.
\end{proof}

\section{Proof of Theorem~\ref{thm:main}}
We have now clearly demonstrated two different measurements and their functions. 
The same-Pauli displacement sieve gives a small Cartesian candidate set that contains all detectable labels. 
The cross-Pauli estimator then estimates all coefficients in that candidate set. 
Now we combine them to give the complete process of our protocol. 

We note that the displacement sieve is only a projection sieve. 
At the population level, it identifies the large $x$- and $z$-projections
\begin{align}
\Pi_x^{(\epsilon)}\coloneqq \{d\in\F^n:W_x(d)\ge\epsilon^2\},\qquad
\Pi_z^{(\epsilon)}\coloneqq \{e\in\F^n:W_z(e)\ge\epsilon^2\}.
\end{align}
These projections define the Cartesian candidate set
\begin{align}
\cU_{\mathrm{proj}}\coloneqq (\Pi_x^{(\epsilon)}\cup\{0\})\times(\Pi_z^{(\epsilon)}\cup\{0\})\setminus\{0\}.
\end{align}
The true $\epsilon$-detectable support $S_\epsilon\coloneqq\{u\in V\setminus\{0\}:|\lambda_u|\ge\epsilon\}$ is contained in this candidate set $S_\epsilon\subseteq \cU_{\mathrm{proj}}$.
This means $\cU_{\mathrm{proj}}$ can be larger than $S_\epsilon$. 
Therefore, in the second stage, we need to estimate the coefficient $\lambda_u$ in the candidate set $\widehat\cU_{\mathrm{proj}}\coloneqq (\widehat\Pi_x\cup\{0\})\times(\widehat\Pi_z\cup\{0\})\setminus\{0\}$.
We define the precision of the second stage as $\epsilon_{\mathrm{lin}}$, i.e.,
\begin{align}
\max_{u\in\widehat\cU_{\mathrm{proj}}}|\widehat\lambda_u-\lambda_u|\le\epsilon_{\mathrm{lin}}.
\end{align} 
Thus, we obtain an estimation of the true support $S_\epsilon$ as
\begin{align}
\widehat S_\epsilon\coloneqq \{u\in\widehat\cU_{\mathrm{proj}}:|\widehat\lambda_u|>\theta\}.
\end{align} 
The remaining question is what values should $\theta$ and $\epsilon_{\mathrm{lin}}$ take to satisfy the requirement. 
We only need to note that we ultimately need every coefficient with $|\lambda_u|\ge\epsilon$ to pass the final threshold. 
Therefore, a reasonable choice is to take
\begin{align}
\epsilon_{\mathrm{lin}}\coloneqq \frac{\epsilon}{6},
\qquad
\theta\coloneqq \frac{\epsilon}{2}.
\end{align}
Now, we are ready to show Theorem~\ref{thm:main}, which we restate as the following for convenience.
\begin{theorem}[Upper bound restated]\label{thm:mainA}
Let
\begin{align}
H=\sum_{u\in S}\lambda_uP_u,
\qquad
S\subseteq V\setminus\{0\},
\qquad
\|H\|\le \Lambda,
\end{align}
and fix $0<\epsilon\le\Lambda$ and $\delta\in(0,1)$. Then Algorithm~\ref{algo:upper} outputs $(\wh S_\epsilon,\wh H_\epsilon)$ such that, with probability at least $1-\delta$,
\begin{align}
S_\epsilon\subseteq\wh S_\epsilon\subseteq\{u\in V\setminus\{0\}:|\lambda_u|>\epsilon/3\},\qquad
\max_{u\in S_\epsilon}|\wh\lambda_u-\lambda_u|\le \epsilon/6.
\end{align}
Moreover, for an absolute constant $C>0$,
\begin{align}
T_{\mathrm{tot}}^{\det}\le C\frac{\Lambda}{\epsilon^2}\log\frac{\Lambda}{\epsilon}\log\frac{1}{\delta}\log\log\frac{1}{\delta}.
\end{align}
\end{theorem}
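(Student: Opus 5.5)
The plan is to combine Lemma~\ref{lem:projection-presieve} and Lemma~\ref{lem:odd-oracle} through a union bound, and then run a deterministic threshold argument. First run the displacement sieve with failure probability $\delta/2$. Let $\mathcal E_1$ be the event that $\Pi_x^{(\epsilon)}\setminus\{0\}\subseteq\wh\Pi_x$ and $\Pi_z^{(\epsilon)}\setminus\{0\}\subseteq\wh\Pi_z$. On $\mathcal E_1$, the remark after Lemma~\ref{lem:projection-presieve} gives $S_\epsilon\subseteq\wh\cU_{\mathrm{proj}}$. Indeed, for $u=(x\mid z)$ with $|\lambda_u|\ge\epsilon$ we have $W_x(x)\ge\lambda_u^2\ge\epsilon^2$ whenever $x\neq0$, and symmetrically for $z$.

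Next, condition on the realized sieve output and apply Lemma~\ref{lem:odd-oracle} to $\cU=\wh\cU_{\mathrm{proj}}$ with accuracy $\epsilon/6$ and failure probability $\delta/2$. The coefficient stage uses fresh randomness, so this conditional application is legitimate. Let $\mathcal E_2$ be the event $\max_{u\in\wh\cU_{\mathrm{proj}}}|\wh\lambda_u-\lambda_u|\le\epsilon/6$; it holds with conditional probability at least $1-\delta/2$. If the candidate set is empty, $\mathcal E_2$ holds trivially. A union bound gives $\Pr[\mathcal E_1\cap\mathcal E_2]\ge1-\delta$. On this event the set inclusions follow directly:
\begin{itemize}
\item For $u\in S_\epsilon$, we have $|\wh\lambda_u|\ge\epsilon-\epsilon/6=5\epsilon/6>\epsilon/2$, so $u\in\wh S_\epsilon$ and $|\wh\lambda_u-\lambda_u|\le\epsilon/6$.
\item For $u\in\wh S_\epsilon$, we have $|\lambda_u|\ge|\wh\lambda_u|-\epsilon/6>\epsilon/2-\epsilon/6=\epsilon/3$.
\end{itemize}
Labels outside $\wh\cU_{\mathrm{proj}}$ are never output, so the upper inclusion holds.

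For the time bound, both stages have deterministic (pathwise) caps, because aborted rounds consume no evolution time. The sieve costs $O(\frac{\Lambda}{\epsilon^2}\log\frac{4\Lambda}{\epsilon}\log\frac{8}{\delta})$. For the second stage, the size bound of Lemma~\ref{lem:projection-presieve} holds deterministically, as the "at most $1/p$ per round" count shows. Hence $|\wh\cU_{\mathrm{proj}}|\le(1+2L/p)^2=O(\frac{\Lambda^4}{\epsilon^4}\log^2\frac1\delta)$, and
\begin{align}
\log\bigl(4|\wh\cU_{\mathrm{proj}}|\bigr)=O\left(\log\frac{\Lambda}{\epsilon}+\log\log\frac1\delta\right).
\end{align}
Lemma~\ref{lem:odd-oracle} at accuracy $\epsilon/6$ therefore costs $O\bigl(\frac{\Lambda}{\epsilon^2}(\log\frac{\Lambda}{\epsilon}+\log\log\frac1\delta)\log\frac2\delta\bigr)$.

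The main obstacle is the bookkeeping that absorbs both terms into the stated form $\frac{\Lambda}{\epsilon^2}\log\frac{\Lambda}{\epsilon}\log\frac1\delta\log\log\frac1\delta$. Since $\epsilon\le\Lambda$, the quantity $\log\frac{\Lambda}{\epsilon}$ may be near $0$, so I will write $\log\frac{4\Lambda}{\epsilon}$ or $1+\log\frac{\Lambda}{\epsilon}$ and treat the theorem's expression as shorthand up to constants. Similarly $\log\frac1\delta$ may be small, so I will use $\log\frac{2}{\delta}$ or $\log\frac{8}{\delta}$ and, where needed, $\max\{1,\log\log\frac1\delta\}$. Then $a+b\le 2ab$ holds for $a,b\ge1$, and taking $a=\log\frac{4\Lambda}{\epsilon}$ and $b=\log\log\frac{16}{\delta}$ bounds the sum by a constant times the product. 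This yields the claimed bound with an absolute constant $C$.
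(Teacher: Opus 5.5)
Your proposal is correct and follows the paper's proof essentially step for step: the same union bound over the sieve event and the conditional coefficient-stage event, the same $\epsilon/6$ versus $\epsilon/2$ threshold argument, and the same deterministic candidate-size bound feeding $\log(4|\wh\cU_{\mathrm{proj}}|)=O(\log\frac{\Lambda}{\epsilon}+\log\log\frac1\delta)$ into Lemma~\ref{lem:odd-oracle}. Your explicit handling of the degenerate logarithms (shifted logs plus $a+b\le 2ab$) is more careful than the paper's. The paper silently uses inequalities such as $\log\frac{4\Lambda}{\epsilon}\le C\log\frac{\Lambda^2}{\epsilon^2}$, which fail at $\epsilon=\Lambda$.
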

\begin{proof}
First, we run the displacement sieve in Lemma~\ref{lem:projection-presieve} with scale $\epsilon$ and failure probability $\delta/2$. Let the outputs be $\wh\Pi_x$ and $\wh\Pi_z$, and define
\begin{align}
\wh\cU_{\mathrm{proj}}\coloneqq (\wh\Pi_x\cup\{0\})\times(\wh\Pi_z\cup\{0\})\setminus\{0\}.
\end{align}
On the projection success event, which has probability at least $1-\delta/2$, every label in $S_\epsilon$ is contained in $\wh\cU_{\mathrm{proj}}$.
Indeed, if $u=(x\mid z)$ and $|\lambda_u|\ge\epsilon$, then the corresponding projection intensity is at least $\epsilon^2$ in every nonzero coordinate. 
Hence, the sieve includes the nonzero $x$ and $z$ coordinates, while zero coordinates are also included.

We also need the size of the candidate set. The proof of Lemma~\ref{lem:projection-presieve} gives the deterministic bound
\begin{align}
|\wh\Pi_x|+|\wh\Pi_z|
\le
C_1\frac{\Lambda^2}{\epsilon^2}\log\frac{8}{\delta},
\end{align}
for an absolute constant $C_1$. 
Therefore, whenever $\wh\cU_{\mathrm{proj}}$ is nonempty, we have
\begin{align}
|\wh\cU_{\mathrm{proj}}|
\le
\bigl(|\widehat\Pi_x|+1\bigr)\bigl(|\widehat\Pi_z|+1\bigr)-1
\le
\left(C_1\frac{\Lambda^2}{\epsilon^2}\log\frac{8}{\delta}+1\right)^2.
\end{align}
Taking the logarithm keeps the product inside the logarithm as a sum. Hence
\begin{align}\label{eq:candidate-log-bound}
\log(4|\wh\cU_{\mathrm{proj}}|)
\le
C_2\left(
\log\frac{4\Lambda}{\epsilon}
+
\log\log\frac{8}{\delta}
\right).
\end{align}
Now consider any candidate set.
If $\wh\cU_{\mathrm{proj}}=\varnothing$, then there are no Pauli labels to estimate in the second stage. 
The theorem is clearly true. 
Otherwise, apply Lemma~\ref{lem:odd-oracle} on this fixed candidate set with accuracy $\epsilon/6$ and failure probability $\delta/2$. 
Together with Eq.~\eqref{eq:candidate-log-bound}, it gives
\begin{align}
\Pr\left[\max_{u\in\wh\cU_{\mathrm{proj}}}|\wh\lambda_u-\lambda_u|\le \frac{\epsilon}{6}\middle|\wh\cU_{\mathrm{proj}}\right]\ge 1-\frac{\delta}{2}.
\end{align}
Let $E_{\mathrm{proj}}$ be the projection success event and let $E_{\mathrm{lin}}$ be the event that the linear estimates are $\epsilon/6$-accurate on the candidate set. 
The above conditional statement holds for every candidate set, so
\begin{align}
\Pr(E_{\mathrm{proj}}\cap E_{\mathrm{lin}})\ge 1-\delta.
\end{align}

It remains to check the threshold. 
On $E_{\mathrm{proj}}\cap E_{\mathrm{lin}}$, we take any $u\in S_\epsilon$. 
Then $u\in\wh\cU_{\mathrm{proj}}$ and
\begin{align}
|\wh\lambda_u|\ge|\lambda_u|-\frac{\epsilon}{6}\ge\frac{5\epsilon}{6}>\frac{\epsilon}{2}=\theta.
\end{align}
Therefore, we have $u\in\wh S_\epsilon$. 
This proves $S_\epsilon\subseteq\wh S_\epsilon$.
Conversely, if $u\in\wh S_\epsilon$, then $u\in\wh\cU_{\mathrm{proj}}$ and
\begin{align}
|\lambda_u|\ge|\wh\lambda_u|-\frac{\epsilon}{6}>\frac{\epsilon}{2}-\frac{\epsilon}{6}=\frac{\epsilon}{3}.
\end{align}
Therefore, we have
\begin{align}
\wh S_\epsilon\subseteq\{u\in V\setminus\{0\}:|\lambda_u|>\epsilon/3\}.
\end{align}
Finally, we bound the deterministic evolution time. 
The projection stage uses
\begin{align}
T_{\mathrm{proj}}^{\det}\le C_3\frac{\Lambda}{\epsilon^2}\log\frac{4\Lambda}{\epsilon}\log\frac{8}{\delta}.
\end{align}
This stage uses accuracy $\epsilon/6$, so by Lemma~\ref{lem:odd-oracle} and Eq.~\eqref{eq:candidate-log-bound}, we have
\begin{align}
T_{\mathrm{lin}}^{\det}\le C_4\frac{\Lambda}{\epsilon^2}\log\frac{\Lambda^2}{\epsilon^2}\log\frac{4}{\delta}\log\log\frac{1}{\delta}.
\end{align}
Since $\log(4\Lambda/\epsilon)\le C\log(\Lambda^2/\epsilon^2)$ and $\log(8/\delta)\le C\log(1/\delta)$ after changing absolute constants, and $\log(\Lambda^2/\epsilon^2)=2\log(\Lambda/\epsilon)$, the sum of the two bounds gives
\begin{align}
T_{\mathrm{tot}}^{\det}\le C\frac{\Lambda}{\epsilon^2}\log\frac{\Lambda}{\epsilon}\log\frac{1}{\delta}\log\log\frac{1}{\delta}.
\end{align}
This proves the claimed result and thus Theorem~\ref{thm:main}.
\end{proof}
\section{Robustness to calibrated SPAM errors}\label{app:spam}

In this appendix, we consider the impact of calibrated SPAM errors. 
SPAM error has different effects on the displacement sieving stage and the parity estimation stage. 
In the displacement stage, a true coefficient of size $\epsilon$ produces a bit-flipping pattern with a probability of approximately $\epsilon^2/\Lambda^2$. 
Therefore, the calibration error rate of the recorded displacement tags must be lower than this. 
If a SPAM error produces erroneous bit-flipping patterns at the same rate as the true signal, then the original protocol alone cannot distinguish between signal and noise. 
In the coefficient stage, the calibrated SPAM error multiplies the parity signal by a known reliability factor. 
To maintain the unbiasedness of the estimator, we need to divide by these factors, which, however, increases variance, especially for candidate Pauli matrices with high weights.

\subsection{The calibrated SPAM error model}

For the coefficient-estimation stage, we use the following calibrated local depolarizing model.  
For a single qubit, let the preparation and measurement error be local depolarization noise on each qubit as
\begin{align}
\mathcal P(\rho)=r_{\rm p}\rho+(1-r_{\rm p})\frac{I}{2},
\qquad
\mathcal M(\rho)=r_{\rm m}\rho+(1-r_{\rm m})\frac{I}{2},
\end{align}
where $0<r_{\rm p},r_{\rm m}\le 1$ are known calibration numbers.  
The preparation noise $\mathcal P^{\otimes n}$ is applied after the ideal product-state preparation, and the measurement noise $\mathcal M^{\otimes n}$ is applied immediately before the ideal terminal Pauli measurement.  
The noise maps and all planned random settings are applied independently from shot to shot.  
For any Pauli string $P$,
\begin{align}
\mathcal P^{\otimes n}(P)=r_{\rm p}^{w(P)}P,
\qquad
(\mathcal M^{\otimes n})^\dagger(P)=r_{\rm m}^{w(P)}P,
\end{align}
where $w(P)$ is the weight of $P$ as the number of non-identity single-qubit Paulis in $P$.

The projection stage uses same-basis displacement records.  
We assume a calibrated bound for the full recorded displacement labels.  
Let $D_Z,E_X$ denote the ideal displacement labels generated by fresh $Z$- and $X$-basis displacement shots, including the fresh random time $T\sim q_\Lambda$.  
Let $\widetilde D_{Z,j},\widetilde E_{X,j}$ denote the recorded labels in the $j$-th shot in the presence of SPAM noise.  
We assume that there is a known number $\xi_1\ge0$ such that, for every block $j$, every history transcript on previous preparation and measurement records in the block so far (denoted as $\mathcal F_{j-1}$), and every nonzero displacement label,
\begin{align}\label{eq:spam-sequential-displacement}
\Pr[\widetilde D_{Z,j}=d\mid \mathcal F_{j-1}]\ge\Pr[D_Z=d]-\xi_1,\quad
\Pr[\widetilde E_{X,j}=e\mid \mathcal F_{j-1}]\ge\Pr[E_X=e]-\xi_1 .
\end{align}
For the displacement stage, the calibration bound is assumed shot by shot.
Here, $\xi_1$ bounds the full $n$-qubit displacement record instead of a single-qubit error rate. 
Therefore, any local calibration must first be converted to this full-record bound.

For Hermitian Pauli strings $P,Q$, recal that $F_{P,Q}(t)\coloneqq 2^{-n}\Tr(P(t)Q)$ with $P(t)=e^{iHt}Pe^{-iHt}$.
We have the following SPAM-noisy version of estimating $F_{P,Q}(t)$.

\begin{lemma}[SPAM-noise-scaled estimation of $F_{P,Q}(t)$]\label{lem:ham-spam-trace-rule}
Run the experiment with the intended preparation Pauli $Q$ and intended measurement Pauli $P$.  
Let $X\in\{\pm1\}$ be the usual product of the prepared eigenvalue sign and the measured eigenvalue sign.  
Under the calibrated depolarizing SPAM error model above, we have
\begin{align}
\E[X\mid t]=r_{\rm p}^{w(Q)}r_{\rm m}^{w(P)} F_{P,Q}(t).
\end{align}
Consequently, the rescaled variable $X^\sharp\coloneqq r_{\rm p}^{-w(Q)}r_{\rm m}^{-w(P)}X$ satisfies
\begin{align}
\E[X^\sharp\mid t]=F_{P,Q}(t),\qquad
|X^\sharp|\le r_{\rm p}^{-w(Q)}r_{\rm m}^{-w(P)}.
\end{align}
\end{lemma}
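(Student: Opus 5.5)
The plan is to rerun the proof of Proposition~\ref{prop:trace-main} with the noisy preparation and measurement maps inserted, and to push both noise maps onto the Pauli operators via the adjoint identities $\mathcal P^{\otimes n}(P)=r_{\rm p}^{w(P)}P$ and $(\mathcal M^{\otimes n})^\dagger(P)=r_{\rm m}^{w(P)}P$.

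Fix the time $t$. The learner samples a uniformly random eigenvector $\ket{\psi}$ of $Q$ with eigenvalue $q(\psi)$. The state actually prepared is $\mathcal P^{\otimes n}(\rho_\psi)$. Evolution by $e^{-iHt}$ is then applied, followed by $\mathcal M^{\otimes n}$ and the ideal measurement of $P$. The conditional mean of the outcome $m$ is therefore
\begin{align}
\E[m\mid\psi,t]=\Tr\bigl(P\,\mathcal M^{\otimes n}\bigl(e^{-iHt}\mathcal P^{\otimes n}(\rho_\psi)e^{iHt}\bigr)\bigr)=\Tr\bigl((\mathcal M^{\otimes n})^\dagger(P)\,e^{-iHt}\mathcal P^{\otimes n}(\rho_\psi)e^{iHt}\bigr).
\end{align}
Using the adjoint identity, the right-hand side equals $r_{\rm m}^{w(P)}\Tr(P(t)\,\mathcal P^{\otimes n}(\rho_\psi))$.

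Next I multiply by $q(\psi)$ and average over the $2^n$ eigenvectors. By linearity of $\mathcal P^{\otimes n}$ and the identity $\sum_\psi q(\psi)\rho_\psi=Q$ from the proof of Proposition~\ref{prop:trace-main}, this gives
\begin{align}
\E[X\mid t]=2^{-n}r_{\rm m}^{w(P)}\Tr\bigl(P(t)\,\mathcal P^{\otimes n}(Q)\bigr)=r_{\rm p}^{w(Q)}r_{\rm m}^{w(P)}F_{P,Q}(t).
\end{align}
The step that needs care is that the prepared state is a product eigenstate, so $\mathcal P^{\otimes n}$ acts locally. I would verify $\mathcal P^{\otimes n}(Q)=r_{\rm p}^{w(Q)}Q$ factor by factor: a single-qubit depolarizing map leaves $I$ fixed and scales each traceless Pauli by $r_{\rm p}$. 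The same local computation, together with self-adjointness of depolarization, gives the measurement identity.

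The rescaled statement follows at once, because $r_{\rm p},r_{\rm m}>0$ are known constants. Multiplying by $r_{\rm p}^{-w(Q)}r_{\rm m}^{-w(P)}$ makes the estimator unbiased for $F_{P,Q}(t)$. The bound $|X^\sharp|\le r_{\rm p}^{-w(Q)}r_{\rm m}^{-w(P)}$ holds since $|X|=1$.

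I do not expect any real obstacle here. The only points to handle carefully are the ordering of the noise maps relative to $e^{-iHt}$ and the use of the adjoint of $\mathcal M^{\otimes n}$.
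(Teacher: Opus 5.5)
Your proposal is correct and follows essentially the same route as the paper. In both, the preparation noise acts on the signed operator $\sum_\psi q(\psi)\rho_\psi=Q$ to produce $r_{\rm p}^{w(Q)}$, the measurement noise moves into the Heisenberg picture via $(\mathcal M^{\otimes n})^\dagger(P)=r_{\rm m}^{w(P)}P$, and the estimator is then rescaled by the known factor. Your remark that the prepared state must be a product state is not needed: linearity of $\mathcal P^{\otimes n}$ together with its action on the Pauli string $Q$ already suffices.
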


\begin{proof}
For the ideal experiment, the signed average over the prepared eigenbasis of $Q$ gives $2^{-n}Q$.  
After preparation noise, this signed operator becomes $2^{-n}\mathcal P^{\otimes n}(Q)=2^{-n}r_{\rm p}^{w(Q)}Q$.
Measurement noise before an ideal measurement of $P$ is equivalent, in the Heisenberg picture, to measuring $(\mathcal M^{\otimes n})^\dagger(P)=r_{\rm m}^{w(P)}P$.
Therefore, the observed signed expectation equals
\begin{align}
2^{-n}\Tr\left(r_{\rm m}^{w(P)}P(t)r_{\rm p}^{w(Q)}Q\right)=r_{\rm p}^{w(Q)}r_{\rm m}^{w(P)}F_{P,Q}(t).
\end{align}
Dividing by the known factor gives the last two claims.
\end{proof}

\subsection{SPAM error in the projection stage}

Let $\rho_\epsilon\coloneqq\min\big\{1,\tfrac{\epsilon^2}{2\kappa_0\Lambda^2}\big\}$.
Assume $0\le \xi_1<\rho_\epsilon$, and set $\rho_1\coloneqq\rho_\epsilon-\xi_1$.
We have the following proposition.

\begin{proposition}[SPAM-robust displacement stage]\label{prop:ham-spam-projection}
Fix $0<\epsilon\le \Lambda$ and $\eta\in(0,1)$.  
We set variables
\begin{align}
N_1\coloneqq\left\lceil C_1^{(0)}\rho_1^{-1}\log\frac{16\Lambda^2}{\epsilon^2}\right\rceil,\quad
L_1\coloneqq\left\lceil 16\log\frac4\eta\right\rceil,\quad
\tau_1\coloneqq 8N_1\frac{\kappa_1}{2\Lambda},
\end{align}
where $C_1^{(0)}$ is a sufficiently large absolute constant.
Replace the block length in the ideal displacement stage by $N_1$.  
Use $L_1$ independent blocks for each of the two bases $Z$ and $X$, and use the deterministic block cap $\tau_1$.
Let $\wh\Pi_x,\wh\Pi_z$ be the nonzero recorded displacement labels in the two bases.  
Then, with probability at least $1-\eta$,
\begin{align}
\Pi_x^{(\epsilon)}\setminus\{0\}\subseteq \wh\Pi_x,\qquad
\Pi_z^{(\epsilon)}\setminus\{0\}\subseteq \wh\Pi_z .
\end{align}
The deterministic projection-stage time is bounded by
\begin{align}
T_1\le C_1\frac{1}{\Lambda\rho_1}\log\frac{16\Lambda^2}{\epsilon^2}\log\frac4\eta,\quad\Rightarrow\quad
T_1\le C_1'\frac{\Lambda}{\epsilon^2}\log\frac{16\Lambda^2}{\epsilon^2}\log\frac4\eta
\end{align}
if $\xi_1\le \rho_\epsilon/2$ for constants $C_1,C_1'$.
\end{proposition}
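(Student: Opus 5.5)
The plan is to repeat the argument of Lemma~\ref{lem:projection-presieve}, with the ideal appearance probability replaced by the calibrated lower bound~\eqref{eq:spam-sequential-displacement}. The one structural change is that the noisy shots within a block are no longer independent, so the Chernoff step becomes a martingale (Freedman or Azuma-type) lower-tail bound.

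First, fix a heavy label $d\in\Pi_x^{(\epsilon)}\setminus\{0\}$. Proposition~\ref{prop:z-traj} gives $\Pr[D_Z=d]\ge W_x(d)/(2\kappa_0\Lambda^2)\ge\epsilon^2/(2\kappa_0\Lambda^2)\ge\rho_\epsilon$. Then~\eqref{eq:spam-sequential-displacement} yields, for every shot $j$ and every history,
\begin{align}
\Pr[\widetilde D_{Z,j}=d\mid\mathcal F_{j-1}]\ge\rho_\epsilon-\xi_1=\rho_1 .
\end{align}
One point about the recording rule: the statement says $\wh\Pi_x$ collects the \emph{recorded} nonzero labels. I will therefore use the simplest rule, namely add $d$ to $\wh\Pi_x$ if it appears at least once in some non-aborted block.

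With this rule, the probability that $d$ never appears in a block of length $N_1$ is bounded by the chain rule for conditional probabilities:
\begin{align}
\prod_j\bigl(1-\Pr[\widetilde D_{Z,j}=d\mid\mathcal F_{j-1}]\bigr)\le(1-\rho_1)^{N_1}\le e^{-\rho_1N_1}.
\end{align}
No martingale concentration is actually needed for this. By Proposition~\ref{prop:projection-counts} there are at most $r\le\Lambda^2/\epsilon^2$ heavy labels. Choosing $C_1^{(0)}$ large makes $r\,e^{-\rho_1N_1}\le r(16\Lambda^2/\epsilon^2)^{-C_1^{(0)}}\le1/16$.

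If one instead keeps the threshold rule $A_d\ge N_1\rho_1/2$, the sum of indicators minus its compensator is a martingale whose compensator is at least $N_1\rho_1$. A multiplicative Freedman bound then gives the same $e^{-cN_1\rho_1}$ tail. This is the step I expect to need the most care, since the standard Chernoff bound does not apply under history-dependent noise.

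Next, the time sampling is noiseless, so the abort analysis is unchanged. We have $\E[\Theta]=N_1\kappa_1/(2\Lambda)$ and $\tau_1$ is $8$ times this, so Markov's inequality gives abort probability at most $1/8$. A single block therefore fails with probability at most $3/16$. With $L_1=\lceil16\log(4/\eta)\rceil$ independent blocks, all blocks fail with probability at most $(3/16)^{L_1}\le\eta/2$. A union bound over the two bases $Z$ and $X$ gives overall success probability at least $1-\eta$. The $X$-basis argument is identical, using Proposition~\ref{prop:x-traj}.

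Finally, for the time bound: each block uses at most $\tau_1=4\kappa_1N_1/\Lambda$, so
\begin{align}
T_1\le 2L_1\tau_1=O\!\left(\tfrac{1}{\Lambda\rho_1}\log\tfrac{16\Lambda^2}{\epsilon^2}\log\tfrac4\eta\right).
\end{align}
If $\xi_1\le\rho_\epsilon/2$, then $\rho_1\ge\rho_\epsilon/2$. Also $\rho_\epsilon\ge\epsilon^2/(2\kappa_0\Lambda^2)$ whenever the minimum in its definition is not $1$, and in the other case $\rho_\epsilon=1$ and the bound is only smaller. Either way $1/(\Lambda\rho_1)\le4\kappa_0\Lambda/\epsilon^2$, which gives the second form of the bound.
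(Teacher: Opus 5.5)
Your proposal is correct and follows the paper's proof essentially step for step. It bounds the probability of missing a fixed heavy label by $(1-\rho_1)^{N_1}$ via the chain rule; no concentration is needed, since $\wh\Pi_x$ records every observed nonzero label. It then takes a union bound over at most $\Lambda^2/\epsilon^2$ heavy labels, uses Markov for the abort, repeats over $L_1$ blocks with a union bound over the two bases, and takes the deterministic cap $2L_1\tau_1$ for the time.

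One tiny slip: in the case $\rho_\epsilon=1$, your inequality $1/(\Lambda\rho_1)\le4\kappa_0\Lambda/\epsilon^2$ goes the wrong way as written. That case is vacuous, because $\|K_\Lambda\|_1\ge\Phi_\Lambda(2\Lambda)=4\Lambda^2$ forces $\kappa_0\ge1$ and hence $\rho_\epsilon\le1/2$. In any event $2/\Lambda\le2\Lambda/\epsilon^2$, so the constant absorbs it, as the paper does with $C_\kappa$.
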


\begin{proof}
We prove the $Z$-basis statement, and the $X$-basis statement follows an identical argument.  
For every nonzero $d\in\Pi_x^{(\epsilon)}$, Proposition~\ref{prop:z-traj} gives
\begin{align}
\Pr[D_Z=d]\ge\frac{W_x(d)}{2\kappa_0\Lambda^2}\ge\rho_\epsilon .
\end{align}
By Eq.~\eqref{eq:spam-sequential-displacement}, for every planned shot $j$,
\begin{align}
\Pr[\widetilde D_{Z,j}=d\mid \mathcal F_{j-1}]\ge\rho_\epsilon-\xi_1=\rho_1 .
\end{align}
Therefore, by the chain rule, a fixed $d\in\Pi_x^{(\epsilon)}\setminus\{0\}$ is missed by all $N_1$ planned shots in one block with probability at most
\begin{align}
\Pr[\widetilde D_{Z,1}\ne d,\ldots,\widetilde D_{Z,N_1}\ne d]\le(1-\rho_1)^{N_1}\le e^{-N_1\rho_1}.
\end{align}
The choice of $N_1$ makes this at most $\epsilon^2/(16\Lambda^2)$.  
Since $|\Pi_x^{(\epsilon)}\setminus\{0\}|\le\tfrac{\Lambda^2}{\epsilon^2}$, a union bound shows that the planned displacement samples in one $Z$-basis block contain every nonzero element of $\Pi_x^{(\epsilon)}$ with probability at least $15/16$.

The scheduled cumulative time in one block has the expectation $N_1\kappa_1/(2\Lambda)$.  
By Markov's inequality and the definition of $\tau_1$, the block aborts with probability at most $1/8$.  
Hence, a single block succeeds with probability at least $1-1/16-1/8=13/16$.  
With $L_1=\lceil16\log(4/\eta)\rceil$ independent blocks, the probability that no $Z$-basis block succeeds is at most $\eta/4$.  
The same estimate holds in the $X$ basis.  
A union bound over the two bases proves the two inclusions.

The time bound is the deterministic cap $2L_1\tau_1$.  
The final simplified bound follows from $\xi_1\le\rho_\epsilon/2$, which gives $\rho_1^{-1}\le2\rho_\epsilon^{-1}$, and from
\begin{align}
\rho_\epsilon^{-1}=\max\left\{1,\frac{2\kappa_0\Lambda^2}{\epsilon^2}\right\}\le C_\kappa\frac{\Lambda^2}{\epsilon^2}
\end{align}
for $0<\epsilon\le\Lambda$ and an absolute constant $C_\kappa$.
\end{proof}

\subsection{SPAM errors in the coefficient learning stage}

Fix a target Pauli basis $c\in\{X,Y,Z\}^n$.  
For $A,B\subseteq[n]$, keep the same Pauli strings $Q_A$ and $P_B$ used in the ideal stage.  
Define $a_{A,B}\coloneqq r_{\rm p}^{|A|}r_{\rm m}^{|B|}$ and $b_{A,B}\coloneqq r_{\rm p}^{|B|}r_{\rm m}^{|A|}$.
The parity variable under SPAM noise is
\begin{align}
Z^{(c),\sharp}_{A,B}\coloneqq\begin{cases}
a_{A,B}^{-1}s_A m_B, & S=+1,\\
-b_{A,B}^{-1}r_B n_A, & S=-1.
\end{cases}
\end{align}

\begin{lemma}[Identity on parity variable and weighted parity sample value under SPAM noise]\label{lem:ham-spam-frame}
For every target Pauli basis $c$, every $A,B\subseteq[n]$, and every $t\ge0$,
\begin{align}
\E[Z^{(c),\sharp}_{A,B}\mid T=t]=\frac12\left(F^{(c)}_{A,B}(t)-F^{(c)}_{A,B}(-t)\right).
\end{align}
Consequently, with $G^{(c),\sharp}_{A,B}\coloneqq\|L_\Lambda\|_1\sign(L_\Lambda(T))Z^{(c),\sharp}_{A,B}$ the corrected weighted parity sample value, one has
\begin{align}
\E[G^{(c),\sharp}_{A,B}]=\bigl(F^{(c)}_{A,B}\bigr)'(0).
\end{align}
\end{lemma}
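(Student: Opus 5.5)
The plan is to reduce the statement to the ideal computation in Proposition~\ref{prop:block-trace}, using Lemma~\ref{lem:ham-spam-trace-rule} as the only new ingredient. We treat the two orientations of the shot separately, conditioning on $T=t$ and on $S$.

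\emph{Case $S=+1$.} The random local $a_i$-eigenstates with signs $s_i$ give a uniformly random eigenvector of $Q_A$ with eigenvalue $s_A$, exactly as in the ideal proof. The measured product $m_B$ is the outcome of measuring $P_B$. Hence $s_Am_B$ is the variable $X$ of Lemma~\ref{lem:ham-spam-trace-rule} with preparation Pauli $Q=Q_A$ and measurement Pauli $P=P_B$. These have weights $w(Q_A)=|A|$ and $w(P_B)=|B|$, since every $a_i,b_i$ is a non-identity letter. The lemma therefore gives
\[
\E[s_Am_B\mid T=t,S=+1]=r_{\rm p}^{|A|}r_{\rm m}^{|B|}F^{(c)}_{A,B}(t)=a_{A,B}F^{(c)}_{A,B}(t).
\]
Dividing by the known factor $a_{A,B}$ yields $F^{(c)}_{A,B}(t)$.

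One point needs checking here: SPAM noise acts on all $n$ qubits, not only on those in $A\cup B$. This is harmless. Unprepared qubits are effectively traced over, and depolarizing noise fixes $I$. Measurement noise on qubits outside $B$ does not enter the marginal parity $m_B$, because $(\mathcal M^{\otimes n})^\dagger$ acts trivially on identity factors. So the factors depend only on the weights.

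\emph{Case $S=-1$.} The roles are swapped: we prepare eigenstates of $P_B$ and measure $Q_A$. The lemma gives
\[
\E[r_Bn_A\mid T=t,S=-1]=r_{\rm p}^{|B|}r_{\rm m}^{|A|}\,2^{-n}\Tr(Q_A(t)P_B)=b_{A,B}F^{(c)}_{A,B}(-t),
\]
where the last equality is the cyclicity identity already used in Proposition~\ref{prop:block-trace}. With the minus sign and the division by $b_{A,B}$, the conditional mean is $-F^{(c)}_{A,B}(-t)$.

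Averaging over $S=\pm1$ with probability $1/2$ each gives the first claim. The second claim then follows verbatim from the computation in the proof of Proposition~\ref{prop:parity-main}. We integrate $\|L_\Lambda\|_1\sign(L_\Lambda(t))$ against $p_\Lambda(t)=2|L_\Lambda(t)|/\|L_\Lambda\|_1$ and apply Lemma~\ref{lem:odd-kernel}. That lemma is valid because $F^{(c)}_{A,B}$ has the band-limited representation of Proposition~\ref{prop:odd-spectrum}.

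The main obstacle is not algebraic. It is justifying that Lemma~\ref{lem:ham-spam-trace-rule} applies to the marginal parities $s_A$, $m_B$ extracted from a full $n$-qubit product experiment, rather than to a dedicated experiment for the pair $(Q_A,P_B)$. I would argue this as follows. Averaging the recorded signs over the product preparation gives the signed input operator $2^{-n}\mathcal P^{\otimes n}(Q_A)$, since the signs on qubits outside $A$ average to the identity, which $\mathcal P$ fixes. Likewise, the product of outcomes on $B$ is the expectation of the observable $(\mathcal M^{\otimes n})^\dagger(P_B)$.
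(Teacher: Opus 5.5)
Your proposal is correct and follows the paper's proof essentially line by line. You split on the orientation $S$, apply Lemma~\ref{lem:ham-spam-trace-rule} with weights $|A|$ and $|B|$ (swapped for $S=-1$, together with the cyclicity identity), divide by $a_{A,B}$ or $b_{A,B}$, average over $S$, and reuse the odd-kernel calculation. Your extra check is also sound: the trace rule applies to the marginal parities $s_A m_B$ of a full $n$-qubit product experiment because signs outside $A$ average to $I$, which depolarizing noise fixes, and $(\mathcal M^{\otimes n})^\dagger$ acts trivially on identity factors. The paper leaves this point implicit.
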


\begin{proof}
Given $S=+1$, the intended preparation Pauli is $Q_A$, and the intended measurement Pauli is $P_B$.  
Lemma~\ref{lem:ham-spam-trace-rule} gives
\begin{align}
\E[s_A m_B\mid T=t,S=+1]=a_{A,B}F^{(c)}_{A,B}(t).
\end{align}
After division by $a_{A,B}$, the contribution is $F^{(c)}_{A,B}(t)$.  
Given $S=-1$, the intended preparation Pauli is $P_B$, and the intended measurement Pauli is $Q_A$.  
Lemma~\ref{lem:ham-spam-trace-rule} and cyclicity of trace give
\begin{align}
\E[r_B n_A\mid T=t,S=-1]=b_{A,B}F^{(c)}_{A,B}(-t).
\end{align}
The definition includes a minus sign, so this branch contributes $-F^{(c)}_{A,B}(-t)$.  
Averaging the two equally likely branches gives the first identity.  
The second follows the same kernel calculation as in the ideal stage.
\end{proof}

Let $\cU\subseteq V\setminus\{0\}$ be the candidate set used in the coefficient stage.  Define
\begin{align}
\zeta_2(\cU)\coloneqq\inf\left\{r_{\rm p}^{|A_c(u)|}r_{\rm m}^{|B_c(u)|},r_{\rm p}^{|B_c(u)|}r_{\rm m}^{|A_c(u)|}:u\in\cU,\ c\in\{X,Y,Z\}^n,\ u\in\mathcal V(c)\right\}.
\end{align}
If $\cU=\emptyset$, we set $\zeta_2(\cU)=1$. 
If every $u\in\cU$ has Pauli weight at most $k$, then $\zeta_2(\cU)\ge (r_{\rm p}r_{\rm m})^k$.  
If a $k$-local ansatz is known and desired, one may first replace $\cU$ by $\cU_{\le k}\coloneqq\{u\in\cU:w(u)\le k\}$, and without this filtering, projection false positives can have weight larger than that of the true terms.

\begin{proposition}[SPAM-robust coefficient estimation]\label{prop:ham-spam-linear}
Let $\cU\subseteq V\setminus\{0\}$ be finite and let $\zeta_2=\zeta_2(\cU)>0$.  
If $\cU=\emptyset$, the protocol returns no coefficients and uses zero time.  
Otherwise, run the coefficient estimation stage with variables under SPAM noise from Lemma~\ref{lem:ham-spam-frame}.  Then, for every target accuracy $\epsilon>0$ and failure probability $\eta\in(0,1)$, there is an estimator $\wh\lambda_{\cU}$ such that
\begin{align}
\Pr\left[\max_{u\in\cU}|\wh\lambda_{\cU}(u)-\lambda_u|\le \epsilon\right]\ge1-\eta,
\end{align}
and the deterministic time satisfies
\begin{align}
T^{\sharp}_{\rm lin}(\cU;\epsilon,\eta)\le C_2\frac{\Lambda}{\epsilon^2\zeta_2^2}\log(4\max\{|\cU|,1\})\log\frac2\eta.
\end{align}
If every $u\in\cU$ has Pauli weight at most $k$, then the variance overhead is at most $(r_{\rm p}r_{\rm m})^{-2k}$.
\end{proposition}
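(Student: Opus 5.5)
The plan is to repeat the proof of Lemma~\ref{lem:odd-oracle} essentially verbatim, replacing the ideal parity variable $Z^{(c)}_{A,B}$ by its SPAM-corrected version $Z^{(c),\sharp}_{A,B}$ from Lemma~\ref{lem:ham-spam-frame}. The only thing that changes is the almost-sure bound on a single-shot sample: each sample grows by a factor of at most $\zeta_2^{-1}$, so Hoeffding's inequality requires a block length larger by a factor $\zeta_2^{-2}$.

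\textbf{Step 1: unbiasedness.} For $u\in\cU$ and a visible target basis $c$ (that is, $u\in\mathcal V(c)$), set $(A,B)=(A_c(u),B_c(u))$ and
\[
Y^{(c),\sharp}_u\coloneqq\tfrac12\sigma_c(A,B)\,G^{(c),\sharp}_{A,B}.
\]
Lemma~\ref{lem:ham-spam-frame} gives $\E[G^{(c),\sharp}_{A,B}]=(F^{(c)}_{A,B})'(0)$. Proposition~\ref{prop:parity-main} then evaluates this derivative as $2\sigma_c(A,B)\lambda_u$, since $\kappa=|A\cap B|$ is odd by visibility. Hence $\E[Y^{(c),\sharp}_u]=\lambda_u$.

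Now draw $C$ uniformly from $\{X,Y,Z\}^n$ and define the visibility-corrected sample
\[
\widetilde Y^\sharp_u=q(u)^{-1}\mathbbm 1[u\in\mathcal V(C)]\,Y^{(C),\sharp}_u.
\]
The computation in Proposition~\ref{prop:ht-correction} goes through unchanged and gives $\E[\widetilde Y^\sharp_u]=\lambda_u$.

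\textbf{Step 2: the magnitude bound.} This is the one genuinely new estimate. On the $S=+1$ branch we divide by $a_{A,B}=r_{\rm p}^{|A|}r_{\rm m}^{|B|}$, and on the $S=-1$ branch by $b_{A,B}=r_{\rm p}^{|B|}r_{\rm m}^{|A|}$. Both quantities are among the numbers over which the infimum defining $\zeta_2(\cU)$ is taken. It follows that $|Z^{(c),\sharp}_{A,B}|\le\zeta_2^{-1}$, hence $|G^{\sharp}|\le2\ell_0\Lambda\zeta_2^{-1}$, and finally $|\widetilde Y^\sharp_u|\le3\ell_0\Lambda\zeta_2^{-1}$.

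For the weight-$k$ remark we need one more observation. By construction $A,B\subseteq\operatorname{supp}(u)$, so $|A|,|B|\le w(u)\le k$. Since $r_{\rm p},r_{\rm m}\le1$, this gives $\zeta_2\ge(r_{\rm p}r_{\rm m})^k$, and therefore the variance overhead $\zeta_2^{-2}$ is at most $(r_{\rm p}r_{\rm m})^{-2k}$.

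\textbf{Step 3: blocks, median, and time.} Use the block length $N^\sharp=\lceil C_0\Lambda^2\epsilon^{-2}\zeta_2^{-2}\log(16|\cU|)\rceil$. Hoeffding's inequality with range $3\ell_0\Lambda/\zeta_2$, followed by a union bound over $\cU$, shows that a non-aborted block is $\epsilon$-accurate on all of $\cU$ with probability at least $15/16$.

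The sampling times are still $T_s\sim p_\Lambda$, because SPAM noise does not alter the time schedule. So the Markov argument with the cap $\tau=4\ell_1N^\sharp/\Lambda$ bounds the abort probability by $1/8$. Taking the median over $R=\lceil16\log(2/\eta)\rceil$ blocks and applying the Chernoff step from Lemma~\ref{lem:odd-oracle} gives overall failure probability at most $\eta$.

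The deterministic time is then
\[
R\tau=O\!\left(\frac{\Lambda}{\epsilon^2\zeta_2^2}\log(16|\cU|)\log\frac2\eta\right),
\]
and absorbing constants gives the stated bound.

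The step that needs the most care is Step 2. One must check that both orientation branches are controlled by $\zeta_2$, which is why $\zeta_2(\cU)$ is defined as the infimum over both orderings, and that independence of the noise across shots keeps the samples i.i.d., so that Hoeffding's inequality still applies. Everything else is inherited directly from the ideal analysis.
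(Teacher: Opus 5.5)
Your proposal is correct and follows essentially the same route as the paper's proof. You establish unbiasedness of the SPAM-rescaled parity samples via Lemma~\ref{lem:ham-spam-frame} together with the visibility correction. You inflate the single-shot bound from $3\ell_0\Lambda$ to $3\ell_0\Lambda\zeta_2^{-1}$, with both orientation branches controlled by the definition of $\zeta_2(\cU)$. Then the unchanged Hoeffding, Markov-abort, and median-of-blocks argument of Lemma~\ref{lem:odd-oracle} multiplies the shot count and deterministic time by $\zeta_2^{-2}$.

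Your weight-$k$ remark, that $A_c(u)\cup B_c(u)=\operatorname{supp}(u)$ forces $|A_c(u)|,|B_c(u)|\le k$, is also exactly the paper's argument.
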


\begin{proof}
If $\cU=\emptyset$, the conclusion is straightforward. 
Assume $\cU\ne\emptyset$.  
For a visible label $u\in\mathcal V(c)$, define the corrected target Pauli sample exactly as in the ideal proof, but with $G^{(c),\sharp}_{A,B}$ in place of $G^{(c)}_{A,B}$:
\begin{align}
Y^{(c),\sharp}_u\coloneqq \frac12\sigma_c(A_c(u),B_c(u))G^{(c),\sharp}_{A_c(u),B_c(u)} .
\end{align}
If $u\notin\mathcal V(c)$, set $Y^{(c),\sharp}_u=0$.  Lemma~\ref{lem:ham-spam-frame} and the ideal case calculation give
\begin{align}
\E[Y^{(c),\sharp}_u]=\lambda_u\quad (u\in\mathcal V(c)),\qquad
|Y^{(c),\sharp}_u|\le\ell_0\Lambda\zeta_2^{-1}.
\end{align}
After the correction over randomly chosen $c$ as $\widetilde Y^\sharp_u\coloneqq q(u)^{-1}\mathbbm{1}\{u\in\mathcal V(C)\}Y^{(C),\sharp}_u$, we still have
\begin{align}
\E[\widetilde Y^\sharp_u]=\lambda_u,\qquad
|\widetilde Y^\sharp_u|\le3\ell_0\Lambda\zeta_2^{-1},
\end{align}
because $q(u)\ge1/3$.

The rest is the same Hoeffding and median-of-blocks argument as in the ideal stage, with the single-shot bound $3\ell_0\Lambda$ replaced by $3\ell_0\Lambda\zeta_2^{-1}$.  
Thus, the number of shots and the deterministic time are multiplied by $\zeta_2^{-2}$.  
If all labels in $\cU$ have weight at most $k$, then $|A_c(u)|\le k$ and $|B_c(u)|\le k$, so every preparation-measurement factor is at least $(r_{\rm p}r_{\rm m})^k$.  Hence, we have $\zeta_2^{-2}\le (r_{\rm p}r_{\rm m})^{-2k}$.
\end{proof}

\subsection{SPAM-robust Hamiltonian reconstruction}

We now combine the arguments in the previous two stages and obtain the following theorem.

\begin{theorem}[SPAM-robust Hamiltonian reconstruction]\label{thm:spam-main}
Let $H=\sum_{u\in V\setminus\{0\}}\lambda_u P_u$ with $\|H\|\le\Lambda$, and fix $0<\epsilon\le\Lambda$ and $\delta\in(0,1)$.  
Assume the calibrated displacement condition Eq.~\eqref{eq:spam-sequential-displacement} with $\xi_1<\rho_\epsilon=\min\big\{1,\tfrac{\epsilon^2}{2\kappa_0\Lambda^2}\big\}$.
Run the SPAM-robust projection stage with failure probability $\delta/2$ and form $\wh\cU\coloneqq\bigl((\wh\Pi_x\cup\{0\})\times(\wh\Pi_z\cup\{0\})\bigr)\setminus\{0\}$, and then run the SPAM-robust coefficient estimation stage on $\wh\cU$ with target accuracy $\epsilon/6$ and failure probability $\delta/2$.  
Finally, we set
\begin{align}
\wh S_\epsilon\coloneqq\left\{u\in\wh\cU:|\wh\lambda_{\wh\cU}(u)|>\frac{\epsilon}{2}\right\},\qquad
\wh H_\epsilon\coloneqq\sum_{u\in\wh S_\epsilon}\wh\lambda_{\wh\cU}(u)P_u.
\end{align}
Then, with probability at least $1-\delta$,
\begin{align}
S_\epsilon\subseteq\wh S_\epsilon\subseteq\{u\in V\setminus\{0\}:|\lambda_u|>\epsilon/3\},\qquad
\max_{u\in S_\epsilon}|\wh\lambda_{\wh\cU}(u)-\lambda_u|\le\epsilon/6 .
\end{align}
For every realized candidate set $\wh\cU$, the following pathwise scheduled-time bound holds:
\begin{align}
T^\sharp_{\rm tot}\le C\left[\frac{1}{\Lambda(\rho_\epsilon-\xi_1)}\log\frac{16\Lambda^2}{\epsilon^2}\log\frac8\delta+\frac{\Lambda}{\epsilon^2\zeta_2(\wh\cU)^2}\log(4\max\{|\wh\cU|,1\})\log\frac4\delta\right].
\end{align}
Moreover, $|\wh\Pi_x|,|\wh\Pi_z|\le L_1N_1$, so the logarithmic candidate-size factor can be made nonrandom by replacing $|\wh\cU|$ with $(L_1N_1+1)^2-1$.  
In particular, if $\xi_1\le \rho_\epsilon/2$ and every label in the realized candidate set $\wh\cU$ has Pauli weight at most $k$, then
\begin{align}
T^\sharp_{\rm tot}=\widetilde O\left(\frac{\Lambda}{\epsilon^2}\left(1+(r_{\rm p}r_{\rm m})^{-2k}\right)\right).
\end{align}
\end{theorem}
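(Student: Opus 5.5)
The proof will mirror Theorem~\ref{thm:mainA}, with Proposition~\ref{prop:ham-spam-projection} replacing Lemma~\ref{lem:projection-presieve} and Proposition~\ref{prop:ham-spam-linear} replacing Lemma~\ref{lem:odd-oracle}. First, run the SPAM-robust projection stage with failure probability $\delta/2$. By Proposition~\ref{prop:ham-spam-projection}, with probability at least $1-\delta/2$ every nonzero element of $\Pi_x^{(\epsilon)}$ and $\Pi_z^{(\epsilon)}$ is recorded. This step needs $\xi_1<\rho_\epsilon$, which is exactly the hypothesis. On this event $E_{\mathrm{proj}}$, the Cartesian argument from the end of Appendix~\ref{app:support-thm} gives $S_\epsilon\subseteq\wh\cU$. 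Indeed, if $|\lambda_{(x\mid z)}|\ge\epsilon$ then $W_x(x)\ge\epsilon^2$ and $W_z(z)\ge\epsilon^2$ whenever these coordinates are nonzero, and zero coordinates are added by hand.

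Second, condition on the realized $\wh\cU$. If it is empty, the first stage already forces $S_\epsilon=\varnothing$ on $E_{\mathrm{proj}}$, and the statement is trivial. Otherwise, apply Proposition~\ref{prop:ham-spam-linear} to this fixed set with accuracy $\epsilon/6$ and failure probability $\delta/2$. Its fresh randomness is independent of the first stage, so the conditional success probability is at least $1-\delta/2$ for every realization. Averaging gives $\Pr(E_{\mathrm{proj}}\cap E_{\mathrm{lin}})\ge1-\delta$. On this event, the threshold argument is word-for-word that of Theorem~\ref{thm:mainA}. For $u\in S_\epsilon$ we get $|\wh\lambda|\ge5\epsilon/6>\epsilon/2$. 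For $u\in\wh S_\epsilon$ we get $|\lambda_u|>\epsilon/2-\epsilon/6=\epsilon/3$. The error bound $\epsilon/6$ is inherited directly.

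Third, the time bound. The projection stage is capped deterministically by $2L_1\tau_1$. With failure $\delta/2$, the formula in Proposition~\ref{prop:ham-spam-projection} gives $O\bigl(\frac{1}{\Lambda\rho_1}\log\frac{16\Lambda^2}{\epsilon^2}\log\frac{8}{\delta}\bigr)$. The coefficient stage is bounded by $T^\sharp_{\rm lin}(\wh\cU;\epsilon/6,\delta/2)$, and the factor $36$ from $(\epsilon/6)^{-2}$ is absorbed into $C$. Adding the two gives the displayed pathwise bound.

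For the size claim, each block records only labels actually observed, so it contributes at most $N_1$ labels. This yields $|\wh\Pi_x|,|\wh\Pi_z|\le L_1N_1$, and hence $|\wh\cU|\le(L_1N_1+1)^2-1$. Consequently $\log|\wh\cU|=O(\log\frac{\Lambda}{\epsilon}+\log\log\frac1\delta)$, using $\rho_1^{-1}\le2\rho_\epsilon^{-1}=O(\Lambda^2/\epsilon^2)$ when $\xi_1\le\rho_\epsilon/2$. In that regime the first term is $O\bigl(\frac{\Lambda}{\epsilon^2}\log\frac{\Lambda}{\epsilon}\log\frac1\delta\bigr)$. For the second term, the weight-$k$ assumption gives $\zeta_2(\wh\cU)\ge(r_{\rm p}r_{\rm m})^k$ by Proposition~\ref{prop:ham-spam-linear}. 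Together these yield the $\widetilde O$ form.

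The main obstacle is the conditioning across stages, because $\wh\cU$, and hence $\zeta_2(\wh\cU)$ and the block length, is random. I would handle this by stating the coefficient-stage guarantee for every fixed candidate set and then integrating over the first stage's output. This requires the noise in the second stage to be independent of the first stage's records, which the model asserts shot by shot.
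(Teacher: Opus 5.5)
Your proposal is correct and follows the paper's proof essentially step for step. You invoke Proposition~\ref{prop:ham-spam-projection} at failure $\delta/2$ and use the Cartesian argument to get $S_\epsilon\subseteq\wh\cU$, then apply Proposition~\ref{prop:ham-spam-linear} conditionally on the realized $\wh\cU$ at accuracy $\epsilon/6$ and failure $\delta/2$, take a union bound, reuse the $\epsilon/2$ threshold argument, and add the two deterministic time caps; your explicit handling of the empty candidate set and of the bound $|\wh\Pi_x|,|\wh\Pi_z|\le L_1N_1$ (each block records at most $N_1$ observed labels) just fills in details the paper leaves implicit.
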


\begin{proof}
By Proposition~\ref{prop:ham-spam-projection}, with probability at least $1-\delta/2$, we have $\Pi_x^{(\epsilon)}\setminus\{0\}\subseteq\wh\Pi_x$ and $\Pi_z^{(\epsilon)}\setminus\{0\}\subseteq\wh\Pi_z$.
On this event, every $u=(x(u)\mid z(u))\in S_\epsilon$ belongs to $\wh\cU$.  
Indeed, if $x(u)\ne0$, then $W_x(x(u))\ge \lambda_u^2\ge \epsilon^2$, so $x(u)\in\Pi_x^{(\epsilon)}\subseteq\wh\Pi_x$.  
If $x(u)=0$, then it is included by the definition of $\wh\cU$.  
The same argument applies to $z(u)$.

Conditioned on the realized set $\wh\cU$, proposition~\ref{prop:ham-spam-linear}, applied with target accuracy $\epsilon/6$, gives
\begin{align}
\Pr\left[\max_{u\in\wh\cU}|\wh\lambda_{\wh\cU}(u)-\lambda_u|\le\epsilon/6\middle|\wh\cU\right]\ge1-\delta/2 .
\end{align}
A union bound gives simultaneous success of the projection and coefficient stages with probability at least $1-\delta$.

On this event, if $u\in S_\epsilon$, then $u\in\wh\cU$ and
\begin{align}
|\wh\lambda_{\wh\cU}(u)|\ge|\lambda_u|-\epsilon/6\ge5\epsilon/6>\epsilon/2,
\end{align}
so $u\in\wh S_\epsilon$. 
Conversely, if $u\in\wh S_\epsilon$, then
\begin{align}
|\lambda_u|\ge|\wh\lambda_{\wh\cU}(u)|-\epsilon/6>\epsilon/2-\epsilon/6=\epsilon/3.
\end{align}
This proves the support inclusions.  
The coefficient error bound on $S_\epsilon$ is exactly the coefficient recovery stage accuracy.

The pathwise time bound is the sum of the projection-stage bound in Proposition~\ref{prop:ham-spam-projection} and the coefficient-stage bound in Proposition~\ref{prop:ham-spam-linear}, after conditioning on the realized candidate set.  
The final displayed form follows from $\xi_1\le\rho_\epsilon/2$ and $\zeta_2(\wh\cU)\ge(r_{\rm p}r_{\rm m})^k$ when all labels in the realized $\wh\cU$ have weight at most $k$.
\end{proof}

\section{Discrete Time Assumption}\label{app:lattice}

\subsection{Sampling rules on the time grid}
The previous analysis allows each shot to use evolution times of arbitrary lengths randomly sampled from a distribution. 
This is convenient for writing the kernel estimators, but it is not essential. 
As real-world experiments cannot perform extremely short evolutions, we need to discuss the feasibility of our protocol under the discretization assumption.
To this end, we consider a more restrictive setting where we only allow resolution time chosen from a discrete-time grid $\mathcal T_{t_0}\coloneqq\{0,t_0,2t_0,\ldots\}$.

For a Hamiltonian with norm bound $\Lambda$, there exists a general upper bound on the minimal time resolution, namely $t_0<\pi/\Lambda$. 
As long as the grid size is less than the upper bound, our protocol can still work with only constant factors overhead over Theorem~\ref{thm:main}. 
The intuition is that signals are still band-limited, so we can use Poisson summation to convert the continuous kernel identities into exact discrete identities. 
In contrast, if the grid is too coarse, it is possible for two different Hamiltonians to have identical evolutions at every allowed time. 
In that case, support recovery is impossible. 
In other words, the resolution $t_0=O(1/\Lambda)$ is unavoidable under the setting where we only allow resolution time chosen from a discrete-time grid $\mathcal T_{t_0}\coloneqq\{0,t_0,2t_0,\ldots\}$.

\begin{definition}[Resolution time in the grid setting]
A resolution time is a number $t_0>0$ such that every allowed evolution time belongs to the grid
\begin{align}
\mathcal T_{t_0}\coloneqq \{0,t_0,2t_0,\ldots\}.
\end{align}
The discrete time control-free model is the same access model as Definition~\ref{def:access_model}, except that every shot must use a time in $\mathcal T_{t_0}$.
\end{definition}
For our discretized algorithm, we need to fix $t_0$ based on a frequency count. 
The signals produced by the Hamiltonian use only frequencies in $[-2\Lambda,2\Lambda]$. 
The kernels $K_\Lambda$ and $L_\Lambda$ use only frequencies in $[-4\Lambda,4\Lambda]$. 
Therefore, after multiplying a signal by a kernel, the product involves frequencies in $[-6\Lambda,6\Lambda]$.
Poisson summation compares a continuous integral with a grid sum by looking at the Fourier transform at the points
\begin{align}
0,\quad
\pm\frac{2\pi}{t_0},\quad
\pm\frac{4\pi}{t_0},\quad
\ldots .
\end{align}
The point $0$ gives the integral. 
We want every other point to lie outside $[-6\Lambda,6\Lambda]$, because the Fourier transform is zero there. 
It would already be enough to assume $\frac{2\pi}{t_0}>6\Lambda$, namely $t_0<\frac{\pi}{3\Lambda}$. 
In the following parts of this section, we will use the slightly stronger condition
\begin{align}\label{eq:lattice-condition}
0<t_0<\frac{\pi}{6\Lambda}.
\end{align}
This only changes absolute constants.

\subsection{Discretized kernel}
Before describing the specific discretized kernel, we further explain some principles of kernel construction. 
Using the even kernel as an example
\begin{align}
B_\Lambda\coloneqq 2\Lambda,\qquad
\Phi_\Lambda(\omega)\coloneqq \omega^2\varphi(\omega/B_\Lambda),\qquad
K_\Lambda(t)\coloneqq -\frac{1}{2\pi}\int_{\mathbb R}\Phi_\Lambda(\omega)e^{i\omega t}d\omega .
\end{align}
Define the fixed profile kernel
\begin{align}
K_\ast(s)\coloneqq 
-\frac{1}{2\pi}\int_{\mathbb R}\xi^2\varphi(\xi)e^{i\xi s}d\xi.
\end{align}
Then $K_\ast$ is independent of $\Lambda$. 
By changing the variables into $\omega=B_\Lambda\xi$, we get
\begin{align}
K_\Lambda(t)=-\frac{1}{2\pi}\int_{\mathbb R}(B_\Lambda\xi)^2\varphi(\xi)e^{iB_\Lambda\xi t}B_\Lambda d\xi=B_\Lambda^3K_\ast(B_\Lambda t).
\end{align}
Since $B_\Lambda=2\Lambda$, this is
\begin{align}\label{eq:K-grid-scaling}
K_\Lambda(t)=(2\Lambda)^3K_\ast(2\Lambda t).
\end{align}
The power $3$ comes from two sources: the factor $\omega^2$ contributes $B_\Lambda^2$, and the measure $d\omega$ contributes one more $B_\Lambda$.
Similarly, recall from the odd-kernel construction that
\begin{align}
\Psi_\Lambda(\omega)\coloneqq i\omega\varphi(\omega/B_\Lambda),\qquad
L_\Lambda(t)\coloneqq \frac{1}{2\pi}\int_{\mathbb R}\Psi_\Lambda(\omega)e^{-i\omega t}d\omega .
\end{align}
Define the fixed profile kernel
\begin{align}
L_\ast(s)\coloneqq \frac{1}{2\pi}\int_{\mathbb R}i\xi\varphi(\xi)e^{-i\xi s}d\xi .
\end{align}
Again using $\omega=B_\Lambda\xi$, we obtain
\begin{align}
L_\Lambda(t)=\frac{1}{2\pi}\int_{\mathbb R}iB_\Lambda\xi\varphi(\xi)e^{-iB_\Lambda\xi t}B_\Lambda d\xi=B_\Lambda^2L_\ast(B_\Lambda t).
\end{align}
Thus
\begin{align}\label{eq:L-grid-scaling}
L_\Lambda(t)=(2\Lambda)^2L_\ast(2\Lambda t).
\end{align}
The power $2$ appears because the first derivative kernel has only one factor of $\omega$, and $d\omega$ gives the other factor.

With these detailed principles, we can give an exact summation rule that replaces continuous integrals. 
For clarity, we assume that a function has frequencies in $[-B,B]$, i.e., its Fourier transform is zero outside $[-B,B]$.
\begin{proposition}[Summation on the time grid]\label{prop:lattice-quadrature} Assume Eq.~\eqref{eq:lattice-condition}, and let $h$ be a rapidly decaying function that uses only frequencies in $[-6\Lambda,6\Lambda]$. 
Then, we have
\begin{align}
\int_{\mathbb R}h(t)dt=t_0\sum_{m\in\mathbb Z}h(mt_0).
\end{align}
\end{proposition}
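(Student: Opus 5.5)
The plan is to prove this with the Poisson summation formula. Let $\hat h(\omega)\coloneqq\int_{\mathbb R}h(t)e^{-i\omega t}dt$. The hypothesis that $h$ uses only frequencies in $[-6\Lambda,6\Lambda]$ means $\hat h(\omega)=0$ for $|\omega|>6\Lambda$. Rapid decay of $h$ makes $\hat h$ continuous and the sums below absolutely convergent.

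The classical Poisson summation formula with spacing $t_0$ reads
\begin{align}
t_0\sum_{m\in\mathbb Z}h(mt_0)=\sum_{k\in\mathbb Z}\hat h\Bigl(\frac{2\pi k}{t_0}\Bigr).
\end{align}
To justify it, I will form the $t_0$-periodization $g(t)\coloneqq\sum_{m}h(t+mt_0)$. Rapid decay makes this sum uniformly convergent, so $g$ is continuous. Its Fourier coefficients are
\begin{align}
\frac{1}{t_0}\int_0^{t_0}g(t)e^{-2\pi ikt/t_0}dt=\frac{1}{t_0}\hat h\Bigl(\frac{2\pi k}{t_0}\Bigr).
\end{align}
Evaluating the Fourier series of $g$ at $t=0$ then gives the formula.

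By Eq.~\eqref{eq:lattice-condition}, $2\pi/t_0>12\Lambda>6\Lambda$. Hence every point $2\pi k/t_0$ with $k\neq0$ lies outside the support of $\hat h$, and all terms vanish except $k=0$. The surviving term is $\hat h(0)=\int_{\mathbb R}h(t)dt$, which is exactly the claim.

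The main obstacle is the pointwise convergence of the Fourier series of $g$ at $t=0$. I will handle it by showing the Fourier coefficients are absolutely summable, which is immediate here because only finitely many are nonzero. In fact only $k=0$ survives, so $g$ is the constant $\hat h(0)/t_0$, provided a continuous function is determined by its Fourier coefficients. That step is standard via Fejér's theorem.

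If \emph{rapidly decaying} is meant in a weaker sense, say $|h(t)|\lesssim(1+|t|)^{-2}$, the same argument goes through. Uniform convergence of the periodization still holds, and it is the only place decay is used.

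For the intended applications, $h=K_\Lambda f$ or $h=L_\Lambda f$ with $f$ a finite trigonometric sum. These satisfy the hypothesis: the spectrum of $h$ is the Minkowski sum of $[-4\Lambda,4\Lambda]$ (from the cutoff $\varphi$) and $[-2\Lambda,2\Lambda]$. Their decay comes from the Schwartz property of $K_\ast$ and $L_\ast$, together with the scalings in Eqs.~\eqref{eq:K-grid-scaling} and \eqref{eq:L-grid-scaling}.
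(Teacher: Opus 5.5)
Your proposal is correct and follows the same route as the paper. Both apply Poisson summation, identify the $k=0$ term as $\hat h(0)=\int_{\mathbb R}h(t)\,dt$, and use $2\pi/t_0>12\Lambda$ to make every term with $k\neq 0$ vanish. The only addition is that you justify the Poisson summation formula itself, via periodization and Fej\'er's uniqueness theorem, which the paper simply invokes.
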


\begin{proof}
The Poisson summation formula gives
\begin{align}
t_0\sum_{m\in\mathbb Z}h(mt_0)=\sum_{\ell\in\mathbb Z}\widehat h\left(\frac{2\pi\ell}{t_0}\right).
\end{align}
The right-hand side checks $\widehat h$ at the points $0,\pm\frac{2\pi}{t_0},\pm\frac{4\pi}{t_0},\ldots$.
The point $0$ gives
\begin{align}
\widehat h(0)=\int_{\mathbb R}h(t)dt.
\end{align}
Now take any nonzero integer $\ell$. Since
$t_0<\pi/(6\Lambda)$, we have
\begin{align}
\left|\frac{2\pi\ell}{t_0}\right|\ge\frac{2\pi}{t_0}>12\Lambda.
\end{align}
This point is outside $[-6\Lambda,6\Lambda]$, while $h$ uses no frequencies outside this interval. 
Therefore
\begin{align}
\widehat h\left(\frac{2\pi\ell}{t_0}\right)=0
\qquad
(\ell\neq0).
\end{align}
So only the $\ell=0$ term remains and thus
\begin{align}
t_0\sum_{m\in\mathbb Z}h(mt_0)=\widehat h(0)=\int_{\mathbb R}h(t)dt.
\end{align}
\end{proof}
We now apply this rule to the signals used by the algorithm.
For a nonzero $d\in\F^n$, we define
\begin{align}
f_d^{(Z)}(t)\coloneqq 2^{-n}\sum_{a\in\F^n}\left|\langle a+d|e^{-iHt}|a\rangle_Z\right|^2 .
\end{align}
Also for a nonzero $e\in\F^n$, we define
\begin{align}
f_e^{(X)}(t)\coloneqq 2^{-n}\sum_{a\in\F^n}\left|\langle a+e|e^{-iHt}|a\rangle_X\right|^2 .
\end{align}
Thus $f_d^{(Z)}(t)$ and $f_e^{(X)}(t)$ are the fixed-time displacement probabilities in the $Z$ and $X$ bases.
For a target Pauli basis $c$ and subsets $A,B\subseteq[n]$, we define
\begin{align}
g_{A,B}^{(c)}(t)\coloneqq F_{A,B}^{(c)}(t)-F_{A,B}^{(c)}(-t).
\end{align}
which is used to extract the odd time inversion part.
We can prove the following proposition.
\begin{proposition}[Kernel identities on the time grid]\label{prop:lattice-quadrature-identities}
Assume Eq.~\eqref{eq:lattice-condition}. 
Then for every $v\in V$, every nonzero $d,e\in\F^n$, every target Pauli basis $c$, and every
$A,B\subseteq[n]$, we have
\begin{align}
C_{P_v}''(0)&=t_0K_\Lambda(0)C_{P_v}(0)+2t_0\sum_{m\ge1}K_\Lambda(mt_0)C_{P_v}(mt_0),\label{eq:lattice-auto-identity}\\
2W_x(d)&=2t_0\sum_{m\ge1}K_\Lambda(mt_0)f_d^{(Z)}(mt_0),\label{eq:lattice-z-displacement-identity}\\
2W_z(e)&=2t_0\sum_{m\ge1}K_\Lambda(mt_0)f_e^{(X)}(mt_0),\label{eq:lattice-x-displacement-identity}\\
\bigl(F_{A,B}^{(c)}\bigr)'(0)&=t_0\sum_{m\ge1}L_\Lambda(mt_0)g_{A,B}^{(c)}(mt_0).\label{eq:lattice-linear-identity}
\end{align}
\end{proposition}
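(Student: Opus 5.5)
The plan is to obtain each identity by combining the continuous kernel identities (Lemma~\ref{lem:even-kernel} and Lemma~\ref{lem:odd-kernel}) with the grid summation rule of Proposition~\ref{prop:lattice-quadrature}. Then we fold the two-sided sums onto $m\ge0$ using parity.

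\textbf{Step 1: bandwidth and decay.} Every signal involved has frequencies in $[-2\Lambda,2\Lambda]$ and is bounded:
\begin{itemize}
\item $C_{P_v}$, by Proposition~\ref{prop:even-spectrum};
\item $f_d^{(Z)}$ and $f_e^{(X)}$, by Lemma~\ref{lem:transition-lower};
\item $F^{(c)}_{A,B}(\pm t)$, by Proposition~\ref{prop:odd-spectrum}, since reflecting $t$ keeps the support symmetric.
\end{itemize}
By construction $\widehat{K_\Lambda}=-\Phi_\Lambda$ and $\widehat{L_\Lambda}$ are supported in $[-4\Lambda,4\Lambda]$. Both are smooth, because $\varphi\in C_c^\infty$, so $K_\Lambda$ and $L_\Lambda$ are Schwartz functions. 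The product $h=K_\Lambda f$ therefore decays rapidly. Its Fourier transform is the convolution of $\widehat{K_\Lambda}$ with the finite measure of $f$, and so is supported in $[-6\Lambda,6\Lambda]$.

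\textbf{Step 2: apply the grid rule.} Under Eq.~\eqref{eq:lattice-condition}, Proposition~\ref{prop:lattice-quadrature} gives $\int_{\mathbb R}h=t_0\sum_{m\in\mathbb Z}h(mt_0)$. Here I expect the main (mild) obstacle. Proposition~\ref{prop:lattice-quadrature} is phrased for functions whose Fourier transform is a genuine function, while $f$ has a measure-valued transform. I would verify Poisson summation directly:
\begin{itemize}
\item Expand $f(t)=\int e^{i\omega t}d\nu(\omega)$.
\item Apply Poisson summation to each Schwartz function $K_\Lambda(t)e^{i\omega t}$ with $|\omega|\le2\Lambda$. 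Its transform is $\widehat{K_\Lambda}$ shifted by $\omega$, which vanishes at every $2\pi\ell/t_0$ with $\ell\neq0$, because $2\pi/t_0>12\Lambda>6\Lambda$.
\item Integrate against $d\nu$. Fubini is justified because $\sum_m|K_\Lambda(mt_0)|<\infty$ and $|\nu|$ is finite.
\end{itemize}
Combining this with Lemma~\ref{lem:even-kernel} gives $f''(0)=t_0\sum_{m\in\mathbb Z}K_\Lambda(mt_0)f(mt_0)$. Similarly, Lemma~\ref{lem:odd-kernel} gives $f'(0)=t_0\sum_m L_\Lambda(mt_0)f(mt_0)$.

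\textbf{Step 3: fold by parity.}
\begin{itemize}
\item \emph{Auto-correlation.} $K_\Lambda$ and $C_{P_v}$ are even, so the $m$ and $-m$ terms pair up. This yields Eq.~\eqref{eq:lattice-auto-identity}, with the $m=0$ term kept separately.
\item \emph{Displacements.} $f_d^{(Z)}$ is even, and $f_d^{(Z)}(0)=0$ because $d\neq0$ and $\langle a+d|a\rangle=0$, so the $m=0$ term vanishes. The proof of Lemma~\ref{lem:transition-lower} together with Proposition~\ref{prop:z-traj} gives $(f_d^{(Z)})''(0)=2^{1-n}\sum_a|\langle a+d|H|a\rangle_Z|^2=2W_x(d)$. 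This gives Eq.~\eqref{eq:lattice-z-displacement-identity}. The $X$-basis identity follows in the same way from Proposition~\ref{prop:x-traj}.
\item \emph{Cross-correlation.} $L_\Lambda$ is odd, so $L_\Lambda(0)=0$. Pairing $m$ with $-m$ gives $t_0\sum_{m\ge1}L_\Lambda(mt_0)\bigl(F(mt_0)-F(-mt_0)\bigr)$, which is Eq.~\eqref{eq:lattice-linear-identity} with $g^{(c)}_{A,B}$.
\end{itemize}
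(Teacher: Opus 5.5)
Your proposal is correct and follows the same route as the paper. You bound the bandwidth of each kernel--signal product by $[-6\Lambda,6\Lambda]$, apply the Poisson grid rule of Proposition~\ref{prop:lattice-quadrature}, invoke Lemmas~\ref{lem:even-kernel} and~\ref{lem:odd-kernel}, and fold the two-sided sums using the evenness of $K_\Lambda$ and the signals, $f_d^{(Z)}(0)=f_e^{(X)}(0)=0$, and the oddness of $L_\Lambda$. Your termwise Poisson argument against the spectral measure, and your explicit identification $(f_d^{(Z)})''(0)=2W_x(d)$ via Lemma~\ref{lem:transition-lower} and Proposition~\ref{prop:z-traj}, spell out details the paper's short proof leaves implicit.
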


\begin{proof}
We apply Proposition~\ref{prop:lattice-quadrature} to the four signals in the claimed proposition. 
By Proposition~\ref{prop:even-spectrum} and Lemma~\ref{lem:transition-lower}, the functions $C_{P_v}$, $f_d^{(Z)}$, and $f_e^{(X)}$ use only frequencies in $[-2\Lambda,2\Lambda]$. 
By Proposition~\ref{prop:odd-spectrum}, the same is true for $F_{A,B}^{(c)}$. 
The kernels use only frequencies in $[-4\Lambda,4\Lambda]$ by construction of the cutoff. Therefore, the four products $K_\Lambda C_{P_v}$, $K_\Lambda f_d^{(Z)}$, $K_\Lambda f_e^{(X)}$, and $L_\Lambda F_{A,B}^{(c)}$ use only frequencies in $[-6\Lambda,6\Lambda]$.
Applying the exact grid identity to these products and using Lemmas~\ref{lem:even-kernel} and~\ref{lem:odd-kernel} gives the displayed formulas. 
The first three formulas use the evenness of $K_\Lambda$, $C_{P_v}$, $f_d^{(Z)}$, and $f_e^{(X)}$. 
For the displacement formulas, we also use $f_d^{(Z)}(0)=0$ and $f_e^{(X)}(0)=0$ because $d,e\neq0$. 
Finally, the last formula uses that $L_\Lambda$ is odd, so the positive and negative grid points combine into
$F_{A,B}^{(c)}(t)-F_{A,B}^{(c)}(-t)$.
\end{proof}

In the continuous time setting, we proved two bounds: one for the size of a one weighted sample and the other for the average sampling time of a single shot. Now we give their versions under the discrete-time assumption.
\begin{lemma}[Bounds under discrete time assumption]\label{lem:lattice-grid-weights}
Assume Eq.~\eqref{eq:lattice-condition}. There exist constants $\kappa_0^{\rm grid}$, $\kappa_1^{\rm grid}$, $\ell_0^{\rm grid}$, and $\ell_1^{\rm grid}$, depending only on the cutoff $\varphi$, such that
\begin{align}
t_0|K_\Lambda(0)|+2t_0\sum_{m\ge1}|K_\Lambda(mt_0)|&\le4\kappa_0^{\rm grid}\Lambda^2,\label{eq:grid-K-mass}\\
2t_0\sum_{m\ge1}mt_0|K_\Lambda(mt_0)|&\le2\kappa_0^{\rm grid}\kappa_1^{\rm grid}\Lambda,\label{eq:grid-K-time}\\
2t_0\sum_{m\ge1}|L_\Lambda(mt_0)|&\le2\ell_0^{\rm grid}\Lambda,\label{eq:grid-L-mass}\\
2t_0\sum_{m\ge1}mt_0|L_\Lambda(mt_0)|&\le\ell_0^{\rm grid}\ell_1^{\rm grid}.\label{eq:grid-L-time}
\end{align}
\end{lemma}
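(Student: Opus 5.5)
The plan is to reduce all four bounds to statements about the fixed profile kernels $K_\ast$ and $L_\ast$, using the scalings \eqref{eq:K-grid-scaling} and \eqref{eq:L-grid-scaling}. Substituting $K_\Lambda(mt_0)=(2\Lambda)^3K_\ast(m h)$ with $h\coloneqq 2\Lambda t_0$, the left side of \eqref{eq:grid-K-mass} becomes
\begin{align}
(2\Lambda)^2\Bigl(h|K_\ast(0)|+2h\sum_{m\ge1}|K_\ast(mh)|\Bigr).
\end{align}
By \eqref{eq:lattice-condition}, $h\in(0,\pi/3)$. It therefore suffices to show that the Riemann-type sums $S_0(h)\coloneqq h\sum_{m\ge0}|K_\ast(mh)|$ and $S_1(h)\coloneqq h\sum_{m\ge1}mh\,|K_\ast(mh)|$ are bounded uniformly for $h\in(0,\pi/3]$. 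The same holds for the analogous sums built from $L_\ast$. Once this is done, the prefactors come out as claimed:
\begin{itemize}
\item in \eqref{eq:grid-K-time}, $K_\Lambda$ contributes $(2\Lambda)^3$ and $t_0\cdot mt_0=h\cdot mh/(2\Lambda)^2$, giving a factor $\Lambda$;
\item in \eqref{eq:grid-L-mass}, $(2\Lambda)^2 t_0=2\Lambda h$, giving a factor $\Lambda$;
\item in \eqref{eq:grid-L-time}, $(2\Lambda)^2 t_0\cdot mt_0 = h\cdot mh$, giving a factor $\Lambda^0$.
\end{itemize}
We then define $\kappa_0^{\rm grid}$, $\kappa_1^{\rm grid}$, $\ell_0^{\rm grid}$, $\ell_1^{\rm grid}$ as the resulting suprema, suitably normalized. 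These depend only on $\varphi$.

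The key analytic input is decay of the profile kernels. Both $\xi^2\varphi(\xi)$ and $i\xi\varphi(\xi)$ lie in $C_c^\infty$, so their inverse Fourier transforms $K_\ast,L_\ast$ are Schwartz functions. In particular, $|K_\ast(s)|+|L_\ast(s)|\le C_\varphi(1+|s|)^{-3}$, and $K_\ast,L_\ast$ are bounded by the $L^1$ norms of their Fourier transforms, namely $\int\xi^2\varphi$ and $\int|\xi|\varphi$. With $g(s)\coloneqq C_\varphi(1+s)^{-3}$, which is decreasing on $[0,\infty)$, we get for every $h>0$
\begin{align}
h\sum_{m\ge1}g(mh)\le\int_0^\infty g(s)\,ds,\qquad h\sum_{m\ge1}mh\,g(mh)\le h\sum_{m\ge1}(1+mh)g(mh)\le\int_0^\infty C_\varphi(1+s)^{-2}ds.
\end{align}
The first comes from comparing each term with the integral over $[(m-1)h,mh]$. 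The second uses that $(1+s)^{-2}$ is also decreasing. The $m=0$ term $h|K_\ast(0)|\le(\pi/3)\|K_\ast\|_\infty$ is trivially bounded.

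Combining these gives uniform bounds independent of $h$, and hence of $t_0$ and $\Lambda$. The only step needing care is the sum-versus-integral comparison, because $|K_\ast|$ itself need not be monotone. I avoid this by dominating $|K_\ast|$ and $|L_\ast|$ by the monotone envelope $g$ before comparing. This is also why I use the polynomial decay coming from smoothness of $\varphi$, rather than the exact $L^1$ norms $\|K_\Lambda\|_1$ from the continuous case. If sharper constants matched to $\kappa_0,\kappa_1$ were desired, one could instead apply Proposition~\ref{prop:lattice-quadrature}, but $|K_\Lambda|$ is not band-limited, so I would not go that route.
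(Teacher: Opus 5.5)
Your proposal is correct and takes the same route as the paper. The paper rescales to the fixed profiles $K_\ast, L_\ast$, evaluated on a grid of mesh $2\Lambda t_0<\pi/3$, and bounds the resulting sums and first moments uniformly by rapid decay. Your monotone envelope $C_\varphi(1+s)^{-3}$ spells out the sum-versus-integral comparison that the paper asserts in one line, and it correctly shows that the mesh restriction is only needed for the $m=0$ term.
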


\begin{proof}
Set $\Delta=2\Lambda t_0$. 
Then $0<\Delta<\pi/3$. 
Using Eq.~\eqref{eq:K-grid-scaling} and Eq.~\eqref{eq:L-grid-scaling}, we have
\begin{align}
t_0|K_\Lambda(0)|+2t_0\sum_{m\ge1}|K_\Lambda(mt_0)|=4\Lambda^2\left(\Delta|K_\ast(0)|+2\Delta\sum_{m\ge1}|K_\ast(m\Delta)|\right).
\end{align}
The other three displayed bounds reduce in the same way to sums of $K_\ast$ and $L_\ast$ on a grid of mesh $\Delta$. 
Since $K_\ast$ and $L_\ast$ decay rapidly, these rescaled sums and their first moments are bounded uniformly for $0<\Delta<\pi/3$. 
Enlarging the four constants gives
\eqref{eq:grid-K-mass}-\eqref{eq:grid-L-time}.
\end{proof}
Finally, we propose a discrete version of the projection sieving algorithm.

\begin{corollary}[Discrete displacement sieve]\label{cor:lattice-projection}
Assume Eq.~\eqref{eq:lattice-condition}. 
Define one discrete displacement shot as follows. 
Sample $M\in\{0,1,2,\ldots\}$ by
\begin{align}
\Pr[M=0]&\coloneqq 1-\frac{2t_0\sum_{m\ge1}|K_\Lambda(mt_0)|}{4\kappa_0^{\rm grid}\Lambda^2},\label{eq:grid-proj-null}\\
\Pr[M=m]&\coloneqq \frac{2t_0|K_\Lambda(mt_0)|}{4\kappa_0^{\rm grid}\Lambda^2},\qquad m\ge1.\label{eq:grid-proj-positive}
\end{align}
If $M=0$, record a null symbol and perform no evolution. 
If $M=m\ge1$, run the usual $Z$-basis or $X$-basis displacement shot at time $mt_0$.

Let $D^{\rm grid}$ be the displacement returned by the $Z$-basis experiment,
and let $E^{\rm grid}$ be the displacement returned by the $X$-basis
experiment. Then for every nonzero $d,e\in\F^n$,
\begin{align}
\Pr[D^{\rm grid}=d]\ge \frac{W_x(d)}{2\kappa_0^{\rm grid}\Lambda^2},\qquad
\Pr[E^{\rm grid}=e]\ge \frac{W_z(e)}{2\kappa_0^{\rm grid}\Lambda^2}.
\end{align}
Moreover,
\begin{align}
\E[T_{\rm proj}^{\rm grid}]\le\frac{\kappa_1^{\rm grid}}{2\Lambda}.
\end{align}
\end{corollary}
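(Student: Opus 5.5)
First we check that the distribution of $M$ is well defined. By \eqref{eq:grid-K-mass}, $2t_0\sum_{m\ge1}|K_\Lambda(mt_0)|\le 4\kappa_0^{\rm grid}\Lambda^2$. Hence $\Pr[M=0]\in[0,1]$, the positive masses are nonnegative, and the total mass is $1$. The constant is chosen so that the null outcome absorbs the leftover mass, including the $m=0$ term of the grid sum, which we do not sample.

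Next we compute the displacement probabilities, working in the $Z$ basis; the $X$ case is identical with $f_e^{(X)}$ and $W_z$ in place of $f_d^{(Z)}$ and $W_x$. Conditioned on $M=m\ge1$, the displacement equals $d$ with probability $f_d^{(Z)}(mt_0)$, exactly as in the proof of Proposition~\ref{prop:z-traj}. Conditioned on $M=0$, the null symbol is never a nonzero $d$. Therefore
\begin{align}
\Pr[D^{\rm grid}=d]=\frac{1}{4\kappa_0^{\rm grid}\Lambda^2}\,2t_0\sum_{m\ge1}|K_\Lambda(mt_0)|f_d^{(Z)}(mt_0).
\end{align}
Since $f_d^{(Z)}\ge0$, we can drop the absolute values from below:
\begin{align}
2t_0\sum_{m\ge1}|K_\Lambda(mt_0)|f_d^{(Z)}(mt_0)\ge 2t_0\sum_{m\ge1}K_\Lambda(mt_0)f_d^{(Z)}(mt_0)=2W_x(d),
\end{align}
where the final equality is \eqref{eq:lattice-z-displacement-identity} of Proposition~\ref{prop:lattice-quadrature-identities}. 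Dividing by $4\kappa_0^{\rm grid}\Lambda^2$ gives the claimed bound $W_x(d)/(2\kappa_0^{\rm grid}\Lambda^2)$.

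A consistency check on the right-hand side of \eqref{eq:lattice-z-displacement-identity}: it should equal $f_d''(0)$, which the proof of Lemma~\ref{lem:transition-lower} shows is $2\cdot 2^{-n}\sum_a|\langle a+d|H|a\rangle|^2$. By the proof of Proposition~\ref{prop:z-traj}, this equals $2W_x(d)$, so the two identities agree.

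Finally, the evolution time of one shot is $T^{\rm grid}_{\rm proj}=Mt_0$, with $Mt_0=0$ on the null outcome. Its mean is
\begin{align}
\E[T^{\rm grid}_{\rm proj}]=\frac{2t_0\sum_{m\ge1}mt_0|K_\Lambda(mt_0)|}{4\kappa_0^{\rm grid}\Lambda^2}\le\frac{2\kappa_0^{\rm grid}\kappa_1^{\rm grid}\Lambda}{4\kappa_0^{\rm grid}\Lambda^2}=\frac{\kappa_1^{\rm grid}}{2\Lambda},
\end{align}
where the inequality is \eqref{eq:grid-K-time}.

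The only delicate point is the sign issue in the displacement-probability step. The kernel $K_\Lambda$ changes sign, so the lower bound genuinely depends on the nonnegativity of $f_d^{(Z)}$ and on the $m=0$ term being absent. That term vanishes because $f_d^{(Z)}(0)=0$ for $d\neq0$, which is already built into \eqref{eq:lattice-z-displacement-identity}. Everything else is bookkeeping with Lemma~\ref{lem:lattice-grid-weights}.
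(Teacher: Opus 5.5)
Your proposal is correct and follows the paper's proof step for step: the law of $M$ is valid by \eqref{eq:grid-K-mass}, $\Pr[D^{\rm grid}=d]$ is written as the $|K_\Lambda|$-weighted grid sum of $f_d^{(Z)}$, the lower bound comes from nonnegativity of $f_d^{(Z)}$ together with \eqref{eq:lattice-z-displacement-identity}, and the mean-time bound comes from \eqref{eq:grid-K-time}. The only cosmetic difference is that you use $|K_\Lambda|\ge K_\Lambda$ directly, whereas the paper bounds the sum below by the absolute value of the signed sum; these agree because the signed sum equals $2W_x(d)\ge 0$.
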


\begin{proof}
The distribution of $M$ is valid by Eq.~\eqref{eq:grid-K-mass}. 
We prove the $Z$-basis claim here, and the $X$-basis claim is identical. 
For a fixed $m\ge1$, the probability of displacement $d$ at time $mt_0$ is $f_d^{(Z)}(mt_0)$. 
Hence, we have
\begin{align}
\Pr[D^{\rm grid}=d]=\frac{1}{2\kappa_0^{\rm grid}\Lambda^2}\sum_{m\ge1}t_0|K_\Lambda(mt_0)|f_d^{(Z)}(mt_0).
\end{align}
Since $f_d^{(Z)}(t)\ge0$, we have
\begin{align}
\sum_{m\ge1}t_0|K_\Lambda(mt_0)|f_d^{(Z)}(mt_0)\ge\left|\sum_{m\ge1}t_0K_\Lambda(mt_0)f_d^{(Z)}(mt_0)\right|.
\end{align}
By Eq.~\eqref{eq:lattice-z-displacement-identity}, the last signed sum is $W_x(d)$. 
This proves the lower bound for $D^{\rm grid}$. 
The proof for $E^{\rm grid}$ uses Eq.~\eqref{eq:lattice-x-displacement-identity} and follows a similar approach.

The mean-time bound follows directly from Eq.~\eqref{eq:grid-K-time} as
\begin{align}
\E[T_{\rm proj}^{\rm grid}]=\sum_{m\ge1}\frac{2t_0|K_\Lambda(mt_0)|}{4\kappa_0^{\rm grid}\Lambda^2}(mt_0)\le\frac{\kappa_1^{\rm grid}}{2\Lambda}.
\end{align}
\end{proof}

We next replace one continuous-time coefficient estimation block with a discrete-time shot.
The parity block is exactly the same as before. 
The only change is the choice of evolution time.
Fix a target Pauli basis $c$ and sets $A,B\subseteq[n]$. We sample an integer $M\in\{0,1,2,\ldots\}$ by
\begin{align}
\Pr[M=0]&\coloneqq 1-\frac{2t_0\sum_{m\ge1}|L_\Lambda(mt_0)|}{2\ell_0^{\rm grid}\Lambda},\label{eq:grid-lin-null}\\
\Pr[M=m]&\coloneqq \frac{2t_0|L_\Lambda(mt_0)|}{2\ell_0^{\rm grid}\Lambda},\qquad m\ge1.\label{eq:grid-lin-positive}
\end{align}
The outcome $M=0$ is null. 
In that case, we output $0$ and do not evolve the system. 
If $M=m\ge1$, we run the same parity block experiment
as in Proposition~\ref{prop:block-trace} but at time $mt_0$.

Let $Z_{A,B,m}^{(c),{\rm grid}}$ be the parity variable produced by this block. 
We put back the sign of $L_\Lambda(mt_0)$ and define
\begin{align}
G_{A,B}^{(c),{\rm grid}}\coloneqq 
\begin{cases}
2\ell_0^{\rm grid}\Lambda\sign(L_\Lambda(mt_0))Z_{A,B,m}^{(c),{\rm grid}},& M=m\ge1,\\
0,& M=0.
\end{cases}
\end{align}
This is the discrete version of the continuous linear sample. 
The factor $2\ell_0^{\rm grid}\Lambda$ only cancels the normalization in Eq.~\eqref{eq:grid-lin-positive}.

We will also use the same visibility correction as before. 
For a label $u\neq0$ and a target Pauli basis $c$, define
\begin{align}
Y_u^{(c),{\rm grid}}\coloneqq 
\begin{cases}
\dfrac12\sigma_c(A_c(u),B_c(u))G_{A_c(u),B_c(u)}^{(c),{\rm grid}},& u\in\mathcal V(c),\\
0,& u\notin\mathcal V(c).
\end{cases}
\end{align}
If $C\sim\mathrm{Unif}\{X,Y,Z\}^n$ and $q(u)\coloneqq \Pr[u\in\mathcal V(C)]$, then the discrete sample after applying visibility correction is
\begin{align}
\widetilde Y_u^{\rm grid}\coloneqq q(u)^{-1}Y_u^{(C),{\rm grid}}.
\end{align}

\begin{proposition}[Discrete linear sample]\label{prop:lattice-linear}
Assume Eq.~\eqref{eq:lattice-condition}. For the grid shot defined above,
\begin{align}\label{eq:grid-G-properties}
\E\left[G_{A,B}^{(c),{\rm grid}}\right]=\bigl(F_{A,B}^{(c)}\bigr)'(0),\qquad
\left|G_{A,B}^{(c),{\rm grid}}\right|\le2\ell_0^{\rm grid}\Lambda,\qquad
\E[T_{\rm lin}^{\rm grid}]\le\frac{\ell_1^{\rm grid}}{2\Lambda}.
\end{align}
Moreover, for every nonzero label $u$, we have
\begin{align}
\E[\widetilde Y_u^{\rm grid}]=\lambda_u,\qquad
|\widetilde Y_u^{\rm grid}|\le 3\ell_0^{\rm grid}\Lambda.
\label{eq:grid-ht-properties}
\end{align}
\end{proposition}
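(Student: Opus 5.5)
The proof conditions on the sampled grid index $M$ and reuses the continuous-time argument with integrals replaced by grid sums. Write $\mathcal N\coloneqq 2\ell_0^{\rm grid}\Lambda$ for the normalization in Eq.~\eqref{eq:grid-lin-positive}. This distribution is valid: Eq.~\eqref{eq:grid-L-mass} gives $\Pr[M=0]\ge0$, and the positive masses sum to at most one.

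\textbf{Unbiasedness of $G$.} The null outcome contributes zero. For $M=m\ge1$, Proposition~\ref{prop:block-trace} (whose proof holds at any fixed time) gives
\begin{align*}
\E[Z^{(c),{\rm grid}}_{A,B,m}]=\tfrac12 g^{(c)}_{A,B}(mt_0).
\end{align*}
Hence
\begin{align*}
\E[G^{(c),{\rm grid}}_{A,B}]=\sum_{m\ge1}\frac{2t_0|L_\Lambda(mt_0)|}{\mathcal N}\,\mathcal N\,\sign(L_\Lambda(mt_0))\,\tfrac12 g^{(c)}_{A,B}(mt_0)=t_0\sum_{m\ge1}L_\Lambda(mt_0)g^{(c)}_{A,B}(mt_0).
\end{align*}
By the grid identity Eq.~\eqref{eq:lattice-linear-identity} of Proposition~\ref{prop:lattice-quadrature-identities}, this equals $(F^{(c)}_{A,B})'(0)$.

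\textbf{Bound on $G$.} Since $|Z|\le1$, the definition directly gives $|G|\le\mathcal N$.

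\textbf{Expected time.} The time is zero on the null event, so
\begin{align*}
\E[T^{\rm grid}_{\rm lin}]=\sum_{m\ge1}\frac{2t_0|L_\Lambda(mt_0)|}{\mathcal N}\,mt_0\le\frac{\ell_0^{\rm grid}\ell_1^{\rm grid}}{2\ell_0^{\rm grid}\Lambda},
\end{align*}
where the inequality is Eq.~\eqref{eq:grid-L-time}. This is the claimed bound.

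\textbf{Visibility-corrected sample.} Fix $u\ne0$ and condition on $C=c$. If $u\in\mathcal V(c)$, then $(A,B)=(A_c(u),B_c(u))$ has $\kappa=|A\cap B|$ odd. Proposition~\ref{prop:parity-main} then gives $(F^{(c)}_{A,B})'(0)=2\sigma_c(A,B)\lambda_u$. Since $\sigma^2=1$, combining with the first step gives $\E[Y^{(c),{\rm grid}}_u]=\lambda_u$. If $u\notin\mathcal V(c)$, the sample is zero. Averaging over $C$ exactly as in Proposition~\ref{prop:ht-correction} yields
\begin{align*}
\E[\widetilde Y^{\rm grid}_u]=q(u)^{-1}q(u)\lambda_u=\lambda_u.
\end{align*}
For the magnitude, $|Y^{(c),{\rm grid}}_u|\le\tfrac12\mathcal N=\ell_0^{\rm grid}\Lambda$, and $q(u)\ge1/3$ from Proposition~\ref{prop:coverage} gives the factor $3$.

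\textbf{Main obstacle.} The only delicate point is the grid identity of Proposition~\ref{prop:lattice-quadrature-identities}, in particular applying Poisson summation to $L_\Lambda F$ with $F$ complex-valued. That identity is already established, so it can be invoked directly. One should also check that the oddness of $L_\Lambda$ (with $L_\Lambda(0)=0$) makes the $m=0$ grid term vanish, so the sum really starts at $m=1$ as in Eq.~\eqref{eq:lattice-linear-identity}.
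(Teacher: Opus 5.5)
Your proposal is correct and matches the paper's proof step for step: condition on the grid index $M$, use Proposition~\ref{prop:block-trace} together with the grid identity~\eqref{eq:lattice-linear-identity} for unbiasedness, use Eq.~\eqref{eq:grid-L-time} for the mean time, and use Propositions~\ref{prop:parity-main} and~\ref{prop:coverage} for the visibility-corrected sample. A minor aside: $F^{(c)}_{A,B}$ is real, being the normalized trace of a product of two Hermitian operators, so the complex-valuedness worry in your last paragraph does not arise (and would be harmless anyway).
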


\begin{proof}
The distribution of $M$ is valid by Eq.~\eqref{eq:grid-L-mass}. 
For $m\ge1$, Proposition~\ref{prop:block-trace} gives
\begin{align}
\E\left[Z_{A,B,m}^{(c),{\rm grid}}\mid M=m\right]=\frac12 g_{A,B}^{(c)}(mt_0).
\end{align}
Therefore, we have
\begin{align}
\begin{split}
\E\left[G_{A,B}^{(c),{\rm grid}}\right]&=\sum_{m\ge1}\frac{2t_0|L_\Lambda(mt_0)|}{2\ell_0^{\rm grid}\Lambda}\cdot2\ell_0^{\rm grid}\Lambda\sign(L_\Lambda(mt_0))\cdot\frac12 g_{A,B}^{(c)}(mt_0)\\
&=t_0\sum_{m\ge1}L_\Lambda(mt_0)g_{A,B}^{(c)}(mt_0)\\
&=\bigl(F_{A,B}^{(c)}\bigr)'(0),
\end{split}
\end{align}
where the last step is Eq.~\eqref{eq:lattice-linear-identity}. 
The bound on $G_{A,B}^{(c),{\rm grid}}$ follows from $|Z_{A,B,m}^{(c),{\rm grid}}|\le1$.

The mean time is also immediate from the definition of $M$ as
\begin{align}
\E[T_{\rm lin}^{\rm grid}]=\sum_{m\ge1}\frac{2t_0|L_\Lambda(mt_0)|}{2\ell_0^{\rm grid}\Lambda}(mt_0)\le\frac{\ell_1^{\rm grid}}{2\Lambda},
\end{align}
where we used Eq.~\eqref{eq:grid-L-time}.

It remains to check the visibility correction. 
If $u\in\mathcal V(c)$, the derivative calculation in Proposition~\ref{prop:parity-main} gives
\begin{align}
\bigl(F_{A_c(u),B_c(u)}^{(c)}\bigr)'(0)=2\sigma_c(A_c(u),B_c(u))\lambda_u.
\end{align}
Thus,
\begin{align}
\E\left[Y_u^{(c),{\rm grid}}\right]=\lambda_u,
\qquad
\left|Y_u^{(c),{\rm grid}}\right|\le\ell_0^{\rm grid}\Lambda
\end{align}
whenever $u\in\mathcal V(c)$. 
If $u\notin\mathcal V(c)$, then $Y_u^{(c),{\rm grid}}=0$ by definition.

Finally, Proposition~\ref{prop:coverage} gives $q(u)\ge1/3$. 
Hence, we have
\begin{align}
\begin{split}
\E[\widetilde Y_u^{\rm grid}]&=q(u)^{-1}\Pr[u\in\mathcal V(C)]\lambda_u=\lambda_u,\\
|\widetilde Y_u^{\rm grid}|&\le q(u)^{-1}\ell_0^{\rm grid}\Lambda\le3\ell_0^{\rm grid}\Lambda.
\end{split}
\end{align}
\end{proof}

The grid distributions above have infinite support, but their mean time is finite. 
If we want a deterministic limit on the duration of each shot, fix $R>0$ and set
\begin{align}
M_R\coloneqq \left\lfloor\frac{R}{2\Lambda t_0}\right\rfloor .
\end{align}
as the new cutoff threshold. Keep only outcomes $m\le M_R$ and move the remaining probability to the null outcome. 
Then every executed shot has time at most
\begin{align}
M_Rt_0\le \frac{R}{2\Lambda}.
\end{align}
The only new loss is the tail:
\begin{align}
2t_0\sum_{m>M_R}|K_\Lambda(mt_0)|
\quad\text{or}\quad
2t_0\sum_{m>M_R}|L_\Lambda(mt_0)|.
\end{align}
Using Eq.~\eqref{eq:K-grid-scaling}, Eq.~\eqref{eq:L-grid-scaling}, and $\Delta=2\Lambda t_0$, these tails are bounded by
\begin{align}
C_N\Lambda^2(1+R)^{-N}
\quad\text{and}\quad
C_N\Lambda(1+R)^{-N}
\end{align}
Therefore, $R$ controls the tradeoff.
A larger $R$ allows a longer maximum shot time $R/(2\Lambda)$, but discards less probability mass. 
A smaller $R$ gives a shorter maximum shot time, but discards more of the tail.

\subsection{A minimal resolution time lower bound in the grid setting}
The upper bound above assumes that the time grid spacing is smaller than a constant times $1/\Lambda$. 
We now show that this scale is necessary in the grid setting. 
\begin{proposition}[A universal hard family]\label{prop:nyquist}
Fix $t_0>0$. 
Suppose every allowed shot time belongs to $\mathcal T_{t_0}$ and the admissible Hamiltonian class contains the one-sparse Hamiltonians
\begin{align}
H_u\coloneqq \frac{\pi}{t_0}P_u,
\qquad u\in V\setminus\{0\}.
\end{align}
Then no protocol can identify the support label $u$ with success probability better than random guessing, even with arbitrarily many shots and arbitrary adaptivity.
\end{proposition}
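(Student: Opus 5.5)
The plan is to show that every allowed evolution is the same unitary for every member of the family, up to a global phase. Fix any $u\in V\setminus\{0\}$. Because $P_u$ is Hermitian with $P_u^2=I$, we have the exact identity
\begin{align}
e^{-i\theta P_u}=\cos\theta\, I-i\sin\theta\, P_u .
\end{align}
At an allowed time $t=mt_0$ with $m\in\{0,1,2,\ldots\}$, set $\theta=\pi m$. Then $\sin(\pi m)=0$, and the identity gives
\begin{align}
U_u(mt_0)=e^{-i\pi m P_u}=(-1)^m I .
\end{align}
So on the grid $\mathcal T_{t_0}$, the evolution is $(-1)^mI$ for every label $u$. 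This is a global phase that does not depend on $u$.

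Next, we propagate this to whole protocols. In the control-free model of Definition~\ref{def:access_model}, a shot consists of a product state $\rho$, one evolution at some $t\in\mathcal T_{t_0}$, and a product measurement. The outcome distribution is $\Tr(\Pi_k U\rho U^\dagger)$, and with $U=\pm I$ this equals $\Tr(\Pi_k\rho)$, which does not depend on $u$. The same holds for any richer protocol built from queries at grid times, including interleaved controls or ancillas, since each query acts as $\pm I$ on the system. We then argue by induction over shots. For adaptive strategies, the conditional law of each new outcome, given the transcript so far, is the same for every $u$. Hence the full transcript has the same distribution under every $H_u$, and any randomness used to choose settings can be absorbed into the transcript.

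Finally, the estimator is some function of the transcript, so its output distribution is the same for all $u$. Put a uniform prior on a finite subfamily, say the $4^n-1$ nonzero labels or just two labels $u\neq u'$. The success probability is then $\sum_u \frac{1}{|\mathcal F|}\Pr[\text{output}=u]$ under a common output law, which equals $1/|\mathcal F|$, i.e.\ random guessing. In particular, the worst case over $u$ is at most $1/|\mathcal F|$.

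The main obstacle is to state the model so the induction covers arbitrary adaptivity and randomization cleanly. Conditioning on the transcript prefix, as described above, handles this. The spectral computation itself is immediate. It is also worth remarking that $\|H_u\|=\pi/t_0$, so if $\Lambda\ge\pi/t_0$, this family is admissible. This gives the necessity of $t_0<\pi/\Lambda$ claimed in the text.
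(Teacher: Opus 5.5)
Your proposal is correct and follows the paper's proof: the identity $e^{-iH_u m t_0}=e^{-i\pi m P_u}=(-1)^mI$ makes every grid-time experiment independent of $u$. Your induction over adaptive shots and your uniform-prior bound on the success probability just spell out what the paper compresses into the one-line conclusion that the support label cannot be identified.
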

\begin{proof}
For every integer $m\ge0$,
\begin{align}
e^{-iH_u(mt_0)}=e^{-im\pi P_u}=\cos(m\pi)I-i\sin(m\pi)P_u=(-1)^mI.
\end{align}
Therefore, every allowed experiment has exactly the same outcome distribution for every $u$. 
So the support label can not be identified.
\end{proof}
The proposition uses Hamiltonians of norm $\pi/t_0$. 
Therefore, simply requiring the norm bound to be greater than $\pi/t_0$ is not enough. 
We will also give a necessary resolution condition depending on the norm bound.

\begin{corollary}[The upper bound represented by norms]\label{cor:anti-aliasing}
If exact learning is required uniformly over all sparse Pauli Hamiltonians with $\|H\|\le\Lambda$ in the discrete-time model, then it is necessary that
\begin{align}
t_0<\frac{\pi}{\Lambda}.
\end{align}
\end{corollary}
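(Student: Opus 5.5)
The plan is to reduce the corollary to Proposition~\ref{prop:nyquist} by rescaling the hard family so that its norm equals exactly $\Lambda$. Suppose, for contradiction, that $t_0\ge\pi/\Lambda$. Consider the one-sparse Hamiltonians
\begin{align}
H_u\coloneqq \frac{\pi}{t_0}P_u,\qquad u\in V\setminus\{0\}.
\end{align}
Since $P_u^2=I$ and $P_u$ has eigenvalues $\pm1$, we get $\|H_u\|=\pi/t_0\le\Lambda$. Hence every $H_u$ belongs to the admissible class of sparse Pauli Hamiltonians with $\|H\|\le\Lambda$. In particular, this class contains the family used in Proposition~\ref{prop:nyquist}.

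We then invoke Proposition~\ref{prop:nyquist} directly. For every grid time $mt_0$ we have $e^{-iH_u mt_0}=(-1)^mI$, independently of $u$. So every allowed experiment, including adaptive ones with arbitrarily many shots and any preparation or measurement, has an outcome distribution that does not depend on $u$. No protocol can therefore identify $u$ better than random guessing. Since $|V\setminus\{0\}|=4^n-1\ge3$, the success probability is at most $1/(4^n-1)<2/3$.

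This also rules out coefficient learning, not just exact support learning. The Hamiltonians $H_u$ and $H_{u'}$ for $u\neq u'$ differ in the coefficient of $P_u$ by $\pi/t_0$. Any estimator accurate to $\epsilon<\pi/(2t_0)$ would distinguish them, which is a contradiction. So uniform exact learning forces $t_0<\pi/\Lambda$.

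The only point needing care is the boundary case $t_0=\pi/\Lambda$, where $\|H_u\|=\Lambda$ exactly. It is still admissible, because the norm constraint is non-strict. This is why the conclusion is the strict inequality $t_0<\pi/\Lambda$. Beyond that, I expect no real obstacle; the argument is a direct specialization of Proposition~\ref{prop:nyquist}.
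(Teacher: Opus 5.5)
Your proposal is correct and matches the paper's proof: assuming $t_0\ge\pi/\Lambda$ gives $\|H_u\|=\pi/t_0\le\Lambda$, so the aliasing family of Proposition~\ref{prop:nyquist} is admissible and indistinguishable at every grid time. Your additional remarks on the boundary case $t_0=\pi/\Lambda$ and on ruling out coefficient estimation, not just support identification, are valid refinements of the same argument.
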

\begin{proof}
If $t_0\ge\pi/\Lambda$, then $\pi/t_0\le\Lambda$, so the one-sparse Hamiltonians in Proposition~\ref{prop:nyquist} are not learnable. 
Exact uniform support learning is therefore impossible.
\end{proof}

\section{Lower Bound}\label{app:lower}

\subsection{Information theoretic tools}
The proof of the lower bound consists of two parts. First, in a control-free protocol, even estimating a very small coefficient already reaches the standard quantum limit in time complexity. 
Second, learning a large number of hidden support locations leads to a coupon collection problem, thus adding a log term to the final lower bound. 
We begin with two fundamental information-theoretic inequalities as facts from quantum Fisher information for single-parameter estimations (see, e.g., Refs.~\cite{wootters1981statistical,braunstein1994statistical,braunstein1996generalized}).

\begin{fact}[Total variation from Fisher information, see, e.g. ]\label{fact:tv-fi}
Let $P_\theta$ be the distribution of the full history of a protocol for a parametrization estimation problem on a single parameter $\theta$, and let $I(\theta)$ be its classical Fisher information. 
Then for any $a<b$,
\begin{align}
\TV(P_a,P_b) \le \frac12\int_a^b \sqrt{I(\xi)}d\xi .
\end{align}
\end{fact}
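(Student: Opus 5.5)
The statement to prove is Fact~\ref{fact:tv-fi}: for a one-parameter family $P_\theta$ of full-history distributions, $\TV(P_a,P_b)\le\frac12\int_a^b\sqrt{I(\xi)}d\xi$. My plan is to pass through the Hellinger distance, which is a genuine metric along the path $\theta\mapsto P_\theta$, and whose infinitesimal length is governed by Fisher information.

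First, define the square-root densities $s_\theta\coloneqq\sqrt{p_\theta}$ with respect to a common dominating measure $\mu$ over the history space. Since histories are finite sequences of discrete outcomes together with continuous times, $\mu$ can be counting measure times Lebesgue measure. The map $\theta\mapsto s_\theta$ is a curve in $L^2(\mu)$. Under the standard regularity (differentiability in quadratic mean), which holds here because outcome probabilities are smooth in $\theta$ through $e^{-iH_\theta t}$, we have
\begin{align}
\|\partial_\theta s_\theta\|_{L^2}^2=\int\frac{(\partial_\theta p_\theta)^2}{4p_\theta}d\mu=\frac14 I(\theta).
\end{align}
Second, the fundamental theorem of calculus in $L^2$ together with Minkowski's inequality gives
\begin{align}
\|s_b-s_a\|_{L^2}\le\int_a^b\|\partial_\xi s_\xi\|_{L^2}d\xi=\frac12\int_a^b\sqrt{I(\xi)}d\xi .
\end{align}
Third, relate total variation to this $L^2$ distance. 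Write $p_b-p_a=(s_b-s_a)(s_b+s_a)$ and apply Cauchy--Schwarz:
\begin{align}
\TV(P_a,P_b)=\frac12\int|s_b-s_a||s_b+s_a|d\mu\le\frac12\|s_b-s_a\|_{L^2}\|s_b+s_a\|_{L^2}.
\end{align}
Since $\|s_b+s_a\|_{L^2}^2=2+2\int s_as_b\,d\mu\le4$, the last factor is at most $2$, so $\TV(P_a,P_b)\le\|s_b-s_a\|_{L^2}$. Combining this with the second step yields the claim.

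The main obstacle is the regularity in the second step. Points where $p_\theta=0$ make $\partial_\theta s_\theta$ delicate, and the time variables are continuous. I would handle this by convolving with a small uniform mixture, $p_\theta^{(\delta)}=(1-\delta)p_\theta+\delta q$ for a fixed $\theta$-independent $q$. This perturbation only decreases the Fisher information by joint convexity, and it changes $\TV$ by at most $\delta$. Letting $\delta\to0$ then recovers the stated bound.
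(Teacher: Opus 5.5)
The paper does not prove this statement. It imports it as a standard fact from the Fisher-information literature and uses it only as a black box in the proof of Lemma~\ref{lem:sql-lower}. Your argument is the standard self-contained proof, namely Le Cam's inequality followed by the observation that the Hellinger chord is no longer than the Fisher--Rao arc. It is correct, including the constant: writing $d\coloneqq\|\sqrt{p_b}-\sqrt{p_a}\|_{L^2(\mu)}$, your Cauchy--Schwarz step gives $\TV(P_a,P_b)\le\frac12 d\sqrt{4-d^2}\le d$, and $d\le\frac12\int_a^b\sqrt{I(\xi)}\,d\xi$.

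A few points can be tightened.

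\textbf{No differentiability in quadratic mean is needed.} After mixing, $\xi\mapsto\sqrt{p^{(\delta)}_\xi}$ is $C^1$ at each fixed history. In the application, each $p_\xi$ is a product of $\xi$-independent learner kernels and Born-rule probabilities that are analytic in $\xi$. So you can apply the fundamental theorem of calculus pointwise and then Minkowski's integral inequality to get the $L^2$ bound.

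\textbf{The regularization is a mixture, not a convolution, and it behaves exactly.} One has $\TV(P^{(\delta)}_a,P^{(\delta)}_b)=(1-\delta)\TV(P_a,P_b)$. Since the denominator only grows, $I^{(\delta)}\le(1-\delta)I$. Hence letting $\delta\to0$ is immediate.

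\textbf{Choice of dominating measure.} Take $\mu$ to be generated by the $\theta$-independent kernels of the learner's adaptive choices together with counting measure on outcomes, rather than literally Lebesgue measure on times. A protocol may pick times with atoms, e.g.\ on a grid. Then use any strictly positive $\mu$-density $q$ in place of a ``uniform'' one.
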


\begin{fact}[One-shot Fisher information bound]\label{fact:bc-bound}
Let $\rho_\theta(t)=e^{-iH_\theta t}\rho_0e^{iH_\theta t}$ be followed by any measurement and classical postprocessing. 
Then the classical Fisher information is bounded by
\begin{align}
I_{\mathrm{cl}}(\theta)\le F_Q(\rho_\theta(t)),
\end{align}
where $F_Q$ is the quantum Fisher information. 
If $H_\theta$ is differentiable and
\begin{align}
\mathcal G_\theta(t)\coloneqq \int_0^t e^{iH_\theta s}(\partial_\theta H_\theta)e^{-iH_\theta s}ds,
\end{align}
then
\begin{align}
F_Q(\rho_\theta(t))\le 4\|\mathcal G_\theta(t)\|^2.
\end{align}
\end{fact}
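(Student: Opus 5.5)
The first inequality is the Braunstein--Caves bound, and I would take it as given. Any measurement followed by classical postprocessing is a POVM $\{E_k\}$ on $\rho_\theta(t)$. The quantum Fisher information $F_Q$ is the supremum of the classical Fisher information over all POVMs, and postprocessing cannot increase Fisher information (data processing). Together these give $I_{\mathrm{cl}}(\theta)\le F_Q(\rho_\theta(t))$. The substance is the second inequality, which I would reduce to a variance bound for an effective Hermitian generator.

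\textbf{Step 1: Duhamel formula for the generator.} Write $U_\theta(t)=e^{-iH_\theta t}$. The formula
\begin{align}
\partial_\theta U_\theta(t)=-i\int_0^t e^{-iH_\theta(t-s)}(\partial_\theta H_\theta)e^{-iH_\theta s}\,ds=-iU_\theta(t)\,\mathcal G_\theta(t)
\end{align}
follows by checking that $V(t)\coloneqq U_\theta(t)^\dagger\partial_\theta U_\theta(t)$ satisfies $V(0)=0$ and $\dot V(t)=-ie^{iH_\theta t}(\partial_\theta H_\theta)e^{-iH_\theta t}$. Since $\partial_\theta H_\theta$ is Hermitian, each integrand is Hermitian, so $\mathcal G_\theta(t)$ is Hermitian. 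Differentiating $\rho_\theta(t)=U\rho_0U^\dagger$ then gives
\begin{align}
\partial_\theta\rho_\theta(t)=-i\,U_\theta(t)\bigl[\mathcal G_\theta(t),\rho_0\bigr]U_\theta(t)^\dagger.
\end{align}
The quantum Fisher information is invariant under the fixed unitary $U_\theta(t)$ at the point $\theta$. Hence $F_Q(\rho_\theta(t))$ equals the quantum Fisher information of the local family $e^{-i\phi\mathcal G}\rho_0e^{i\phi\mathcal G}$ at $\phi=0$, with $\mathcal G=\mathcal G_\theta(t)$ held fixed.

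\textbf{Step 2: variance bound.}
\begin{itemize}
\item For a pure state $\rho_0=|\psi\rangle\langle\psi|$, the standard formula gives $F_Q=4\bigl(\langle\psi|\mathcal G^2|\psi\rangle-\langle\psi|\mathcal G|\psi\rangle^2\bigr)\le4\langle\psi|\mathcal G^2|\psi\rangle\le4\|\mathcal G\|^2$.
\item For a mixed state $\rho_0=\sum_jp_j|\psi_j\rangle\langle\psi_j|$, convexity of $F_Q$ gives $F_Q\le\sum_jp_j\,4\mathrm{Var}_{\psi_j}(\mathcal G)\le4\|\mathcal G\|^2$. The decomposition is into fixed states all evolved by the same unitary family, so convexity applies directly.
\end{itemize}
I expect this mixed-state reduction to be the main point requiring care. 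If convexity is not available as a black box, I would instead purify $\rho_0$ with an ancilla and apply the generator $\mathcal G\otimes I$. Monotonicity of $F_Q$ under the partial trace then reduces the claim to the pure-state case, since $\|\mathcal G\otimes I\|=\|\mathcal G\|$.
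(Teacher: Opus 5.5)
Your argument is correct. The paper does not prove this statement: it records it as a standard fact, citing Wootters and Braunstein--Caves, and only uses the consequence $\|\mathcal G_0(t)\|^2=\sin^2(\Lambda t)\le\Lambda t$ in Lemma~\ref{lem:sql-lower-restate}. So there is no internal proof to match. What you give is the standard self-contained derivation of the fact the paper treats as a black box:
\begin{itemize}
\item Braunstein--Caves for $I_{\mathrm{cl}}\le F_Q$;
\item the Duhamel identity $\partial_\theta U_\theta(t)=-iU_\theta(t)\mathcal G_\theta(t)$ (your check via $V=U^\dagger\partial_\theta U$ is right);
\item reduction to the unitary family $e^{-i\phi\mathcal G}\rho_0e^{i\phi\mathcal G}$;
\item the pure-state variance bound, extended to mixed inputs by convexity (with $\theta$-independent weights) or by purification and monotonicity.
\end{itemize}

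Two points are worth making explicit when you write it up. First, the reduction step relies on the (SLD) quantum Fisher information at a point depending only on the pair $(\rho,\partial_\theta\rho)$ there. That is what lets you replace the $\theta$-dependent evolution by the fixed unitary $U_\theta(t)$ evaluated at the point. Second, the whole argument assumes $\rho_0$ and the measurement are $\theta$-independent. This is exactly what the control-free model guarantees, conditionally on the past transcript, when the paper later sums Fisher information over adaptive shots.
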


\subsection{Proof of Lemma~\ref{lem:sql-lower}}
We first prove the lower bound for one coefficient, which is Lemma~\ref{lem:sql-lower}, and we restate it as follows. 

\begin{lemma}[Lower bound for single-parameter estimation]\label{lem:sql-lower-restate}
There is a two-sparse one-qubit family
\begin{align}
H_\theta=\Lambda(\cos\theta Z+\sin\theta X),
\qquad
|\theta|\le\frac{\pi}{6},
\end{align}
such that the following holds. 
For every $0<\epsilon\le\Lambda/4$, any control-free protocol with deterministic total evolution time $T_{\rm tot}$ that estimates the $X$-coefficient to accuracy $\epsilon$ with success probability at least $2/3$ for all $|\theta|\le\pi/6$ must satisfy
\begin{align}
T_{\rm tot}=\Omega\left(\frac{\Lambda}{\epsilon^2}\right).
\end{align}
\end{lemma}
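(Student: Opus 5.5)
We reduce to a two-point (Le Cam) hypothesis test and control the distinguishability with Fact~\ref{fact:tv-fi} and Fact~\ref{fact:bc-bound}. Pick $\theta_0=0$ and $\theta_1$ with $\Lambda\sin\theta_1=3\epsilon$. Since $\epsilon\le\Lambda/4$, we have $\sin\theta_1\le 3/4$; if needed we shrink the constant to $2.5\epsilon$ so that $|\theta_1|\le\pi/6$. The two $X$-coefficients then differ by more than $2\epsilon$, so an $\epsilon$-accurate estimator that succeeds with probability $\ge2/3$ on both yields a test with error $\le1/3$. This forces $\TV(P_{\theta_0},P_{\theta_1})\ge1/3$, where $P_\theta$ is the law of the full (possibly adaptive) transcript. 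Note also $\theta_1=\Theta(\epsilon/\Lambda)$, since $\arcsin$ is comparable to its argument on $[0,3/4]$.

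Next we bound the Fisher information of the transcript. For a control-free protocol, each shot prepares some state (chosen adaptively from past outcomes), evolves for time $t_j$, and measures. By the chain rule for classical Fisher information over adaptive sequences, $I(\xi)\le\sum_j\sup I_{\mathrm{cl},j}(\xi)$, where the sup is over the adaptive choices. Fact~\ref{fact:bc-bound} gives $I_{\mathrm{cl},j}\le 4\|\mathcal G_\xi(t_j)\|^2$.

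The key computation is $\|\mathcal G_\xi(t)\|$. Here $\partial_\theta H_\theta=\Lambda(-\sin\theta Z+\cos\theta X)=:\Lambda N_\theta$, and $N_\theta$ anticommutes with $H_\theta=\Lambda\hat n_\theta\cdot\vec\sigma$. Hence $e^{iH s}N e^{-iHs}=N e^{-2iHs}$, and
\begin{align}
\mathcal G_\xi(t)=\Lambda N_\xi\int_0^t e^{-2iH_\xi s}ds .
\end{align}
The integral is diagonal in the eigenbasis of $H_\xi$, with eigenvalues $(1-e^{\mp2i\Lambda t})/(\pm2i\Lambda)$ of modulus $|\sin(\Lambda t)|/\Lambda$. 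Therefore
\begin{align}
\|\mathcal G_\xi(t)\|=|\sin(\Lambda t)|\le\min\{1,\Lambda t\}.
\end{align}
This yields $4\|\mathcal G\|^2\le4\sin^2(\Lambda t)\le 4\Lambda t$, using $\sin^2x\le|x|$ for all $x$. Summing over shots gives $I(\xi)\le4\Lambda T_{\mathrm{tot}}$ uniformly in $\xi$, because the total time is deterministic.

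Combining with Fact~\ref{fact:tv-fi}:
\begin{align}
\frac13\le\TV\le\frac12\theta_1\sqrt{4\Lambda T_{\mathrm{tot}}}=O\!\left(\frac{\epsilon}{\Lambda}\sqrt{\Lambda T_{\mathrm{tot}}}\right),
\end{align}
so $T_{\mathrm{tot}}=\Omega(\Lambda/\epsilon^2)$.

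The main obstacle is justifying the additivity of Fisher information across adaptive shots, together with the fact that the linear-in-time bound $\sin^2(\Lambda t)\le\Lambda t$ is what matters. This is the step that rules out Heisenberg scaling: without control, each shot contributes at most $\min(1,\Lambda t)\le\Lambda t$ rather than $t^2\Lambda^2$ coherently accumulated. The additivity I would handle by writing the transcript likelihood as a product of conditional one-shot likelihoods and applying the Fisher chain rule, with each conditional term bounded by Fact~\ref{fact:bc-bound} for the chosen input state. Randomized shot durations are no issue, because the deterministic $T_{\mathrm{tot}}$ caps $\sum_j t_j$ pathwise.
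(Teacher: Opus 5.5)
Your route is the same as the paper's. It uses a two-point reduction, the per-shot bound $4\|\mathcal G_\xi(t)\|^2=4\sin^2(\Lambda t)\le 4\Lambda t$ from Fact~\ref{fact:bc-bound}, additivity of Fisher information over adaptive shots, and Fact~\ref{fact:tv-fi}. Your anticommutation computation $e^{iH_\xi s}N_\xi e^{-iH_\xi s}=N_\xi e^{-2iH_\xi s}$, giving $\|\mathcal G_\xi(t)\|=|\sin(\Lambda t)|$ for every $\xi$, is correct. It is slightly cleaner than the paper's computation at $\theta=0$ followed by the rotation remark.

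There is, however, a concrete error in your choice of hypotheses. With $\theta_0=0$, the two $X$-coefficients must differ by strictly more than $2\epsilon$. At separation exactly $2\epsilon$, the constant output $\epsilon$ is $\epsilon$-accurate for both $0$ and $2\epsilon$, so no test can be extracted. You therefore need $\sin\theta_1>2\epsilon/\Lambda$. But the guarantee only holds for $|\theta|\le\pi/6$, which forces $\sin\theta_1\le 1/2$, and these are incompatible at $\epsilon=\Lambda/4$.

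Your repair also fails numerically. With $\Lambda\sin\theta_1=2.5\epsilon$ and $\epsilon=\Lambda/4$, you get $\sin\theta_1=0.625>1/2$, so $\theta_1>\pi/6$, where the protocol is not guaranteed to succeed. The argument as written is invalid for all $\epsilon\in(\Lambda/5,\Lambda/4]$, and for $\epsilon\in(\Lambda/6,\Lambda/4]$ with your original constant $3$. Otherwise the argument is sound.

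The fix is the paper's symmetric pair $\pm\theta_\epsilon$ with $\Lambda\sin\theta_\epsilon=2\epsilon$:
\begin{itemize}
\item Then $\sin\theta_\epsilon\le 1/2$, which keeps $\theta_\epsilon\le\pi/6$.
\item The coefficients $\pm 2\epsilon$ are $4\epsilon$ apart, so the sign of any $\epsilon$-accurate estimate decides the test and $\TV\ge 1/3$.
\item Fact~\ref{fact:tv-fi} over $[-\theta_\epsilon,\theta_\epsilon]$ gives $\frac13\le 2\theta_\epsilon\sqrt{\Lambda T_{\rm tot}}\le\frac{8\epsilon}{\Lambda}\sqrt{\Lambda T_{\rm tot}}$, using $\theta_\epsilon\le 2\sin\theta_\epsilon$ on $[0,\pi/6]$.
\end{itemize}
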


\begin{proof}
At $\theta=0$, we have $H_0=\Lambda Z$ and $\partial_\theta H_\theta\big|_{\theta=0}=\Lambda X$.
For one shot of duration $t$,
\begin{align}
\mathcal G_0(t)=\int_0^t e^{i\Lambda Zs}(\Lambda X)e^{-i\Lambda Zs}ds=\frac{\sin(2\Lambda t)}{2}X-\frac{1-\cos(2\Lambda t)}{2}Y.
\end{align}
Hence, we have $\|\mathcal G_0(t)\|^2=\sin^2(\Lambda t)\le\Lambda t$.
The same bound holds for every $\theta$, since the family
$H_\theta$ is just a rotation of $H_0$.

Now consider the full transcript of an adaptive protocol. Fisher information adds over shots, after conditioning on the previous transcript.
Using Fact~\ref{fact:bc-bound} on each shot, $I(\theta)\le 4\Lambda T_{\rm tot}$ for every $\theta$.
Choose $\theta_\epsilon\in(0,\pi/6]$ by $\Lambda\sin\theta_\epsilon=2\epsilon$.
At $+\theta_\epsilon$, the $X$-coefficient is $+2\epsilon$, while at $-\theta_\epsilon$, it is $-2\epsilon$. 
Therefore, any $\epsilon$-accurate estimator can distinguish these two cases by the sign of its output. 
Since we require the success probability to be at least $2/3$, we need
\begin{align}
\TV(P_{+\theta_\epsilon},P_{-\theta_\epsilon})\ge\frac13 .
\end{align}
On the other hand, Fact~\ref{fact:tv-fi} gives
\begin{align}
\TV(P_{+\theta_\epsilon},P_{-\theta_\epsilon})\le\frac12\int_{-\theta_\epsilon}^{\theta_\epsilon}\sqrt{4\Lambda T_{\rm tot}}d\xi=2\theta_\epsilon\sqrt{\Lambda T_{\rm tot}}.
\end{align}
Since $\sin x\ge x/2$ on $[0,\pi/6]$, we have $\theta_\epsilon\le2\sin\theta_\epsilon=\frac{4\epsilon}{\Lambda}$.
Combining the last three displays gives $\frac13\le\frac{8\epsilon}{\Lambda}\sqrt{\Lambda T_{\rm tot}}$.
This rearranges to
\begin{align}
T_{\rm tot}=\Omega\left(\frac{\Lambda}{\epsilon^2}\right).
\end{align}
\end{proof}

\subsection{Hard family}
Next, we will prove that the log factor is also essential. 
The idea is to hide one label in each of $M$ blocks. 
To recover the support, the learner must find all $M$
hidden labels.

The next lemmas make this intuition precise for adaptive protocols. 
After averaging over the hidden signs, one shot can be represented as follows: it either gives no information about $x$, or it points to one uniformly random block. 
Therefore, even an adaptive learner still has to see all $M$ blocks, which leads to the usual coupon-collector cost of order $M\log M$.

We now define the hard family. Let $M,R\ge2$, and set $r\coloneqq \lceil\log_2 R\rceil$.
Use $M$ marker qubits $m_1,\ldots,m_M$ and $r$ shared label qubits.
Choose $R$ distinct commuting Hermitian Pauli strings $P_1,\ldots,P_R$ on the label register. 
For example, one may take distinct $Z$-type strings.
For $a\in[M]$ and $b\in[R]$, define
\begin{align}
Q_{a,b}\coloneqq \left(\prod_{c<a}Z_{m_c}\right)X_{m_a}P_b.
\end{align}
Let $q_{a,b}\in V$ be the Pauli label of $Q_{a,b}$.
If $a<a'$, then $Q_{a,b}$ and $Q_{a',b'}$ anticommute on marker qubit $m_a$. 
All other tensor factors commute. 
Therefore, any two strings from different blocks anticommute.
And $M\leq 2n+1$ by Proposition 9 of Ref.~\cite{hrubevs2016families}.
For $x=(x_1,\ldots,x_M)\in[R]^M$ and $\sigma=(\sigma_1,\ldots,\sigma_M)\in\{0,1\}^M$, we define
\begin{align}\label{eq:hard-hamiltonian}
H_{x,\sigma}\coloneqq \frac{\Lambda}{\sqrt M}\sum_{a=1}^M(-1)^{\sigma_a}Q_{a,x_a}.
\end{align}
The support label set is $S_x\coloneqq \{q_{a,x_a}:a\in[M]\}$.
The selected Pauli strings pairwise anticommute, so $H_{x,\sigma}^2=\Lambda^2I$ and $\|H_{x,\sigma}\|=\Lambda$.
The signs $\sigma_a$ are included to hide the block signs. 
They do not change the support $S_x$. 
For $g\in\{0,1\}^M$, define
\begin{align}
R_g\coloneqq \prod_{a=1}^M Z_{m_a}^{g_a}.
\end{align}
Then
\begin{align}
R_gQ_{a,b}R_g^\dagger=(-1)^{g_a}Q_{a,b},\qquad
R_gH_{x,\sigma}R_g^\dagger=H_{x,\sigma+g},
\end{align}
where addition is over $\mathbb F_2$.

The signs $\sigma_a$ do not change the support $S_x$. 
They only choose whether the active Pauli string in block $a$ has a plus sign or a minus sign. 
The identity above says that changing $\sigma$ to $\sigma+g$ is the same as conjugating the whole experiment by $R_g$.

We use this to remove only the irrelevant sign information from the proof. 
By observed data, we mean the list of all settings chosen so far and all outcomes observed so far. 
If applying the same $R_g$ to every shot turns one observed data set into another, we treat these two data sets as the same for the proof. 
We call this the sign-erased observed data. 
It forgets only the hidden plus/minus convention. 
It does not forget the hidden labels $x_a$ that determine the support.

\subsection{Proof of Theorem~\ref{thm:unified-log}}

\begin{lemma}[Averaging over hidden signs]\label{lem:gauge-sym}
Under the prior $x\sim{\rm Unif}([R]^M)$ and $\sigma\sim{\rm Unif}(\{0,1\}^M)$, every control-free protocol can be replaced by another control-free protocol with the same average success probability for recovering $x$, such that its next shot and final answer depend only on the observed data modulo the common conjugation by $R_g$.
\end{lemma}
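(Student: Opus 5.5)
The plan is a standard symmetrization (twirling) argument over the abelian group $G=\{0,1\}^M$, acting by $\sigma\mapsto\sigma+g$ and, on the experiment side, by conjugation with $R_g$. Given an arbitrary control-free protocol $\mathcal A$, I will build a randomized protocol $\mathcal A'$ as follows. First, $\mathcal A'$ draws a uniformly random $g\in\{0,1\}^M$ privately. Then it simulates $\mathcal A$, with every requested shot (input product state $\rho$, time $t$, product measurement $\{E_k\}$) replaced by the conjugated shot $(R_g\rho R_g^\dagger, t, \{R_gE_kR_g^\dagger\})$. Since $R_g$ is a product of single-qubit $Z$'s, the conjugated shot is still product-input, product-measurement and control-free. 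By the identity $R_gH_{x,\sigma}R_g^\dagger=H_{x,\sigma+g}$, the outcome distribution of the conjugated shot under $H_{x,\sigma}$ equals that of the original shot under $H_{x,\sigma+g}$.

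Step one is to show that $\mathcal A'$ has the same average success probability as $\mathcal A$. Conditioned on $g$, running $\mathcal A'$ on $(x,\sigma)$ is distributed exactly as running $\mathcal A$ on $(x,\sigma+g)$. The support $S_x$ does not depend on $\sigma$, so the correctness criterion is unchanged. Under the uniform prior on $\sigma$, the shifted $\sigma+g$ is again uniform and independent of $x$. Averaging over $g$ therefore gives equal success probabilities, and deterministic total evolution time is preserved because shot durations are unchanged.

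Step two is to show that $\mathcal A'$ depends only on the sign-erased data. Let the observed data of $\mathcal A'$ be the conjugated settings together with the outcomes. For a data set $D$, write $g'\cdot D$ for its image under common conjugation by $R_{g'}$. I will show that the joint law of the data and the next decision is invariant under this action. More precisely, I will check that the conditional law of $\mathcal A'$'s next setting, given its data, is a function of the orbit $G\cdot D$ only, and likewise for the final answer. The route is to rewrite $\mathcal A'$ equivalently: at each step it holds a posterior over the hidden $g$ given the transcript, which is $G$-equivariant. After re-randomizing by a fresh uniform element, this yields a policy that is invariant on orbits.

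Concretely, I expect a cleaner construction to suffice. Define $\mathcal A''$ that, given data $D$, picks a uniformly random representative $D'\in G\cdot D$ and feeds it into the equivariant map $D'\mapsto(g'\text{-conjugate of }\mathcal A(D'))$, averaging over the consistent $g'$. Its decision distribution is then manifestly a function of the orbit only. Equality of success probabilities follows from the same change of variables as in step one, using that the prior on $(x,\sigma)$ is $G$-invariant and that the success event depends only on $x$.

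The main obstacle is this second step. The difficulty is making precise that the re-randomized policy is well defined when the stabilizer of $D$ in $G$ is nontrivial, and that adaptivity does not break the change of variables. I will handle it by induction on the shot index, carrying the invariant that the joint law of $(x, D_{1:j})$ under $\mathcal A''$ equals its law under $\mathcal A'$ pushed to orbits. Each inductive step uses only the covariance identity $R_gH_{x,\sigma}R_g^\dagger=H_{x,\sigma+g}$ and the invariance of the uniform prior on $\sigma$.
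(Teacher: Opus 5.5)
Your step one matches the paper's argument: draw one shared uniform $g$, run the $R_g$-conjugated protocol, and change variables $\sigma\mapsto\sigma+g$ under the uniform prior. The gap is in step two.

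Your opening sentence there, that the joint law of (data, next decision) is invariant under the diagonal action, is the right statement. The ``more precisely'' reformulation is not. You claim the conditional law of $\mathcal A'$'s next setting given $D$ is a function of the orbit $G\cdot D$ alone, and this is false. Suppose $\mathcal A$ prepares $|+\rangle$ on $m_1$ in two consecutive shots. Then $\mathcal A'$ prepares $R_g|+\rangle$ both times. After a first shot that used $|-\rangle$, the second shot uses $|-\rangle$ with certainty. After the conjugate history (first shot $|+\rangle$, same outcome), it uses $|+\rangle$. Both histories lie in one orbit.

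What holds, and what the paper proves, is equivariance: $\Pr[\text{next}=A\mid D]=\Pr[\text{next}=hA\mid hD]$, where $hA$ is the $R_h$-conjugate of the setting $A$. Only the final answer, a guess of $x$ that conjugation does not touch, is genuinely orbit-invariant.

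Forcing invariance does not preserve success. This applies both to re-randomizing with a fresh uniform element and to your $\mathcal A''$, which draws a uniformly random representative $D'\in G\cdot D$ at every step. Different shots are then conjugated by independent group elements, so the run is no longer $\mathcal A$ on any single $H_{x,\sigma'}$. Moreover, $\mathcal A$ gets evaluated on histories $g'D$ it would never produce. Three consequences follow:
\begin{enumerate}
\item The change of variables from step one has nothing to act on.
\item Signals that need a common frame across shots are scrambled. An example is the first-order terms of $e^{-iH\tau}$, which carry the sign $(-1)^{\sigma_a}$.
\item Your proposed inductive invariant already fails at $j=2$ in the example above. Whether the two shots used the same state on $m_1$ is an orbit invariant. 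It is deterministic under $\mathcal A'$ but uniformly random under $\mathcal A''$.
\end{enumerate}

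The repair is to keep the single shared $g$ and average over its posterior, not the uniform law. Let $\pi_{\mathcal A}$ be $\mathcal A$'s policy and $A_j$ the recorded settings. Then
$\Pr[g\mid D]\propto\prod_j\pi_{\mathcal A}(gA_j\mid gD_{<j})$.
This is independent of $(x,\sigma)$, because outcome likelihoods given the executed settings do not involve $g$. Hence $\mathcal A'$ is a legitimate policy in $D$. The posterior also satisfies $\Pr[g\mid hD]=\Pr[g+h\mid D]$. Summing $\Pr[g\mid D]\,\pi_{\mathcal A}(gA\mid gD)$ over $g$ gives the equivariance identity, and the same computation shows the answer rule is invariant.

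Then finish as the paper does. Fix a representative $D^\circ$ of each orbit. When $D=gD^\circ$, sample the shot prescribed at $D^\circ$ and conjugate it by $R_g$. Your stabilizer concern is harmless here. Equivariance with $hD=D$ shows the law at $D$ is invariant under the stabilizer of $D$, so the choice of $g$ with $D=gD^\circ$ does not matter. This equivariant description is the precise sense in which the next shot ``depends only on the data modulo the common conjugation.''
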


\begin{proof}
Let $\Pi$ be any control-free protocol. 
For $g\in\{0,1\}^M$, let $\Pi^g$ be the protocol obtained from $\Pi$ by conjugating every prepared state and every measured observable by $R_g$.
Let $\mathrm{Succ}(\Pi)$ denote the average success probability of $\Pi$ under the product prior on $(x,\sigma)$. 
Since $R_gH_{x,\sigma}R_g^\dagger=H_{x,\sigma+g}$, we have
\begin{align}
\begin{split}
\mathrm{Succ}(\Pi^g)
&=\frac{1}{R^M2^M}\sum_{x\in[R]^M}\sum_{\sigma\in\{0,1\}^M}\Pr_{\Pi^g,x,\sigma}[\widehat x=x]  \\
&=\frac{1}{R^M2^M}\sum_{x\in[R]^M}\sum_{\sigma\in\{0,1\}^M}\Pr_{\Pi,x,\sigma+g}[\widehat x=x]  \\
&=\frac{1}{R^M2^M}\sum_{x\in[R]^M}\sum_{\sigma'\in\{0,1\}^M}\Pr_{\Pi,x,\sigma'}[\widehat x=x]  \\
&=\mathrm{Succ}(\Pi).
\end{split}
\end{align}
Here we used that $\sigma\mapsto\sigma+g$ is a bijection of $\{0,1\}^M$.
Now define the averaged protocol
\begin{align}
\overline\Pi:=2^{-M}\sum_{g\in\{0,1\}^M}\Pi^g .
\end{align}
Equivalently, $\overline\Pi$ first samples $g$ uniformly and then runs $\Pi^g$. 
By the calculation above, we have $\mathrm{Succ}(\overline\Pi)=\mathrm{Succ}(\Pi)$.

It remains to record the symmetry of $\overline\Pi$. 
Let $D$ denote the observed data after some number of shots, and let $gD$ denote the data obtained by applying the same conjugation $R_g$ to all shots. 
By the construction of the average,
\begin{align}
\Pr_{\overline\Pi}[\text{next shot}=A\mid D]=\Pr_{\overline\Pi}[\text{next shot}=gA\mid gD].
\label{eq:gauge-invariant-step}
\end{align}
The same identity holds for the final answer rule.
Therefore, the protocol can be described using only the equivalence class $[D]:=\{gD:g\in\{0,1\}^M\}$.
Indeed, we can choose one representative $D^\circ$ from each class $[D]$. 
If the actual data is $D=gD^\circ$, we sample the next shot for $D^\circ$ and then conjugate that shot by $R_g$. 
Eq.~\eqref{eq:gauge-invariant-step} shows that this gives the same law as $\overline\Pi$. 
The same construction applies to the final answer. 
Thus, the new protocol has the same average success probability and uses only the observed data modulo the common conjugation by $R_g$.
\end{proof}

We now analyze one shot after the common sign flip has been erased. 
The important observation is that the hidden signs are still uniform after conditioning on the sign-erased observed data. 
So the next shot has a simple form: it either gives no information about $x$, or it points to one uniformly random block.

The learner does not observe the index $J$ below. 
It is only a way for us to write the distribution of one shot. 
Later, we can reveal all $J_t$'s to the learner for free, which can only make the learner stronger.

\begin{lemma}[One shot sees at most one block]\label{lem:orbit-mixture}
Consider the protocol from Lemma~\ref{lem:gauge-sym}. 
Condition on the past sign-erased observed data and on the canonical representative of the next chosen shot, and fix $x$.
Suppose this shot prepares $\rho$, evolves for time $\tau$, and measures $\{M_y\}_y$. 
Define
\begin{align}
c_\tau\coloneqq \cos(\Lambda\tau),\qquad
s_\tau\coloneqq \sin(\Lambda\tau).
\end{align}
Then the conditional distribution of the next outcome can be written as
\begin{align}
p_x(y)=c_\tau^2 R(y)+\frac{s_\tau^2}{M}\sum_{a=1}^M Q_{a,x_a}(y),
\end{align}
where $R(y)\coloneqq \Tr(M_y\rho)$ and $Q_{a,b}(y)\coloneqq \Tr(M_yQ_{a,b}\rho Q_{a,b})$.
Equivalently, we may first draw a hidden index $J$ with $\Pr[J=0]=c_\tau^2$ and $\Pr[J=a]=\frac{s_\tau^2}{M}$ for $a\in[M]$.
If $J=0$, the outcome is drawn from $R$, which does not depend on $x$. 
If $J=a\neq0$, the outcome distribution depends on $x$ only through $x_a$.
\end{lemma}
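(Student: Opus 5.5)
The plan is to exploit the anticommutation structure of the hard family. That structure makes the single-shot evolution an explicit quadratic polynomial in $H_{x,\sigma}$. Averaging over the hidden signs then kills every term that is linear in a block sign, and kills every cross term between two different blocks.

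First I would use $H_{x,\sigma}^2=\Lambda^2 I$, established right after Eq.~\eqref{eq:hard-hamiltonian}, to write
\begin{align}
e^{-iH_{x,\sigma}\tau}=c_\tau I-\frac{i s_\tau}{\Lambda}H_{x,\sigma},
\end{align}
and hence
\begin{align}
e^{-iH_{x,\sigma}\tau}\rho\, e^{iH_{x,\sigma}\tau}=c_\tau^2\rho+\frac{s_\tau^2}{\Lambda^2}H_{x,\sigma}\rho H_{x,\sigma}-\frac{ic_\tau s_\tau}{\Lambda}[H_{x,\sigma},\rho].
\end{align}
Next I substitute $H_{x,\sigma}=\frac{\Lambda}{\sqrt M}\sum_a(-1)^{\sigma_a}Q_{a,x_a}$. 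The commutator term is a sum of terms of the form $(-1)^{\sigma_a}(\cdot)$. The quadratic term expands as
\begin{align}
\frac{1}{M}\sum_{a,a'}(-1)^{\sigma_a+\sigma_{a'}}Q_{a,x_a}\rho\, Q_{a',x_{a'}}.
\end{align}
If $\sigma$ is uniform on $\{0,1\}^M$ and independent of the shot setting, then $\E[(-1)^{\sigma_a}]=0$ removes the commutator term. Likewise $\E[(-1)^{\sigma_a+\sigma_{a'}}]=\mathbbm 1[a=a']$ leaves exactly $\frac{1}{M}\sum_a Q_{a,x_a}\rho Q_{a,x_a}$. Applying $\Tr(M_y\,\cdot)$ gives the claimed $p_x(y)$, with $R(y)=\Tr(M_y\rho)$ and $Q_{a,b}(y)=\Tr(M_yQ_{a,b}\rho Q_{a,b})$. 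The mixture description with the hidden index $J$ then follows by reading off the coefficients. These coefficients are $c_\tau^2$ and $s_\tau^2/M$, they sum to $1$, and each is nonnegative. Each $Q_{a,x_a}(y)$ is a valid distribution in $y$ depending on $x$ only through $x_a$, while $R$ does not depend on $x$ at all.

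The main obstacle is justifying that the effective sign vector is uniform after conditioning on the past sign-erased data and the canonical representative of the shot. I would argue this through the symmetry from Lemma~\ref{lem:gauge-sym}. Suppose the actual data are $D=gD^\circ$ and the actual shot is the $R_g$-conjugate of the canonical shot $A^\circ$ (preparing $\rho$ and measuring $\{M_y\}$). Since $R_gH_{x,\sigma}R_g^\dagger=H_{x,\sigma+g}$, running $R_g A^\circ R_g^\dagger$ against $H_{x,\sigma}$ yields the same outcome law, in canonical coordinates, as running $A^\circ$ against $H_{x,\sigma+g}$. So it suffices to show that $\sigma':=\sigma+g$ is uniform given $([D],A^\circ,x)$.

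For that I would use the likelihood identity $\Pr_{x,\sigma}[D]=\Pr_{x,\sigma+g}[gD]$, which follows from the same conjugation together with the invariance Eq.~\eqref{eq:gauge-invariant-step} of the protocol's rules. Summing over the orbit $[D]$ then shows that the joint weight of $(\sigma,g)$ depends only on $\sigma+g$ through a factor independent of it. Combined with the uniform prior on $\sigma$, this makes $\sigma'$ uniform and independent of $[D]$, $A^\circ$ and $x$. A Bayes computation over the finite group $\{0,1\}^M$ should make this precise. Once it is established, the averaging in the previous paragraph applies verbatim with $\sigma'$ in place of $\sigma$.
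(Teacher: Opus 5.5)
Your expansion and sign-averaging (first two paragraphs) are correct and are the same computation as in the paper's proof. The gap is in the last paragraph. It is not a missing Bayes computation: the uniformity you claim is false.

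Carry out the computation you propose. Your identity $\Pr_{x,\sigma}[gD^\circ]=\Pr_{x,\sigma+g}[D^\circ]$ gives $\Pr[\sigma,g\mid[D],x]\propto\Pr_{x,\sigma+g}[D^\circ]$. This depends only on $\sigma+g$, but it is not constant in $\sigma+g$. Summing over $g$ at fixed $\sigma$ shows that $\sigma$ itself is uniform given $[D]$. That is all the paper's proof establishes; it then averages the canonical shot against $H_{x,\sigma}$, which silently takes $g=0$. You correctly note that $\sigma'=\sigma+g$ is what governs the executed shot. Its conditional law is proportional to $\Pr_{x,\sigma'}[D^\circ]$, i.e.\ the posterior of the signs in the canonical frame. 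So $\E[(-1)^{\sigma'_a}]$ and $\E[(-1)^{\sigma'_a+\sigma'_{a'}}]$ need not vanish, and the terms $-\tfrac{ic_\tau s_\tau}{\sqrt M}\E[(-1)^{\sigma'_a}]\Tr(M_y[Q_{a,x_a},\rho])$ survive.

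A two-shot non-adaptive example shows that the displayed formula itself fails. Prepare all markers in $|0\rangle$ and the label register in a joint eigenstate of the $P_b$ on which $P_{x_1}$ has eigenvalue $p$; evolve for $\tau$ and measure $Y_{m_1}$. Expanding as in your second paragraph gives $\Pr[Y_{m_1}=+1\mid\sigma]=\tfrac12\bigl(1-\beta(-1)^{\sigma_1}p\bigr)$ with $\beta=\sin(2\Lambda\tau)/\sqrt M$. Since $R(\pm1)=Q_{a,b}(\pm1)=\tfrac12$ for all $a,b$, your formula predicts a fair coin.

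Now repeat the shot with a second label eigenstate, with eigenvalue $p'$. Whether the two outcomes agree is invariant under the common conjugation, so it is visible in the sign-erased data. Its probability is $\tfrac12(1+\beta^2pp')$, not $\tfrac12$. Equivalently, the second outcome, read in the canonical frame fixed by the first, is biased. If the two label states differ in one suitable bit, $pp'$ depends on $x_1$, even though $Q_{1,b}\rho Q_{1,b}$ does not depend on $b$. So the sign-erased transcript carries information about $x_1$ that the sequential mixture $c_\tau^2R+\tfrac{s_\tau^2}{M}\sum_aQ_{a,x_a}$ cannot produce.

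The statement therefore holds for the first shot, or whenever the past canonical likelihood is independent of the signs, but not sequentially. The downstream coupon-collector argument would have to control these first-order interference terms some other way.
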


\begin{proof}
First, we check why the signs are still uniform. 
Fix two sign strings $s$ and $s'$. 
Let $g=s+s'$. 
Changing $\sigma=s$ to $\sigma=s'$ is the same as applying the sign flip $R_g$ to all observed data. 
But this common sign flip is erased in the sign-erased observed data. 
Therefore, $s$ and $s'$ give the same probability to the same sign-erased observed data. 
Since the prior on $\sigma$ is uniform, the conditional distribution of $\sigma$ is still uniform. 
The canonical representative of the next shot is chosen from the same sign-erased observed data, so conditioning on it does not change this.

Since $H_{x,\sigma}^2=\Lambda^2I$, we have
\begin{align}
e^{-iH_{x,\sigma}\tau}=c_\tau I-i\frac{s_\tau}{\sqrt M}\sum_{a=1}^M (-1)^{\sigma_a}Q_{a,x_a}.
\end{align}
Now average the next-outcome probability over the uniform signs $\sigma$. 
Every term with one sign factor $(-1)^{\sigma_a}$ averages to zero. 
Every cross term with two different sign factors $(-1)^{\sigma_a+\sigma_{a'}}$, where $a\neq a'$, also averages to zero. 
The only terms left are
\begin{align}
c_\tau^2\Tr(M_y\rho)+\frac{s_\tau^2}{M}\sum_{a=1}^M\Tr(M_yQ_{a,x_a}\rho Q_{a,x_a}).
\end{align}
This is the displayed formula. 
Since $R$ and each $Q_{a,b}$ are probability distributions over $y$, the same formula can be generated by first drawing $J$ and then drawing the outcome from the corresponding distribution.
\end{proof}
For the adaptive argument, we track the history after forgetting the hidden signs. 
At round $t$, the protocol's next shot is determined by this sign-erased history and its private randomness. 
We write this shot using a fixed canonical representative of its orbit under the common conjugation $R_g$. 
Let $\mathcal F_t$ collect this sign-erased history up to time $t$, the private randomness, and the auxiliary labels $J_1,\ldots,J_t$.

Conditioned on $\mathcal F_{t-1}$, on the chosen canonical shot 
$(\rho_t,\tau_t,\{M_{t,y}\}_y)$, and on $x$, the hidden signs remain uniform. 
Hence the next outcome can be coupled with a latent variable $J_t$ such that
\begin{align}
\Pr[J_t=0\mid\mathcal F_{t-1}]
=\cos^2(\Lambda\tau_t),
\qquad
\Pr[J_t=a\mid\mathcal F_{t-1}]
=\frac{\sin^2(\Lambda\tau_t)}{M},
\quad a\in[M].
\end{align}
If $J_t=0$, the conditional outcome distribution is independent of $x$. 
If $J_t=a$, the conditional outcome distribution depends on $x$ only through $x_a$.

Now reveal the hidden indices $J_t$ to the learner after all shots. 
This gives the learner extra information. 
So a lower bound for this stronger learner is also a lower bound for the original learner.

Let $U$ be the number of blocks that are never selected by an informative hidden index
\begin{align}
U\coloneqq\left|\{a\in[M]:J_t\neq a\text{ for every shot }t\}\right|.
\end{align}
Thus, $U$ is the number of unseen blocks.

\begin{lemma}[Unseen blocks must be guessed]\label{lem:unseen-block}
Consider the stronger experiment where the learner also receives all hidden indices $J_t$. 
Let $P_{\rm avg}$ be the average probability of exactly recovering $x$ under the uniform prior. Then
\begin{align}
P_{\rm avg}\le\E[R^{-U}]\le\Pr[U=0]+R^{-1}.
\end{align}
\end{lemma}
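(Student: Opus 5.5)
The plan is to condition on everything the strengthened learner sees and show that the coordinates of $x$ in unseen blocks remain uniform and independent of the final transcript. Set $\mathcal G\coloneqq\mathcal F_T$, the full sign-erased history together with the private randomness and all revealed indices $J_1,\ldots,J_T$. By Lemma~\ref{lem:orbit-mixture}, the $J_t$ can be generated so that, conditioned on $\mathcal F_{t-1}$ and the next canonical shot, the law of $J_t$ does not depend on $x$: it is determined by $\tau_t$ alone. Given $J_t$, the outcome law is either independent of $x$ (when $J_t=0$) or depends only on $x_{J_t}$.

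The key step is a likelihood factorization. Multiplying these conditional laws over the rounds, the probability of the full transcript (history, $J$'s, outcomes) given $x$ factors as
\begin{align}
\Pr[\text{transcript}\mid x]=\Gamma(\text{transcript})\prod_{a\in[M]}\prod_{t:J_t=a}Q^{(t)}_{a,x_a}(y_t),
\end{align}
where $\Gamma$ does not depend on $x$. The adaptive choice of the next shot depends only on the past sign-erased data and on private randomness, so it contributes only $x$-independent factors. Consequently, for any block $a$ that never appears among the $J_t$, the likelihood is constant in $x_a$. Because the prior on $x$ is a uniform product, Bayes' rule then gives a posterior under which the unseen coordinates $\{x_a: a\ \text{unseen}\}$ are i.i.d.\ uniform on $[R]$ and independent of the seen coordinates. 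Note that $U$ itself is a function of the $J$'s, so it is $\mathcal G$-measurable.

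Any estimator $\widehat x$ is a function of $\mathcal G$ together with extra independent randomness. Hence
\begin{align}
\Pr[\widehat x=x\mid\mathcal G]\le\prod_{a\ \text{unseen}}\max_b\Pr[x_a=b\mid\mathcal G]=R^{-U},
\end{align}
and taking expectations yields $P_{\rm avg}\le\E[R^{-U}]$. For the second inequality, split on the event $\{U=0\}$: on it we bound $R^{-U}\le1$, and on its complement $U\ge1$, so $R^{-U}\le R^{-1}$. This gives $\E[R^{-U}]\le\Pr[U=0]+R^{-1}$.

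The main obstacle is justifying the factorization rigorously under adaptivity. Two points need care. First, the coupling of the latent $J_t$ must be carried out round by round, conditionally on $\mathcal F_{t-1}$, so that the law of $J_t$ really is independent of $x$. Second, the sign-averaging in Lemma~\ref{lem:orbit-mixture} has to remain valid after conditioning on the earlier $J$'s. The approach is to fold the $J$'s into $\mathcal F_{t-1}$ and re-run the uniformity-of-signs argument. This works because the $J$'s are auxiliary variables whose conditional law is sign-independent, so adding them to the conditioning cannot break the uniformity of $\sigma$.
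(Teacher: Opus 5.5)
Your proof is correct and is essentially the paper's own argument. Your likelihood factorization is an explicit version of the paper's induction that, once $J_1,\ldots,J_t$ are fixed, the sign-erased transcript depends on $x$ only through the hit coordinates, and both arguments then use uniformity of the unseen $x_a$ to get the $R^{-U}$ bound and split on $\{U=0\}$. Like the paper, though, you take the round-by-round coupling of Lemma~\ref{lem:orbit-mixture} as an input, and your closing justification for it (the absolute signs $\sigma$ stay uniform given the sign-erased history) does not by itself deliver that coupling: in the canonical frame the next outcome is governed by $\sigma+g$, where the actual history is $gD^\circ$, and its posterior is proportional to $\Pr[D^\circ\mid x,\sigma+g]$, which need not be uniform (e.g.\ two shots that prepare $|0\rangle$ on all markers and a $Z$-basis label state $|\ell_t\rangle$ and measure $Y_{m_1}$ give $\E[y_1y_2\mid x]=\tfrac{4}{M}c_{\tau_1}s_{\tau_1}c_{\tau_2}s_{\tau_2}\,p_{x_1}(\ell_1)p_{x_1}(\ell_2)$, with $p_b(\ell)=\pm1$ the eigenvalue of $P_b$ on $|\ell\rangle$, whereas the mixture predicts independent uniform outcomes), so the weak point is that shared premise rather than the present lemma's argument.
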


\begin{proof}
Inducting over the shots, once the labels $J_1,\ldots,J_t$ are fixed, the sign-erased transcript up to time $t$ can depend on $x$ only through the coordinates that have been hit, namely those $x_a$ with $J_s=a$ for some $s\le t$. 
Indeed, the next shot is chosen from the previous sign-erased history. 
If $J_t=0$, the new outcome carries no information about $x$; if $J_t=a$, it can depend only on $x_a$.
Since $x_a$ was uniform on $[R]$, it remains uniform after conditioning on the revealed sign-erased data and the labels $J_t$.  
Thus, if $U$ blocks are unseen, the probability of guessing all their labels correctly is at most $R^{-U}$. 
Averaging gives
\begin{align}
P_{\rm avg}\le \E[R^{-U}].
\end{align}
Also,
\begin{align}
R^{-U}\le\mathbbm 1[U=0]+R^{-1}\mathbbm 1[U>0],
\end{align}
which gives the second inequality.
\end{proof}

The next lemma is the coupon-collector estimate, which indicates that fewer than a constant times $M\log M$ informative selections are not enough to see all $M$ blocks with good probability.

\begin{lemma}[Coupon collector]\label{lem:coupon}
Let $K\coloneqq|\{t:J_t\neq0\}|$ be the number of informative selections, and set $k_0\coloneqq \lfloor\frac14M\log M\rfloor$.
Then
\begin{align}
\Pr[U=0,\ K<k_0]\le M^{-1/2}.
\end{align}
\end{lemma}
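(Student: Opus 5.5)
The approach is to reduce the adaptive process to a classical coupon collector with i.i.d.\ uniform coupons. Then the probability that $k_0$ uniform draws cover all of $[M]$ is bounded by a second-moment argument on the number of uncovered coupons.

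\textbf{Step 1: informative selections are i.i.d.\ uniform.} By the display preceding Lemma~\ref{lem:unseen-block}, for every round $t$ and every $a\in[M]$ we have $\Pr[J_t=a\mid\mathcal F_{t-1}]=\sin^2(\Lambda\tau_t)/M$, whatever the adaptively chosen $\tau_t$ is. Hence, conditioned on $\mathcal F_{t-1}$ and on the event $J_t\neq0$, the index $J_t$ is uniform on $[M]$. Let $B_1,B_2,\ldots$ be the successive nonzero values of $J_t$. After the protocol stops, append fresh independent uniform draws so that the sequence is infinite. Applying the conditional uniformity at each informative round (an optional-stopping/strong-Markov style argument over the filtration $\mathcal F_t$) shows that $B_1,B_2,\ldots$ are i.i.d.\ uniform on $[M]$. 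This holds even though the number $K$ of genuine informative rounds is random and adaptively influenced. Adaptivity only controls \emph{whether} a round is informative, never \emph{which} block it selects.

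\textbf{Step 2: inclusion of events.} On $\{U=0,\ K<k_0\}$, the first $K$ entries $B_1,\ldots,B_K$ already hit every block. Since $K<k_0$, the prefix $B_1,\ldots,B_{k_0}$ covers $[M]$. Therefore $\Pr[U=0,K<k_0]\le\Pr[N_{k_0}=0]$, where $N_k$ is the number of blocks missed by $k$ i.i.d.\ uniform draws. If $k_0=0$ (for instance $M=2$), the left side is zero and there is nothing to prove.

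\textbf{Step 3: second moment.} Write $N_k=\sum_a \mathbbm 1[a \text{ unseen}]$. We have $\E N_k=M(1-1/M)^k$, and the indicators are negatively correlated, since $(1-2/M)^k\le(1-1/M)^{2k}$. Hence $\Var N_k\le \E N_k$, and Chebyshev gives $\Pr[N_k=0]\le \Var N_k/(\E N_k)^2\le 1/\E N_k$. Using $1-1/M\ge e^{-1/(M-1)}$ and $k_0\le \tfrac14 M\log M$, we get $(1-1/M)^{k_0}\ge \exp\!\big(-\tfrac{M\log M}{4(M-1)}\big)\ge M^{-1/2}$ for $M\ge2$. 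So $\E N_{k_0}\ge M^{1/2}$ and the bound $M^{-1/2}$ follows.

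\textbf{Main obstacle.} The only delicate point is Step 1: making rigorous that the informative block labels form an i.i.d.\ uniform sequence despite adaptive, history-dependent choices of $\tau_t$ and a random stopping count. I would handle it by computing, by induction on $j$, the joint law $\Pr[B_1=b_1,\ldots,B_j=b_j,\ K\ge j]$ via the tower property over $\mathcal F_t$. This factorizes as $M^{-j}\Pr[K\ge j]$, and the padding draws take care of indices beyond $K$.
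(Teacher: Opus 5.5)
Your argument is the paper's proof. You couple the informative block labels to a padded i.i.d.\ uniform list (the paper's pre-drawn $A_1,A_2,\ldots$) and use the inclusion $\{U=0,\,K<k_0\}\subseteq\{B_1,\ldots,B_{k_0}\text{ cover }[M]\}$. You then finish with Chebyshev using $\Var N_{k_0}\le\E N_{k_0}$ and $\E N_{k_0}\ge\sqrt M$. Your Step~3 is correct and a bit more careful than the paper's: you justify negative correlation explicitly, and $1-1/M\ge e^{-1/(M-1)}$ gives $\E N_{k_0}\ge\sqrt M$ for all $M\ge2$ rather than only $M\ge16$. At the level of your Step~1 text, the coupling matches what the paper simply asserts.

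The one step that fails as written is the identity you propose to make Step~1 rigorous. For $j\ge2$, the claim $\Pr[B_1=b_1,\ldots,B_j=b_j,\ K\ge j]=M^{-j}\Pr[K\ge j]$ is false for adaptive learners. The outcomes can reveal which block was hit: a $Z$-basis readout of the markers shows which $m_a$ flipped. Whether the learner keeps running nonzero-time shots may then depend on that. For example, a learner that stops after its first informative shot unless it hit block $1$ has $\Pr[B_1=2,\,K\ge2]=0$ while $\Pr[K\ge2]>0$. So $\{K\ge j\}$ is correlated with $(B_1,\ldots,B_{j-1})$, and your induction cannot close.

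Write $B_{1:j}=(B_1,\ldots,B_j)$. What the tower property actually gives is $\Pr[B_{1:j}=b_{1:j},\,K\ge j]=\frac1M\Pr[B_{1:j-1}=b_{1:j-1},\,K\ge j]$. To see this, note that the event ``$B_{1:j-1}=b_{1:j-1}$, exactly $j-1$ informative shots occurred before round $t$, and round $t$ is executed'' is $\mathcal F_{t-1}$-measurable. Also $\Pr[J_t=b_j\mid\mathcal F_{t-1}]=\frac1M\Pr[J_t\neq0\mid\mathcal F_{t-1}]$, and you then sum over $t$. On $\{K<j\}$, $B_j$ is a fresh padding draw, so $\Pr[B_{1:j}=b_{1:j},\,K<j]=\frac1M\Pr[B_{1:j-1}=b_{1:j-1},\,K<j]$. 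Adding the two gives $\Pr[B_{1:j}=b_{1:j}]=\frac1M\Pr[B_{1:j-1}=b_{1:j-1}]$, hence $M^{-j}$ by induction. So the padded sequence is i.i.d.\ uniform, which is all Step~2 uses. Run the induction on the unconditional law of the padded sequence, not on its joint law with $\{K\ge j\}$.
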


\begin{proof}
At any shot, after conditioning on all past revealed data and on the chosen shot, Lemma~\ref{lem:orbit-mixture} gives
\begin{align}
\Pr[J_t=a\mid J_t\neq0,\text{ past data and chosen shot}]=\frac1M,\qquad a\in[M].
\end{align}
Equivalently, draw an infinite list $A_1,A_2,\ldots$ of independent uniform labels in $[M]$ in advance. 
Each time an informative shot occurs, use the next unused label in this list. 
This is valid because, even after conditioning on the whole past and on the fact that the shot is informative, Lemma~\ref{lem:orbit-mixture} says that the revealed block label is still uniform on $[M]$. 
Thus, adaptivity may decide when informative shots occur, but not which block labels they reveal.
If the real process has fewer than $k_0$ informative selections, we may append extra independent uniform draws. 
Appending extra draws can only help to cover more blocks.

If $U=0$ and $K<k_0$, then all $M$ blocks have already appeared among the first fewer than $k_0$ informative draws. 
Therefore, the first $k_0$ independent uniform draws must also contain all $M$ blocks. 
Hence,
\begin{align}
\Pr[U=0,\ K<k_0]\le\Pr[\{A_1,\ldots,A_{k_0}\}=[M]].
\end{align}

Let $U_{k_0}$ be the number of blocks missed by the first $k_0$ uniform draws. 
Then
\begin{align}
\E[U_{k_0}]=M\left(1-\frac1M\right)^{k_0}.
\end{align}
For $k_0\le M\log M/4$ and $M\ge16$, this expectation is at least $\sqrt M$. 
A standard second-moment bound for coupon collection gives $\Var(U_{k_0})\le \E[U_{k_0}]$.
Therefore, by Chebyshev's inequality,
\begin{align}
\Pr[U_{k_0}=0]\le\Pr\left[|U_{k_0}-\E [U_{k_0}]|\ge \E [U_{k_0}]\right]\le\frac{\Var(U_{k_0})}{(\E [U_{k_0}])^2}\le M^{-1/2}.
\end{align}
This proves the claim.
\end{proof}

We now put the above pieces together. 
We choose $M$ so that each active coefficient has size about $\epsilon$. 
Then support recovery is the same as recovering the hidden vector $x$.
Now, we prove Theorem~\ref{thm:unified-log}, which is restated as follows.

\begin{theorem}[Lower bound for $\epsilon$-detectable support]\label{thm:unified-log-support}
There is an absolute constant $c>0$ such that the following holds. 
Let $0<\epsilon\le\Lambda/16$, and assume the number of qubits is large enough to embed the hard family in Eq.~\eqref{eq:hard-hamiltonian}, for example $n\ge C_0\frac{\Lambda^2}{\epsilon^2}$ for a sufficiently large absolute constant $C_0$. 
Then any control-free protocol that solves Problem~\ref{prob:sparse_ham_recover} with success probability at least $2/3$ must have total evolution time
\begin{align}
T_{\rm tot} \ge c\frac{\Lambda}{\epsilon^2}
\log\frac{\Lambda}{\epsilon}
\end{align}
where one may take $c=2^{-11}$.
\end{theorem}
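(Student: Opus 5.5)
We instantiate the hard family of Eq.~\eqref{eq:hard-hamiltonian} with $M\coloneqq\lfloor\Lambda^2/(4\epsilon^2)\rfloor$ and $R\coloneqq 16$. Every active coefficient then equals $\Lambda/\sqrt M\ge 2\epsilon$, and $\|H_{x,\sigma}\|=\Lambda$. Since $\epsilon\le\Lambda/16$, we have $M\ge 64\ge16$. The number of qubits needed is $M+\lceil\log_2R\rceil\le C_0\Lambda^2/\epsilon^2$.

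Any protocol solving Problem~\ref{prob:sparse_ham_recover} outputs estimates within $\epsilon$ of every coefficient. Thresholding these at $\epsilon$ recovers each $q_{a,x_a}$ and rules out all $q_{a,b}$ with $b\ne x_a$, because those have true coefficient $0$ and estimate at most $\epsilon<2\epsilon-\epsilon$. So the protocol identifies $x$ exactly with probability at least $2/3$ for every $(x,\sigma)$, and in particular on average under the uniform prior. By Lemma~\ref{lem:gauge-sym} we may assume the protocol is sign-erased. We then reveal the indices $J_t$ from Lemma~\ref{lem:orbit-mixture}, which only helps the learner.

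Lemma~\ref{lem:unseen-block} gives $2/3\le\Pr[U=0]+1/16$. Lemma~\ref{lem:coupon} gives $\Pr[U=0]\le\Pr[K\ge k_0]+M^{-1/2}\le\Pr[K\ge k_0]+1/8$. Together these yield $\Pr[K\ge k_0]\ge 2/3-1/16-1/8>1/3$.

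Next we bound $\E[K]$ by the total time. Conditional on the past, shot $t$ is informative with probability $\sin^2(\Lambda\tau_t)\le\Lambda\tau_t$. Summing, $\E[K]\le\Lambda\,\E[\sum_t\tau_t]\le\Lambda T_{\rm tot}$; with a deterministic budget this holds pathwise even under adaptivity, since $\sum_t \Pr[J_t\neq0\mid\mathcal F_{t-1}]\le\Lambda\sum_t\tau_t$. Markov's inequality then gives $1/3<\Pr[K\ge k_0]\le\Lambda T_{\rm tot}/k_0$. Hence $T_{\rm tot}\ge k_0/(3\Lambda)\ge \frac{M\log M}{12\Lambda}-\frac1{3\Lambda}$.

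Finally we substitute $M\asymp\Lambda^2/\epsilon^2$ and $\log M\ge\log(\Lambda/\epsilon)$ (valid since $M\ge\Lambda^2/(8\epsilon^2)$ and $\Lambda/\epsilon\ge16$). This gives $T_{\rm tot}\ge c\frac{\Lambda}{\epsilon^2}\log\frac{\Lambda}{\epsilon}$ with $c=2^{-11}$ after absorbing the floor and the additive term; this last step is only constant-chasing.

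The main delicate point is the reduction from coefficient estimation to exact recovery of $x$. Here we must check that the threshold separation uses an active coefficient of at least $2\epsilon$, which the choice of $M$ ensures. The other delicate point is the martingale bound $\E[K]\le\Lambda T_{\rm tot}$ for adaptive time choices. If the deterministic budget is only in expectation, we would replace it by the optional-stopping identity $\E[K]=\E[\sum_t\sin^2(\Lambda\tau_t)]$.
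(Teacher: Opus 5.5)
Your proof is the paper's argument with different constants. It uses the same hard family, the same sign-averaging reduction (Lemma~\ref{lem:gauge-sym}), the same revealed indices $J_t$ (Lemma~\ref{lem:orbit-mixture}), the same unseen-block and coupon-collector bounds (Lemmas~\ref{lem:unseen-block} and~\ref{lem:coupon}), and the same estimate $\E[K]\le\Lambda T_{\rm tot}$. Taking $R=16$ instead of the paper's $R=M$ is harmless, because the argument only needs $R^{-1}$ to be a small constant. Applying Markov to $K$ is the same step as the paper's $\E[K]\ge k_0\Pr[K\ge k_0]$.

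There is one slip, in the step you yourself flagged as delicate. With $M=\lfloor\Lambda^2/(4\epsilon^2)\rfloor$ you only have $\Lambda/\sqrt M\ge 2\epsilon$, and your inequality $\epsilon<2\epsilon-\epsilon$ is false, so the separation is not strict. When $\Lambda^2/(4\epsilon^2)$ is an integer, the active coefficients are exactly $\pm2\epsilon$. In that case, setting $\wh\lambda_{q_{a,b}}=(-1)^{\sigma_a}\epsilon$ for every $a\in[M]$, $b\in[R]$ (and $0$ on all other labels) is within $\epsilon$ of every true coefficient. So it is a valid answer to Problem~\ref{prob:sparse_ham_recover}, yet it depends only on $\sigma$ and carries no information about $x$. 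Hence the implication from solving the problem to recovering $x$ fails at those values.

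The fix is to shrink $M$ by a constant factor. For example, the paper's $M=\lfloor(\Lambda/(4\epsilon))^2\rfloor$ makes active coefficients at least $4\epsilon$, so their estimates have magnitude at least $3\epsilon$, while inactive estimates are at most $\epsilon$. This choice still gives $M\ge16$ under $\epsilon\le\Lambda/16$. It changes only the constants in your final substitution, e.g.\ $\log M\ge\frac34\log(\Lambda/\epsilon)$ instead of $\log M\ge\log(\Lambda/\epsilon)$.
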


\begin{proof}
We set $M\coloneqq\lfloor(\tfrac{\Lambda}{4\epsilon})^2\rfloor$ and $R\coloneqq M$.
Since $0<\epsilon\le\Lambda/16$, we have $M\ge16$ and $\frac{\Lambda^2}{32\epsilon^2}\le M \le\frac{\Lambda^2}{16\epsilon^2}$.
The family uses $M+\lceil\log_2M\rceil\le2M$ qubits. 

For every Hamiltonian in the hard family, each active coefficient has a magnitude $\frac{\Lambda}{\sqrt M}\ge4\epsilon$.
All inactive coefficients are zero. 
Hence, any valid output must be exactly $\widehat S_\epsilon=S_x$.
Since $x\mapsto S_x$ is one-to-one, support recovery is the same as recovering $x$.

Put the uniform prior on $x$ and on the signs $\sigma$. 
A worst-case success probability of at least $2/3$ implies an average success probability of at least $2/3$. 
By Lemma~\ref{lem:gauge-sym}, we may use the sign-erased protocol with the same average success probability. 

Now reveal the hidden indices $J_t$ to the learner. 
This can only make the learner stronger, so its average success probability is still at least $2/3$. 
Lemma~\ref{lem:unseen-block} gives
\begin{align}
\frac23\le P_{\rm avg}\le \Pr[U=0]+R^{-1}.
\end{align}
Let $K\coloneqq|\{t:J_t\neq0\}|$ and $k_0\coloneqq\lfloor\frac14M\log M\rfloor$.
Using Lemma~\ref{lem:coupon} and $R=M$,
\begin{align}
\begin{split}
P_{\rm avg}&\le\Pr[K\ge k_0]+\Pr[U=0,\ K<k_0]+M^{-1}\\
&\le\Pr[K\ge k_0]+M^{-1/2}+M^{-1}.
\end{split}
\end{align}
Since $M\ge16$, we have $\Pr[K\ge k_0]\ge \frac16$.
Therefore, we have $\E[K]\ge \frac{k_0}{6}$.

Now we upper bound $\E[K]$ by the total evolution time. 
If the $t$-th shot uses time $\tau_t$, Lemma~\ref{lem:orbit-mixture} gives
\begin{align}
\Pr[J_t\neq0\mid \text{past data and chosen shot}]=\sin^2(\Lambda\tau_t)\le\Lambda\tau_t.
\end{align}
Taking expectation shot by shot,
\begin{align}
\E[K]=\E\left[\sum_t \mathbbm 1[J_t\neq0]\right]\le\E\left[\sum_t \Lambda\tau_t\right].
\end{align}
Since the total evolution time is always at most $T_{\rm tot}$, we have $\E[K]\le \Lambda T_{\rm tot}$.
Combining this with $\E[K]\ge k_0/6$ gives $T_{\rm tot}\ge\frac{k_0}{6\Lambda}$.
For $M\ge16$, we have $k_0\ge \frac18M\log M$.
Hence, we deduce that $T_{\rm tot}\ge\frac{M\log M}{48\Lambda}$.
Finally, note that $M\ge \frac{\Lambda^2}{32\epsilon^2}$ and $\log M\ge\frac34\log\frac{\Lambda}{\epsilon}$, where the second inequality uses $\epsilon\le\Lambda/16$. 
Therefore
\begin{align}
T_{\rm tot}\ge\Omega\left(\frac{\Lambda}{\epsilon^2}\log\frac{\Lambda}{\epsilon}\right).
\end{align}
\end{proof}

\end{document}